\numberwithin{equation}{section}
\newcolumntype{C}{>{$}c<{$}} 
\newcolumntype{R}{>{$}r<{$}} 
\newcommand{\bC} {\mathbb{C}}  
\newcommand{\bR} {\mathbb{R}}  
\newcommand{\bZ} {\mathbb{Z}}  
\newcommand{\bQ} {\mathbb{Q}}  
\newcommand{\sH} {\mathcal{H}} %
\newcommand{\sJ} {\mathcal{J}} %
\newcommand{\sS} {\mathcal{S}} %
\newcommand{\sM} {\mathcal{M}} %
\newcommand{\sN} {\mathcal{N}} %
\newcommand{\sK} {\mathcal{K}} %
\DeclareMathOperator{\Imag}{Im}
\DeclareMathOperator{\Kern}{Ker}
\DeclareMathOperator{\dmn}{dim}
\newcommand{\vir} {\mathfrak{vir}}
\newcommand{\uea} {\mathcal{U}} %
\newcommand{\Verma} {\mathcal{V}}
\newcommand{\Stagg} {\mathcal{S}}
\newcommand{\isom} {\cong}
\newcommand{\lft} {\mathrm{L}}
\newcommand{\rgt} {\mathrm{R}}
\newcommand{\Hlft} {\sH^\lft}
\newcommand{\Hrgt} {\sH^\rgt}
\newcommand{\minsubmod} {\sM}
\newcommand{\data} {\Omega}
\newcommand{\gauge} {G}
\newcommand{\grp}[1]{\mathsf{#1}}
\newcommand{\Ext}[4]{\grp{Ext}^{#1}_{#2} \left( #3 , #4 \right)}
\newcommand{\func}[2]{#1 \left( #2 \right)}
\newcommand{\tfunc}[2]{#1 \bigl( #2 \bigr)}
\newcommand{\brac}[1]{\left( #1 \right)}
\newcommand{\sqbrac}[1]{\left[ #1 \right]}
\newcommand{\set}[1]{\left\{ #1 \right\}}
\newcommand{\tset}[1]{\bigl\{ #1 \bigr\}}
\newcommand{\st}{\mspace{5mu} : \mspace{5mu}}
\newcommand{\abs}[1]{\left| #1 \right|}
\newcommand{\inner}[2]{\left\langle #1 , #2 \right\rangle}
\newcommand{\tinner}[2]{\bigl\langle #1 , #2 \bigr\rangle}
\newcommand{\ZZ}{\mathbb{Z}}
\newcommand{\NN}{\mathbb{N}}
\newcommand{\QQ}{\mathbb{Q}}
\newcommand{\CC}{\mathbb{C}}
\newcommand{\wun}{\mathbf{1}}
\newcommand{\ii}{\mathfrak{i}}
\newcommand{\eps}{\varepsilon}
\newcommand{\comm}[2]{\bigl[ #1 , #2 \bigr]}
\newcommand{\VerMod}{\mathcal{V}}
\newcommand{\IrrMod}{\mathcal{L}}
\newcommand{\StagMod}{\mathcal{S}}
\newcommand{\QuotMod}{\mathcal{Q}}
\newcommand{\LMod}{\Hlft}
\newcommand{\RMod}{\Hrgt}
\newcommand{\LDim}{h^{\lft}}
\newcommand{\RDim}{h^{\rgt}}
\newcommand{\Lhws}{x^{\lft}}
\newcommand{\Rhws}{x^{\rgt}}
\newcommand{\partnum}[1]{\func{p}{#1}}
\newcommand{\tpartnum}[1]{\tfunc{p}{#1}}
\newcommand{\dses}[5]{0 \longrightarrow #1 \overset{#2}{\longrightarrow} #3 \overset{#4}{\longrightarrow} #5 \longrightarrow 0}
\newcommand{\eqnref}[1]{Equation~(\ref{#1})}
\newcommand{\eqnDref}[2]{Equations~(\ref{#1}) and (\ref{#2})}
\newcommand{\secref}[1]{Section~\ref{#1}}
\newcommand{\secDref}[2]{Sections~\ref{#1} and \ref{#2}}
\newcommand{\figref}[1]{Figure~\ref{#1}}
\newcommand{\thmref}[1]{Theorem~\ref{#1}}
\newcommand{\thmDref}[2]{Theorems~\ref{#1} and \ref{#2}}
\newcommand{\lemref}[1]{Lemma~\ref{#1}}
\newcommand{\propref}[1]{Proposition~\ref{#1}}
\newcommand{\propDref}[2]{Propositions~\ref{#1} and \ref{#2}}
\newcommand{\propTref}[3]{Propositions~\ref{#1}, \ref{#2} and \ref{#3}}
\newcommand{\corref}[1]{Corollary~\ref{#1}}
\newcommand{\exref}[1]{Example~\ref{#1}}
\newcommand{\exDref}[2]{Examples~\ref{#1} and \ref{#2}}
\newcommand{\exTref}[3]{Examples~\ref{#1}, \ref{#2} and \ref{#3}}
\newcommand{\cft}{conformal field theory}
\newcommand{\cfts}{conformal field theories}
\newcommand{\uealg}{universal enveloping algebra}
\newcommand{\uealgs}{universal enveloping algebras}
\newcommand{\lcft}{logarithmic conformal field theory}
\newcommand{\lcfts}{logarithmic conformal field theories}
\newcommand{\hws}{highest weight vector}
\newcommand{\hwss}{highest weight vectors}
\newcommand{\hwm}{highest weight module}
\newcommand{\hwms}{highest weight modules}
\newcommand{\sv}{singular vector}
\newcommand{\svs}{singular vectors}
\newcommand{\PBW}{Poincar\'{e}-Birkhoff-Witt}
\newcommand{\nbrBeta}{\mathbf{b}}
\newcommand{\nbrGamma}{\mathbf{g}}
\newcommand{\Rnull}{\mathbf{n}}
\newcommand{\rnk}{r}
\newcommand{\rnkO}{\rho}
\newcommand{\rnkR}{\overline{\rnkO}}
\newcommand{\ghwm}{\mathcal{K}}
\newcommand{\ghw}{h}
\newcommand{\ghwv}{\tilde{x}}
\newcommand{\gell}{l}
\newcommand{\gellm}{l^-}
\newcommand{\gpsi}{\Psi}
\newcommand{\gX}{\chi}
\newcommand{\gXoth}{\chi^-}
\newcommand{\OthStagMod}{\mathcal{T}}
\newcommand{\LOd}{\mathsf{L}_0^d}
\newcommand{\LOn}{\mathsf{L}_0^n}
\newcommand{\betabar}{\overline{\beta}}
\newcommand{\msmbar}{\overline{\minsubmod}}
\newcommand{\ellbar}{\overline{\ell}}
\newcommand{\omegabar}{\overline{\omega}}
\DeclareMathOperator{\rank}{rank}
\DeclareMathOperator*{\dispotimes}{\otimes}
\theoremstyle{plain}
\newtheorem{theorem}{Theorem}[section]
\newtheorem{lemma}[theorem]{Lemma}
\newtheorem{proposition}[theorem]{Proposition}
\newtheorem{corollary}[theorem]{Corollary}
\newtheorem{example}{Example}
\newcommand\@dotsep{4.5}
\def\@tocline#1#2#3#4#5#6#7{\relax
  \ifnum #1>\c@tocdepth 
  \else
    \par \addpenalty\@secpenalty\addvspace{#2}%
    \begingroup \hyphenpenalty\@M
    \@ifempty{#4}{%
      \@tempdima\csname r@tocindent\number#1\endcsname\relax
    }{%
      \@tempdima#4\relax
    }%
    \parindent\z@ \leftskip#3\relax
    \advance\leftskip\@tempdima\relax
    \rightskip\@pnumwidth plus1em \parfillskip-\@pnumwidth
    #5\leavevmode\hskip-\@tempdima #6\relax
    \leaders\hbox{$\m@th
      \mkern \@dotsep mu\hbox{.}\mkern \@dotsep mu$}\hfill
    \hbox to\@pnumwidth{\@tocpagenum{#7}}\par
    \nobreak
    \endgroup
  \fi}
\begin{document}

\title[Staggered Virasoro Modules]{On Staggered Indecomposable Virasoro Modules}

\author[K Kyt\"ol\"a]{Kalle Kyt\"ol\"a}

\address[Kalle Kyt\"ol\"a]{
Section de Math\'ematiques, Universit\'e de Gen\`eve \\
2-4 Rue du Li\`evre \\
CP 64, 1211 Gen\`eve 4, Switzerland
}

\email{kalle.kytola@math.unige.ch}

\author[D Ridout]{David Ridout}

\address[David Ridout]{
Theory Group, DESY \\
Notkestra\ss{}e 85 \\
D-22603, Hamburg, Germany
}

\email{dridout@mail.desy.de}

\thanks{\today \\ This work was supported by the Swiss National Science Foundation, the European Research Council and the Marie Curie Excellence Grant MEXT-CT-2006-042695.}

\begin{abstract}
In this article, certain indecomposable Virasoro modules are studied.  Specifically, the Virasoro mode $L_0$ is assumed to be non-diagonalisable,
possessing Jordan blocks of rank two.  Moreover, the module is further assumed to have a highest weight submodule, the ``left module'', and that the quotient by this submodule yields another highest weight module, the ``right module''.  Such modules, which have been called \emph{staggered}, have appeared repeatedly in the logarithmic conformal field theory literature, but their theory has not been explored in full generality.  Here, such a theory is developed for the Virasoro algebra using rather elementary techniques.  The focus centres on two different but related questions typically encountered in practical studies:  How can one identify a given staggered module, and how can one demonstrate the existence of a proposed staggered module.

Given just the values of the highest weights of the left and right modules, themselves subject to simple necessary conditions, invariants are defined which together with the knowledge of the left and right modules uniquely identify a staggered module.  The possible values of these invariants form a vector space of dimension zero, one or two, and the structures of the left and right modules limit the isomorphism classes of the corresponding staggered modules to an affine subspace (possibly empty).  The number of invariants and affine restrictions is purely determined by the structures of the left and right modules.  Moreover, in order to facilitate applications, the expressions for the invariants and restrictions are given by formulae as explicit as possible (they generally rely on expressions for Virasoro singular vectors).  Finally, the text is liberally peppered throughout with examples illustrating the general concepts.  These have been carefully chosen for their physical relevance or for the novel features they exhibit.
\end{abstract}

\maketitle

\onehalfspacing

\tableofcontents

\section{Introduction} \label{sec: intro}

The successes of \cft{}, in particular its applications to condensed matter physics, depended crucially on the theory of \hwms{} of the Virasoro algebra. Such a theory became available in the early eighties, ultimately due to the work of Kac \cite{KacCon79} and Feigin and Fuchs \cite{FeiSke82}.  The corresponding \cfts{}, the minimal models of \cite{BelInf84}, are constructed from a certain finite collection of irreducible highest weight Virasoro modules and rightly enjoy their position as some of the simplest and most useful of \cfts{}.

In spite of this, the past fifteen years have witnessed the construction, in varying degrees, of a different kind of \cft{} \cite{RozQua92,GurLog93}.  These theories are constructed from certain indecomposable, rather than irreducible, modules and are collectively known as \lcfts{}.  Despite a promising beginning, logarithmic theories quickly attained a reputation for being esoteric and technical.  Some impressive examples were constructed, but the field suffered from a perceived lack of concrete applications.  To be sure, there were many attempts to use logarithmic theories to explain discrepancies in models of the fractional quantum Hall effect, abelian sandpiles, D-brane recoil and more (see \cite{FloBit03} for references to these), but none of these attempts really left an enduring mark upon their intended field.  Nevertheless, condensed matter physicists remained interested in these theories for the simple reason that the standard minimal model description of many of their favourite models was known to be incomplete or even entirely missing.

Recently, there has been something of a resurgence in the study of
\lcfts{}, with the aim of clarifying applications to condensed
matter physics and developing the mathematical properties of
logarithmic theories so as to more closely mirror those of
standard theories.  One can isolate several different approaches
including free field methods and connections to quantum group
theory \cite{FjeLog02,FeiLog06}, lattice model constructions
\cite{PeaLog06,ReaAss07} and construction through explicit
fusion \cite{EbeVir06,RidPer07}.  All of these involve exploring the
new features of a theory built from indecomposable but reducible modules.  
Intriguingly, recent developments in random conformally 
invariant fractals, Schramm-Loewner evolutions in particular
\cite{SchSca00}, have started to bridge the gap between the
field-theoretic and probabilistic approaches to the statistical
models of condensed matter theory (see \cite{LawCon05,BauGro06,CarSLE05} for reviews).  In particular, the kernel of the infinitesimal generator of the
Schramm-Loewner evolution, which consists of local martingales of the
stochastic growth process that builds the fractal curve, carries
a representation of the Virasoro algebra \cite{BauSLE03,BauCon04,KytVir07},
and it has recently been observed that in certain cases this representation becomes indecomposable, of the type found in \lcft{} \cite{KytFro08}.  This has led to renewed proposals for some sort of SLE-LCFT correspondence \cite{RidLog07,SaiGeo09,SimTwi08}.

Advances such as these have necessitated a better understanding of
the representation theory of the Virasoro algebra beyond \hwms{}. 
In the corresponding logarithmic theories, the Virasoro element $L_0$
acts non-diagonalisably, manifestly demonstrating that more general
classes of modules are required.  One such class consists of the
so-called \emph{staggered} modules and it is these which we will
study in what follows.  More precisely, we will consider indecomposable
Virasoro modules on which $L_0$ acts non-diagonalisably and which
generalise \hwms{} by having a submodule isomorphic to a \hwm{}
such that the quotient by this submodule is again isomorphic to a
\hwm{}.  We refer to the submodule and its quotient as the left
and right module, respectively (the naturality of this nomenclature
will become evident in \secref{secStag}).  Roughly speaking, these
staggered modules can be visualised as two \hwms{} which have been
``glued'' together by a non-diagonalisable action of $L_0$.  Such
staggered modules were first constructed for the Virasoro algebra
in \cite{GabInd96}.

We mention that staggered Virasoro modules corresponding to gluing
more than two \hwms{} together have certainly been considered in the
literature \cite{EbeVir06,RidPer08}, but we shall not do so here. 
Similarly, one could try to develop staggered module theories for
other algebras which arise naturally in \lcfts{}. 
We will leave such studies for future work, noting only that we
expect that the results we are reporting will provide a very useful
guide to the eventual form of these generalisations.  Here, we restrict
ourselves to the simplest case, treating it in as elementary a way as
possible.  We hope that the resulting clarity will allow the reader to
easily apply our results, and to build upon them.  Our belief is that
this simple case will be a correct and important step towards a more
complete representation theory applicable to general \lcfts{}.

No introduction to these representation-theoretic aspects of \lcft{} could be complete without mentioning the seminal contributions of Rohsiepe.  These appeared thirteen years ago as a preprint \cite{RohRed96}, which to the best of our knowledge was never published, and a dissertation in German \cite{RohNic96}.  As far as we are aware, these are the only works which try to systematically
develop a representation theory for the Virasoro algebra, keeping in mind
applications to \lcft{} (specifically the so-called $LM(1,q)$ theories of
\cite{GabInd96}).  Indeed, it was Rohsiepe who first introduced the term
``staggered module'', though in a setting rather more general than we use
it here.  These references contain crucial insights on how to start
building the theory, and treat explicitly a particular subcase of the
formalism we construct.  We clearly owe a lot to the ideas and results contained therein.

On the other hand, Rohsiepe's formulation of the problem in
\cite{RohRed96} is somewhat different to our own, which in our opinion
has made applying his results a little bit inconvenient.  Moreover, an
unfortunate choice of wording in several of his statements,
as well as in the introduction and conclusions, can lead the casual reader
to conclude that the results have been proven in a generality significantly
exceeding the actuality.  Finally, the article seems to contain several
inaccuracies and logical gaps which we believe deserve correction and
filling (respectively).  We depart somewhat from the notation and terminology
of \cite{RohRed96} when we feel that it is important for clarity.

We have organised our article as follows. \secref{secNotation}
introduces the necessary basics --- the Virasoro algebra, some
generalities about its representations and most importantly the
result of Feigin and Fuchs describing the structure of \hwms{}. 
This section also serves to introduce the notation and conventions
that we shall employ throughout.  In \secref{secStag}, we then precisely
define our staggered modules and state the question which we are trying
to answer.  Here again, we fix notation and conventions.  The rest of
the section is devoted to observing some simple but important
consequences of our definitions.  In particular, we derive some
basic necessary conditions that must be satisfied by a staggered
module, and show how to determine when two staggered modules are
isomorphic.  This gives us a kind of uniqueness result.

\secref{sec: construction} then marks the beginning of our study of the
far more subtle question of existence.  Here, we prove an existence
result by explicitly constructing staggered modules, noting that we succeed precisely when a certain condition is satisfied.  This condition
is not yet in a particularly amenable form, but it does allow us to
deduce two useful results which answer the existence question for
certain staggered modules provided that the answer has been found
for certain other staggered modules.  These results are crucial to
the development that follows.  In particular, we conclude that if
a staggered module exists, then the module obtained by replacing
its right module by a Verma module (with the same highest weight) also exists.

We then digress briefly to set up and prove a technical result, the
Projection Lemma, which will be used later to reduce the enormous
number of staggered module possibilities to the consideration of a
finite number of cases.  This is the subject of \secref{sec: projections}. 
We then turn in \secref{sec: right Verma} to the existence question in the
case when the right module is a Verma module, knowing that this case is
the least restrictive.  Our goal is to reduce the not-so-amenable
condition for existence which we derived in \secref{sec: construction}
to a problem in linear algebra.  This is an admittedly lengthy exercise,
with four separate cases of varying difficulty to be considered (thanks
to the Projection Lemma).  The result is nevertheless a problem that we
can solve, and its solution yields a complete classification of
staggered modules whose right module is Verma.  This is completed in
\secref{sec: dimensions}.  We then consider in \secref{sec: invariants}
how to distinguish different staggered modules within the space of
isomorphism classes, when their left and right modules are the same.  
This is achieved by introducing invariants of the staggered module
structure and proving that they completely parametrise this space.

Having solved the case when the right module is Verma, we attack the
general case in \secref{sec: general case}.  We first characterise
when one can pass from Verma to general right modules in terms of
\svs{} of staggered modules.  This characterisation is then combined
with the Projection Lemma to deduce the classification of staggered
modules in all but a finite number of cases.  Unhappily, our methods
do not allow us to completely settle the outstanding cases, but we
outline what we expect in \secref{secCritRank} based on theoretical
arguments and studying an extensive collection of examples.  Finally,
we present our results in \secref{sec: conclusions} in a self-contained
summary.  Throughout, we attempt to illustrate the formalism that we
are developing with relevant examples, many of which have a physical
motivation and are based on explicit constructions in \lcft{} or
Schramm-Loewner evolution.

\section{Notation, Conventions and Background} \label{secNotation}

Our interest lies in the indecomposable modules of the Virasoro algebra, $\vir$.  These are modules which cannot be written as a direct sum of two (non-trivial) submodules, and therefore generalise the concept of irreducibility.  The Virasoro algebra is the infinite-dimensional (complex) Lie algebra spanned by modes $L_n$ ($n \in \ZZ$) and $C$, which satisfy
\begin{equation} \label{eqnDefVir}
\comm{L_m}{L_n} = \brac{m-n} L_{m+n} +  \delta_{m+n,0} \frac{m^3-m}{12} C \qquad \text{and} \qquad \comm{L_m}{C} = 0.
\end{equation}
The mode $C$ is clearly central, and in fact spans the centre of $\vir$.  We will assume from the outset that $C$ can be diagonalised on the modules we consider (this is certainly true of the modules which have been studied by physicists).  Its eigenvalue $c$ on an indecomposable module is then well-defined, and is called the central charge of that module.  We will always assume that the central charge is real.  Note that under the adjoint action, $\vir$ is itself an indecomposable $\vir$-module with central charge $c=0$.

In applications, the central charges of the relevant indecomposable modules usually all coincide.  It therefore makes sense to speak of \emph{the} central charge of a theory.  To compare different theories, it is convenient to parametrise the central charge, and a common parametrisation is the following:
\begin{equation} \label{eqnParC1}
c = 13 - 6 \brac{t + t^{-1}}.
\end{equation}
This is clearly symmetric under $t \leftrightarrow t^{-1}$.  For $c \leqslant 1$, we may take $t \geqslant 1$.  For $c \geqslant 25$, we may take $t \leqslant -1$.  When $1 < c < 25$, $t$ must be taken complex.  Many physical applications correspond to $t$ rational, so we may write $t = q / p$ with $\gcd \set{p,q} = 1$.  In this case, the above parametrisation becomes
\begin{equation} \label{eqnParC2}
c = 1 - \frac{6 \brac{p-q}^2}{pq}.
\end{equation}

The Virasoro algebra is moreover graded by the eigenvalue of $L_0$ under the
adjoint action.  Note however that this action on $L_n$ gives $-n L_n$ ---
the index and the grade are opposite one another.  This is a consequence of
the factor $\brac{m-n}$ on the right hand side of \eqnref{eqnDefVir}. 
Changing this to $\brac{n-m}$ by replacing $L_n$ by $-L_n$ would alleviate
this problem, and in fact this is often done in the mathematical literature.
However, we shall put up with this minor annoyance as it is this definition
which is used, almost universally, by the physics community.

The Virasoro algebra admits a triangular decomposition into subalgebras,
\begin{equation}
\vir = \vir^- \oplus \vir^0 \oplus \vir^+,
\end{equation}
in which $\vir^{\pm}$ is spanned by the modes $L_n$ with $n$ positive or
negative (as appropriate) and $\vir^0$ is spanned by $L_0$ and $C$.  We
note that $\vir^+$ is generated as a Lie subalgebra by the modes $L_1$
and $L_2$ (and similarly for $\vir^-$).  This follows recursively from the
fact that commuting $L_1$ with $L_n$ gives a non-zero multiple of
$L_{n+1}$, for $n \geqslant 2$.  The corresponding Borel subalgebras will
be denoted by $\vir^{\leqslant 0} = \vir^- \oplus \vir^0$ and $\vir^{\geqslant 0} = \vir^0 \oplus \vir^+$.  We mention that this triangular decomposition respects the standard anti-involution of the Virasoro algebra which is given by
\begin{equation} \label{eqnDefAdj}
L_n^{\dagger} = L_{-n} \qquad \text{and} \qquad C^{\dagger} = C,
\end{equation}
extended linearly to the whole algebra.  We shall often refer to this as the adjoint.\footnote{For applications to field theory, one would normally extend \emph{antilinearly}, hence the appellation ``adjoint''.  However, this distinction is largely irrelevant to the theory we are developing here.}

We will frequently find it more convenient to work within the \uealg{}
of the Virasoro algebra.  As we are assuming that $C$ always acts as
$c \, \wun$ on representations, we find it convenient to make this
identification from the outset.  In other words, we quotient the
\uealg{} of $\vir$ by the ideal generated by $C - c \, \wun$.  We
denote this quotient by $\uea$ and will henceforth abuse terminology
by referring to it as \emph{the} \uealg{} of $\vir$.  Similarly, the
\uealgs{} of $\vir^-$, $\vir^+$, $\vir^{\leqslant}$ and $\vir^{\geqslant}$
will be denoted by $\uea^-$, $\uea^+$, $\uea^{\leqslant}$ and
$\uea^{\geqslant}$, respectively.  The latter two are also to be understood
as quotients in which $C$ and $c \, \wun$ are identified.

The \uealg{} is a $\vir$-module under left-multiplication.  Moreover, it is also an $L_0$-graded $\vir$-module with central charge $0$ under the (induced) adjoint action, and it is convenient to have a notation for the homogeneous
subspaces.  We let $\uea_n$ denote the elements $U \in \uea$ for which
\begin{equation}
L_0 U - U L_0 = n U.
\end{equation}
Note that \eqnref{eqnDefVir} forces $L_n \in \uea_{-n}$.  We moreover remark
that the adjoint (\ref{eqnDefAdj}) extends to an adjoint on $\uea$ in the
obvious fashion:  $(L_{n_1} \cdots L_{n_k})^\dagger = L_{-n_k} \cdots L_{-n_1}$.

The most important fact about \uealgs{} is the \PBW{} Theorem which states,
for $\vir$, that the set
\begin{equation*}
\set{\cdots L_{-m}^{a_{-m}} \cdots L_{-1}^{a_{-1}} L_0^{a_0} L_1^{a_1} \cdots L_n^{a_n} \cdots \st a_i \in \NN \text{ with only finitely many } a_i \neq 0}
\end{equation*}
constitutes a basis of $\uea$.  Similar results are valid for $\uea^-$,
$\uea^+$, $\uea^{\leqslant}$ and $\uea^{\geqslant}$ (a proof valid for quite
general \uealgs{} may be found in \cite{MooLie95}).  Two
simple but useful consequences of this are that $\uea$ and its variants
have no zero-divisors and that
\begin{equation} \label{eqnDimVer}
\dim \uea^{\pm}_{\mp n} = \partnum{n},
\end{equation}
where $\tpartnum{n}$ denotes the number of partitions of $n \in \NN$.

As we have a triangular decomposition, we can define \hwss{} and Verma modules.  A \hws{} for $\vir$ is an eigenvector of $\vir^0$ which is annihilated by $\vir^+$.  To construct a Verma module, we begin with a vector $v$.  We make the space $\CC v$ into a $\vir^{\geqslant 0}$-module (hence a $\uea^{\geqslant 0}$-module) by requiring that $v$ is an eigenvector of $\vir^0$ which is annihilated by $\vir^+$ ($v$ is then a \hws{} for $\vir^{\geqslant 0}$).  Finally, the Verma module is then the $\vir$-module
\begin{equation*}
\CC v \dispotimes_{\uea^{\geqslant 0}} \uea,
\end{equation*}
in which the Virasoro action on the second factor is just by left
multiplication.  This is an example of the induced module construction.
Roughly speaking, it just amounts to letting $\vir^-$ act freely on the
\hws{} $v$.  In particular, we may identify this Verma module with
$\uea^- v$.

It follows that Verma modules are completely characterised by their
central charge $c$ and the eigenvalue $h$ of $L_0$ on their \hws{}.
 We will therefore denote a Verma module by $\VerMod_{h,c}$ (though
we will frequently omit the $c$-dependence when this is clear from
the context).  Its \hws{} will be similarly denoted by $v_{h,c}$
(so $\VerMod_{h,c} = \uea^- v_{h,c}$).  The \PBW{} Theorem for
$\uea^-$ then implies that $\VerMod_{h,c}$ has the following basis:
\begin{equation*}
\set{L_{-n_1} L_{-n_2} \cdots L_{-n_k} v_{h,c} \st k \geqslant 0
\text{ and } n_1 \geqslant n_2 \geqslant \cdots \geqslant n_k \geqslant 1}.
\end{equation*}
$L_0$ is thus diagonalisable on $\VerMod_{h,c}$, so $\VerMod_{h,c}$ may be
graded by the $L_0$-eigenvalues relative to that of the \hws{}.  These
eigenvalues are called the conformal dimensions of the corresponding
eigenstates.  The homogeneous subspaces 
$\brac{\VerMod_{h,c}}_n = \Kern \bigl( L_0-h-n \bigr)$
are finite-dimensional and in fact,
\begin{equation}
\dim \brac{\VerMod_{h,c}}_n = \partnum{n},
\end{equation}
by \eqnref{eqnDimVer}.  Finally, each $\VerMod_{h,c}$ admits a unique
symmetric bilinear form $\tinner{\cdot}{\cdot}_{\VerMod_{h,c}}$, contravariant with respect to the adjoint (\ref{eqnDefAdj}), $\tinner{u'}{U u} = \tinner{U^\dagger u'}{u}$,
and normalised by $\tinner{v_{h,c}}{v_{h,c}} = 1$ (we will usually neglect to specify the module with a subscript index when this causes no confusion).  This is referred to as the Shapovalov form of $\VerMod_{h,c}$.\footnote{In applications to field theory, where the adjoint (\ref{eqnDefAdj}) is extended antilinearly to all of $\vir$, this would define a hermitian form.  Physicists often refer to this form as the Shapovalov form as well.}  We will also refer to it as the scalar product.  Note that distinct homogeneous subspaces are orthogonal with respect to this form.

A useful alternative construction of the Verma module $\VerMod_{h,c}$ is to
instead regard it as the quotient of $\uea$ (regarded now as a $\vir$-module
under left-multiplication) by the left-ideal (left-submodule) $\mathcal{I}$
generated by $L_0 - h \; \wun$, $L_1$ and $L_2$ (recall that $L_1$ and $L_2$
generate $\vir^+$ hence $\uea^+$).  It is easy to check that the
equivalence class of the unit $\sqbrac{\wun}$ is a \hws{} of
$\VerMod_{h,c}$ with the correct conformal dimension and
central charge.  We will frequently use the consequence
that any element $U \in \uea$ which annihilates the
\hws{} of $\VerMod_{h,c}$ must belong to $\mathcal{I}$:
If $U v_{h,c}=0$, then
\begin{equation} \label{eq: uea null in Verma}
U = U_0 \brac{L_0 - h \; \wun} + U_1 L_1 + U_2 L_2 \qquad \text{for some $U_0 , U_1 , U_2 \in \uea$.}
\end{equation}

As Verma modules are cyclic (generated by acting upon a single vector),
they are necessarily indecomposable.  However, they need not be irreducible.
If the Verma module $\VerMod_{h,c}$ is reducible then
it can be shown that there exists another $L_0$-eigenvector, not proportional
to $v_{h,c}$, which is annihilated by $\vir^+$.
Such vectors are known as \emph{\svs{}}.  If
there is a singular vector $w \in \VerMod_{h,c}$ at grade $n$, then it
generates a submodule isomorphic to $\VerMod_{h+n,c}$.  Conversely, every
submodule of a Verma module is generated by \svs{}.  Any quotient of a
Verma module by a proper submodule is said to be a \hwm{}.  It follows
that such a quotient also has a cyclic \hws{} (in fact, this is the usual
definition of a \hwm{}) with the same conformal dimension and central
charge as that of the Verma module.  Moreover, it inherits the obvious
$L_0$-grading.  Finally, factoring out the maximal proper submodule gives
an irreducible \hwm{}, which we will denote by $\IrrMod_{h,c}$ (or
$\IrrMod_h$ when $c$ is contextually clear).

We pause here to mention that in the physics literature, the term ``singular
vector'' is often used to emphasise that the \hws{} in question is not the
one from which the entire \hwm{} is generated (that is, it is not the cyclic
\hws{}).  This is rather inconvenient from a mathematical point of view, but
is natural because of the following calculation:  If
$w \in \brac{\VerMod_{h,c}}_n$ is a \sv{} (with $n>0$), then for all
$w' = U v_{h,c} \in \brac{\VerMod_{h,c}}_n$ (so $U \in \uea^-_n$), hence for
all $w' \in \VerMod_{h,c}$,
\begin{equation}
\inner{w}{w'} = \inner{w}{U v_{h,c}} = \inner{U^{\dagger} w}{v_{h,c}} = 0,
\end{equation}
as $U^{\dagger} \in \uea^+_{-n}$ with $n>0$.  We will however follow the
definition used in mathematics in which a \sv{} is precisely a \hws{},
qualifying those which are not generating as \emph{proper}.  We will also
frequently express a singular vector in the form $w = X v_{h,c}$,
$X \in \uea^-$, in which case we will also refer to $X$ as being
singular.\footnote{Admittedly, $X \in \uea^-_{m}$ being singular
only makes sense when $h$ (and $c$) is specified.  What this
concretely means is that for $n=1,2$ there are
$X^{(n)}_0, X^{(n)}_1, X^{(n)}_2 \in \uea$
such that $L_n X = X^{(n)}_0 (L_0-h) + X^{(n)}_1 L_1 + X^{(n)}_2 L_2$
(compare with \eqnref{eq: uea null in Verma}).  The value of $h$ should
nevertheless always be clear from the context, so we trust that this 
terminology will not lead to any confusion.}

Let us further define a descendant of a \sv{} $w$ to be an element of
$\uea^- w$.  
The above calculation then states that proper \svs{} and their descendants
have vanishing scalar product with all of $\VerMod_{h,c}$, including 
themselves.\footnote{This is in fact the origin of the term ``singular'' 
in this context --- it refers to the fact that the matrix representing the
Shapovalov form at grade $n$ has determinant zero.}  It now follows that 
the maximal proper submodule of $\VerMod_{h,c}$ is precisely the subspace
of vectors which are orthogonal to $\VerMod_{h,c}$.  The Shapovalov form
$\tinner{\cdot}{\cdot}_{\VerMod_{h,c}}$ therefore descends to a well-defined
symmetric bilinear form $\tinner{\cdot}{\cdot}_{\mathcal{K}}$ on any \hwm{}
$\mathcal{K}$ (also called the Shapovalov form).  It is non-degenerate if
and only if $\mathcal{K}$ is irreducible.

Through a cleverly arranged computation \cite{AstStr97}, it is not hard to
show the following facts:  In a Virasoro Verma module, there can only exist
one \sv{} $w = X v_{h,c}$, up to constant multipliers, at any given grade
$n$ (that is, with $X \in \uea^-_n$).  Moreover, the coefficient of
$L_{-1}^n$ when $X$ is written in the \PBW-basis is never zero.  If this
coefficient is unity, we will say that $X$ is \emph{normalised}, and by
association, that $w$ is also normalised.  This particular
normalisation is convenient because it does not depend on whether we choose
to represent $X$ as a sum of monomials ordered in our standard \PBW{} manner
or with respect to some other ordering.  
We note explicitly that $v_{h,c}$ is a normalised \sv{}.  This normalisation
also extends readily to cover general \hwms{}:  A (non-zero) \sv{} of such
a module will be said to be normalised if it is the projection of a
normalised \sv{} of the corresponding Verma module.

A far more difficult, but nevertheless fundamental, result in Virasoro algebra
representation theory concerns the explicit evaluation of the determinant of
the Shapovalov form, restricted to $\brac{\VerMod_{h,c}}_n$.  The vanishing
of this determinant indicates the existence of proper \svs{} (and their
descendants), so understanding the submodule structure of \hwms{}
reduces, to a large extent, to finding the zeroes of the
\emph{Kac determinant formula},
\begin{equation} \label{eqnKac}
\det \ \bigl. \inner{\cdot}{\cdot} \bigr\rvert_{\brac{\VerMod_{h,c}}_n} = \alpha_n \prod_{\substack{r,s \in \ZZ_+ \\ rs \leqslant n}} \brac{h - h_{r,s}}^{\partnum{n-rs}}.
\end{equation}
Here, $\alpha_n$ is a non-zero constant independent of $h$ and $c$, and the $h_{r,s}$ vary with $c$ according to
\begin{equation} \label{eqnKacDims}
h_{r,s} = \frac{r^2 - 1}{4} t - \frac{rs - 1}{2} + \frac{s^2 - 1}{4} t^{-1} = \frac{\brac{ps - qr}^2 - \brac{p-q}^2}{4pq},
\end{equation}
when $c$ is parametrised as in \eqnDref{eqnParC1}{eqnParC2} (respectively).
This determinant vanishes when $h = h_{r,s}$ for some $r,s \in \ZZ_+$ with
$rs \leqslant n$.  Given such an $h = h_{r,s}$ then, it
can be shown that there exists a (proper) singular vector at grade $rs$.

The Kac determinant formula was conjectured by Kac in \cite{KacCon79} and
proven by Feigin and Fuchs in \cite{FeiSke82}.  Reasonably accessible
treatments may be found in \cite{KacBom88,ItzSta89}. 
Feigin and Fuchs then used this formula to find all the homomorphisms 
between Verma modules, effectively determining the \sv{} structure of any
Verma module \cite{FeiVer84}.  It turns out to be convenient to distinguish
four different types of structures which we illustrate in \figref{figVerma}.
We will refer to these as ``point'', ``link'', ``chain'' or ``braid'' type
Verma modules (hopefully this notation is self-explanatory).  These
correspond to the cases $\mathrm{I}$, $\mathrm{II}_0$ and $\mathrm{II_-}$
(point), $\mathrm{II_+}$ (link), $\mathrm{III_{\pm}^0}$ and
$\mathrm{III_{\pm}^{00}}$ (chain), and $\mathrm{III_{\pm}}$ (braid), in the
notation of Feigin and Fuchs.  We will also say that more general \hwms{}
are of the above types, defined through inheriting their type from the
corresponding Verma module.

{
\psfrag{c<1}[][]{$t>0$}
\psfrag{c>25}[][]{$t<0$}
\psfrag{P}[][]{Point}
\psfrag{L}[][]{Link}
\psfrag{C}[][]{Chain}
\psfrag{B}[][]{Braid}
\begin{figure}
\begin{center}
\includegraphics[width=10cm]{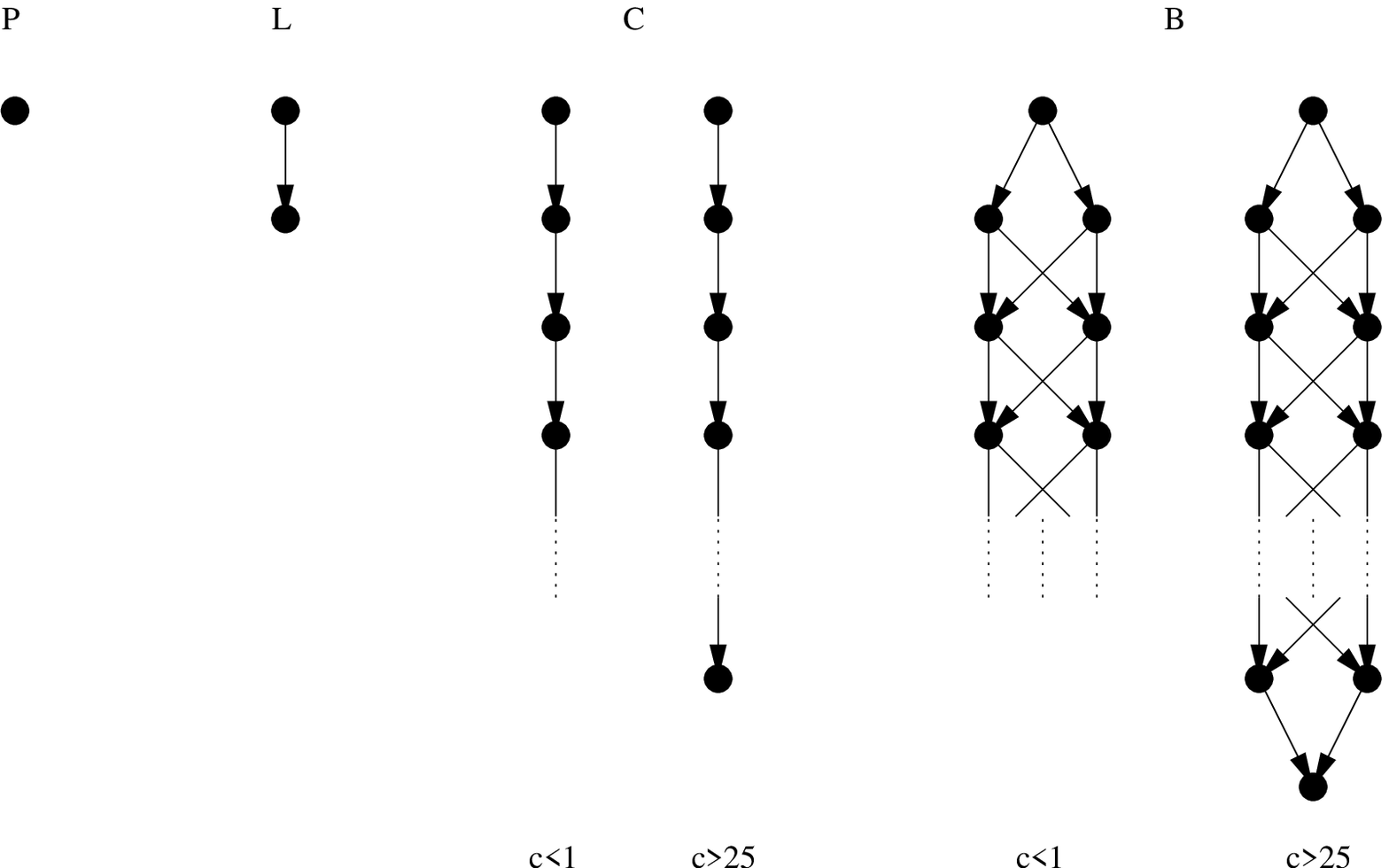}
\caption{The \sv{} structure, marked by black circles, of Virasoro
Verma modules.  Arrows from one vector to another indicate that the
latter is a descendant of the former and not vice-versa.  Point and
link-type Verma modules occur for all central charges.  Chain and
braid-type modules occur only when $t$ is rational and non-zero.
Note that $t > 0$ corresponds to $c \leqslant 1$ and $t < 0$
corresponds to $c \geqslant 25$.} \label{figVerma}
\end{center}
\end{figure}
}

We take this opportunity to describe when each of these cases occurs
(see \cite{FeiVer84} for further details) and to
introduce some useful notation for each.  Recall that each $h_{r,s}$ depends
on $t$ and that $t$ parametrises the central charge via \eqnref{eqnParC1}.
\begin{description}
\item[Point] If $t$ and $h$ are such that $h \neq h_{r,s}$ for every
$r,s \in \bZ_+$, then $\Verma_h$ is irreducible and there are no highest
weight vectors besides the multiples of the cyclic highest weight vector $v_h$.
\item[Link] Suppose that $t \notin \bQ$ (recall that $t$ may be complex) and that there exist $r,s \in \bZ_+$ (unique since $t$ is not rational) such that
$h = h_{r,s}$.  Then $\Verma_h$ possesses a \sv{} at grade $rs$ which generates the maximal proper submodule of $\Verma_h$.  This maximal proper submodule, itself isomorphic to $\Verma_{h+rs}$, is then of point type, so there are no other non-trivial \svs{}.  We denote the normalised \sv{} at grade $rs$ by
$X_1 v_h$ ($X_1 \in \uea^-_{rs}$ is therefore also normalised) and for
compatibility with the chain case, we will denote the grade of this \sv{}
by $\ell_1 = rs$.
\item[Chain] Suppose that $t = q/p$ with $p \in \ZZ_+$ and
$q \in \ZZ \setminus \set{0}$ relatively prime, and that $h = h_{r,s}$
for some $r,s \in \bZ_+$ with $p \mid r$ or $q \mid s$.  Then, choosing
$r$ and $s$ such that $h = h_{r,s}$ and $rs>0$ is minimal, $\Verma_h$ has a
\sv{} at grade $rs$ which generates the maximal proper submodule, itself
isomorphic to $\Verma_{h+rs}$.  In contrast to the link case, this maximal
proper submodule is also of chain type, except in the degenerate case where $t<0$, $r \leqslant p$ and $s \leqslant \abs{q}$, in which case it is of point type.  Thus, we iteratively find a sequence of \svs{} as in \figref{figVerma}.  This sequence is infinite if $t$ is positive and finite if $t$ is negative (terminating with a degenerate case).  We write the normalised \svs{} of $\Verma_h$ as $v_h = X_0 v_h, X_1 v_h, X_2 v_h, \ldots$, and denote their respective grades by $0 = \ell_0 < \ell_1 < \ell_2 < \cdots$ (so $X_k \in \uea^-_{\ell_k}$).
\item[Braid] Suppose that $t = q/p$ with $p \in \ZZ_+$ and
$q \in \ZZ \setminus \set{0}$ relatively prime, and that $h = h_{r,s}$ for
some $r,s \in \bZ_+$ with $p \nmid r$ and $q \nmid s$. Choose $r$,
$s$, $r'$ and $s'$ such that $h = h_{r,s} = h_{r',s'}$, $rs>0$ is minimal and
$r's'>rs$ is minimal but for $rs$ (such $r',s'$ always exist except in certain
degenerate cases which we will describe below). Then $\Verma_h$ has two \svs{}, $X^-_1 v_h$ and $X^+_1 v_h$ at grades $h + rs$
and $h + r's'$ respectively. Together they generate the maximal proper
submodule (not a highest weight module in this case).  The Verma modules
generated by these two \svs{} (separately) are again of braid type
(except in the degenerate cases),
and their intersection is the maximal proper submodule of either.
One therefore finds a double sequence of \svs{} in this case, as
illustrated in \figref{figVerma}.  As in the chain case, these sequences are infinite if $t$ is positive and finite if $t$ is negative.

The degenerate cases referred to above occur when $t<0$, $r < p$ and $s < \abs{q}$.  Then, there are no labels $r',s'$ to be found, the maximal proper submodule is generated by a single \sv{}, and is in fact of point type.
In the non-degenerate cases, we write the normalised \svs{} of $\Verma_h$ as
$v_h = X_0^+ v_h, X_1^- v_h, X_1^+ v_h, X_2^- v_h, X_2^+ v_h, \ldots$, denoting their respective grades by $0 = \ell_0^+ < \ell_1^- < \ell_1^+ < \ell_2^- < \ell_2^+ < \cdots$ (so $X_k^{\pm} \in \uea^-_{\ell_k^{\pm}}$).  
When $t < 0$, the double sequence of \svs{} terminates because of the above degenerate cases, so for some $k$, there is no $X^+_k$ and the singular
vector $X_k^- v_{h}$ generates an irreducible Verma module.
\end{description}
Note that when it comes to the submodule structure, the link case
is identical to the degenerate cases of both the chain and
braid cases.  However, we emphasise that chain and braid type modules only
exist when $t$ is rational.  With this proviso in mind, we can (and often
will) treat the link case as a subcase of the chain case.

Suppose that for a (normalised) \sv{} $w = X v_{h,c}$, we can factor $X \in \uea^-$ non-trivially as $X' X''$ where $X'' v_{h,c}$ is again (normalised and) singular.  We will then say that $w$ (and $X$) is composite.  Otherwise, $w$ (and $X$) is said to be prime.  A composite \sv{} is then just one which is a proper descendant of another (proper) \sv{}.  We can generalise this by further factoring $X$ as $X^{\brac{1}} X^{\brac{2}} \cdots X^{\brac{\rnkO}}$, where $X^{\brac{i}} X^{\brac{i+1}} \cdots X^{\brac{\rnkO}} v_{h,c}$ is (normalised and) singular for all $i$.  Such factorisations will not be unique, but when they cannot be further refined, we will say that each $X^{\brac{i}}$ is prime.  Such prime factorisations need not be unique either when the Verma module is of braid type, but it is easy to check from the above classification that for these factorisations the number of factors $\rnkO$ is constant.  We will refer to $\rnkO$ as the \emph{rank} of the \sv{} $w = X v_{h,c}$.  Rank-$1$ \svs{} are therefore prime, and we may regard the cyclic \hws{} as the (unique) rank-$0$ \sv{}.  In our depiction of Verma modules (\figref{figVerma}), the \sv{} rank corresponds to the vertical axis (pointing down).

\section{Staggered Modules} \label{secStag}

The central objects of our study are the so-called \emph{staggered modules}
of Rohsiepe \cite{RohRed96}.  The simplest non-trivial case, which is all
that will concern us, is the following:  A staggered module $\StagMod$ is
an indecomposable $\vir$-module for which we have a short exact sequence
\begin{equation} \label{eqnDefStag}
\dses{\LMod}{\iota}{\StagMod}{\pi}{\RMod},
\end{equation}
in which it is understood that $\LMod$ and $\RMod$ are \hwms{}, $\iota$
and $\pi$ are module homomorphisms, and $L_0$ is \emph{not} diagonalisable
on $\StagMod$, possessing instead Jordan cells of rank at most $2$.  When
we refer to a module as being staggered, we have these restrictions in mind.
In particular, our staggered modules are extensions of one \hwm{} by another.
As we are assuming that indecomposable modules such as $\StagMod$ have a well-defined central charge, those of $\LMod$ and $\RMod$ must coincide.  More generally, one could consider indecomposable modules constructed from more than two \hwms{}, and with higher-rank Jordan cells for $L_0$, but we shall not do
so here.

We call $\LMod$ and $\RMod$ the \emph{left} and \emph{right} modules
(of $\StagMod$), and denote their \hwss{} by $\Lhws$ and $\Rhws$, with
(real) conformal dimensions $\LDim$ and $\RDim$, respectively.  $\LMod$
is then a submodule of $\StagMod$ (we will frequently forget to distinguish
between $\LMod$ and $\func{\iota}{\LMod}$), whereas $\RMod$ is not (in
general).  We remark that Rohsiepe uses similar nomenclature in this case,
defining ``lower'' and ``upper modules'' such that the latter is the
quotient of the staggered module by the former.  However, we stress
that these do not in general coincide with our left and right modules.
In particular, Rohsiepe defines his lower module to be
the submodule of all $L_0$-eigenvectors, which need not be a \hwm{}
(a concrete illustration of this will be given in
\exref{ex:-2,11,13} and the remark following it --- the general phenomenon
will be discussed after \propref{prop: singular vectors in staggered modules}).

Our question is the following:
\begin{quote}
\textbf{Given two highest weight modules $\Hlft$ and $\Hrgt$, can we
classify the (isomorphism classes of) staggered modules $\sS$
corresponding to the short exact sequence (\ref{eqnDefStag})?}
\end{quote}
Abstractly, if we dropped the requirement that $L_0$ has non-trivial Jordan
cells, then we would be asking for a computation of
$\Ext{1}{\uea}{\RMod}{\LMod}$ in an appropriate category \cite{CarHom56},
a difficult task.  As we shall see however, requiring non-diagonalisability
leads to a reasonably tractable problem for which we do not need the
abstract machinery of homological algebra.

An answer to our question will be given in the following sections.
For convenience, we summarise our results in \secref{sec: conclusions}
(\thmref{thmTheAnswer}).  This section is largely self-contained, and
so may be read independently of most of what follows.  However, we
suggest that an appreciation of the r\^{o}le of the beta-invariants
(\secDref{secStag}{sec: invariants}) represents a minimal prerequisite
for this result.

As staggered modules necessarily have vectors which are not
$L_0$-eigenvectors, we cannot grade the module by the eigenvalue of $L_0$ 
relative to that of some  reference vector.  However, $L_0$ can still be
put in Jordan normal form, so we may decompose it into commuting
diagonalisable and nilpotent operators:  $L_0 = \LOd + \LOn$.  A staggered
module may then be consistently graded by the eigenvalues of its vectors
under $\LOd$, relative to the minimal eigenvalue of $\LOd$.  We will refer
to $\LOd$-eigenvalues as conformal dimensions, even when the corresponding
eigenvector is not an $L_0$-eigenvector.
Note that the maps $L_m$ are still consistent with
this more general grading --- one easily checks that $L_m \in \uea_{-m}$ maps
the $\LOd$ eigenspace of eigenvalue $h$ to that of eigenvalue
$h-m$.

A submodule of a (graded) Virasoro module can be assigned a grading in at least two distinct ways.  First, it can inherit the grading from its parent, so that homogeneous states have the same grade in both modules.  The inclusion map is then a graded homomorphism.  Second, a grading may be defined as the conformal dimension of the states relative to the minimal conformal dimension of the submodule.  Both have their uses, but unless otherwise specified, we will always assume that a submodule inherits its grading from its parent.

We introduce some more notation.  Let $x = \func{\iota}{\Lhws}$ denote the
\hws{} of the submodule $\func{\iota}{\LMod} \subset \StagMod$ and choose 
an $\LOd$-eigenvector $y$ in the preimage
$\func{\pi^{-1}}{\Rhws} \subset \StagMod$. The vector
$x$ is then an eigenvector of $L_0$ whilst $y$ is not (if it were, its
descendants would also be, hence $L_0$ would be diagonalisable on
$\StagMod$).  Their conformal dimensions are $\LDim$ and $\RDim$
respectively.  We now define the auxiliary vectors
\begin{equation} \label{eqnDefOmega}
\omega_0 = \brac{L_0 - \RDim} y, \qquad \omega_1 = L_1 y \qquad \text{and}
\qquad \omega_2 = L_2 y.
\end{equation}
Since $L_1$ and $L_2$ generate $\uea^+$, $\omega_1$ and $\omega_2$ determine
the action of $\uea^+$ on $y$.

\begin{proposition} \label{propSVw0}
$\omega_0, \omega_1, \omega_2 \in \LMod$ and $\omega_0$ is a non-zero
\sv{} of $\LMod \subset \StagMod$.
\end{proposition}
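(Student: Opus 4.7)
The plan is to verify the three claims in the order they arise. First I would show $\omega_0, \omega_1, \omega_2 \in \LMod$. Since $\LMod$ (identified with $\func{\iota}{\LMod}$) equals $\ker \pi$, it suffices to apply $\pi$ to each of the three vectors. By construction $\func{\pi}{y} = \Rhws$, and $\Rhws$ is a highest weight vector of $\RMod$ with $L_0$-eigenvalue $\RDim$. Hence $\func{\pi}{\omega_0} = \brac{L_0 - \RDim} \Rhws = 0$, and similarly $\func{\pi}{\omega_1} = L_1 \Rhws = 0$ and $\func{\pi}{\omega_2} = L_2 \Rhws = 0$.

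Next I would argue that $\omega_0 \neq 0$. Suppose, for contradiction, that $\brac{L_0 - \RDim} y = 0$, so that $y$ is itself an $L_0$-eigenvector. Since $\LMod$ is a highest weight module, $L_0$ is diagonalisable on $\LMod$; combined with the fact that $y$ together with $\LMod$ generates all of $\StagMod$ (as $\func{\pi}{y}$ generates $\RMod$), one would then conclude that $L_0$ acts diagonalisably on every descendant of $y$ and hence on $\StagMod$. This contradicts the defining property of a staggered module.

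Finally, I would show that $\omega_0$ is a singular vector in $\LMod$. The vector $\omega_0 = \LOn y$ is an $\LOd$-eigenvector with eigenvalue $\RDim$ (because $\LOd$ and $\LOn$ commute and $y$ has $\LOd$-eigenvalue $\RDim$); but on $\LMod$, where $L_0$ is diagonalisable, $\LOd$ coincides with $L_0$, so $\omega_0$ is actually an $L_0$-eigenvector of eigenvalue $\RDim$. It remains to check annihilation by $\vir^+$. Since $L_1$ and $L_2$ generate $\vir^+$, I only need $L_1 \omega_0 = L_2 \omega_0 = 0$. A direct commutator computation using $\comm{L_n}{L_0} = n L_n$ gives $L_n \omega_0 = \brac{L_0 - \RDim + n} L_n y$ for $n = 1, 2$; but $L_n y = \omega_n \in \LMod$ has $L_0$-eigenvalue $\RDim - n$, so each bracket vanishes.

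The only mildly delicate point is step two: one has to use the hypothesis that $L_0$ is non-diagonalisable on $\StagMod$, together with the fact that $y$ (modulo $\LMod$) cyclically generates $\RMod$ and therefore generates $\StagMod$ together with $\LMod$, to rule out $\omega_0 = 0$. Everything else reduces to the two tautologies that $\Rhws$ is a highest weight vector of $\RMod$ and that $L_0$ is diagonalisable on the highest weight module $\LMod$.
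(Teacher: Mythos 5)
Your proposal is correct and follows essentially the same route as the paper: membership in $\LMod$ via $\pi$ and the highest-weight property of $\Rhws$, non-vanishing of $\omega_0$ via the non-diagonalisability of $L_0$ on $\StagMod$ (the paper records this just before the proposition, when noting that $y$ cannot be an $L_0$-eigenvector), and singularity via the commutator identity $L_n \omega_0 = \brac{L_0 - \RDim + n} L_n y$ together with the grading of $\LMod$. The extra detail you supply — that $\LMod$ and $\uea y$ together span $\StagMod$, and the explicit use of $\omega_0 = \LOn y$ to see that $\omega_0$ is an $L_0$-eigenvector — is a harmless elaboration of the same argument.
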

\begin{proof}
Since $L_0 - \RDim$, $L_1$ and $L_2$ annihilate
$\Rhws = \func{\pi}{y} \in \RMod = \StagMod / \LMod$, their action on $y$
must yield elements of $\LMod$.  If $\omega_0$ vanished then $y$ would be
an eigenvector of $L_0$, hence $\omega_0 \neq 0$. Moreover,
\begin{equation}
L_n \omega_0 = L_n \brac{L_0 - \RDim} y = \brac{L_0 - \RDim + n} L_n y,
\end{equation}
hence $L_n \omega_0 = 0$ for all $n>0$, as $y$ has $\LOd$-eigenvalue
$h^\rgt$, so $L_n y \in \Hlft$ has $L_0$-eigenvalue $h^\rgt-n$.
\end{proof}

Define $\ell = \RDim - \LDim$.  It follows that $\ell$ is then the grade of
the singular vector $\omega_0$ and its Jordan partner $y$ in the staggered
module $\StagMod$.  The grades of $\omega_1$ and $\omega_2$ are therefore
$\ell - 1$ and $\ell - 2$, respectively.  One immediate
consequence is that $\ell$ is a non-negative integer.
Exact sequences (\ref{eqnDefStag}) with $\ell < 0$ certainly exist, but cannot
describe staggered modules.\footnote{Apart from the obvious direct sums
$\sH^\lft \oplus \sH^\rgt$,
reducible Verma modules form a simple class of examples of this type.}
When $\ell = 0$, we must have $\omega_0 = x$ up
to a non-zero multiplicative constant.  When $\ell > 0$, $\LMod$ has a
proper \sv{}, hence the Kac determinant formula (\ref{eqnKac}) has a zero.
We thereby obtain our first necessary conditions on the existence of
staggered modules.

\begin{corollary}
A staggered module cannot exist unless $\ell \in \NN$.  Moreover, if
$\ell > 0$, then $\LDim = h_{r,s}$ for some $r,s \in \ZZ_+$ (where
$h_{r,s}$ is given in \eqnref{eqnKacDims}).
\end{corollary}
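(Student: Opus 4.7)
The plan is to read off the result from Proposition~\ref{propSVw0} together with the observation, established in the paragraph immediately preceding the corollary, that the singular vector $\omega_0 \in \LMod$ produced there sits at grade $\ell = \RDim - \LDim$ inside the highest weight module $\LMod$.

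For the first claim, I would argue as follows. Since $\LMod$ is a quotient of the Verma module $\VerMod_{\LDim,c}$, its $L_0$-eigenvalues all lie in $\LDim + \NN$, so the internal grading of $\LMod$ takes values in $\NN$. By Proposition~\ref{propSVw0} the vector $\omega_0 \in \LMod$ is non-zero, and its grade was identified as $\ell$, so necessarily $\ell \in \NN$. (One can make the grade computation explicit by noting that $\omega_0 = (L_0 - \RDim)y$ and that $L_0$ preserves the $\LOd$-eigenspace of $y$, which has eigenvalue $\RDim$; since $L_0$ and $\LOd$ agree on the $L_0$-diagonalisable submodule $\LMod$, this gives $L_0 \omega_0 = \RDim \omega_0$.)

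For the second claim, suppose $\ell > 0$. Then $\omega_0$ has $L_0$-eigenvalue $\LDim + \ell \neq \LDim$, so it is not a scalar multiple of the cyclic highest weight vector $x$; it is therefore a \emph{proper} singular vector of $\LMod$, which exhibits $\LMod$ as reducible. Writing $\LMod = \VerMod_{\LDim,c} / N$, either $N \neq 0$ (already a non-trivial submodule of $\VerMod_{\LDim,c}$), or else any non-trivial proper submodule of $\LMod$ pulls back to a non-trivial proper submodule of $\VerMod_{\LDim,c}$; in either case, $\VerMod_{\LDim,c}$ is reducible. By the standard fact recalled after \eqref{eq: uea null in Verma}, a reducible Virasoro Verma module contains a proper singular vector, and the Kac determinant formula \eqref{eqnKac} then forces $\LDim = h_{r,s}$ for some $r,s \in \ZZ_+$.

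I do not expect any real obstacle here: the corollary is a direct consequence of Proposition~\ref{propSVw0} together with \eqref{eqnKac}. The only point requiring care is the propagation of reducibility from the quotient $\LMod$ to its covering Verma module $\VerMod_{\LDim,c}$, which is handled by the short case split above.
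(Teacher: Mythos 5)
Your proposal is correct and follows essentially the same route as the paper, which derives the corollary directly from \propref{propSVw0}: the non-zero singular vector $\omega_0$ sits at grade $\ell$ in the $\NN$-graded module $\LMod$, forcing $\ell \in \NN$, and for $\ell > 0$ it is a proper singular vector, so the Kac determinant formula (\ref{eqnKac}) must have a zero at $h = \LDim$. Your explicit case split showing that reducibility of the quotient $\LMod$ propagates to the covering Verma module $\VerMod_{\LDim,c}$ merely spells out a step the paper leaves implicit.
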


We will assume from here on that $\omega_0 = X x$, where $X \in \uea^-_{\ell}$
is normalised (and singular).
Since $y$ is related to the normalised \sv{} $\omega_0$ by
\eqnref{eqnDefOmega}, this also serves to normalise $y$
(equivalently, we rescale $\pi$).  However, there is
still some residual freedom in the choice of $y$.  Indeed, $y$ was only
chosen to be an $\LOd$-eigenvector in $\func{\pi^{-1}}{\Rhws}$, so we
are still free to make the redefinitions
\begin{equation} \label{eqnGT}
y \longrightarrow y + u \qquad \text{for any $u \in \LMod_{\ell}$,}
\end{equation}
without affecting the defining property (or normalisation) of $y$.
Following \cite{RidLog07}, we shall refer to such redefinitions as gauge
transformations.  These transformations obviously do not change the abstract
structure of the staggered module (for a more formal statement see
\propref{prop: equivalence}).

It is natural then to enquire about gauge-invariant quantities as one expects
that it is these, and only these, which characterise the staggered module.
When $\ell > 0$, a simple but important example is given by \cite{RidPer07}
\begin{equation} \label{eqnDefBeta}
\beta = \inner{x}{X^{\dagger} y}, \qquad \text{(recall $\omega_0 = X x$).}
\end{equation}
This $\beta$ is obviously gauge-invariant, as
$\tinner{x}{X^{\dagger} u} = \tinner{\omega_0}{u} = 0$ for all
$u \in \LMod_{\ell}$.  In the physics literature, this has been called the
logarithmic coupling for field-theoretic reasons.\footnote{We remark that
when $\ell > 0$, one can extend the definition of the Shapovalov
form to $\LMod \times \StagMod$ by noting that for $u = U x \in \LMod$,
\begin{equation*}
\tinner{\vphantom{U^{\dagger}} u}{y} = \tinner{\vphantom{U^{\dagger}} U x}{y} = \tinner{x}{U^{\dagger} y} \qquad \text{and} \qquad U^{\dagger} y \in \LMod.
\end{equation*}
With this extension, we can write $\beta = \tinner{\omega_0}{y}$.  One
can also define an extended scalar product when $\ell = 0$, but in this 
case $\tinner{x}{x}$ necessarily vanishes:
\begin{equation*}
\tinner{x}{x} = \tinner{x}{\brac{L_0 - \RDim} y} = \tinner{\brac{L_0 - \LDim} x}{y} = 0 \qquad \text{($\ell = 0$).}
\end{equation*}
We must instead take $\tinner{x}{y} = 1$.  These extensions are important in
applications to \lcft{} in which they give specialisations of so-called
two-point correlation functions \cite{FloBit03,GabAlg03}.  However, we
will have no need of them here.  We only mention that the
non-diagonalisability of $L_0$ on $\StagMod$ is not in conflict with its
self-adjointness because such extensions of the Shapovalov form are
necessarily indefinite \cite{BogInd74}.}  Here, we shall just refer to it
as the \emph{beta-invariant}.  Note that since $\tinner{x}{x} = 1$ and
$\dim \LMod_0 = 1$,
\begin{equation} \label{eqnBeta}
X^{\dagger} y = \beta x \qquad \text{($\ell > 0$).}
\end{equation}
We further note that the numerical value of this invariant depends upon the
chosen normalisations of $\omega_0$ and $y$ (which is why we have specified
these normalisations explicitly).\footnote{A historical comment is in
order here.  The notation $\beta$ for a quantity distinguishing staggered
modules dates back to \cite{GabInd96}. There however, $\beta$ was defined
by a particular ``gauge-fixing''.  In our language, their proposal was that
one chooses (gauge-fixes) $y$ such that $L_n y = 0$ for all $n>1$, and then
defines $\beta$ by $L_1^{\ell} y = \beta x$.  Comparing with \eqnref{eqnBeta}
and our normalisation of $X$, we see that this choice of gauge will
reproduce the values of our definition.  But, there remains the question
of whether it is always possible to perform such a convenient gauge-fixing.
For the modules considered in \cite{GabInd96} this was the case, but unfortunately it is not possible in general.
Counterexamples are easy to construct and we
offer the staggered modules with $c=0$ ($t=\tfrac{3}{2}$),
$\LMod = \VerMod_0 / \VerMod_1$ and $\RMod = \VerMod_2 / \VerMod_h$ with
$h = 5$ and $7$ as the simplest such examples.}
It is worth pointing out that if $X$ were composite, $X=X^{(1)} X^{(2)}$
with both $X^{(j)}$ non-trivial, then
\begin{equation} \label{eqnBeta=0}
\beta = \Bigl\langle x , \bigl( X^{(2)} \bigr)^{\dagger} \bigl( X^{(1)} \bigr)^{\dagger} y \Bigr\rangle = \Bigl\langle X^{(2)} x , \bigl( X^{(1)} \bigr)^{\dagger} y \Bigr\rangle = 0,
\end{equation}
because $X^{(2)} x \in \LMod$ is singular and $\bigl( X^{(1)} \bigr)^{\dagger} y \in \LMod$.  The beta-invariant is therefore always trivial in such cases.  Non-trivial invariants can still be defined when $X$ is composite, though their properties necessarily require a little more background.  We will defer a formal discussion of such invariants until \secref{sec: invariants}.

Consider now the right module $\RMod = \VerMod_{\RDim} / \mathcal{J}$.
If $\mathcal{J}$ is non-trivial, then it will be generated as a submodule
of $\VerMod_{\RDim}$ by one or two \svs{} of the same rank (\figref{figVerma}).
When one
generator suffices, we denote it by $\overline{X} v_{\RDim}$; when two
generators are required, they will be denoted by $\overline{X}^- v_{\RDim}$
and $\overline{X}^+ v_{\RDim}$.  As usual, we take all of these to be normalised.  The corresponding grades are
$\overline{\ell}$ or $\overline{\ell}^- < \overline{\ell}^+$, respectively.
However, unless we are explicitly discussing the case of two independent
generators, we shall suppress the superscript indices for clarity.

We have introduced $\omega_0$, $\omega_1$ and $\omega_2$ to specify the action
of $\uea^{\geqslant 0}$ on $y$.  When $\mathcal{J}$ is non-trivial, the action
of $\uea^-$ on $y$ will not be free.  Instead, we have
$\overline{X} \Rhws = 0$ in $\RMod$, hence
\begin{equation} \label{eqnDefOmegaBar}
\overline{X} y = \varpi \qquad \text{(in $\StagMod$)}
\end{equation}
defines a 
vector $\varpi \in \LMod$ (two vectors
$\varpi^{\pm}$ when $\mathcal{J}$ is generated by two \svs{}).  The grade
of $\varpi$ is then $\ell + \overline{\ell}$.
Recalling that $\Stagg$ as a vector space is just the direct sum of $\sH^\lft$
and $\sH^\rgt$, and considering a vector space basis of $\Verma_{h^\rgt}$ that
extends a basis for the submodule $\mathcal{J}$, it is
easy to see that the
Virasoro module structure of $\StagMod$ is completely determined by
$\omega_0$, $\omega_1$, $\omega_2$ and $\varpi$.

The existence of $\varpi$ also leads to the following important structural observation.

\begin{proposition} \label{propVanSVs}
When $\sH^\rgt$ is not Verma, so $\overline{X}$ is defined,
we have $\overline{X} \omega_0 = 0$.
\end{proposition}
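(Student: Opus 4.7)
The plan is to compute $\overline{X}\omega_0$ in two different ways and match the results grade by grade in $\LMod$. First, since $\omega_0 \in \LMod$ and $\LMod$ is a submodule of $\StagMod$, we have $\overline{X}\omega_0 \in \LMod$. Because $\omega_0$ is an honest $L_0$-eigenvector at grade $\ell$ by \propref{propSVw0}, and $\overline{X} \in \uea^-_{\overline{\ell}}$ satisfies $[L_0, \overline{X}] = \overline{\ell}\, \overline{X}$, the vector $\overline{X}\omega_0$ is itself an $L_0$-eigenvector sitting at the single grade $\ell + \overline{\ell}$ of $\LMod$.

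For the second computation, I rewrite $\omega_0 = (L_0 - \RDim) y$ from \eqnref{eqnDefOmega} and slide $L_0$ past $\overline{X}$ using the same commutation:
\begin{equation*}
\overline{X}\omega_0 = \overline{X}(L_0 - \RDim) y = (L_0 - \overline{\ell} - \RDim)\,\overline{X} y = (L_0 - \overline{\ell} - \RDim)\,\varpi,
\end{equation*}
with $\varpi \in \LMod$ as defined in \eqnref{eqnDefOmegaBar}. Since $L_0$ is diagonalisable on $\LMod$, I may decompose $\varpi = \sum_n \varpi_n$ into $L_0$-homogeneous components with $L_0 \varpi_n = (\LDim + n)\varpi_n$, so the expression becomes $\overline{X}\omega_0 = \sum_n (n - \ell - \overline{\ell})\,\varpi_n$.

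Matching the two calculations in each $L_0$-eigenspace of $\LMod$ closes the argument. For every $n \neq \ell + \overline{\ell}$, the grade-$n$ contribution $(n - \ell - \overline{\ell})\varpi_n$ on the right must vanish because the left-hand side has no component there, forcing $\varpi_n = 0$. At the remaining grade $\ell + \overline{\ell}$, the numerical prefactor $(n - \ell - \overline{\ell})$ is itself zero. Either way, every term in the sum dies, giving $\overline{X}\omega_0 = 0$. I expect no real obstacle here; the only subtlety worth flagging is that although $y$ itself is not an $L_0$-eigenvector, its image $\varpi = \overline{X} y$ lands in $\LMod$ where $L_0$ is diagonalisable, which is what licenses the $L_0$-grade decomposition used in the comparison.
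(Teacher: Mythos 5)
Your proof is correct and is essentially the paper's own argument: both hinge on the single commutation $\overline{X}\brac{L_0 - \RDim}y = \brac{L_0 - \RDim - \overline{\ell}}\overline{X}y = \brac{L_0 - \RDim - \overline{\ell}}\varpi$ together with the fact that $\varpi$ lies in $\LMod$, where $L_0$ is diagonalisable. The only (cosmetic) difference is that the paper observes directly that $\varpi = \overline{X}y$ is homogeneous of $L_0$-eigenvalue $\RDim + \overline{\ell}$ (since $y$ is an $\LOd$-eigenvector), whereas you recover this by decomposing $\varpi$ into graded pieces and matching against the first computation.
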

\begin{proof}
Since $\overline{X} \in \uea^-_{\overline{\ell}}$,
\begin{equation}
\overline{X} \omega_0 = \overline{X} \brac{L_0 - \RDim} y = \brac{L_0 - \RDim - \overline{\ell}} \overline{X} y = \brac{L_0 - \RDim - \overline{\ell}} \varpi = 0,
\end{equation}
as $\varpi$ is an $L_0$-eigenvector of dimension $\RDim + \overline{\ell}$.
\end{proof}

\noindent We remark that the vanishing of $\overline{X} \omega_0$ implies
that there are no non-zero singular vectors in
$\sH^\lft_{\ell+\overline{\ell}}$.  Indeed, the normalised \sv{} of this
grade is $\overline{X} X x$ (which is composite if $\ell>0$).  Thus we
may interpret \propref{propVanSVs} as saying that if a \sv{} of
$\VerMod_{\RDim}$ is set to
zero in $\RMod$, then the \sv{} of $\VerMod_{\LDim}$ of the same conformal
dimension must also be set to zero in $\LMod$.  Otherwise, the module
$\StagMod$ cannot be staggered.  Contrapositively, if $\LMod$ has a
non-trivial \sv{} (of rank greater than that of $\omega_0$),
then $\RMod$ must have a
non-trivial \sv{} of the same conformal dimension (more formally, there
is a module homomorphism $\RMod \rightarrow \LMod$ which maps
$\Rhws \mapsto \omega_0$).  In particular, if $\LMod$ is a Verma module,
then $\RMod$ must likewise be Verma.

It turns out that there is some redundancy inherent in describing a staggered module in terms of the vectors $\omega_0$, $\omega_1$, $\omega_2$ and $\varpi$.

\begin{proposition} \label{propDetBy}
The vector $\varpi$ is determined by the knowledge of
$\LMod$, $\RMod$, $\omega_1$ and $\omega_2$.
\end{proposition}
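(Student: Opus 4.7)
The plan is to first use the singularity of $\overline{X} v_{\RDim}$ in $\VerMod_{\RDim}$ to pin down the positive-mode images of $\varpi$, and then argue that these images in turn pin down $\varpi$ itself because $\LMod$ carries no nontrivial singular vector at the relevant grade. Concretely, for each $n > 0$, the element $L_n \overline{X} \in \uea$ annihilates $v_{\RDim}$, so \eqnref{eq: uea null in Verma} supplies $U_0^{(n)}, U_1^{(n)}, U_2^{(n)} \in \uea$ with
\begin{equation*}
L_n \overline{X} = U_0^{(n)} \brac{L_0 - \RDim \wun} + U_1^{(n)} L_1 + U_2^{(n)} L_2.
\end{equation*}
Applying both sides to $y$ and using \eqnref{eqnDefOmega}, this turns into
\begin{equation*}
L_n \varpi = U_0^{(n)} \omega_0 + U_1^{(n)} \omega_1 + U_2^{(n)} \omega_2 \in \LMod.
\end{equation*}
Since $\omega_0 = Xx$ is already encoded in $\LMod$ through the normalised singular vector $X \in \uea^-_{\ell}$, this exhibits $L_n \varpi$ as a quantity determined by $\LMod$, $\RMod$, $\omega_1$ and $\omega_2$ for every $n > 0$.

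I would then observe that knowing all positive-mode images determines $\varpi$ itself. Two admissible choices $\varpi, \varpi'$ in $\LMod_{\ell + \overline{\ell}}$ agreeing on $L_n$ for all $n > 0$ would differ by a vector annihilated by the whole of $\uea^+$, i.e., by a singular vector of $\LMod$ at grade $\ell + \overline{\ell}$. The task therefore reduces to ruling out any nonzero such singular vector.

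The main obstacle is precisely this vanishing. \propref{propVanSVs} gives $\overline{X} X x = \overline{X} \omega_0 = 0$ in $\LMod$, so the preimage $\overline{X} X v_{\LDim} \in \VerMod_{\LDim}$ lies in the submodule by which we quotient to obtain $\LMod$. This preimage is nonzero (because $\uea^-$ has no zero divisors) and is singular in $\VerMod_{\LDim}$: the canonical isomorphism $\uea^- X v_{\LDim} \cong \VerMod_{\RDim}$ sending $X v_{\LDim} \mapsto v_{\RDim}$ identifies it with the singular vector $\overline{X} v_{\RDim}$, so it is annihilated by $\uea^+$ inside $\VerMod_{\LDim}$ as well. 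Because Virasoro Verma modules carry at most one singular vector per grade, $\overline{X} X v_{\LDim}$ is the unique such vector at grade $\ell + \overline{\ell}$ in $\VerMod_{\LDim}$; its vanishing in $\LMod$ therefore eliminates every singular vector of $\LMod$ at that grade, and $\varpi$ is determined. In the braid case, where $\mathcal{J}$ has two generators $\overline{X}^{\pm}$ and hence two vectors $\varpi^{\pm}$, the identical argument applied to each $\overline{X}^{\pm}$ separately determines $\varpi^{\pm}$.
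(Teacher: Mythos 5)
Your proposal is correct and follows essentially the same route as the paper's own proof: expressing $L_n \overline{X}$ via \eqnref{eq: uea null in Verma} to determine $L_n \varpi$ from the data, and then noting that the joint kernel of the $L_n$ ($n>0$) on $\LMod_{\ell+\overline{\ell}}$ consists only of \svs{}, whose sole candidate $\overline{X}\omega_0$ vanishes by \propref{propVanSVs}. Your final paragraph merely spells out in more detail why $\overline{X} X v_{\LDim}$ is the unique singular vector of $\VerMod_{\LDim}$ at that grade, which the paper leaves implicit.
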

\begin{proof}
We consider the action of $L_n$ on $\varpi = \overline{X} y$ for $n>0$,
recalling that $\overline{X} \in \uea^-_{\overline{\ell}}$.  First note
that $L_n \overline{X} \in \uea$ annihilates $v_{\RDim} \in \VerMod_{\RDim}$,
since $\overline{X} v_{\RDim}$ is singular.  Hence, we may write
\begin{equation}
L_n \overline{X} = U_0 \brac{L_0 - \RDim} + U_1 L_1 + U_2 L_2,
\end{equation}
for some $U_0, U_1, U_2 \in \uea$ (depending on $n$).  Such $U_j$ can
clearly be computed, for example by \PBW-ordering
$L_n \overline{X}$ and in each resulting term, rewriting the rightmost
$L_m$ (if $m > 2$) in terms of $L_1$ and $L_2$.  It follows that
\begin{equation} \label{eq: affine relation}
L_n \varpi = U_0 \omega_0 + U_1 \omega_1 + U_2 \omega_2,
\end{equation}
so it remains to demonstrate that knowing $L_n \varpi$ for all $n>0$ is
equivalent to knowing $\varpi \in \LMod_{\ell + \overline{\ell}}$.  But,
the intersection of the kernels of the $L_n$ with $n>0$ on
$\LMod_{\ell + \overline{\ell}}$ is just the set of \svs{} of this subspace. 
The only candidate for such a \sv{} is $\overline{X} \omega_0$, and this
vanishes by \propref{propVanSVs}.
\end{proof}

\noindent We recall that $\omega_0$ is already determined
by $\LMod$ and $\RMod$, which is why it was not referred to explicitly in
\propref{propDetBy}.  We will therefore refer to the pair
\begin{equation*}
\brac{\omega_1 , \omega_2} \in \LMod_{\ell - 1} \oplus \LMod_{\ell - 2}
\end{equation*}
as the \emph{data} of a given staggered module.  That is not to say that
$\omega_0$ and the $\varpi$ will not play an important r\^{o}le in what
follows.  Rather, it just notes that $\omega_1$ and $\omega_2$ are
sufficient to describe $\StagMod$ completely.  One simple consequence
arises when $\ell = 0$, for then there is only one possible choice of data,
$\omega_1=\omega_2=0$:

\begin{corollary} \label{cor: uniqueness l 0}
If $\ell = 0$, there exists at most one staggered module (up to isomorphism) for any given choice of left and right modules.
\end{corollary}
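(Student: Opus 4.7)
The plan is to observe that when $\ell = 0$, the data $(\omega_1, \omega_2)$ is forced to be trivial purely for dimensional reasons, and then invoke the fact already established in the text that this data (together with $\LMod$ and $\RMod$) completely determines the staggered module up to isomorphism.

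First I would note that $\LMod$ is a highest weight module of minimal conformal dimension $\LDim$, so $\LMod_n = 0$ for all $n < 0$. In particular, when $\ell = 0$, the grading forces
\begin{equation*}
\omega_1 = L_1 y \in \LMod_{-1} = 0 \qquad \text{and} \qquad \omega_2 = L_2 y \in \LMod_{-2} = 0.
\end{equation*}
Thus every staggered module $\StagMod$ corresponding to the given $\LMod$ and $\RMod$ has the same data $(\omega_1, \omega_2) = (0,0)$.

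Next I would address $\omega_0$ and (if $\RMod$ is not Verma) the vector $\varpi$. When $\ell = 0$ we already remarked that $\omega_0 = X x$ with $X \in \uea^-_0$ normalised, forcing $X = \wun$ and hence $\omega_0 = x$; so $\omega_0$ is fixed. If $\RMod$ is not Verma, \propref{propDetBy} expresses $\varpi$ in terms of $\LMod$, $\RMod$, $\omega_1$ and $\omega_2$, so $\varpi$ is likewise uniquely determined by the choice of left and right modules.

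Finally, as the text records that ``the Virasoro module structure of $\StagMod$ is completely determined by $\omega_0$, $\omega_1$, $\omega_2$ and $\varpi$,'' any two staggered modules with these prescribed $\LMod$ and $\RMod$ have identical such data and therefore define isomorphic staggered modules (formally by the forthcoming \propref{prop: equivalence}, which shows that two staggered modules are isomorphic whenever their data agree after an appropriate gauge choice). The main (minor) obstacle is simply being careful about the case distinction $\RMod$ Verma vs.\ non-Verma when invoking \propref{propDetBy}; no serious calculation is required.
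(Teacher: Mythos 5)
Your proposal is correct and follows essentially the same route as the paper: when $\ell = 0$ the grading forces $\omega_1 = \omega_2 = 0$, so the data is unique, and since the data (together with $\LMod$ and $\RMod$, which determine $\omega_0$ and $\varpi$) fixes the module structure up to isomorphism, uniqueness follows. The paper states this in one line immediately after \propref{propDetBy}; your extra care about $\LMod_{-1} = \LMod_{-2} = 0$ and the Verma/non-Verma distinction is harmless but not needed beyond what the paper already records.
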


\begin{example} \label{ex: SLE l 0}
In \cite{KytFro08}, staggered modules with $\ell = 0$ were identified
in the context of the Schramm-Loewner evolution curve with
parameters\footnote{For these parameters, we follow here and in later examples the established notation of the Schramm-Loewner evolution literature,
where the curve and its growth process are often denoted
simply by SLE${}_\kappa(\rho)$. Roughly speaking, $\kappa$ determines
the universality class (the central charge and fractal dimension
of the curve), whereas $\rho$ is related to the choice of boundary conditions.
We are also using $\rho$ to denote the rank of a \sv{} (as in
\secref{secNotation}).  We trust that this will not lead to any
confusion as it is clear that \sv{} ranks are completely unrelated to
SLE parameters.}
$\kappa = 4t > 0$ and $\rho = \tfrac{1}{2} \brac{\kappa-4}$.
More precisely, at these parameters a staggered module
$\sS$ with $h^\lft = h^\rgt = h_{0,1} = \tfrac{1}{4} \brac{2-t}$ is realised as a space of local martingales of the SLE${}_\kappa(\rho)$ growth process.
The central charge of this module is $c = c(t) = c(\kappa / 4)$.
The computations do not in general identify the left and the right modules,
but from the Feigin-Fuchs classification, we may for example conclude that
in the case of irrational $\kappa$, $\sH^\lft = \sH^\rgt = \Verma_{h_{0,1}}$
(these Verma modules are of point type).  In other words,
the short exact sequence has the form
\begin{equation}
\dses{\Verma_{h_{0,1}}}{}{\sS}{}{\Verma_{h_{0,1}}} \qquad \text{($t > 0$, $t \notin \QQ$).}
\end{equation}
We illustrate these staggered modules in \figref{fig: SLE l 0} (left).
By \corref{cor: uniqueness l 0}, such staggered modules
are unique when they exist.  But this concrete construction demonstrates
existence, so we can conclude that at least one staggered module exists
for any $t \in \bR_+$, hence two for any central charge $-\infty<c<1$
(one for $c=t=1$).
\end{example}

\begin{example} \label{ex:-2,11,13}
In \cite{GurLog93}, it was shown that the logarithmic singularity in a certain $c=-2$ ($t=2$) \cft{} correlation function implied the existence of a staggered module $\StagMod$ with $h^\lft=h^\rgt=0$.  This module was constructed explicitly in \cite{GabInd96} by fusing the irreducible module $\IrrMod_{-1/8}$ with itself.  The resulting structure is summarised by the short exact sequence
\begin{equation} \label{eqnSES3}
\dses{\VerMod_0 / \VerMod_1}{}{\sS}{}{\VerMod_0 / \VerMod_3},
\end{equation}
and illustrated in \figref{fig: SLE l 0} (right).  In fact, this example is
also related to the SLE construction of \exref{ex: SLE l 0}.  For $\kappa=8$,
the weight $h_{0,1}$ vanishes and the left and right modules can be
computed explicitly to be those given in (\ref{eqnSES3})
\cite{KytFro08}.
\end{example}

{
\psfrag{c=-2}[][]{$c=-2$}
\psfrag{irrational}[][]{$c=c(t)\;$, $\; t \in \bR_+ \setminus \bQ$}
\psfrag{xvec}[][]{$x$}
\psfrag{yvec}[][]{$y$}
\psfrag{Vh01}[][]{$\Verma_{h_{0,1}}$}
\psfrag{Hrgt}[][]{$\Verma_0 / \Verma_3$}
\psfrag{Hlft}[][]{$\Verma_0 / \Verma_1$}
\psfrag{firL0}[][]{$L_0-h_{0,1}$}
\psfrag{secL0}[][]{$L_0$}
\psfrag{0}[][]{$\scriptstyle 0$}
\psfrag{1}[][]{$\scriptstyle 1$}
\psfrag{3}[][]{$\scriptstyle 3$}
\begin{figure}
\begin{center}
\includegraphics[height=4.75cm]{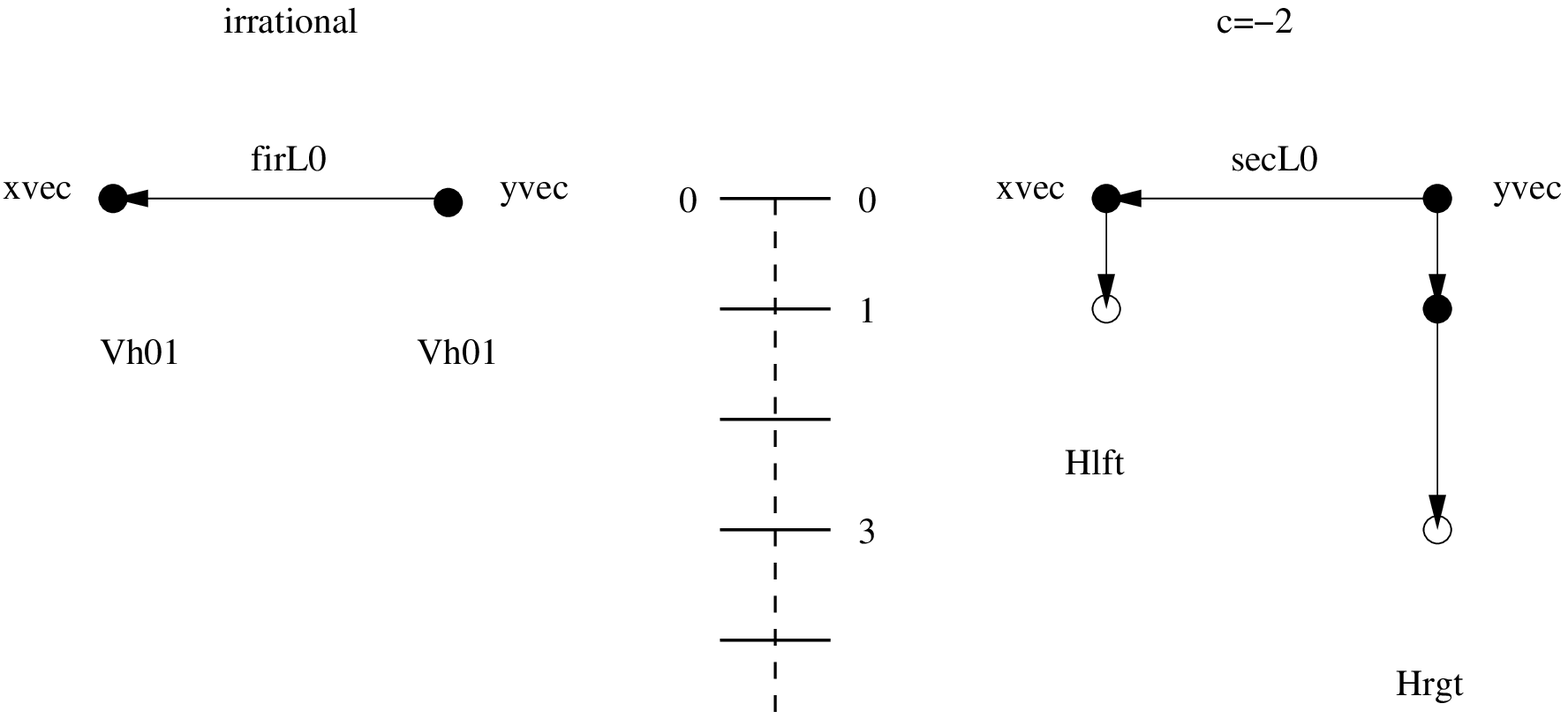}
\caption{An illustration of the staggered modules of Examples
\ref{ex: SLE l 0} (left) and \ref{ex:-2,11,13} (right).  We have indicated
the \sv{} structure of the respective left and right modules by using black
circles for the generating states and \svs{}, and white circles to
indicate \svs{} of the corresponding Verma modules which have been set
to zero.  The dividing scale gives the grades.  It should
be understood that \svs{} of the right module need not ``lift'' to
\svs{} of the staggered module, and are indicated purely to facilitate the
discussion.  (Technically, these lifts are \emph{subsingular
vectors} of the staggered module --- they become singular upon taking an
appropriate quotient.)} \label{fig: SLE l 0}
\end{center}
\end{figure}
}

\noindent We remark that in \exref{ex:-2,11,13}, the vector $L_{-1} y$ is an eigenvector of $L_0$ which does not belong to $\LMod$.  This shows that the submodule of $L_0$-eigenvectors need not coincide with the left module, and in fact need not be a \hwm{} in general.

There is one obvious deficiency inherent in describing staggered modules by
their data $\brac{\omega_1 , \omega_2}$.  This is the fact that neither
$\omega_1$ nor $\omega_2$ are gauge-invariant in general.  Under the gauge
transformations (\ref{eqnGT}), the data transform as follows:
\begin{equation}
\brac{\omega_1 , \omega_2} \longrightarrow
\brac{\omega_1 + L_1 u , \omega_2 + L_2 u} \qquad \text{($u \in \LMod_{\ell}$).}
\end{equation}
This suggests introducing maps $g_u$ for each $u \in \LMod_{\ell}$ which take
$\LMod_{\ell - 1} \oplus \LMod_{\ell - 2}$ into itself via
\begin{equation} \label{eqnDataGT}
\func{g_u}{w_1 , w_2} = \brac{w_1 + L_1 u , w_2 + L_2 u} \qquad
\text{($u \in \LMod_{\ell}$).}
\end{equation}
We will also refer to these maps as gauge transformations.  Clearly
the composition of gauge transformations is the vector space addition of $\Hlft_\ell$.  It is then natural to lift the scalar multiplication of $\Hlft_\ell$ to the set of gauge transformations, making the latter into a vector space itself.  We denote this vector space by $\gauge = \set{g_u : u \in \Hlft_\ell}$.  We further note that the kernel of the map $u \mapsto g_u$ is one-dimensional, spanned by the singular vector $\omega_0$.  Thus, $\gauge$ may be identified with $\LMod_{\ell} / \CC \omega_0$.  In particular, its dimension is
\begin{equation} \label{eqnDimGauge}
\dim \gauge = \dim \Hlft_\ell - 1.
\end{equation}

Because the gauge-transformed data describes the same staggered module as the original data, we will say that the data $\brac{\omega_1, \omega_2}$ and its transforms $\func{g_u}{\omega_1, \omega_2}$ are \emph{equivalent}, for all $u \in \LMod_{\ell}$.  The following result now characterises isomorphic staggered modules completely.

\begin{proposition} \label{prop: equivalence}
Let $\StagMod$ and $\StagMod'$ be staggered modules with the same left and right modules $\LMod$ and $\RMod$ and with respective data $\brac{\omega_1 , \omega_2}$ and $\brac{\omega_1' , \omega_2'}$.  Then, upon identifying the two left modules via $x' = x$, we have $\StagMod' \cong \StagMod$ if and only if the data $\brac{\omega_1 , \omega_2}$ and $\brac{\omega_1' , \omega_2'}$ are equivalent.
\end{proposition}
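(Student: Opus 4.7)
The plan is to treat both directions by exploiting the intuitive picture that a gauge transformation $g_u$ corresponds to shifting the chosen representative $y$ of $\pi^{-1}(\Rhws)$ by $u \in \LMod_\ell$.  For the forward direction, assuming $(\omega_1', \omega_2') = \func{g_u}{\omega_1, \omega_2}$, I would introduce the element $\tilde y := y' - u \in \StagMod'$ and observe that $\pi'(\tilde y) = \Rhws$ and that $\tilde y$ is an $\LOd$-eigenvector of eigenvalue $\RDim$.  A short calculation, using that $L_0 u = (\LDim + \ell) u = \RDim u$, then gives $(L_0 - \RDim) \tilde y = \omega_0' = \omega_0$, $L_1 \tilde y = \omega_1' - L_1 u = \omega_1$ and $L_2 \tilde y = \omega_2' - L_2 u = \omega_2$, so that $\tilde y$ realises in $\StagMod'$ exactly the structural relations defining $y$ in $\StagMod$.

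With $\tilde y$ in hand, I would build the candidate isomorphism $\phi : \StagMod \to \StagMod'$ by requiring $\phi|_{\LMod} = \id_{\LMod}$ and $\phi(y) = \tilde y$, extending $\vir$-equivariantly.  The main obstacle is well-definedness, which amounts to checking that every relation satisfied by $y$ in $\StagMod$ is also satisfied by $\tilde y$ in $\StagMod'$.  The $\uea^{\geqslant 0}$-relations are immediate from the above, but when $\RMod$ is not Verma there is the extra relation $\overline{X} y = \varpi$ (and both $\overline{X}^{\pm} y = \varpi^{\pm}$ in the braid case).  Here I would establish the key identity $\varpi' - \varpi = \overline{X} u$ by invoking the formula $L_n \varpi = U_0 \omega_0 + U_1 \omega_1 + U_2 \omega_2$ from the proof of \propref{propDetBy} (and its counterpart for $\varpi'$), noting that the same $U_0, U_1, U_2$ appear in both, and computing $L_n \overline{X} u = U_1 L_1 u + U_2 L_2 u$ since $u$ is killed by $L_0 - \RDim$.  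Combining these yields $L_n \brac{\varpi' - \varpi - \overline{X} u} = 0$ for all $n > 0$, so that $\varpi' - \varpi - \overline{X} u$ is a \sv{} of $\LMod$ at grade $\ell + \overline{\ell}$.  But \propref{propVanSVs}, together with the uniqueness up to scalar of Verma \svs{} at each grade, forces any such \sv{} to vanish.  Once $\phi$ is well-defined it is manifestly a module homomorphism, and since it induces the identity on both $\LMod$ and $\RMod$, the five lemma renders it an isomorphism.

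For the converse, given an isomorphism $\phi : \StagMod \to \StagMod'$ compatible with the identification $x = x'$, I would set $u := y' - \phi(y)$.  Because $\phi$ commutes with $L_0$ and is the identity on $\LMod$, both $y'$ and $\phi(y)$ project under $\pi'$ to $\Rhws$ and share the $\LOd$-eigenvalue $\RDim$; hence $u$ lies in $\LMod$ and is an $L_0$-eigenvector of eigenvalue $\LDim + \ell$, placing it in $\LMod_\ell$.  Finally, $\omega_i' = L_i y' = L_i \brac{\phi(y) + u} = \phi(L_i y) + L_i u = \omega_i + L_i u$ for $i = 1, 2$ yields $(\omega_1', \omega_2') = \func{g_u}{\omega_1, \omega_2}$, so the data are equivalent.
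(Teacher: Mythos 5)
Your proof is correct and takes essentially the same route as the paper's: the forward direction realises the isomorphism by shifting the lift by $u$, and the converse defines $u$ as the difference of the two lifts and checks it is an $L_0$-eigenvector lying in $\LMod_{\ell}$. The only difference is that you spell out the well-definedness check for non-Verma $\RMod$ (the identity $\varpi' - \varpi = \overline{X} u$, obtained via \propDref{propDetBy}{propVanSVs}), which the paper compresses into the single phrase ``then $y' = y + u$ defines the isomorphism''.
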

\begin{proof}
If $\brac{\omega_1' , \omega_2'} = \func{g_u}{\omega_1 , \omega_2}$ for some $u \in \LMod_{\ell}$, then $y' = y + u$ defines the isomorphism $\StagMod' \cong \StagMod$.  Conversely, suppose that $\psi \colon \StagMod' \rightarrow \StagMod$ is an isomorphism extending the identification of the respective left modules (that is, such that $\func{\psi}{x'} = x$).  Then,
\begin{equation}
L_0 y = \RDim y + \omega_0 \qquad \text{and} \qquad L_0 \func{\psi}{y'} = \func{\psi}{\RDim y' + \omega_0'} = \RDim \func{\psi}{y'} + \omega_0,
\end{equation}
so $\func{\psi}{y'} - y$ is an $L_0$-eigenvector of dimension $\RDim$.  We may therefore take $u = \func{\psi}{y'} - y \in \LMod_{\ell}$, hence
\begin{equation}
\func{\psi}{\omega_i'} = L_i \func{\psi}{y'} = L_i \brac{y + u} = \omega_i + L_i u \qquad \text{($i = 1,2$),}
\end{equation}
as required.
\end{proof}

This completes the analysis of when two staggered modules are isomorphic.  It remains however, to study the existence question.  The question of which data $(\omega_1, \omega_2)$ actually correspond to staggered modules is quite subtle, and we will address it in the following sections.  First however, we present two motivating examples from the literature to illustrate this subtlety.

\begin{example} \label{ex:-2,13,15}
In \cite{GabInd96}, it was shown that fusing the two $c=-2$ ($t=2$)
irreducible modules $\IrrMod_{-1/8}$ and $\IrrMod_{3/8}$ results in a
staggered module $\StagMod$ given by the short exact sequence
\begin{equation} \label{eqnSES1}
\dses{\VerMod_0 / \VerMod_3}{}{\StagMod}{}{\VerMod_1 / \VerMod_6}.
\end{equation}
We illustrate $\StagMod$ in \figref{figEx12} (left).  In our notation,
$\ell = 1$, $\omega_0 = L_{-1} x$, $\omega_1 = L_1 y = \beta x$ where
$\beta$ is the beta-invariant of \eqnref{eqnDefBeta}, and
$\omega_2 = L_2 y = 0$.  The explicit calculation shows that $\beta = -1$.

It seems reasonable to suppose that because the data
$\brac{\omega_1 = \beta x , \omega_2 = 0}$ of the staggered module
(\ref{eqnSES1}) is fixed by the beta-invariant, there should exist
a continuum of such modules, one for each value of $\beta$.  This was
suggested in \cite{GabInd96}, referring to Rohsiepe \cite{RohNic96},
but we are not aware of any proof of this fact.  Indeed, one of our
aims (see \exDref{ex: comparison of similar cases}{ex:-2,13,15 again} in \secref{sec: general case}) is to prove and understand why this is indeed the case.
\end{example}

{
\psfrag{c=-2}[][]{$c=-2$}
\psfrag{c=0}[][]{$c=0$}
\psfrag{L0}[][]{$L_0 - 1$}
\psfrag{L1}[][]{$\beta^{-1} L_1$}
\psfrag{w0}[][]{$\omega_0$}
\psfrag{x}[][]{$x$}
\psfrag{y}[][]{$y$}
\psfrag{M13}[][]{$\VerMod_0 / \VerMod_3$}
\psfrag{M15}[][]{$\VerMod_1 / \VerMod_6$}
\psfrag{M12}[][]{$\VerMod_0 / \VerMod_2$}
\psfrag{M14}[][]{$\VerMod_1 / \VerMod_5$}
\psfrag{0}[][]{$\scriptstyle 0$}
\psfrag{1}[][]{$\scriptstyle 1$}
\psfrag{2}[][]{$\scriptstyle 2$}
\psfrag{3}[][]{$\scriptstyle 3$}
\psfrag{5}[][]{$\scriptstyle 5$}
\psfrag{6}[][]{$\scriptstyle 6$}
\psfrag{7}[][]{$\scriptstyle 7$}
\begin{figure}
\begin{center}
\includegraphics[height=8cm]{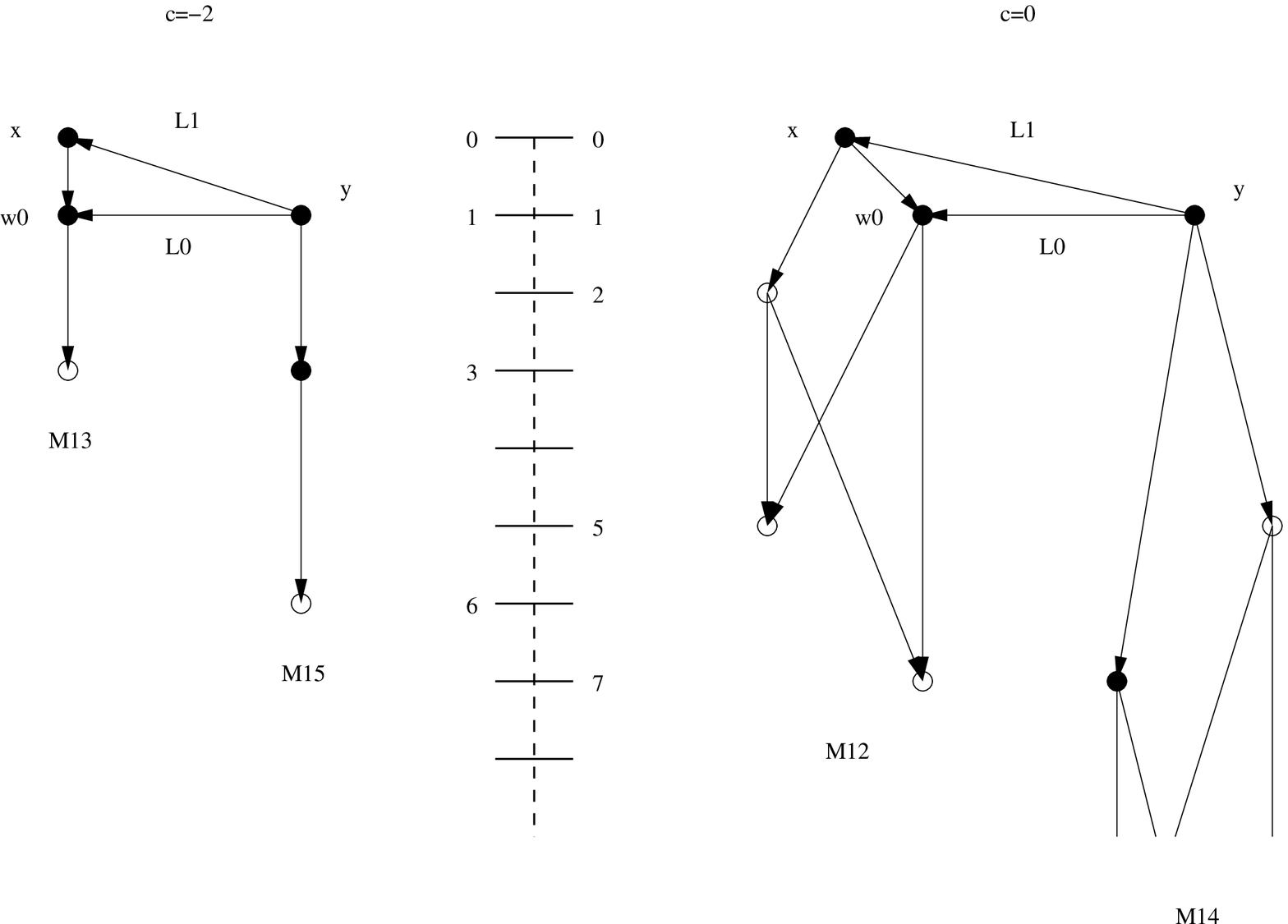}
\caption{An illustration of the staggered modules presented in Examples
\ref{ex:-2,13,15} (left) and \ref{ex:0,12,14} (right).  The structure is
to be interpreted as in \figref{fig: SLE l 0}.  We remark that when
$\beta = 0$, which is possible for the module on the left, the label 
$\beta^{-1} L_1$ should be interpreted as saying that $x$ cannot be
obtained from $y$ under the action of $L_1$, that is, $L_1 y = 0$}.
\label{figEx12}
\end{center}
\end{figure}
}

\begin{example} \label{ex:0,12,14}
A $c=0$ ($t = \tfrac{3}{2}$) staggered module with the short exact sequence
\begin{equation} \label{eqnSES2}
\dses{\VerMod_0 / \VerMod_2}{}{\StagMod}{}{\VerMod_1 / \VerMod_5}
\end{equation}
has appeared several times in the physics literature \cite{GurCon04,EbeVir06,RidPer07}.  We again have $\ell = 1$, $\omega_0 = L_{-1} x$, $\omega_1 = \beta x$ and $\omega_2 = 0$.  This time $\beta$ turns out to be $-\tfrac{1}{2}$.  This module is also illustrated in \figref{figEx12} (right).

One could be forgiven for thinking that because of the similarity of this
example and the last, there will be a continuum of staggered modules with
the exact sequence (\ref{eqnSES2}), parametrised by $\beta$.  But
surprisingly, this is not the case.  It was argued
in \cite{RidLog07} that $\beta = -\tfrac{1}{2}$ is the only possible value
for such a staggered module, and hence that such a staggered module is 
unique (up to isomorphism).  We shall prove this in 
\secref{sec: general case}
(\exDref{ex: comparison of similar cases}{ex:-2,13,15 again}).
\end{example}

There are some obvious structural differences between
\exDref{ex:-2,13,15}{ex:0,12,14}, but it is not immediately clear what
causes the observed restriction on the isomorphism classes of staggered
modules.  In fact, the desire to understand this mechanism is precisely
the original motivation for the research reported here.

\begin{example} \label{ex: SLE l 1}
The above two examples may in fact be regarded as members of another
family of staggered modules parametrised by $t$.  For
$t\in\bR_+\setminus \set{1}$, this family can again be realised
concretely as a module of local martingales of SLEs, with $\kappa=4t$ 
and $\rho=-2$ \cite{KytFro08}.  Each member has $h^\lft=0$ and $h^\rgt = 1$, 
but as in \exref{ex: SLE l 0}, determining the precise identity of $\LMod$
and $\RMod$ requires non-trivial calculations in general.  However, when
$\kappa$ is irrational, these identities are settled automatically,
because then $\Verma_{h^\lft}$ is of link type and $\Verma_{h^\rgt}$ is
of point type (irreducible).  By \propref{propSVw0}, $\omega_0 \in \LMod$
is non-vanishing, so $\LMod = \Verma_{h^\lft}$.  The exact sequence is
therefore
\begin{equation}
\dses{\VerMod_0}{}{\sS}{}{\VerMod_1} \qquad
\text{($t>0$, $t \notin \bQ$).}
\end{equation}
The beta-invariant was computed in \cite{KytFro08} (see also
\cite{RidLog07}) for all $t \in \bR_+ \setminus \set{1}$ to be $\beta=1-t$,
which coincides with the values in \exDref{ex:-2,13,15}{ex:0,12,14}
(when $t=2$ and $t=\tfrac{3}{2}$ respectively).  For these two rational
values, the left and right modules were also computed explicitly in the
SLE picture, finding
agreement with the fusion computations above. Thus this family of
examples shows an interesting interplay of continuously varying
beta-invariant, but discontinuously varying left and right modules.
\end{example}

\section{Constructing Staggered Modules:  Generalities} \label{sec: construction}

In the previous section, we have introduced staggered modules and determined some simple necessary conditions for their existence.  We now turn to the more subtle question of sufficient conditions for existence.  As we have seen in \exref{ex:0,12,14}, it is not true that given left and right modules, every possible choice of data $\brac{\omega_1 , \omega_2}$ describes a staggered module.  We are therefore faced with the task of having to determine which data give rise to staggered modules.  Such data will be termed \emph{admissible}.

One simple reason \cite{RohRed96} why a given set of data $\brac{\omega_1 , \omega_2}$ might fail to correspond to any staggered module is that there could exist an element $U \in \uea$ such that\footnote{We include a seemingly arbitrary ``$-$'' sign in the equation which follows (and in similar later equations) because it turns out to be convenient in the long run to be consistent with expressions such as that found in \eqnref{eq: uea null in Verma}.}
\begin{equation} \label{eqnUL1UL2}
U = U_1 L_1 = -U_2 L_2, \qquad \text{but} \qquad U_1 \omega_1 + U_2 \omega_2 \neq 0.
\end{equation}
For then, $U y = U_1 \omega_1 \neq -U_2 \omega_2 = U y$, a contradiction.  We mention that given any $U = U_1 L_1 = -U_2 L_2 \in \uea L_1 \cap \uea L_2$, the elements $U_1$ and $U_2$ are uniquely determined because $\uea$ has no zero-divisors.

We therefore define the subset
\begin{equation} \label{eqnDefData1}
\data = \set{(w_1, w_2) \in \Hlft_{\ell-1} \oplus \Hlft_{\ell-2}
\st U_1 w_1 + U_2 w_2 = 0 \text{ for all } U = U_1 L_1 = -U_2 L_2 \in \uea L_1 \cap \uea L_2}.
\end{equation}
With this notation, our necessary condition on the data becomes:

\begin{lemma} \label{lem: consistency of L1 L2}
If a staggered module with data $\brac{\omega_1, \omega_2}$ exists, then
$\brac{\omega_1, \omega_2} \in \data$.
\end{lemma}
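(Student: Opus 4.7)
The plan is to verify the defining condition of $\data$ directly, using that the data are defined via the action of $L_1$ and $L_2$ on the chosen lift $y \in \func{\pi^{-1}}{\Rhws}$. By construction, $\omega_1 = L_1 y$ and $\omega_2 = L_2 y$, so any $\uea$-linear relation between $L_1$ and $L_2$ forces the corresponding relation among $\omega_1$ and $\omega_2$.

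Concretely, suppose $\StagMod$ is a staggered module with data $\brac{\omega_1, \omega_2}$, and let $U \in \uea L_1 \cap \uea L_2$ be arbitrary, written as $U = U_1 L_1 = -U_2 L_2$ for the (unique) $U_1, U_2 \in \uea$ determined by the absence of zero-divisors in $\uea$. Then $U_1 L_1 + U_2 L_2 = 0$ holds as an identity in $\uea$, hence this element acts as zero on every $\uea$-module, in particular on $\StagMod$. Applying it to $y \in \StagMod$ gives
\begin{equation*}
0 = \brac{U_1 L_1 + U_2 L_2} y = U_1 L_1 y + U_2 L_2 y = U_1 \omega_1 + U_2 \omega_2.
\end{equation*}
Since this holds for every such $U$, we conclude $\brac{\omega_1, \omega_2} \in \data$ from the definition (\ref{eqnDefData1}).

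There is no genuine obstacle: the lemma is a direct reformulation of the motivating observation in \eqnref{eqnUL1UL2}. The only subtlety worth remarking on is that $U_1$ and $U_2$ are unambiguously determined by $U$ (so that the expression $U_1 \omega_1 + U_2 \omega_2$ is well-defined), which follows from the \PBW{} Theorem since $\uea$ has no zero-divisors. The content of the lemma is thus that $\data$ records exactly those necessary consistency conditions on the data which are imposed by relations in the left ideal $\uea L_1 + \uea L_2$; whether these conditions are also sufficient for constructing a staggered module is precisely the deeper question taken up in the subsequent sections.
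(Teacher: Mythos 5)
Your proof is correct and is essentially the paper's own argument: the paper derives the lemma directly from the observation surrounding \eqnref{eqnUL1UL2}, namely that applying the identity $U_1 L_1 + U_2 L_2 = 0$ to $y$ forces $U_1 \omega_1 + U_2 \omega_2 = 0$. Your remark on the uniqueness of $U_1, U_2$ (via the absence of zero-divisors) matches the paper's parenthetical comment as well.
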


\noindent We can obtain a useful simplification of this condition through \PBW-ordering the $U \in \uea L_1 \cap \uea L_2$.

\begin{lemma} \label{lem: positive to right in intersection}
$\uea L_1 \cap \uea L_2 = \uea^{\leqslant 0} (\uea^+ L_1 \cap \uea^+ L_2)$.
\end{lemma}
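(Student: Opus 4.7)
The plan is to prove the two inclusions separately, with the nontrivial direction coming directly from the Poincar\'{e}-Birkhoff-Witt theorem.

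The easy inclusion $\uea^{\leqslant 0}(\uea^+ L_1 \cap \uea^+ L_2) \subseteq \uea L_1 \cap \uea L_2$ is a one-line check: if $V = V_1 L_1 = V_2 L_2$ with $V_i \in \uea^+$, then for any $W \in \uea^{\leqslant 0}$ one has $WV = (WV_1)L_1 = (WV_2)L_2 \in \uea L_1 \cap \uea L_2$. For the reverse inclusion I would use PBW to identify $\uea \cong \uea^{\leqslant 0} \otimes_\CC \uea^+$ as a vector space, via the multiplication map applied to a PBW basis $\{e_\alpha\}$ of $\uea^+$; so every $U \in \uea$ has a unique expression $U = \sum_\alpha A_\alpha e_\alpha$ with $A_\alpha \in \uea^{\leqslant 0}$. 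Since $\uea^+$ is a subalgebra, right-multiplication by $L_i$ preserves $\uea^+$, and writing $U L_i = \sum_\alpha A_\alpha (e_\alpha L_i)$ (with the reverse inclusion being immediate) yields
\[
\uea L_i \;=\; \uea^{\leqslant 0} \otimes_\CC \uea^+ L_i \qquad (i = 1, 2)
\]
inside the tensor decomposition of $\uea$.

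At this point the problem reduces to the purely linear-algebraic identity $(R \otimes M) \cap (R \otimes N) = R \otimes (M \cap N)$ for subspaces $M, N$ of a vector space $W$, applied with $R = \uea^{\leqslant 0}$, $M = \uea^+ L_1$, $N = \uea^+ L_2$, and $W = \uea^+$. This identity is the only step requiring any verification, but it is entirely elementary: choose a basis of $W$ that extends a basis of $M \cap N$, then extend to bases of $M$ and $N$, and read off coefficients of a general vector in $R \otimes W$ in the resulting basis. I do not anticipate any serious obstacle; essentially the whole lemma is the content of PBW expressed in the appropriate form, and the only mildly subtle point is maintaining a clean distinction between the abstract tensor decomposition $\uea^{\leqslant 0} \otimes_\CC \uea^+$ and its image under multiplication inside $\uea$.
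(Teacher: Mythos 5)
Your proof is correct and follows essentially the same route as the paper's: both rest on the PBW factorisation $\uea = \uea^{\leqslant 0}\,\uea^+$ and the linear independence of PBW monomials, with the paper matching the $\uea^{\leqslant 0}$ factors of the two orderings of $U$ directly where you package the same step as the identity $(R \otimes M) \cap (R \otimes N) = R \otimes (M \cap N)$ inside $\uea^{\leqslant 0} \otimes_{\CC} \uea^+$. Your tensor-product formulation is, if anything, a slightly more careful rendering of what the paper calls an ``appropriate shuffling of the index''.
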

\begin{proof}
If $U \in \uea L_1 \cap \uea L_2$, we may write $U = U_1 L_1 = -U_2 L_2$ with the $U_i$ \PBW-ordered:  $U_i = \sum_n U_{i,n}^{\leqslant 0} U_{i,n}^+$, with $U_{i,n}^{\leqslant 0} \in \uea^{\leqslant 0}$ and $U_{i,n}^+ \in \uea^+$.  Thus,
\begin{equation}
U = \sum_n U_{1,n}^{\leqslant 0} U_{1,n}^+ L_1 = -\sum_n U_{2,n}^{\leqslant 0} U_{2,n}^+ L_2.
\end{equation}
Since similarly ordering $U$ in its entirety will not affect the $U_{i,n}^{\leqslant 0}$ factors, the linear independence of \PBW{} monomials implies that (with an appropriate shuffling of the index $n$) we may take $U_{1,n}^{\leqslant 0} = U_{2,n}^{\leqslant 0}$.  It follows, again from linear independence, that $U_{1,n}^+ L_1 = -U_{2,n}^+ L_2$.  This proves that $\uea L_1 \cap \uea L_2 \subseteq \uea^{\leqslant 0} (\uea^+ L_1 \cap \uea^+ L_2)$ and the reverse inclusion is trivial.
\end{proof}

We apply \lemref{lem: positive to right in intersection} to the conditions of \eqnref{eqnUL1UL2} as follows.  The first of these just states that $U \in \uea L_1 \cap \uea L_2$, hence \lemref{lem: positive to right in intersection} lets us write $U = \sum_n U_n^{\leqslant 0} U_{1,n}^+ L_1 = -\sum_n U_n^{\leqslant 0} U_{2,n}^+ L_2$, for some $U_n^{\leqslant 0} \in \uea^{\leqslant 0}$ and $U_{i,n}^+ \in \uea^+$, where
\begin{equation}
U_{1,n}^+ L_1 + U_{2,n}^+ L_2 = 0,
\end{equation}
for all $n$.  Moreover, the second condition of (\ref{eqnUL1UL2}) is now $\sum_n U_n^{\leqslant 0} U_{1,n}^+ \omega_1 + \sum_n U_n^{\leqslant 0} U_{2,n}^+ \omega_2 \neq 0$, which implies that
\begin{equation}
U_{1,n}^+ \omega_1 + U_{2,n}^+ \omega_2 \neq 0,
\end{equation}
for some $n$.  It follows that in \eqnref{eqnUL1UL2}, we may suppose that $U_1$ and $U_2$ belong to $\uea^+$, without any loss of generality.  In other words, if an element $U \in \uea L_1 \cap \uea L_2$ spoils the admissibility of $\brac{\omega_1 , \omega_2}$, then there is an element spoiling admissibility in $\uea^+ L_1 \cap \uea^+ L_2$.

This somewhat lengthy argument then allows us to conclude that $\data$ may be equivalently defined as
\begin{equation} \label{eqnDefData}
\data = \set{(w_1, w_2) \in \Hlft_{\ell-1} \oplus \Hlft_{\ell-2}
\st U_1 w_1 + U_2 w_2 = 0 \text{ for all }
U = U_1 L_1 = -U_2 L_2 \in \uea^+ L_1 \cap \uea^+ L_2}.
\end{equation}
The value of this slight simplification lies in the fact that the homogeneous subspaces of $\uea^+ L_1 \cap \uea^+ L_2$ are finite-dimensional.

\begin{lemma} \label{lem: L1 L2}
For $m > 0$, the dimension of $\brac{\uea^+ L_1 \cap \uea^+ L_2}_{-m} = \uea^+_{1-m} L_1 \cap \uea^+_{2-m} L_2$ is equal to $\func{d}{m} = \tpartnum{m-1} + \tpartnum{m-2} - \tpartnum{m}$.  When $m=0$, this dimension is $0$.
\end{lemma}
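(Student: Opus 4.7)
The plan is to apply rank-nullity to the natural graded linear map
\[
\phi \colon \uea^+_{1-m} \oplus \uea^+_{2-m} \longrightarrow \uea^+_{-m}, \qquad (A,B) \longmapsto A L_1 + B L_2,
\]
which has the stated grading since $L_1 \in \uea^+_{-1}$ and $L_2 \in \uea^+_{-2}$. First I would dispose of the case $m=0$: because both $\uea^+ L_1$ and $\uea^+ L_2$ lie in strictly negative $\uea^+$-grades, their intersection has vanishing grade-$0$ component. For $m \geqslant 1$, the \PBW{} count (\eqnref{eqnDimVer}) gives $\dim\brac{\uea^+_{1-m} \oplus \uea^+_{2-m}} = \partnum{m-1} + \partnum{m-2}$ (under the usual conventions $\partnum{0}=1$ and $\partnum{k}=0$ for $k<0$) and $\dim \uea^+_{-m} = \partnum{m}$.

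The first substantive step is to establish surjectivity of $\phi$ onto $\uea^+_{-m}$ for $m \geqslant 1$. Here I would invoke the observation, made just after the triangular decomposition, that $\vir^+$ is generated as a Lie algebra by $L_1$ and $L_2$; consequently $\uea^+$ is generated as an associative algebra by these two elements. Any element of $\uea^+_{-m}$ is therefore a linear combination of non-trivial products of $L_1$'s and $L_2$'s, and each such product can be factored on the right as $(\cdots) L_1$ or $(\cdots) L_2$, placing it in the image of $\phi$.

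The second step is to identify $\ker \phi$. A pair $(A,B)$ lies in the kernel iff $A L_1 = -B L_2$, in which case $A L_1 \in \uea^+_{1-m} L_1 \cap \uea^+_{2-m} L_2$. Since $\uea$ has no zero-divisors (a consequence of \PBW), right multiplication by $L_1$ is injective on $\uea^+_{1-m}$, so $(A,B) \mapsto A L_1$ furnishes a linear isomorphism from $\ker \phi$ onto that intersection. Rank-nullity then yields
\[
\partnum{m-1} + \partnum{m-2} \;=\; \partnum{m} \,+\, \dim\brac{\uea^+ L_1 \cap \uea^+ L_2}_{-m},
\]
which rearranges to the asserted formula $\func{d}{m} = \partnum{m-1} + \partnum{m-2} - \partnum{m}$. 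The only potentially non-routine ingredient is surjectivity, but this follows transparently from the fact that $L_1$ and $L_2$ generate $\uea^+$; the remaining dimension bookkeeping is routine.
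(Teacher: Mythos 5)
Your proposal is correct and is essentially the paper's own argument: the paper likewise observes that $L_1$ and $L_2$ generate $\vir^+$, so that $\brac{\uea^+ L_1 + \uea^+ L_2}_{-m} = \uea^+_{-m}$ for $m>0$, and then counts dimensions via $\dim (V+W) = \dim V + \dim W - \dim (V \cap W)$, which is just your rank--nullity computation for the addition map in different clothing. Your explicit treatment of surjectivity, of the kernel identification via the absence of zero-divisors, and of the $m=0$ case merely spells out steps the paper leaves implicit.
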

\begin{proof}
As $L_1$ and $L_2$ generate $\vir^+$, we have
$\brac{\uea^+ L_1 + \uea^+ L_2}_{-m} = \uea^+_{-m}$ for $m>0$.
Taking dimensions of this equality we get
$\dmn \uea^+_{1-m} + \dmn \uea^+_{2-m} -
\dmn \brac{\uea^+ L_1 \cap \uea^+ L_2}_{-m} = \dmn \uea^+_{-m}$,
which leads to the
asserted formula.
\end{proof}

\noindent As an aside to the advanced reader, we mention that by
treating $\uea^+$ as a Virasoro module with $h=c=0$
(we set\footnote{The precise way in which one does this parallels that discussed in the context of Verma modules. One starts
with the trivial one-dimensional representation of $\vir^{\leq 0}$
and the induced $\vir$-module is naturally identified
as a graded vector space with $\uea^+$.} $\vir^{\leqslant 0} \: \wun = 0$),
$\uea^+ L_1 \cap \uea^+ L_2$ may be
identified as the submodule generated by the \svs{} at grades $-5$ and
$-7$.  Indeed, thinking of $\uea^+$ as a \emph{lowest} weight Verma
module, our intersection corresponds to the intersection of the
submodules generated by the rank $1$ \svs{} at grades $-1$ and $-2$. 
The Feigin-Fuchs classification for lowest weight Verma modules states
that this is generated by the rank $2$ \svs{}, which turn out to have
grades $-5$ and $-7$ (as stated).

We tabulate the first few of these dimensions for convenience:

\begin{table}[!ht]
\begin{center}
\begin{tabular}{R|*{17}{|C}}
m & 0 & 1 & 2 & 3 & 4 & 5 & 6 & 7 & 8 & 9 & 10 & 11 & 12 & 13 & 14 & 15 & \cdots \\
\hline
\func{d}{m} & 0 & 0 & 0 & 0 & 0 & 1 & 1 & 3 & 4 & 7 & 10 & 16 & 21 & 32 & 43 & 60 & \cdots
\end{tabular}
\end{center}
\end{table}

\noindent Note that if
$U = U_1 L_1 = -U_2 L_2 \in \uea^+_{1-m} L_1 \cap \uea^+_{2-m} L_2$ with
$m > \ell$, then $U_1 w_1$ and $U_2 w_2$ both vanish for all $\brac{w_1 , w_2} \in \LMod_{\ell - 1} \oplus \LMod_{\ell - 2}$ (for dimensional reasons).  We therefore need $\ell \geqslant 5$ to find examples where
$\data \neq \Hlft_{\ell-1} \oplus \Hlft_{\ell-2}$.  We also point out that
$\data$ is not necessarily equal to the set of admissible data. 
\exref{ex:0,12,14} provides an illustration of this fact:  The dimension
of $\data$ is $\dim \brac{\LMod_0 \oplus \LMod_{-1}} = 1$ in this case, but 
the set of admissible data is a singleton.

\begin{example} \label{ex:0,14,18}
A staggered module $\StagMod$ with $c=0$ ($t=\tfrac{3}{2}$) and short exact sequence
\begin{equation}
\dses{\VerMod_1 / \VerMod_5}{}{\StagMod}{}{\VerMod_7 / \VerMod_{15}},
\end{equation}
was constructed in \emph{\cite{EbeVir06}}.  Note that $\ell = 6$.  Its
beta-invariant was shown in \emph{\cite{RidPer07}} to be
$\beta = -\tfrac{10780000}{243}$ (with our normalisation for $\omega_0$), 
where it was also argued to be the unique such value. 
What is interesting here is that the authors noted that this example
presents some subtlety upon trying to ``fix the gauge'' before computing
$\beta$.  It is this subtlety which we want to explain here.

With our notation, the problem arose when the authors tried to determine $\omega_1 \in \LMod_5$ and $\omega_2 \in \LMod_4$ in terms of the (unknown) $\beta$.  Since $\dim \LMod_5 = 6$, $\dim \LMod_4 = 4$ and there are $\dim \gauge = \dim \LMod_6 - 1 = 8$ independent gauge transformations, they could assume that $\omega_1 = 0$ and $\omega_2 = \brac{a L_{-4} + b L_{-2}^2} x$.  There were therefore two unknowns $a$ and $b$.  The definition of the beta-invariant then gave a single linear relation connecting it with $a$ and $b$.

Whilst the authors of \emph{\cite{RidLog07}} were able to divine another linear relation between $a$ and $b$, thereby determining them in terms of $\beta$ and completing the gauge-fixing, we can understand this problem as arising from the existence of non-trivial elements of $\uea^+ L_1 \cap \uea^+ L_2$.  Indeed, $\brac{\uea^+ L_1 \cap \uea^+ L_2}_{-5}$ is spanned by
\begin{equation} \label{eqnGen5}
\brac{L_1^2 L_2 + 6 L_2^2 - L_1 L_3 + 2 L_4} L_1 = \brac{L_1^3 + 6 L_1 L_2 + 12 L_3} L_2,
\end{equation}
and left-multiplying by $L_1$ gives a spanning element of $\brac{\uea^+ L_1 \cap \uea^+ L_2}_{-6}$.  It follows that the assumed data $\brac{\omega_1 = 0 , \omega_2 = \brac{a L_{-4} + b L_{-2}^2} x}$ is not in $\data$ (and hence not admissible) unless
\begin{equation}
L_1 \brac{L_1^2 L_2 + 6 L_2^2 - L_1 L_3 + 2 L_4} \omega_1 = L_1 \brac{L_1^3 + 6 L_1 L_2 + 12 L_3} \omega_2.
\end{equation}
Evaluating this constraint gives the second relation found in \emph{\cite{RidLog07}} through other, less canonical, means.
\end{example}

To attack the question of which $\brac{\omega_1, \omega_2}$ can arise as the
data of a staggered module $\StagMod$, given left and right modules $\LMod$ and $\RMod$, we consider the following explicit construction
(generalising that of Rohsiepe \cite{RohRed96}).
We start with the Virasoro module
$\Hlft \oplus \uea$, where $\vir$ is understood to act on $\uea$ by
left-multiplication.  We let $\sN$ be the submodule of $\Hlft \oplus \uea$
generated by
\begin{equation} \label{eqnNGens}
\begin{split}
&\brac{\omega_0, h^\rgt-L_0}, \qquad
\brac{\omega_1, -L_1}, \qquad \brac{\omega_2,-L_2}, \\
\text{and} \qquad &\brac{\varpi, -\overline{X}} \qquad  \text{or} \qquad
(\varpi^\pm, -\overline{X}^\pm), \qquad  \text{when appropriate.}
\end{split}
\end{equation}
Here, we understand that when required, $\varpi$ (or $\varpi^\pm$) is deduced from the $\omega_j$ as in the proof of \propref{propDetBy}.
The idea is that $\wun \in \uea$ will project onto $y \in \StagMod$ upon quotienting by $\sN$.  More specifically, we will attempt to construct $\StagMod$ as $\brac{\LMod \oplus \uea} / \sN$, requiring then only a precise analysis of when this succeeds.
Denote by $\pi^\rgt: \Hlft \oplus \uea \rightarrow \uea$ the projection
onto the second component.
The question of whether this construction recovers $\StagMod$ turns out to boil down to whether the submodule $\sN^\circ = \sN \cap \Kern \pi^\rgt$ is trivial or not.

\begin{theorem} \label{thmConstruction}
Given $\LMod$, $\RMod$, $\omega_1 \in \LMod_{\ell-1}$ and $\omega_2 \in \LMod_{\ell-2}$, we have the following.
\begin{description}
\item[\textup{(i)}]
If $\sN^\circ = \set{0}$ then $(\Hlft \oplus \uea)/\sN$
is a staggered module with the desired short exact sequence
\begin{equation}
0 \longrightarrow \Hlft \overset{\iota}{\longrightarrow} \frac{(\Hlft \oplus \uea)}{\sN} \overset{\pi}{\longrightarrow} \Hrgt
\longrightarrow 0
\end{equation}
and data $\brac{\omega_1, \omega_2}$.
\item[\textup{(ii)}]
If $\sN^\circ \neq \set{0}$ then a staggered module with
the desired exact sequence and data does not exist.
\end{description}
\end{theorem}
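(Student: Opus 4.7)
The plan is to handle the two implications separately, exploiting the fact that in the candidate quotient $\Stagg' := (\Hlft \oplus \uea)/\sN$, the class $y := [(0, \wun)]$ is designed to play the role of the Jordan partner of $\omega_0$.

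For part (ii), I argue by contrapositive. Given a staggered module $\Stagg$ with the desired exact sequence and data $(\omega_1, \omega_2)$, with distinguished Jordan partner $y \in \Stagg$, I define a $\vir$-module homomorphism $\phi : \Hlft \oplus \uea \to \Stagg$ by $\phi(u, U) = \iota(u) + U y$. Each generator of $\sN$ listed in (\ref{eqnNGens}) lies in $\Kern \phi$: for example, $\phi(\omega_0, \RDim - L_0) = \iota(\omega_0) - \iota(\omega_0) = 0$ by the defining relation $(L_0 - \RDim) y = \omega_0$, and the remaining generators are handled analogously via $L_i y = \omega_i$ and $\overline{X} y = \varpi$. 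Therefore $\sN \subseteq \Kern \phi$, and since $\phi$ restricted to $\Hlft \oplus \set{0}$ is the injection $\iota$, the intersection $\sN^\circ = \sN \cap (\Hlft \oplus \set{0})$ must be trivial.

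For part (i), I set $\Stagg' := (\Hlft \oplus \uea)/\sN$ and first establish the short exact sequence. The map $\iota : \Hlft \to \Stagg'$, $u \mapsto [(u, 0)]$, is injective by the hypothesis $\sN^\circ = \set{0}$. The map $\pi : \Stagg' \to \Hrgt$ defined by $\pi([(u, U)]) = U \Rhws$ is well-defined because each generator of $\sN$ has second component annihilating $\Rhws$ (via (\ref{eq: uea null in Verma}) and $\overline{X} \Rhws = 0$), and is clearly surjective. The inclusion $\Imag \iota \subseteq \Kern \pi$ is immediate; for the reverse inclusion, $U \Rhws = 0$ in $\Hrgt$ forces $U \in \uea (L_0 - \RDim) + \uea L_1 + \uea L_2 + \uea \overline{X}$ (with $\overline{X}^{\pm}$ in place of $\overline{X}$ in the braid case), after which the generators (\ref{eqnNGens}) allow the $\uea$-component of any $[(u, U)] \in \Kern \pi$ to be absorbed into the $\Hlft$-component.

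To finish part (i), I verify that $\Stagg'$ is staggered with the prescribed data. The relations $L_i y = \iota(\omega_i)$ and $(L_0 - \RDim) y = \iota(\omega_0)$ are immediate from the generators of $\sN$, so $y$ has $\LOd$-eigenvalue $\RDim$ and realises the data; since $\omega_0 \neq 0$ in $\Hlft$ and $\iota$ is injective, $y$ sits atop a rank-two Jordan block. To show all Jordan blocks have rank at most two, I observe that every element of $\Stagg'$ can be written as $\iota(u) + U y$ for some $u \in \Hlft$ and $U \in \uea^-$: a general class is PBW-decomposed, the $\uea^+$-part applied to $y$ is absorbed into $\Hlft$ via $L_n y \in \Hlft$ for $n \geq 1$ (using that $\vir^+$ is generated by $L_1, L_2$), and the Cartan part is handled by $L_0^k y = \RDim^k y + k \RDim^{k-1} \omega_0$. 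Then for $U \in \uea^-_n$, the relation $[L_0, U] = n U$ yields $(L_0 - \RDim - n)(U y) = U \omega_0 \in \Hlft$, an $L_0$-eigenvector of eigenvalue $\RDim + n$, whence $(L_0 - \RDim - n)^2 (U y) = 0$. The main obstacle throughout is the middle exactness in part (i): obtaining a clean description of the left ideal of $\uea$ annihilating $\Rhws$ uniformly across all Feigin--Fuchs classes of $\Hrgt$, particularly the braid case in which two generators $\overline{X}^{\pm}$ of $\sJ$ must be handled in parallel.
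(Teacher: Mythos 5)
Your proof is correct and follows essentially the same route as the paper: part (i) is the same quotient construction with the same maps $\iota$, $\pi$ and the same exactness argument, and your part (ii) (the homomorphism $\phi(u,U)=\iota(u)+Uy$ with $\sN\subseteq\Kern\phi$) is just the map-level repackaging of the paper's element-level contradiction. The one welcome addition is your explicit verification that every class has the form $\iota(u)+Uy$ with $U\in\uea^-$ and hence that all $L_0$ Jordan blocks have rank at most two, a point the paper asserts without detail.
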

\begin{proof}
Denote by $\pi_\sN : \Hlft \oplus \uea \rightarrow \brac{\Hlft \oplus \uea} / \sN$ the canonical projection, and assume (at first) that $\sN^\circ = \set{0}$.
We will construct the required homomorphisms $\iota$ and $\pi$ by imposing commutativity of the following diagram:
\begin{equation} \label{eqnLeDiagram}
\begin{CD}
0 @>>> \LMod @>\iota^{\lft}>> \LMod \oplus \uea @>\pi^{\rgt}>> \uea @>>> 0 \\
@. @| @VV\pi_{\mathcal{N}}V @VV\pi_{\mathcal{I}}V @. \\
0 @>>> \LMod @>\iota>> {\displaystyle \frac{\LMod \oplus \uea}{\mathcal{N}}} @>\pi>> \RMod @>>> 0
\end{CD}
\mspace{10mu}.
\end{equation}
Here, $\iota^\lft$ denotes the obvious injection $u \mapsto (u,0)$ (the top row is therefore exact) and $\pi_{\mathcal{I}}$ denotes the canonical projection onto the quotient of $\uea$ by the submodule (left ideal) $\mathcal{I}$ generated by $L_0-h^\rgt$, $L_1$, $L_2$ and $\overline{X}$.

Observe then that $\iota = \pi_\sN \circ \iota^\lft$ has kernel
$\Imag \iota^{\lft} \cap \sN = \sN^{\circ} = \set{0}$, hence is injective. 
On the other hand, the map $\pi$ satisfies
$\pi \circ \pi_{\sN} = \pi_{\mathcal{I}} \circ \pi^{\rgt}$, which in fact 
defines it as $\pi_{\mathcal{I}} \circ \pi^{\rgt}$ maps
$\sN = \Kern \pi_{\sN}$ to zero by construction.  The map
$\pi$ is clearly surjective as both $\pi^{\rgt}$ and $\pi_{\mathcal{I}}$ are. 
It remains to check that the bottom row is exact in the middle. From the
exactness of the top row we get
\begin{equation}
\pi \circ \iota = \pi_{\mathcal{I}} \circ \pi^\rgt \circ \iota^{\lft} = 0, \quad \text{hence} \quad \Imag \iota \subseteq \Kern \pi.
\end{equation}
On the other hand, if $\func{\pi \circ \pi_{\sN}}{w,U} = 0$ for some $\brac{w,U} \in \LMod \oplus \uea$, then $U \in \mathcal{I}$ by commutativity of (\ref{eqnLeDiagram}).  By definition of $\mathcal{I}$ and $\sN$, $\brac{w,U} = \brac{w',0} \pmod{\sN}$ for some $w' \in \LMod$, hence
\begin{equation}
\func{\pi_{\sN}}{w,U} = \func{\pi_{\sN} \circ \iota^{\lft}}{w'} = \func{\iota}{w'}, \quad \text{hence} \quad \Kern \pi \subseteq \Imag \iota.
\end{equation}
The module $(\Hlft \oplus \uea)/\sN$ is then staggered and the data are
correct, because
\begin{equation}
(L_0 - \RDim) y = (\omega_0 , 0) = \iota(\omega_0) \quad \text{and} \quad
L_j y = (\omega_j , 0) = \iota(\omega_j) \quad \pmod{\sN},
\end{equation}
where $y = (0, 1)$ and $x = (\Lhws, 0)$ $\pmod{\sN}$.  This proves (i).

If $\sN^\circ \neq \{0\}$, then (given $\LMod$) there exists
$U_0, U_1, U_2, \overline{U} \in \uea$ such that
\begin{equation}
U_0(L_0-h^\rgt) + U_1 L_1 + U_2 L_2 + \overline{U} \, \overline{X} = 0, \quad
\text{but} \quad U_0 \omega_0 + U_1 \omega_1 + U_2 \omega_2 + \overline{U} \varpi \neq 0.
\end{equation}
Suppose that $\sS$ was a staggered module with the desired exact sequence and data, and choose $y \in \sS$ such that $\pi(y)=x^\rgt$ and
$L_j y = \omega_j$. Now applying the first of these equations to $y$ would give zero, contradicting the second.  This proves (ii).
\end{proof}

The r\^{o}le that $\sN^{\circ}$ plays in this construction of a staggered module is best seen by regarding $\sN^{\circ} = \sN \cap \Imag \iota^{\lft}$ as a submodule of $\LMod$.  If non-trivial, $\sN^{\circ}$ is generated by \svs{} of $\LMod$.  The quotient of $\LMod \oplus \uea$ by $\sN$ will then  no longer have a left module isomorphic to $\LMod$, but will be some quotient thereof.  For example, if $x \in \sN^{\circ}$, then all of $\LMod$ is ``quotiented away'' and the above construction gives a \hwm{}, not a staggered module.  Similarly, if $\omega_0 \in \sN^{\circ}$ but $x \notin \sN^{\circ}$, then the construction results in an indecomposable module on which $L_0$ is diagonalisable.  It is only when $\sN^{\circ} = \set{0}$ that $\LMod$ is preserved, and then \thmref{thmConstruction} tells us that we do indeed obtain a staggered module with the correct left and right modules and data.

Before concluding this section, let us first make two brief observations relating to the above construction arguments.  These allow us to answer the question of existence or non-existence of a staggered module, assuming we have already answered the question for another related staggered module.  Roughly speaking, existence becomes easier if we take a smaller left module or a bigger right module.  The precise statement for the left module is as follows.

\begin{proposition} \label{prop: monotonicity left}
Suppose that there exists a staggered module $\Stagg$ with exact sequence
\begin{equation}
\dses{\Hlft}{}{\Stagg}{}{\Hrgt}
\end{equation}
and data $\bigl( \omega_1, \omega_2 \bigr) \in \Hlft_{\ell-1} \oplus \Hlft_{\ell-2}$.  If $\hat{\mathcal{J}}$ is a submodule of $\Hlft$ not containing $\omega_0$, then there exists a staggered module $\hat{\Stagg}$ with exact sequence
\begin{equation}
\dses{\hat{\sH}^\lft}{}{\hat{\Stagg}}{}{\Hrgt} \qquad \textup{($\hat{\sH}^\lft = \Hlft / \hat{\mathcal{J}}$)}
\end{equation}
and data $\bigl( [\omega_1], [\omega_2] \bigr) \in \hat{\sH}^\lft_{\ell-1} \oplus \hat{\sH}^\lft_{\ell-2}$.  Indeed, we may identify $\hat{\Stagg}$ with $\Stagg / \hat{\mathcal{J}}$.
\end{proposition}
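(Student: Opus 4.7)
The natural candidate is $\hat{\Stagg} := \Stagg / \hat{\mathcal{J}}$, regarding $\hat{\mathcal{J}} \subseteq \Hlft$ as a submodule of $\Stagg$ via $\iota$. My plan is to show that this quotient has every property required of a staggered module, realises the asserted short exact sequence, and carries the asserted data.

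First I would extract the short exact sequence
\[
\dses{\hat{\sH}^\lft}{\hat{\iota}}{\hat{\Stagg}}{\hat{\pi}}{\Hrgt}
\]
by a routine diagram chase. The map $\pi$ descends to a surjection $\hat{\pi}$ on $\hat{\Stagg}$ because $\pi(\hat{\mathcal{J}}) = 0$, and $\iota$ descends to an injection $\hat{\iota}$ because $\iota(\hat{\mathcal{J}}) = \hat{\mathcal{J}}$ is precisely what is quotiented out; exactness in the middle is inherited from that of the original sequence. That $\hat{\sH}^\lft$ is itself a highest weight module is automatic, being the quotient of the highest weight module $\Hlft$ by a submodule.

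The crucial step is to verify that $L_0$ still acts non-diagonalisably on $\hat{\Stagg}$, with Jordan cells of rank at most two. Pick the $\LOd$-eigenvector $y \in \Stagg$ with $(L_0 - h^\rgt) y = \omega_0$ and set $\hat{y} = [y] \in \hat{\Stagg}$. Then $(L_0 - h^\rgt) \hat{y} = [\omega_0]$, and this is the step where the hypothesis enters decisively: $\omega_0 \notin \hat{\mathcal{J}}$ forces $[\omega_0] \neq 0$ in $\hat{\sH}^\lft$, so $\hat{y}$ remains a genuine rank-two Jordan partner. The rank-at-most-two property and the indecomposability of $\hat{\Stagg}$ are both inherited from $\Stagg$; for indecomposability I would argue that any splitting of $\hat{\Stagg}$ would force $[x]$, which cyclically generates $\hat{\sH}^\lft$, and its Jordan partner $\hat{y}$, which together generate all of $\hat{\Stagg}$, into a common summand, trivialising the other.

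Finally, to identify the data, I would take $\hat{y}$ as the distinguished preimage of $x^\rgt$ under $\hat{\pi}$; then $L_j \hat{y} = [L_j y] = [\omega_j]$ for $j = 1, 2$ directly from the definition. The normalisation $\omega_0 = X x$ descends to $[\omega_0] = X [x]$ with the same singular $X$, so the resulting data is indeed $([\omega_1], [\omega_2]) \in \hat{\sH}^\lft_{\ell-1} \oplus \hat{\sH}^\lft_{\ell-2}$. The principal obstacle throughout is the potential collapse $[\omega_0] = 0$, which is precisely what the hypothesis $\omega_0 \notin \hat{\mathcal{J}}$ forbids; once that is secured, every other verification is formal.
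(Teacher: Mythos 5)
Your proposal is correct and follows exactly the route the paper takes: the paper's own justification is the two-sentence remark that the result "follows from the fact that $\Hlft$ is a submodule of $\Stagg$", with $\omega_0 \notin \hat{\mathcal{J}}$ needed only to keep the quotient staggered, i.e.\ to keep $[\omega_0] \neq 0$ so that $L_0$ remains non-diagonalisable. You have simply filled in the routine verifications (descent of the exact sequence, inheritance of the Jordan structure and indecomposability, identification of the data via $\hat{y}=[y]$), all of which are sound.
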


\noindent This follows from the fact that $\Hlft$ is a submodule of $\Stagg$.  We only require $\omega_0 \notin \hat{\mathcal{J}}$ to ensure that the quotient $\Stagg / \hat{\mathcal{J}}$ is still staggered.

For the right module we have instead the following, somewhat less trivial, result.

\begin{proposition} \label{prop: monotonicity right}
Suppose that there exists a staggered module $\Stagg$ with exact sequence
\begin{equation}
\dses{\Hlft}{}{\Stagg}{}{\Hrgt}
\end{equation}
and data $\bigl( \omega_1, \omega_2 \bigr) \in \Hlft_{\ell-1} \oplus \Hlft_{\ell-2}$.  If $\Hrgt$ is a quotient of the \hwm{} $\check{\sH}^\rgt$, then there exists a staggered module $\check{\Stagg}$ with exact sequence
\begin{equation}
\dses{\Hlft}{}{\check{\Stagg}}{}{\check{\sH}^\rgt}
\end{equation}
and the same data $\bigl( \omega_1, \omega_2 \bigr) \in \Hlft_{\ell-1} \oplus \Hlft_{\ell-2}$.  Moreover, we may identify $\Stagg$ as a quotient of $\check{\Stagg}$.
\end{proposition}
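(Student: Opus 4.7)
The plan is to invoke the construction of Theorem~\ref{thmConstruction} with the right module taken to be $\check{\sH}^\rgt$ in place of $\Hrgt$, keeping the same data $(\omega_1, \omega_2)$. Write $\Hrgt = \Verma_{\RDim}/\mathcal{J}$ and $\check{\sH}^\rgt = \Verma_{\RDim}/\check{\mathcal{J}}$; the hypothesis that $\Hrgt$ is a quotient of $\check{\sH}^\rgt$ gives $\check{\mathcal{J}} \subseteq \mathcal{J}$. Form the submodule $\check{\sN} \subseteq \Hlft \oplus \uea$ generated by $(\omega_0, \RDim - L_0)$, $(\omega_1, -L_1)$, $(\omega_2, -L_2)$, together with a pair $(\check{\varpi}, -\check{\overline{X}})$ for each normalised singular generator $\check{\overline{X}} v_{\RDim}$ of $\check{\mathcal{J}}$, where $\check{\varpi} \in \Hlft$ is produced by the procedure in the proof of Proposition~\ref{propDetBy}. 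That $\check{\varpi}$ actually lies in $\Hlft$ is the statement $\check{\overline{X}} \omega_0 = 0$, which follows by exactly the grading argument used in Proposition~\ref{propVanSVs} and does not require $\check{\overline{X}} v_{\RDim}$ to be a minimal generator of $\mathcal{J}$ or even of $\check{\mathcal{J}}$.

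The decisive step is to show $\check{\sN} \subseteq \sN$, where $\sN$ is the submodule associated to the given $\Stagg$ via Theorem~\ref{thmConstruction}. The first three families of generators of $\check{\sN}$ are literally generators of $\sN$. For the remaining pairs, note that $\check{\overline{X}} v_{\RDim} \in \check{\mathcal{J}} \subseteq \mathcal{J}$ forces $\check{\overline{X}} \Rhws = 0$ in $\Hrgt$, so in $\Stagg \cong (\Hlft \oplus \uea)/\sN$ the element $\check{\overline{X}} y$ lies in $\Hlft$ and, by the uniqueness argument in the proof of Proposition~\ref{propDetBy}, must equal $\check{\varpi}$. This is precisely the assertion that $(\check{\varpi}, -\check{\overline{X}}) \in \sN$. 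Since the existence of $\Stagg$ forces $\sN^\circ = \set{0}$ by the contrapositive of Theorem~\ref{thmConstruction}(ii), we conclude $\check{\sN}^\circ = \check{\sN} \cap \Kern \pi^\rgt \subseteq \sN \cap \Kern \pi^\rgt = \set{0}$.

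Part~(i) of Theorem~\ref{thmConstruction} then provides the desired staggered module $\check{\Stagg} = (\Hlft \oplus \uea)/\check{\sN}$, fitting into the exact sequence $0 \to \Hlft \to \check{\Stagg} \to \check{\sH}^\rgt \to 0$ and carrying data $(\omega_1, \omega_2)$. The inclusion $\check{\sN} \subseteq \sN$ induces a canonical surjection $\check{\Stagg} \twoheadrightarrow \Stagg$ which restricts to the identity on the common left submodule $\Hlft$ and which covers the quotient $\check{\sH}^\rgt \twoheadrightarrow \Hrgt$ on right modules, realising $\Stagg$ as a quotient of $\check{\Stagg}$. I do not anticipate any genuine obstacle here; the only point demanding mild care is the well-definedness of the auxiliary vectors $\check{\varpi}$, which is simply Proposition~\ref{propVanSVs} applied to a slightly wider class of singular vectors of $\Verma_{\RDim}$ than its original statement addresses.
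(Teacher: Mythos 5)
Your proposal is correct and follows essentially the same route as the paper: both reduce the claim to the inclusion $\check{\sN} \subseteq \sN$ and then invoke \thmref{thmConstruction}, with the only real work being to show that the extra generator $\bigl( \check{\varpi} , -\check{\overline{X}} \bigr)$ of $\check{\sN}$ already lies in $\sN$. The paper does this by factoring $\check{\overline{X}} = \chi \overline{X}$ and proving $\check{\varpi} = \chi \varpi$ via the uniqueness argument of \propref{propDetBy}, whereas you evaluate $\check{\overline{X}} y$ directly in $\Stagg$ and identify it with $\check{\varpi}$ by the same uniqueness principle --- a cosmetic rather than substantive difference.
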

\begin{proof}
We will show that the submodules of
$\LMod \oplus \uea$ used in the construction of \thmref{thmConstruction}
satisfy $\check{\sN} \subseteq \sN$, so
$\check{\sN}^\circ \subseteq \sN^\circ = \set{0}$ (identifying the left
modules of $\check{\Stagg}$ and $\Stagg$ in the obvious way).  As $\Hrgt$
is a (non-zero) quotient of $\check{\sH}^\rgt$, $\check{h}^\rgt = h^\rgt$,
and we see that $\check{\omega}_0 = \omega_0$.  The proposition states that
the data of $\check{\Stagg}$ and $\Stagg$ are likewise identified, so the
only difference between the generators (\ref{eqnNGens}) of $\check{\sN}$
and $\sN$ is that the former includes
$\bigl( \check{\varpi} , -\check{\overline{X}} \bigr)$, whereas in the
latter we have instead
$\bigl( \varpi , -\overline{X} \bigr)$.\footnote{Here we lighten the
notation by omitting possible superscripts ``$\pm$''.  We also note
that if $\check{\Hrgt}$ were Verma, then the inclusion
$\check{\sN} \subseteq \sN$ would follow immediately.  In the proof
we may therefore exclude this trivial case and assume that both
$\check{\overline{X}}$ and $\overline{X}$ are non-zero.}
But, as $\Hrgt$ is a quotient of $\check{\sH}^\rgt$,
we may write $\check{\overline{X}} = \chi \overline{X}$ for some singular
$\chi \in \uea^-$, so if we can show that
$\check{\varpi} = \chi \varpi$, then $\check{\sN} \subseteq \sN$ follows
and we are done.  Moreover, this would allow us to write
\begin{equation}
\StagMod = \frac{\LMod \oplus \uea}{\sN} = \left. \frac{\LMod \oplus \uea}{\check{\sN}} \middle/ \frac{\sN}{\check{\sN}} \right. = \frac{\check{\StagMod}}{\sN / \check{\sN}},
\end{equation}
realising $\StagMod$ as a quotient of $\check{\StagMod}$.

It remains then to prove that $\check{\varpi} = \chi \varpi$.  This is a
straight-forward check based on \propref{propDetBy}.  To whit, the proof of this
proposition tells us that $\check{\varpi}$ is completely determined by the
conditions (one for each $n>0$)
\begin{equation}
L_n \check{\varpi} = U_0 \omega_0 + U_1 \omega_1 + U_2 \omega_2, \qquad \text{where} \qquad L_n \check{\overline{X}} = U_0 \brac{L_0 - \RDim} + U_1 L_1 + U_2 L_2.
\end{equation}
By hypothesis, $\StagMod$ exists, so there is a $y \in \StagMod$ defining the $\omega_j$ as in \eqnref{eqnDefOmega}.  Now, $\check{\overline{X}} = \chi \overline{X}$ implies that
\begin{equation}
L_n \check{\varpi} = \Bigl( U_0 \brac{L_0 - \RDim} + U_1 L_1 + U_2 L_2 \Bigr) y = L_n \check{\overline{X}} y = L_n \chi \varpi \qquad \text{for all $n>0$}.
\end{equation}
Since $\LMod$ has no (non-zero) \svs{} at the grade of $\check{\varpi}$ (\propref{propVanSVs}), we conclude that $\check{\varpi} = \chi \varpi$, as required.  The proof is therefore complete.
\end{proof}

\begin{corollary} \label{corItsASubmodule}
Every staggered module can be realised as a quotient of a staggered module whose right module is Verma.
\end{corollary}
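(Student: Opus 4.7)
The plan is to obtain this corollary as an essentially immediate consequence of Proposition \ref{prop: monotonicity right}. Let $\Stagg$ be any staggered module with short exact sequence
\begin{equation*}
\dses{\Hlft}{}{\Stagg}{}{\Hrgt}
\end{equation*}
and data $(\omega_1,\omega_2)$. By definition, the right module $\Hrgt$ is a \hwm{}, so it is a quotient of the Verma module $\VerMod_{\RDim,c}$ of the same highest weight and central charge. This is the only input we need besides the monotonicity result already proved.

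Now I would apply Proposition \ref{prop: monotonicity right} with the choice $\check{\sH}^\rgt = \VerMod_{\RDim,c}$. The hypotheses of the proposition are satisfied because $\Hrgt$ is by construction a quotient of $\VerMod_{\RDim,c}$, and the left module and data $(\omega_1,\omega_2)$ are left untouched. The conclusion of Proposition \ref{prop: monotonicity right} then directly furnishes a staggered module $\check{\Stagg}$ fitting into
\begin{equation*}
\dses{\Hlft}{}{\check{\Stagg}}{}{\VerMod_{\RDim,c}},
\end{equation*}
with the same data $(\omega_1,\omega_2)$, and moreover realises $\Stagg$ as a quotient of $\check{\Stagg}$.

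Since there is no obstacle here beyond invoking the previously proved proposition, the ``hard part'' was already packaged into Proposition \ref{prop: monotonicity right}; in particular, the non-trivial checks that $\check{\sN}\subseteq\sN$ and that $\check{\varpi}=\chi\varpi$ (where $\chi$ is the singular element relating the two defining singular vectors) have been discharged there. Thus the proof of the corollary reduces to specialising the monotonicity statement to $\check{\sH}^\rgt = \VerMod_{\RDim,c}$, which is always a legitimate choice.
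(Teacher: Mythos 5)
Your proposal is correct and is exactly the argument the paper intends: the corollary is stated immediately after \propref{prop: monotonicity right} precisely because it follows by taking $\check{\sH}^\rgt = \VerMod_{\RDim}$, of which any highest weight right module is by definition a quotient. (As a minor aside, in this particular specialisation the inclusion $\check{\sN} \subseteq \sN$ is in fact immediate, since $\check{\sN}$ then has no $(\check{\varpi}, -\check{\overline{X}})$ generator at all, so the $\check{\varpi} = \chi\varpi$ computation is not even needed.)
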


To summarise, \thmref{thmConstruction} shows that the data
$\brac{\omega_1 , \omega_2}$ is admissible if and only if the module
$\sN^{\circ}$ (whose definition depends upon $\omega_1$ and $\omega_2$)
is trivial.  This construction is therefore fundamental for the question
of existence of staggered modules, but as such is it not yet completely
transparent.  What is missing are easily checked sufficient conditions to 
guarantee that $\sN^\circ = \{0\}$.  The best way to proceed is to first 
analyse the case in which the right module $\sH^\rgt$ is a Verma module. 
By \propref{prop: monotonicity right}, this case is the least restrictive,
and we devote \secref{sec: right Verma} to this task, which is decidedly
non-trivial in itself. The treatment of general $\Hrgt$ can
then be reduced to the analysis of certain submodules of the $\sH^\rgt$ Verma
case, by \corref{corItsASubmodule}. This is the subject of
\secref{sec: general case}.  First however, we must briefly digress in
order to introduce an important auxiliary result which will be used in
both \secDref{sec: right Verma}{sec: general case}.

\section{The Projection Lemma} \label{sec: projections}

This section is devoted to an auxiliary result which we call the \emph{Projection Lemma} (\lemref{lemProjection}).  This will be used at several key places in the sequel, in particular \secDref{sec: data from a subspace}{secRightModProj}, but in slightly different contexts.  We will therefore present it in a somewhat general form.  The relevance to the development thus far should however be readily apparent.

Recall that we defined a set $\data$ in \eqnref{eqnDefData}.  We generalise
this definition slightly:
\begin{equation}
\data_m = \set{\brac{w_1, w_2} \in \Hlft_{m-1} \oplus \Hlft_{m-2} \st U_1 w_1 + U_2 w_2 = 0 \text{ for all } U = U_1 L_1 = -U_2 L_2 \in \uea^+ L_1 \cap \uea^+ L_2}.
\end{equation}
We will always take $m$ to be the grade of a \sv{}, $m=\ell_\rnk$ or $m=\ell_\rnk^\pm$.  Thus $\data$ coincides with $\data_{\ell}$.  Similarly, we defined a vector space $\gauge$ that acts on $\data$, in fact on
$\Hlft_{\ell-1} \oplus \Hlft_{\ell-2}$, by \eqnref{eqnDataGT}.  
We also generalise this, defining $\gauge_m$ to be the vector space of
transformations $g_u$ of $\Hlft_{m-1} \oplus \Hlft_{m-2}$ which take the form
\begin{equation}
\func{g_u}{w_1 , w_2} = \brac{w_1 + L_1 u , w_2 + L_2 u}, \qquad \text{($u \in \Hlft_m$).}
\end{equation}
Again, $\gauge$ coincides with $\gauge_{\ell}$.

We next define a filtration of $\data_m$ which is induced by the singular
vector structure of $\LMod$.  Recall that at the end of \secref{secNotation},
we discussed the Feigin-Fuchs classification of Virasoro Verma modules and
introduced notation for their \svs{}.  The structure and notation
differed according to whether the Verma module was of chain
(and link) or braid type,
and so the explicit forms of our filtration must also differ according to 
these two cases.

\noindent \textbf{Chain case:}
Define subspaces of $\data_m$ in which both
$w_i$ are descendants of the singular vector $X_k x$:
\begin{equation}
\data_m^{(k)} = \set{(w_1,w_2) \in \data_m \st w_1 , w_2 \in \uea X_k x}.
\end{equation}
When $m=\ell_\rnk$, this gives a filtration of $\data_m$ of the form
\begin{equation}
\data_m = \data_m^{\brac{0}} \supseteq \data_m^{\brac{1}} \supseteq
\data_m^{\brac{2}} \supseteq \cdots \supseteq \data_m^{\brac{\rnk-2}} 
\supseteq \data_m^{\brac{\rnk-1}} .
\end{equation}
Clearly, $\data_m^{\brac{k}} = \set{0}$ for all $k \geqslant \rnk$.  An obvious remark that is nevertheless worth keeping in mind is that the spaces $\data_m^{\brac{k}}$ may be trivial even when $k < \rnk$, for example if $X_k x = 0$.

\noindent \textbf{Braid case:}
We define subspaces of $\data_m$ similarly:\footnote{Note that in the braid case the highest weight submodules generated by the \svs{} are not nested, which is why the definition of $\data_m^{(k;-)}$ requires $w_j$ to be in the sum of two highest weight submodules instead.}
\begin{subequations}
\begin{align}
\data_{m}^{(k;+)} &=
    \set{(w_1,w_2) \in \data_m \; | \; w_j \in \uea X_{k}^+ x} \\
\data_{m}^{(k;-)} &= \set{(w_1,w_2) \in \data_m \; | \;
w_j \in \uea X_{k}^- x + \uea X_{k}^+ x}.
\end{align}
\end{subequations}
When $m$ is the grade of a rank $\rnk$ \sv{} ($m=\ell^\pm_\rnk$), 
these subspaces are nested as
\begin{equation}
\data_m = \data_m^{(0;+)} \supseteq \data_m^{(1;-)} \supseteq
\data_m^{(1;+)} \supseteq \data_m^{(2;-)} \supseteq \cdots \supseteq
\data_m^{(\rnk-2;+)} \supseteq \data_m^{(\rnk-1;-)}.
\end{equation}
We note again that if $\Hlft$ contains no (non-zero) \svs{} of rank $k$,
then $\data_m^{(k;\pm)} = \set{0}$.  However, this case differs from the chain case in that there is the possibility that for a certain rank, one of the \svs{} of $\Hlft$ is present whilst the other is not.

\begin{lemma}[The Projection Lemma] \label{lemProjection}
Let $m=\ell_\rnk$ ($m=\ell_\rnk^\pm$) be the grade of a \sv{}. Then
for any $\brac{w_1,w_2} \in \data_m$, there exists a $g_u \in G_m$ such that
$\func{g_u}{w_1, w_2}$ belongs to the subspace $\data_m^{(\rnk-1)}$
($\data_m^{(\rnk-1;-)}$).
\end{lemma}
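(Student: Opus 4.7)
The plan is to reformulate the lemma as a surjectivity statement on a suitable quotient of $\LMod$. Let $\sJ$ denote the submodule of $\LMod$ generated by $X_{\rnk-1} x$ in the chain case, and by both $X^+_{\rnk-1} x$ and $X^-_{\rnk-1} x$ in the braid case; set $\overline{\LMod} = \LMod / \sJ$ and write $\overline{w}_j$ for the image of $w_j$ under the quotient map. The requirement that $\func{g_u}{w_1,w_2}$ land in $\data_m^{(\rnk-1)}$ (resp.\ $\data_m^{(\rnk-1;-)}$) translates, upon projecting, to the existence of $\overline{u} \in \overline{\LMod}_m$ with $L_j \overline{u} = -\overline{w}_j$ for $j=1,2$. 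Since gauge transformations descend to the quotient and the projection sends $\data_m$ into $\data_m(\overline{\LMod})$ (the defining constraints from $\uea^+ L_1 \cap \uea^+ L_2$ being preserved by any module homomorphism), the lemma is equivalent to proving surjectivity of
\begin{equation*}
\phi \colon \overline{\LMod}_m \longrightarrow \data_m(\overline{\LMod}), \qquad \phi(\overline{u}) = (L_1 \overline{u}, L_2 \overline{u}).
\end{equation*}

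Injectivity of $\phi$ is straightforward: its kernel consists of $\overline{u} \in \overline{\LMod}_m$ annihilated by $L_1$ and $L_2$, hence (since $L_1, L_2$ generate $\vir^+$) by all of $\uea^+$, so $\overline{u}$ would have to be a \sv{} of $\overline{\LMod}$ at grade $m$. But $X_\rnk x$ is a proper descendant of $X_{\rnk-1} x$ in the chain case, while by the braid structure $X^{\pm}_\rnk x$ lie in $\uea X^+_{\rnk-1} x \cap \uea X^-_{\rnk-1} x$ in the braid case; in either case these \svs{} vanish in $\overline{\LMod}$. The Feigin-Fuchs classification ensures uniqueness (up to scalar) of \svs{} at each grade, so $\overline{\LMod}$ has no \sv{} at grade $m$ and $\ker \phi = 0$; thus $\dim \textup{Im}(\phi) = \dim \overline{\LMod}_m$.

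The main step is then the dimension equality $\dim \data_m(\overline{\LMod}) = \dim \overline{\LMod}_m$, which combined with injectivity forces $\phi$ to be a bijection onto $\data_m(\overline{\LMod})$. To establish this, I would lift to the Verma cover $\VerMod_{\LDim} \twoheadrightarrow \overline{\LMod}$ with kernel $\sK$. Because this surjection is an isomorphism at grade $0$, each constraint defining $\data_m(\overline{\LMod})$ lifts to the same constraint on any representative in $\VerMod_{\LDim, m-1} \oplus \VerMod_{\LDim, m-2}$; hence $\data_m(\overline{\LMod})$ is exactly the image of $\data_m(\VerMod_{\LDim})$ under the quotient map. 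Using \lemref{lem: L1 L2} together with non-degeneracy of the Shapovalov pairing on the Verma module, one obtains $\dim \data_m(\VerMod_{\LDim}) = \partnum{m}$, and the equality reduces to the identity
\begin{equation*}
\dim \bigl( \data_m(\VerMod_{\LDim}) \cap (\sK_{m-1} \oplus \sK_{m-2}) \bigr) = \dim \sK_m.
\end{equation*}

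The main obstacle is precisely this last identity. Its proof proceeds by unpacking the structure of $\sK$ as a module generated by \svs{} of $\VerMod_{\LDim}$ at grades strictly less than $m$, and applying the same surjectivity statement inductively to the Verma submodules so generated (the base case being that of an irreducible Verma module, in which $\sK = 0$ and the identity is vacuous). A careful inclusion-exclusion argument will be required in the braid case, since there the submodules generated by $X^+_{\rnk-1} x$ and $X^-_{\rnk-1} x$ overlap in a proper submodule of either one, so one must keep track of the intersection (itself generated by the rank-$\rnk$ \svs{}) when computing dimensions.
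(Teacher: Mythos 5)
Your reduction of the lemma to surjectivity of $\phi$ breaks down, and the root cause is that $\data_m$ is cut out by conditions indexed by \emph{all} of $\uea^+ L_1 \cap \uea^+ L_2$, not just its component at grade $-m$: for $U = U_1 L_1 = -U_2 L_2$ at grade $-n$ with $5 \leqslant n < m$, the condition $U_1 w_1 + U_2 w_2 = 0$ lives in $\Hlft_{m-n}$ with $m-n>0$. Your bookkeeping only registers the top-grade conditions (\lemref{lem: L1 L2} is invoked once, at grade $-m$, and your lifting argument uses precisely that the quotient map is an isomorphism at grade $0$, which is where only those conditions land). This has two consequences. First, $\dim \data_m(\VerMod_{\LDim}) = \partnum{m}$ is unjustified: the Shapovalov form on $\VerMod_{\LDim}$ is degenerate in every case of interest ($\LDim = h_{r,s}$ by hypothesis), so even the grade $-m$ conditions are linearly dependent (\lemref{lem: kernel of psi}), and the lower-grade conditions are missing entirely. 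Second, the conditions landing at positive grades do not lift through $\VerMod_{\LDim} \twoheadrightarrow \Hlft/\sJ$, so the analogue of $\data_m$ computed in the quotient need not equal the image of $\data_m$; hence surjectivity of $\phi$ onto it is a strictly stronger statement than the lemma and may simply be false. A concrete check: take $c=-2$, $\Hlft = \VerMod_0$, $\rnk = 3$, $m = \ell_3 = 6$, $\sJ = \uea X_2 v_0$ with $X_2 v_0$ at grade $3$. The unique grade $-6$ element of $\uea^+ L_1 \cap \uea^+ L_2$ gives the zero functional on $\VerMod_5 \oplus \VerMod_4$ (both $U_1^\dagger v_0$ and $U_2^\dagger v_0$ end in the singular vector $L_{-1} v_0$; cf.\ \lemref{lem: kernel of psi}), so with only top-grade conditions one would get a $12$-dimensional "$\data_6$" whose image in $\VerMod_0/\sJ$ is $9$-dimensional, while $\Imag \phi$ has dimension at most $\dim (\VerMod_0/\sJ)_6 = \partnum{6}-\partnum{3} = 8$; no dimension count omitting the lower-grade conditions can succeed. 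It is the grade $-5$ condition, landing in $\VerMod_1$, that saves the day: e.g.\ $\brac{L_1^2 L_2 + 6L_2^2 - L_1 L_3 + 2L_4} L_{-5} v_0 = 288\, L_{-1} v_0 \neq 0$.

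Two further steps are also problematic. Your key identity is internally inconsistent: with only the grade $-m$ conditions, $\sK_{m-1} \oplus \sK_{m-2}$ lies entirely inside $\data_m(\VerMod_{\LDim})$ (the condition lands in $\sK_0 = \set{0}$), so the left-hand side equals $\dim \sK_{m-1} + \dim \sK_{m-2}$, which exceeds $\dim \sK_m$ by $\func{d}{m - \ell_{\rnk-1}} > 0$ whenever $m - \ell_{\rnk-1} \geqslant 5$. And the injectivity step silently assumes $\Hlft/\sJ$ has no singular vectors at grade $m$ beyond images of singular vectors of $\VerMod_{\LDim}$; uniqueness of singular vectors at each grade of a Verma module does not by itself exclude subsingular vectors of the quotient. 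By contrast, the paper never computes $\dim \data_m$ at all: it proceeds constructively, one rank at a time, choosing a Shapovalov-orthonormal complement $\set{Z_\mu}$ of the maximal proper submodule of $\uea X_k x$ at grade $m - \ell_k$ and setting $z = -\sum_\mu \zeta_\mu Z_\mu X_k x$, where the well-definedness of the $\zeta_\mu$ is exactly where membership in $\data_m$ — including the lower-grade conditions — is used. Any repair of your approach would have to re-derive that information, so the dimension-count route does not yield a genuinely simpler proof.
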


Before presenting the proof, let us pause to first describe the idea behind it (in non-rigorous terms).  We will prove the required result iteratively.  In the chain case, we will show how to take an element of $\data_m^{(k)}$ and make a gauge transformation so as to get an (equivalent) element of $\data_m^{(k+1)}$.  In the braid case, we will do two slightly different alternating steps, showing how to go from $\data_m^{(k;-)}$ to $\data_m^{(k;+)}$ and from $\data_m^{(k;+)}$ to $\data_m^{(k+1;-)}$.  Composing all of these transformations then gives the required result in each case.

The way in which we transform from one subspace to the next is most
transparent when we assume that we are working within a genuine staggered
module, with data given by $\omega_j = L_j y$ for $j = 1,2$.  Under this 
hypothesis, we will outline the steps required, assuming the chain case 
for notational simplicity.  Suppose then that
$\brac{\omega_1 , \omega_2} \in \data_m^{(k)}$, with $m=\ell$.  
We first note that we can
obtain $X_k x$ from $\omega_1$ or $\omega_2$ by acting with $\uea$ if and 
only if we can obtain it from $y$.  Thus, we take a basis $\set{Z_{\mu}}$
of $\uea^-$ at grade $m - \ell_k$, and consider the complex numbers
$\zeta_{\mu}$ defined by $Z_{\mu}^{\dagger} y = \zeta_{\mu} X_k x$.  By
gauge-transforming $y \rightarrow y' = y + z$ appropriately, it turns out
that we can tune all of the $\zeta_{\mu}$ to zero.  It then follows that we
cannot obtain $X_k x$ from $y'$ by acting with $\uea$, hence we cannot 
obtain it from the corresponding $\omega_j' = L_j y'$, $j=1,2$.  $\omega_1'$
and $\omega_2'$ must therefore generate a proper submodule of $\uea X_k x$, 
and so must be descendants of $X_{k+1} x$.

Of course, we cannot assume from the outset that we are working in a
staggered module, because we want to apply the Projection Lemma to the
study of when staggered modules exist!  Nevertheless, the outline above 
serves to motivate the steps in the general proof below.  There are a few 
technicalities to work through, most of which arise because we must
make sure that our constructions are well-defined in the absence of $y$.  Moreover, we also have to account for the structural, and therefore notational, differences which delineate the chain and braid cases.

\begin{proof}
As already stated, there are two cases leading to three steps to consider.
The constructions are similar in all three, but because of structural variations, we must split the considerations
accordingly.  However we will only provide full details in the chain case, 
limiting ourselves to describing what is different in the braid cases.

\textbf{Chain case $\data_{m}^{(k)} \rightarrow \data_m^{(k+1)}$}:
We assume that $(w_1,w_2) \in \data_m^{(k)}$ with $k < \rnk - 1$,
so $m > \ell_{k+1}$. To find a gauge transformation $g_z \in \gauge_m$
such that $g_z(w_1,w_2) \in \data_{m}^{(k+1)}$, we will introduce a basis
 of $\uea_{m - \ell_k}$ with a certain ``orthonormality'' property.  We make 
this precise as follows.

First, let $X^{(k+1)} \in \uea^-$ be defined by $X_{k+1} = X^{(k+1)} X_k$. 
We choose a basis $\tset{V_{\lambda} X^{(k+1)} v_{h^\lft + \ell_k}}$ at grade
$m - \ell_k$ of the maximal proper submodule of the Verma module
$\Verma_{h^\lft + \ell_k}$ (whose \hws{} has conformal dimension equal to
that of $X_k x$).  Thus, $V_{\lambda} \in \uea^-_{m - \ell_{k+1}}$.  We can
complete this to a basis of $\Verma_{h^\lft + \ell_k}$ at the same grade by
adding vectors $Z_{\mu} v_{h^\lft + \ell_k}$ with
$Z_{\mu} \in \uea^-_{m - \ell_k}$.  Since the quotient of a module by its 
maximal proper submodule has non-degenerate Shapovalov form, we can even 
choose the $Z_{\mu}$ to be orthonormal:\footnote{As usual, we can always find an orthogonal basis $\tset{Z_{\mu}}$ such that
$\abs{\tinner{Z_{\mu}}{Z_{\mu}}} = 1$.  Since every complex scalar is a square, it is trivial to redefine the $Z_{\mu}$ so as to obtain an orthonormal basis.  We mention that if we had chosen the Shapovalov form to be sesquilinear rather than bilinear, then this would not be possible.}
\begin{equation} \label{eqnOrthonormality}
\inner{Z_{\mu} v_{h^\lft + \ell_k}}{Z_{\nu} v_{h^\lft + \ell_k}} = \delta_{\mu \nu}, \qquad \text{that is} \qquad
Z_{\mu}^{\dagger} Z_{\nu} v_{h^\lft + \ell_k} = \delta_{\mu \nu} v_{h^\lft + \ell_k}.
\end{equation}
This then defines a basis $\tset{V_{\lambda} X^{(k+1)}} \cup \tset{Z_{\mu}}$
of $\uea^-_{m - \ell_k}$.

Since the $Z_{\mu}$ are not scalars ($m > \ell_k$), we may write $Z_\mu^\dagger = Z_{\mu;1}^\dagger L_1 + Z_{\mu;2}^\dagger L_2$.  The choice of $Z_{\mu;1}$ and $Z_{\mu;2}$ is not unique, but if $Z_\mu^\dagger = Z_{\mu;1}'^{\, \dagger} L_1 + Z_{\mu;2}'^{\, \dagger} L_2$ is another choice then
\begin{equation}
(Z_{\mu;1}^\dagger - Z_{\mu;1}'^{\, \dagger}) L_1 = -(Z_{\mu;2}^\dagger - Z_{\mu;2}'^{\, \dagger}) L_2 \in \uea^+ L_1 \cap \uea^+ L_2.
\end{equation}
It follows that each $Z_{\mu}$ gives rise to a well-defined element $Z_{\mu;1}^\dagger w_1 + Z_{\mu;2}^\dagger w_2$ of $\brac{\uea^- X_k x}_{\ell_k}$, as $(w_1,w_2) \in \data_m^{(k)}$.  We may therefore define $\zeta_{\mu} \in \CC$ by
\begin{equation} \label{eqnDefZeta}
Z_{\mu;1}^\dagger w_1 + Z_{\mu;2}^\dagger w_2 = \zeta_{\mu} X_k x.
\end{equation}
We can similarly write $V_\lambda^\dagger = V_{\lambda;1}^\dagger L_1 + V_{\lambda;2}^\dagger L_2$, as the $V_{\lambda}$ are also not scalars ($m > \ell_{k+1}$).  However, the analogues of the $\zeta_{\mu}$ all vanish as
\begin{equation} \label{eqnAnnihilate}
\inner{X_k x}{X^{(k+1) \dagger} \bigl( V_{\lambda;1}^\dagger w_1 + V_{\lambda;2}^\dagger w_2 \bigr)}_{\uea X_k x} = \inner{X_{k+1} x}{V_{\lambda;1}^\dagger w_1 + V_{\lambda;2}^\dagger w_2}_{\uea X_k x} = 0.
\end{equation}
Here, recall that $\tinner{\cdot}{\cdot}_{\uea X_k x}$ denotes the
Shapovalov form of the submodule $\uea X_k x$.

To tune the constants $\zeta_\mu$ to zero,
we set $z = -\sum_\nu \zeta_\nu Z_\nu X_k x \in \uea X_k x$ and apply the
transformation $g_z$.
Letting $w_j' = w_j + L_j z$, for $j = 1,2$, explicit computation gives
\begin{equation}
\zeta_{\mu}' X_k x = Z_{\mu;1}^\dagger w_1' + Z_{\mu;2}^\dagger w_2' = 0,
\end{equation}
for all $\mu$.  Here we use the orthonormality 
of the $Z_{\mu}$, \eqnref{eqnOrthonormality} (which clearly continues to
hold upon projecting $\Verma_{h^\lft + \ell_k}$ onto $\uea X_k x$).  We need
now only verify that each $w_j' \in \uea X_{k+1} x$ (which
is the kernel of the Shapovalov form in the submodule $\uea X_k x$)
by showing that there
is no element of $\uea$ which takes $w_j'$ to $X_k x$.  We will detail
this for $j=1$, the case $j=2$ being entirely analogous.

Clearly, we need only consider elements $U \in \uea^+_{-m+1 + \ell_k}$.  Write $L_{-1} U^\dagger \in \uea^-_{m-\ell_k}$ in the basis defined above to get
\begin{align}
\nonumber
U L_1 & = \sum_\lambda a_\lambda \bigl( X^{(k+1)} \bigr)^\dagger V_\lambda^\dagger
    + \sum_\mu b_\mu Z_\mu^\dagger \\
\label{eq: projection gets to the kernel of Shapovalov}
& = \Bigl( \bigl( X^{(k+1)} \bigr)^\dagger \sum_\lambda a_\lambda V_{\lambda;1}^\dagger + \sum_\mu b_\mu Z_{\mu;1}^\dagger \Bigr) L_1
  + \Bigl( \bigl( X^{(k+1)} \bigr)^\dagger \sum_\lambda a_\lambda V_{\lambda;2}^\dagger + \sum_\mu b_\mu Z_{\mu;2}^\dagger \Bigr) L_2,
\end{align}
where the $a_{\lambda}$ and $b_{\mu}$ denote coefficients.  
Let $U'_1$ and $U'_2$ be the respective prefactors of $L_1$ and $L_2$ appearing
in \eqnref{eq: projection gets to the kernel of Shapovalov}.
This equation then becomes
$(U_1' - U) L_1 = - U_2' L_2 \in \uea L_1 \cap \uea L_2$, so we obtain the
equality $U w_1' = U_1' w_1' + U_2' w_2'$ as
$\brac{w_1' , w_2'} \in \data_m^{(k)}$.
But, $\bigl( X^{(k+1)} \bigr)^\dagger$ annihilates all of
$(\uea X_k x)_{\ell_{k+1}}$ (compare with \eqnref{eqnAnnihilate}), so we see
that by tuning the $\zeta_{\mu}'$ to zero, we have guaranteed that
\begin{equation}
U w_1' = U_1' w_1' + U_2' w_2' =
\sum_\mu b_\mu \bigl( Z_{\mu;1}^\dagger w_1' + Z_{\mu;2}^\dagger w_2' \bigr) = \sum_\mu b_\mu \zeta_{\mu}' X_k x = 0,
\end{equation}
by \eqnref{eqnDefZeta}.  Since this holds for all
$U \in \uea^+_{-m+1 + \ell_k}$, $w_1' \in \uea X_{k+1} x$.  After repeating
this argument for $w_2'$, we have completed the proof: 
$g_z(w_1 , w_2) = (w_1' , w_2') \in \data^{(k+1)}_m$.

{
\psfrag{0}[][]{$\scriptstyle{0}$}
\psfrag{1}[][]{$\scriptstyle{1}$}
\psfrag{2}[][]{$\scriptstyle{2}$}
\psfrag{k}[][]{$\scriptstyle{k}$}
\psfrag{k+1}[][]{$\scriptstyle{k+1}$}
\psfrag{chnX}[][]{} 
\psfrag{chnproj}[][]{$\data^{(k)} \rightarrow \data^{(k+1)}$}
\psfrag{brdproj1}[][]{$\data^{(k;-)} \rightarrow \data^{(k;+)}$}
\psfrag{brdproj2}[][]{$\data^{(k;+)} \rightarrow \data^{(k+1;-)}$}
\psfrag{brdXm}[][]{} 
\psfrag{brdXp}[][]{} 
\begin{figure}
\begin{center}
\includegraphics[width=12cm]{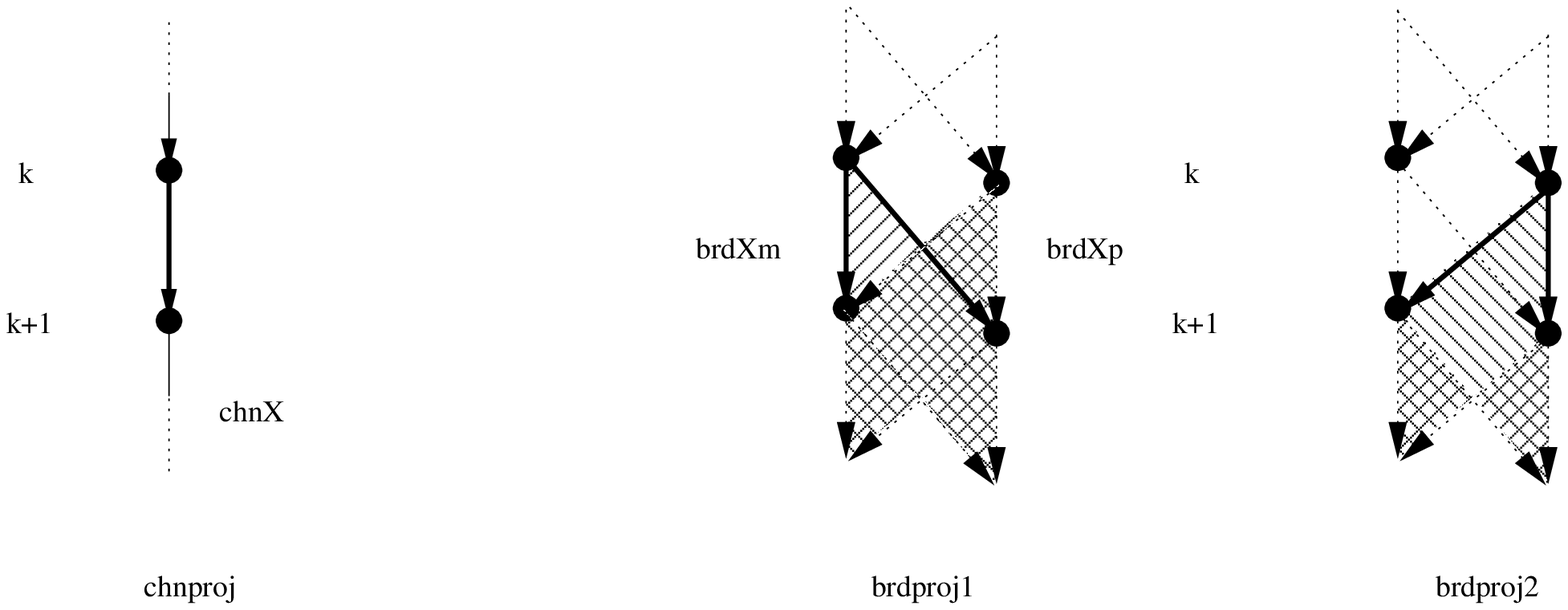}
\caption{An illustration of the projections constructed in the proof of \lemref{lemProjection}. On the left we portray the chain
case, in which the projection involves taking $w'_j$ from the module $\uea X_k x$ (itself a submodule of $\LMod$) to its (maximal) submodule $\uea X_{k+1} x$.
On the right are the braid cases. We alternate
between steps of two types, going from the module
$\uea X_{k}^- x + \uea X_{k}^+ x$ to its submodule $\uea X_{k}^+ x$ (left), and from the module $\uea X_{k}^+ x$ to its submodule $\uea X_{k+1}^- x + \uea X_{k+1}^+ x$ (right). The shading indicates schematically the module we start from and the submodule we arrive at, and the emphasised arrows indicate the singular elements $X^{(k+1)}$ and $X^{(k+1 ; \pm)}$ which are used in the proof.} \label{fig: projections}
\end{center}
\end{figure}
}
\textbf{Braid case $\data_m^{(k;-)} \rightarrow \data_m^{(k;+)}$:}
Suppose that $(w_1,w_2) \in \data_m^{(k;-)}$ and $k < \rnk - 1$, so $m > \ell_{k+1}^+$.  Define $X^{(k+1;\pm)} \in \uea^-$ by $X_{k+1}^\pm = X^{(k+1;\pm)} X_k^-$.  We choose a basis, $\tset{V_{\lambda^-}^- X^{(k+1;-)} v_{\LDim + \ell_k^-}} \cup \tset{V_{\lambda^+}^+ X^{(k+1;+)} v_{\LDim + \ell_k^-}}$ say, of the maximal proper submodule of $\VerMod_{\LDim + \ell_k^-}$ at grade $m - \ell_k^-$, and extend it to a basis of $\VerMod_{\LDim + \ell_k^-}$ itself, at the same grade, by adding orthonormal elements $Z_{\mu} v_{\LDim + \ell_k^-}$.  This defines our basis of $\uea^-_{m - \ell_k^-}$ as in the chain case.

Again, $Z_\mu^\dagger = Z_{\mu;1}^\dagger L_1 +  Z_{\mu;2}^\dagger L_2$
defines constants $\zeta_\mu$ by $Z_{\mu;1}^\dagger w_1 + Z_{\mu;2}^\dagger w_2
= \zeta_\mu X_k^- x$, and we use these to define $z =
- \sum_\mu \zeta_\mu Z_\mu X_k^- x$ and
$(w_1',w_2') = g_z(w_1, w_2) \in \data_m^{(k;-)}$.
The check that $U w_j' = 0$ for any $U \in \uea^+_{-m+j+\ell_k^-}$
is done by writing $L_{-j} U^\dagger \in \uea^-_{m-\ell_k^-}$
in the above basis: We thereby obtain the analogue of
\eqnref{eq: projection gets to the kernel of Shapovalov} (but with
separate terms for the $X^{(k+1;+)}$ and $X^{(k+1;-)}$ contributions). 
This leads to $U w'_j = 0$ for all $U$ as in the chain case.
However, from this we are only able to conclude that
$w_j' \in \uea X_k^+ x$, not that $w_j'$ belongs to the maximal proper
submodule $\uea X_{k+1}^- x + \uea X_{k+1}^+ x$ of $\uea X_k^- x$ (for this, we need the last case below).  We therefore have $(w'_1,w'_2) \in \data_m^{(k;+)}$.

\textbf{Braid case $\data_m^{(k;+)} \rightarrow \data_m^{(k+1;-)}$:}
In this final case we suppose that $(w_1,w_2) \in \data_m^{(k;+)}$ and again
$k < \rnk - 1$, to guarantee that $m > \ell_{k+1}^+$.  We choose a basis
of $\uea^-_{m - \ell_k^+}$ as in the first braid case, and use this to
construct $z$ so that $g_z(w_1,w_2)$ is in $\data^{(k+1;-)}$.  Everything
now works as in the previous cases.  We only mention that proving
$U w'_j = 0$ for all $U \in \uea^+_{-m+j+\ell_k^+}$ here lets us conclude
that the $w_j'$ belong to the maximal proper submodule
$\uea X_{k+1}^- x + \uea X_{k+1}^+ x$ because we have been working entirely in $\uea X_k^+ x$.  Thus, $(w'_1,w'_2) \in \data_m^{(k+1;-)}$ as required.
\end{proof}

We conclude this section with two small remarks pertaining to this proof.  First, we call this result the Projection Lemma because each subsequent gauge transformation can be thought of as projecting the $\brac{w_1 , w_2}$ onto the next-smallest subspace in the filtration.  Indeed, if $\brac{w_1 , w_2}$ is already in the next-smallest subspace, then the $\zeta_{\mu}$ defined in the proof must already vanish, hence $z=0$ and $g_z$ is the identity map.

The second remark addresses why the sequence of projections defined in the proof terminates. Once in the submodule corresponding to the rank $k$ \sv{}(s) $\data_m^{(k)}$ ($\data_m^{(k;\pm)}$), we were
able to project further provided that $m > \ell_{k+1}$ ($m > \ell_{k+1}^{\pm}$).  This guaranteed that the $V$-type basis elements of the maximal proper submodule of $\uea^-_{m - \ell_k}$ ($\uea^-_{m - \ell_k^{\pm}}$) were not scalars, and so could be written as a sum of terms with $L_1$ or $L_2$ on the right.  As soon as $k = \rnk-1$, we find that some $V$-type basis elements are scalars, and so cannot be written in this form.  The proof then breaks down at the point of \eqnref{eq: projection gets to the kernel of Shapovalov} and its analogues.

And so it should:  In the chain case with $m = \ell_{\rnk}$, the grade of the
$w_j$ would be $\ell_{\rnk} - j$, so it is completely unreasonable to expect
that we can construct $w_j'$ belonging to $\uea X_{\rnk} x$.  In the braid
case, we get the same conclusion if $m = \ell_{\rnk}^-$.  When
$m = \ell_{\rnk}^+$, one might hope to be able to find $w_j'$ belonging to 
$\uea X_{\rnk}^- x$.  However, it is possible to show (using
\propref{prop: monotonicity right} and \thmref{thm: invariants} below for
example)
that this is only possible in a rather trivial case:  Essentially,
the ``data'' $(w_1 , w_2)$ must be equivalent to $(0,0)$.

\section{Construction when the Right Module is Verma} \label{sec: right Verma}

Throughout this section we assume that $\sH^\rgt = \Verma_{h^\rgt}$.
In particular, this means that in the construction of \secref{sec: construction}, the submodule $\sN$ of $\LMod \oplus \uea$ is generated by $(\omega_0, h^\rgt-L_0)$, $(\omega_1,-L_1)$ and $(\omega_2,-L_2)$ (there is no $\varpi$ or $\overline{X}$).  The corresponding exact sequence is
\begin{equation} \label{eq: exact seq Verma}
\dses{\Hlft}{}{\Stagg}{}{\Verma_{h^\rgt}}.
\end{equation}

In principle, we have everything we need for our attack on the question of
existence of staggered modules $\Stagg$ with exact sequence
(\ref{eq: exact seq Verma}).  However, the proofs which follow are necessarily
rather technical, given that they apply to completely general left modules.  
We will therefore first briefly outline the main ideas behind them.  We
also suggest that the reader might like to keep in mind the simplest case
in which $\omega_0$ is the \sv{} of minimal (positive) grade in $\LMod$.  
This case not only avoids the most troublesome technicalities (for example, 
we do not need the Projection Lemma for this case), but it also has the 
advantage of covering the majority of staggered modules which have thus 
far found physical application.\footnote{Actually, the physically relevant 
modules we have in mind here do not always have right module Verma.  
However, \propref{prop: monotonicity right} suggests that the relevant 
modules with non-Verma right modules should be recovered from this case 
as quotients.  We will turn to this in \secref{sec: general case}.}

Our overall plan is straight-forward.  The analysis of the $\RMod$ Verma
case turns out to afford an important simplification, namely that the
admissibility of the data is completely captured by the set $\data$,
defined in \eqnref{eqnDefData}.  This allows us to identify the set of 
isomorphism classes of staggered modules with exact sequence
(\ref{eq: exact seq Verma}) as the vector space $\data / \gauge$, thereby 
settling the existence question when $\ell = 0$ (\thmref{thmClassRVermaL=0}). 
We then turn to the computation of the dimension of the space
$\data / \gauge$.  First, we use the Projection Lemma to reduce this to the 
dimension of an equivalent space $\data' / \gauge'$, where 
$\data' \subseteq \data$ is significantly smaller in general
(\propref{propRVerIsoClasses'}).  This allows us to separate the 
computation into four cases, according to the \sv{} structure of $\LMod$ 
around $\omega_0$.  In each case, we reformulate the definition of 
$\data'$ so as to realise it as an intersection of kernels of certain 
linear functionals (\thmref{thm: right Verma characterization of data}).  
The computation of the dimension of $\data'$ is then just an exercise in
linear algebra, albeit a rather non-trivial one.  The results of this 
computation are given in \thmref{thm: moduli space right Verma}.  
Finally, we discuss generalisations of the beta-invariant of 
\eqnref{eqnDefBeta} which reduce the identification of a staggered module
with exact sequence (\ref{eq: exact seq Verma}) to the computation of at 
most two numbers.

\subsection{Admissibility} \label{secRVermaAdmiss}

In this section, we study the question of admissibility of data $\brac{\omega_1 , \omega_2}$ under the hypothesis that the right module is Verma.  The result is reported in \propref{propAdmissible} below.  First however, we need a simple but very useful lemma.  Recall that the submodule $\sN^{\circ}$ may be naturally viewed as a submodule of $\LMod$.

\begin{lemma} \label{lemUseful}
When $\Hrgt$ is Verma, $u \in \sN^{\circ}$ if and only if there exist
$U_1 , U_2 \in \uea$ such that
\begin{equation}
U_1 \omega_1 + U_2 \omega_2 = u \qquad \text{and} \qquad U_1 L_1 + U_2 L_2 = 0.
\end{equation}
\end{lemma}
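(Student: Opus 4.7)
The $(\Leftarrow)$ direction is immediate: given such $U_1, U_2$, the element $(u, 0) = U_1 \cdot (\omega_1, -L_1) + U_2 \cdot (\omega_2, -L_2)$ visibly lies in $\sN$, so $u \in \sN^{\circ}$. All the work is in the converse. Since $\sN$ is the $\uea$-submodule of $\Hlft \oplus \uea$ generated by $(\omega_0, h^\rgt - L_0)$, $(\omega_1, -L_1)$, $(\omega_2, -L_2)$, an element $u \in \sN^{\circ}$ is witnessed by a triple $V_0, V_1, V_2 \in \uea$ satisfying
\begin{equation*}
V_0 \omega_0 + V_1 \omega_1 + V_2 \omega_2 = u \qquad \text{and} \qquad V_0 (L_0 - h^\rgt) + V_1 L_1 + V_2 L_2 = 0.
\end{equation*}
The plan is to eliminate the $V_0$ contribution without disturbing either equation.

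My approach will be first to show that the triple may be chosen with $V_0 \in \uea L_1 + \uea L_2$. To this end I would establish the direct sum decomposition
\begin{equation*}
\uea \;=\; \uea^- \;\oplus\; \uea^- (L_0 - h^\rgt) \CC[L_0] \;\oplus\; \bigl(\uea L_1 + \uea L_2\bigr),
\end{equation*}
which is essentially the \PBW{} realisation of $\VerMod_{h^\rgt} \isom \uea^-$ (using crucially that $L_1, L_2$ generate $\vir^+$, so that $\uea \vir^+ = \uea L_1 + \uea L_2$). Writing $V_0 = V_0^a + V_0^b + V_0^c$ accordingly and using the commutation $L_j (L_0 - h^\rgt) = (L_0 - h^\rgt + j) L_j$ to push $V_0^c (L_0 - h^\rgt)$ into $\uea L_1 + \uea L_2$, the syzygy forces $(V_0^a + V_0^b)(L_0 - h^\rgt)$ to lie in $\uea^- \CC[L_0] \cap (\uea L_1 + \uea L_2) = \set{0}$. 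The absence of zero divisors in $\uea$, another consequence of \PBW{}, then yields $V_0^a + V_0^b = 0$, and hence separately $V_0^a = V_0^b = 0$. Thus $V_0 = V_0^c$ has the form $A L_1 + B L_2$ for some $A, B \in \uea$.

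With this reduction the natural candidates are
\begin{equation*}
U_1 = V_1 + A (L_0 - h^\rgt + 1), \qquad U_2 = V_2 + B (L_0 - h^\rgt + 2),
\end{equation*}
the shifts being dictated by the same commutation relation. Then $U_1 L_1 + U_2 L_2$ collapses to the original syzygy and therefore vanishes, while $U_1 \omega_1 + U_2 \omega_2 = u$ follows from two observations: $V_0 \omega_0 = A L_1 \omega_0 + B L_2 \omega_0 = 0$ by the singularity of $\omega_0$, and $(L_0 - h^\rgt + j)\omega_j = 0$ because $\omega_j \in \Hlft_{\ell - j}$ has $L_0$-eigenvalue $h^\rgt - j$. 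The only non-formal step is the direct sum decomposition of $\uea$; once it is in place the rest is bookkeeping, and it is precisely this decomposition which encodes the Verma hypothesis on $\Hrgt$, since no $\overline{X}$-type generator of $\sN$ needs to be accommodated.
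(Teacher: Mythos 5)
Your proof is correct and follows essentially the same route as the paper's: both use \PBW{} linear independence to show that the coefficient of the generator $(\omega_0, h^\rgt - L_0)$ may be taken to lie in $\uea L_1 + \uea L_2$, and then absorb it into $U_1$ and $U_2$ via commutation with $L_0 - h^\rgt$, the modification being harmless because $\omega_0$ is singular and the $\omega_j$ are $L_0$-eigenvectors of eigenvalue $h^\rgt - j$. Your explicit direct-sum decomposition of $\uea$ is just a slightly more formal packaging of the paper's \PBW{}-ordering argument.
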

\begin{proof}
By definition, $u \in \sN^{\circ}$ if and only if there exist $U_0 , U_1 , U_2 \in \uea$ such that
\begin{subequations}
\begin{align}
U_0 \omega_0 + U_1 \omega_1 + U_2 \omega_2 &= u \qquad \text{(in $\LMod$)} \label{eqnLMod} \\
\text{and} \qquad U_0 \brac{L_0 - \RDim} + U_1 L_1 + U_2 L_2 &= 0 \qquad \text{(in $\uea$)}, \label{eqnUEA}
\end{align}
\end{subequations}
so one direction is trivial.  What we need to show is that we may take $U_0 = 0$, without loss of generality.  Note that by taking $u \in \LMod$ homogeneous, we may assume that $U_0$, $U_1$ and $U_2$ are homogeneous in $\uea$.

Consider $U_1 L_1 + U_2 L_2$.  \PBW-ordering this combination will give a variety of terms, each of which must have a positive index on the rightmost mode.  If \PBW-ordering $U_0$ produced any term which did not have a positive index on the rightmost mode, then right-multiplying by $\brac{L_0 - \RDim}$ would preserve the ordering, and so this term could not be cancelled by any (ordered) term of $U_1 L_1 + U_2 L_2$.  This contradicts (\ref{eqnUEA}), so all the ordered terms of $U_0$ must have a positive index on the rightmost mode.  Then, $U_0 \omega_0 = 0$, and (\ref{eqnLMod}) has the desired form.

But if every \PBW-ordered term of $U_0$ has a positive index on the rightmost mode, we may write $U_0 = U_1' L_1 + U_2' L_2$ for some $U_1' , U_2' \in \uea$.  Hence (for $U_0 \in \uea_m$),
\begin{align}
U_0 \brac{L_0 - \RDim} + U_1 L_1 + U_2 L_2 &= \brac{L_0 - \RDim - m} U_0 + U_1 L_1 + U_2 L_2 \notag \\
&= \brac{U_1 + \brac{L_0 - \RDim - m} U_1'} L_1 + \brac{U_2 + \brac{L_0 - \RDim - m} U_2'} L_2 = 0,
\end{align}
and a simple redefinition of $U_1$ and $U_2$ will put (\ref{eqnUEA}) in the required form.  This redefinition would affect (\ref{eqnLMod}), but for the fact that
\begin{equation}
\brac{L_0 - \RDim - m} \brac{U_1' \omega_1 + U_2' \omega_2} = 0,
\end{equation}
as $U_1' \omega_1 + U_2' \omega_2$ is an $L_0$-eigenvector of eigenvalue $\RDim + m$.
\end{proof}

Recall that \lemref{lem: consistency of L1 L2} gave a necessary
condition for $(\omega_1 , \omega_2)$ to be data of a
staggered module.  \thmref{thmConstruction} and \lemref{lemUseful} now tell
us that under the hypothesis that $\RMod$ is Verma, this condition is also
sufficient:  $\sN^{\circ} = \set{0}$ if and only if
\begin{equation}
U_1 \omega_1 + U_2 \omega_2 = 0 \qquad \text{for all} \qquad U = U_1 L_1 = -U_2 L_2 \in \uea L_1 \cap \uea L_2.
\end{equation}
In the language of \secref{sec: construction} (see \eqnref{eqnDefData1} in particular), this becomes:

\begin{proposition} \label{propAdmissible}
When $\RMod$ is Verma, $\brac{\omega_1 , \omega_2}$ is admissible if and only
if $\brac{\omega_1 , \omega_2} \in \data$.
\end{proposition}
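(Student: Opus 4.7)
The plan is to combine the construction of \thmref{thmConstruction} with \lemref{lemUseful}, noting that the Verma assumption on $\Hrgt$ is precisely what makes these two results applicable in the clean form we need.

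First, for the necessity direction, I would simply invoke \lemref{lem: consistency of L1 L2}, which already states that any admissible data lies in $\data$ regardless of whether $\Hrgt$ is Verma. So this direction requires no further work.

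For the sufficiency direction, assume $(\omega_1, \omega_2) \in \data$. By part (i) of \thmref{thmConstruction}, to produce the desired staggered module it suffices to show that $\sN^\circ = \{0\}$. Here I would invoke \lemref{lemUseful}: any $u \in \sN^\circ$ can be written as $u = U_1 \omega_1 + U_2 \omega_2$ for some $U_1, U_2 \in \uea$ satisfying $U_1 L_1 + U_2 L_2 = 0$, i.e., $U = U_1 L_1 = -U_2 L_2 \in \uea L_1 \cap \uea L_2$. The key step is then the equivalence of the two definitions of $\data$ given in \eqnref{eqnDefData1} and \eqnref{eqnDefData}: the discussion following \lemref{lem: positive to right in intersection} establishes that the membership condition for $\data$ with $U$ ranging over $\uea^+ L_1 \cap \uea^+ L_2$ is equivalent to the same condition with $U$ ranging over the larger set $\uea L_1 \cap \uea L_2$. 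Hence $(\omega_1, \omega_2) \in \data$ forces $U_1 \omega_1 + U_2 \omega_2 = 0$, so $u = 0$ and $\sN^\circ = \{0\}$ as required.

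Since both directions reduce immediately to results already established, there is no genuine obstacle; the only subtle point worth flagging in the write-up is the silent use of the identification of $\data$ with its ``extended'' definition over $\uea L_1 \cap \uea L_2$, since \lemref{lemUseful} a priori produces $U_1, U_2 \in \uea$ rather than in $\uea^+$. Once that identification is cited, the proposition follows in a few lines.
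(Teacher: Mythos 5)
Your proposal is correct and follows exactly the paper's own route: necessity via \lemref{lem: consistency of L1 L2}, and sufficiency by combining \thmref{thmConstruction}(i) with \lemref{lemUseful} and the equivalence of the two descriptions of $\data$ in \eqnDref{eqnDefData1}{eqnDefData}. The point you flag about \lemref{lemUseful} producing $U_1, U_2 \in \uea$ rather than $\uea^+$ is precisely why the paper phrases the condition via \eqnref{eqnDefData1}, so nothing is missing.
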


\noindent \exref{ex:0,12,14} shows that this hypothesis is not superfluous. 
Combining this result with \propref{prop: equivalence} now gives the
following important characterisation.

\begin{proposition} \label{propRVerIsoClasses}
The space of (isomorphism classes of) staggered modules with exact sequence
(\ref{eq: exact seq Verma}) may be identified with the vector space
$\data / \gauge$.
\end{proposition}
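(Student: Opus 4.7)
The plan is to assemble the pieces already established. By Proposition \ref{propAdmissible}, under the hypothesis $\RMod = \VerMod_{\RDim}$, a pair $(\omega_1,\omega_2) \in \Hlft_{\ell-1} \oplus \Hlft_{\ell-2}$ is admissible if and only if it lies in $\data$. Theorem \ref{thmConstruction}(i) then exhibits, for every such $(\omega_1,\omega_2) \in \data$, an explicit staggered module $(\Hlft \oplus \uea)/\sN$ fitting in the exact sequence (\ref{eq: exact seq Verma}) with the prescribed data. Hence there is a well-defined surjection from $\data$ to the set of isomorphism classes of staggered modules with exact sequence (\ref{eq: exact seq Verma}).

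Next, I would verify that $\gauge$ acts linearly on $\data$, so that the quotient $\data/\gauge$ makes sense as a vector space. For this, let $(w_1,w_2) \in \data$ and $g_u \in \gauge$ with $u \in \Hlft_\ell$. For any $U = U_1 L_1 = -U_2 L_2 \in \uea^+ L_1 \cap \uea^+ L_2$,
\begin{equation*}
U_1 (w_1 + L_1 u) + U_2 (w_2 + L_2 u) = \bigl( U_1 w_1 + U_2 w_2 \bigr) + \bigl( U_1 L_1 + U_2 L_2 \bigr) u = 0,
\end{equation*}
since $(w_1,w_2) \in \data$ and $U_1 L_1 + U_2 L_2 = 0$. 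Thus $g_u(w_1,w_2) \in \data$, as required.

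Finally, I would invoke Proposition \ref{prop: equivalence} to identify the fibres of the surjection with the $\gauge$-orbits. Concretely, two staggered modules $\StagMod,\StagMod'$ sharing the left module $\Hlft$ and right module $\VerMod_{\RDim}$ are isomorphic (via an isomorphism extending the identification $x' = x$) precisely when their data $(\omega_1,\omega_2)$ and $(\omega_1',\omega_2')$ are gauge-equivalent, i.e. related by some $g_u \in \gauge$. Thus the surjection $\data \twoheadrightarrow \{\text{isomorphism classes}\}$ descends to a bijection $\data/\gauge \longleftrightarrow \{\text{isomorphism classes}\}$, which is the asserted identification.

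There is no substantial obstacle here; the result is essentially bookkeeping of the preceding propositions. The only subtle point worth emphasising in the write-up is that Proposition \ref{prop: equivalence} guarantees both directions — gauge-equivalent data yield isomorphic modules, and isomorphic modules (with the left module identified) must have gauge-equivalent data — so the bijection descends cleanly and the identification of $\data/\gauge$ with the moduli space is canonical once one fixes the normalisation conventions for $\omega_0$ and $y$ set out in Section \ref{secStag}.
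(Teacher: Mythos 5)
Your proposal is correct and follows exactly the route the paper takes: the paper's "proof" is precisely the observation that \propref{propAdmissible} (admissibility $\Leftrightarrow$ membership in $\data$, via \thmref{thmConstruction}) combined with \propref{prop: equivalence} (isomorphism $\Leftrightarrow$ gauge equivalence of data) yields the identification. Your explicit check that $\gauge$ preserves $\data$ is a small point the paper leaves implicit, but it is the same argument.
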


\begin{example} \label{ex:-2,13,15,13,33}
At $c=-2$ ($t = 2$), one can use the algorithm detailed in \cite{GabInd96}
to fuse $\IrrMod_{-1/8}$ with $\VerMod_{3/8}$ and $\IrrMod_1$ with $\VerMod_0$.
In both cases, a staggered module is obtained with the short exact sequence
\begin{equation}
\dses{\VerMod_0}{}{\StagMod}{}{\VerMod_1}.
\end{equation}
The respective beta-invariants turn out to be $\beta = -1$ (as in
\exref{ex:-2,13,15}) and $\beta = \tfrac{1}{2}$.  This exact sequence
therefore admits two distinct staggered modules, hence by 
\propref{propRVerIsoClasses}, there is (at least) a one-parameter family
of such modules.

This example highlights in a novel way the physical importance of a
good theory of staggered modules. It shows concretely how physically relevant
constructions (here fusion products) can result in modules that
cannot be distinguished from each other by their characters
(graded dimensions), or even by the action of $L_0$ alone.
\end{example}

Finally, since $\ell = 0$ implies that $\omega_1 = \omega_2 = 0$, we thereby
obtain the first piece of our classification, the case when
$\rank \omega_0 = 0$.\footnote{Although formulated differently and obtained
by slightly different means, the result in this case has already appeared in
\cite{RohRed96}. In fact the result is
obtained there (and could have been obtained here) without the lengthy
preparation that our more general results require.}
For consistency with \secref{sec: allowed parameters} below, we will refer
to this case as case (0).

\begin{theorem}[Case (0) of the classification] \label{thmClassRVermaL=0}
Given left and right modules $\LMod$ and $\RMod$, for which the latter is Verma and $\LDim = \RDim$, there exists a \emph{unique} staggered module $\StagMod$ with short exact sequence (\ref{eqnDefStag}).
\end{theorem}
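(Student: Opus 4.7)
The plan is to deduce the theorem as a direct consequence of \propref{propRVerIsoClasses}, which in the right-module-Verma case identifies the set of isomorphism classes of staggered modules fitting into the prescribed exact sequence with the vector space $\data/\gauge$.

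First I would observe that $\LDim = \RDim$ forces $\ell = \RDim - \LDim = 0$, so any data must lie in $\LMod_{\ell-1} \oplus \LMod_{\ell-2} = \LMod_{-1} \oplus \LMod_{-2}$. Since $\LMod$ is a highest weight module, concentrated in non-negative grades relative to its \hws{}, both summands vanish. Hence the only candidate for data is the trivial pair $(\omega_1,\omega_2) = (0,0)$, which manifestly lies in $\data$ because the admissibility condition $U_1 \omega_1 + U_2 \omega_2 = 0$ is then automatic. Thus $\data = \{(0,0)\}$ is itself a single point (in fact one also has $\gauge = 0$, since $\LMod_0 = \CC x = \CC \omega_0$), and \propref{propRVerIsoClasses} yields at most one isomorphism class.

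For existence, I would invoke \propref{propAdmissible} to conclude that the trivial data is admissible, and then apply \thmref{thmConstruction}(i) to produce an actual staggered module $(\LMod \oplus \uea)/\sN$ realising it. A small sanity check is that the normalisation convention at grade zero forces $X = \wun$, so $\omega_0 = x$ and the construction of \secref{secStag} specialises gracefully: one simply has $(L_0 - \RDim)y = x$, giving the required rank-two Jordan block sitting atop $\LMod$.

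There is no genuine obstacle in this case; the content of the theorem is precisely that at $\ell = 0$ there is no room to deform the data, so both existence and uniqueness drop out for free from the machinery already assembled. One could alternatively read uniqueness directly off \corref{cor: uniqueness l 0}, but the argument above via $\data/\gauge$ has the merit of delivering existence in the same breath.
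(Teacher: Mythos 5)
Your proposal is correct and follows essentially the same route as the paper: the paper likewise deduces the result immediately from \propref{propRVerIsoClasses} by noting that $\ell = 0$ forces $\omega_1 = \omega_2 = 0$, so that $\data/\gauge$ is the zero vector space, giving both existence and uniqueness at once. Your additional remarks (that $\gauge$ is trivial and that $\omega_0 = x$ at grade zero) are consistent with the paper's conventions and add nothing problematic.
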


\noindent We remark that it should not be surprising that the precise form
of $\LMod$ plays no r\^{o}le in this result.  For existence when $\LMod$ is
also Verma implies existence for general $\LMod$ (subject only to the
non-vanishing of $\omega_0$) by \propref{prop: monotonicity left}.

\subsection{Choosing Data} \label{sec: data from a subspace}

We have determined that the space of (isomorphism classes of) staggered
modules with exact sequence (\ref{eq: exact seq Verma}) is naturally
realised as the quotient of $\data$ under the action of $\gauge$, by
\propref{propRVerIsoClasses}.  These spaces are a little large in general,
so it proves convenient to prune them into something a little more
manageable.  This will be achieved by applying the Projection Lemma
(\lemref{lemProjection}).

Let us denote by $\minsubmod$ the submodule of $\Hlft$ generated by the 
\svs{} whose rank is one less than that of $\omega_0$.  For
example,\footnote{The case $\rank \omega_0 = 0$ (that is, $\ell = 0$) has
already been analysed in \thmref{thmClassRVermaL=0}, but would be consistent
with $\minsubmod = \set{0}$.} if $\rank \omega_0 = 1$ we have
$\minsubmod = \Hlft$.  When $\rank \omega_0 > 1$, $\minsubmod$ is generated
by one or two \svs{} according as to whether $\LMod$ is of chain or braid
type (this follows from $\omega_0 \neq 0$).  We now define our ``pruned'' 
space of admissible data to be
\begin{equation}
\data' = \bigl\{ (\omega_1, \omega_2) \in \data \st \omega_1 , \omega_2 \in \minsubmod \bigr\}.
\end{equation}
The Projection Lemma with $m = \ell$ (so $\rnk = \rank \omega_0$)
immediately gives:

\begin{lemma} \label{lem: data from minsubmod}
For any $(\omega_1, \omega_2) \in \data$, there exists $g_u \in G$ such that
$g_u (\omega_1, \omega_2) = (\omega_1', \omega_2') \in \data'$.
\end{lemma}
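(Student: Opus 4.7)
The plan is to invoke the Projection Lemma (\lemref{lemProjection}) directly, with the parameter $m = \ell$, and then observe that the target subspace of the projection coincides with $\data'$ in both the chain and braid cases.

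More precisely, recall that $\omega_0 = X x$ is by hypothesis a non-zero singular vector of $\LMod$ at grade $\ell$, of some rank $\rnk = \rank \omega_0 \geqslant 1$. Thus $\ell$ is the grade $\ell_\rnk$ (chain case) or $\ell_\rnk^{\pm}$ (braid case) of a rank-$\rnk$ singular vector of the associated Verma module cover of $\LMod$. Since $\data = \data_\ell$ and $\gauge = \gauge_\ell$, the Projection Lemma, applied to the pair $(\omega_1, \omega_2) \in \data$, yields a gauge transformation $g_u \in \gauge$ such that
\begin{equation*}
g_u(\omega_1, \omega_2) \in \data^{(\rnk-1)} \qquad \text{or} \qquad g_u(\omega_1, \omega_2) \in \data^{(\rnk-1;-)},
\end{equation*}
according to whether $\LMod$ is of chain or braid type.

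It remains to identify these terminal subspaces with $\data'$. By definition, $\minsubmod$ is the submodule of $\LMod$ generated by the singular vectors of rank $\rnk - 1$. In the chain case this is exactly $\uea X_{\rnk-1} x$, so $\data^{(\rnk-1)} = \{(w_1, w_2) \in \data \st w_1, w_2 \in \uea X_{\rnk-1} x\} = \data'$. In the braid case, $\minsubmod = \uea X_{\rnk-1}^- x + \uea X_{\rnk-1}^+ x$ (using both rank-$(\rnk-1)$ singular vectors, which both exist because $\omega_0 \neq 0$ forces the ``non-degenerate'' branch of the braid structure at rank $\rnk - 1$), so again $\data^{(\rnk-1;-)} = \data'$.

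The only potential subtlety — and the step worth double-checking — is the correct bookkeeping of the singular vector structure at rank $\rnk - 1$, particularly in the braid case, to ensure that $\minsubmod$ really coincides with the module appearing in the Projection Lemma's terminal subspace. Once this identification is in place, the statement $(\omega_1', \omega_2') = g_u(\omega_1, \omega_2) \in \data'$ follows immediately from the Projection Lemma.
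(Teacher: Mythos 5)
Your proposal is correct and is essentially the paper's own argument: the paper likewise obtains the lemma as an immediate consequence of the Projection Lemma with $m = \ell$ (so $\rnk = \rank \omega_0$), noting only that the terminal subspace $\data_m^{(\rnk-1)}$ or $\data_m^{(\rnk-1;-)}$ is precisely $\data'$. The bookkeeping you flag in the braid case is exactly the observation the paper makes when defining $\minsubmod$, so no gap remains.
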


\noindent The proof only requires realising that in this application, the
subspace $\data_m^{(\rnk-1)}$ or $\data_m^{(\rnk-1;-)}$ appearing in the
Projection Lemma is precisely $\data'$.

The new choice of data $(\omega_1', \omega_2')$ is equivalent to the old data
$(\omega_1, \omega_2)$, so the underlying staggered module remains unchanged.
Of course, we still have some freedom in the choice.  There is a residual set
of gauge transformations, namely
$\gauge' = \set{g_u \in \gauge : u \in \minsubmod_{\ell}} \subseteq \gauge$,
which preserves $\data'$.
Analogous to the case of the full $\gauge$
(\secref{secStag}), we have $\gauge' \isom \minsubmod_{\ell} / \CC \omega_0$
(as vector spaces), hence
\begin{equation} \label{eqnDimGauge'}
\dim \ \gauge' = \dim \ \minsubmod_{\ell} - 1.
\end{equation}
Moreover, \propref{propRVerIsoClasses} can now be 
replaced by

\begin{proposition} \label{propRVerIsoClasses'}
The space of (isomorphism classes of) staggered modules with exact
sequence (\ref{eq: exact seq Verma}) may be identified with the vector
space $\data' / \gauge'$.
\end{proposition}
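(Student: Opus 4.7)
The plan is to show that the inclusion $\data' \hookrightarrow \data$ descends to a bijection $\Phi \colon \data' / \gauge' \to \data / \gauge$; composing with the identification of \propref{propRVerIsoClasses} will then yield the claimed identification of the space of isomorphism classes with $\data' / \gauge'$. The map $\Phi$ is well defined because $\gauge' \subseteq \gauge$ and $\gauge'$ preserves $\data'$. Surjectivity is already in hand: \lemref{lem: data from minsubmod} guarantees that every class in $\data / \gauge$ has a representative in $\data'$, whose class in $\data' / \gauge'$ maps to the original.

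The substantive step is injectivity, which I would reduce to the following claim: \emph{if $u \in \LMod_{\ell}$ satisfies $L_1 u, L_2 u \in \minsubmod$, then $u \in \minsubmod_{\ell}$.} Granting the claim, suppose $(\omega_1 , \omega_2) , (\omega_1' , \omega_2') \in \data'$ lie in the same $\gauge$-orbit, so $(\omega_1' , \omega_2') = g_u (\omega_1 , \omega_2)$ for some $u \in \LMod_{\ell}$; then $L_j u = \omega_j' - \omega_j \in \minsubmod$ for $j = 1 , 2$, whence the claim gives $u \in \minsubmod_{\ell}$ and $g_u \in \gauge'$, as required.

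To establish the claim, I would first note that since $L_1$ and $L_2$ generate $\vir^+$ and $\minsubmod$ is a submodule, a straightforward induction (using $L_n = (2-n)^{-1} \comm{L_1}{L_{n-1}}$ for $n \geqslant 3$) yields $L_n u \in \minsubmod$ for every $n \geqslant 1$. Consequently the image $[u]$ of $u$ in the highest weight module $Q = \LMod / \minsubmod$ is an $L_0$-eigenvector of grade $\ell$ annihilated by all of $\uea^+$, hence a singular vector of $Q$. Now $Q$ is a quotient of $\VerMod_{\LDim}$ whose defining submodule contains the preimages of every singular vector of $\LMod$ of rank at least $\rnk - 1$; by the Feigin--Fuchs classification recalled in \secref{secNotation}, the surviving singular vectors of $Q$ are the non-zero projections of singular vectors of $\VerMod_{\LDim}$ of ranks $0 , 1 , \ldots , \rnk - 2$. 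These occur at grades $\ell_0 < \cdots < \ell_{\rnk - 2}$ in the chain case (and at analogous grades in the braid case), all strictly less than $\ell_{\rnk - 1} < \ell_{\rnk} = \ell$. Hence $Q$ contains no singular vector at grade $\ell$, forcing $[u] = 0$ and so $u \in \minsubmod_{\ell}$.

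The principal obstacle is the structural input that every singular vector of the highest weight module $Q$ lifts to a singular vector of the ambient Verma module $\VerMod_{\LDim}$, i.e., that no subsingular vector of $\VerMod_{\LDim}$ becomes singular only upon descending to $Q$. This is a consequence of the Feigin--Fuchs analysis of submodule structure, but since the whole injectivity argument rests on it, it should be justified or cited explicitly; everything else in the proof is straightforward bookkeeping built upon the Projection Lemma and the definitions of $\data'$ and $\gauge'$.
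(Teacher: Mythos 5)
Your proposal is correct, and it is actually more complete than the paper's own treatment: the paper obtains surjectivity of the comparison map $\data'/\gauge' \to \data/\gauge$ from the Projection Lemma (via \lemref{lem: data from minsubmod}) and then simply asserts the identification, leaving the injectivity --- equivalently, the fact that two elements of $\data'$ in the same $\gauge$-orbit already lie in the same $\gauge'$-orbit --- implicit. You have correctly isolated this as the one substantive point and reduced it to the claim that $L_1 u , L_2 u \in \minsubmod$ forces $u \in \minsubmod_{\ell}$, that is, that $Q = \Hlft / \minsubmod$ has no non-zero singular vector at grade $\ell$. The structural input you flag --- that singular vectors of $Q$ lift to singular vectors of $\VerMod_{\LDim}$ --- is indeed a theorem for the Virasoro algebra (the absence of subsingular vectors, due to Feigin and Fuchs), but if you wish to avoid invoking it wholesale you can argue via composition factors using only the structure already quoted in \secref{secNotation}: one has $Q = \VerMod_{\LDim} / \wtil{\minsubmod}$ with $\wtil{\minsubmod}$ the submodule of the Verma module generated by its rank $\rnkO - 1$ \svs{} (your observation that $\mathcal{J} \subseteq \wtil{\minsubmod}$ because $\omega_0 \neq 0$ is the key point here); since the grades $\ell_k$ (or $\ell_k^{\pm}$) are pairwise distinct, each $\IrrMod_{\LDim + n}$ occurs at most once as a composition factor of $\VerMod_{\LDim}$, and $\IrrMod_{\LDim + \ell}$ already occurs in $\wtil{\minsubmod}$ (as $X v_{\LDim} \in \wtil{\minsubmod}$), so it cannot occur in $Q$; a non-zero singular vector of $Q$ at grade $\ell$ would generate a highest weight submodule contributing exactly such a factor. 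Either way, your argument closes a step the paper glosses over, and one on which the dimension counts of \thmref{thm: moduli space right Verma} ultimately rely.
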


We point out that $\omega_0$ need not be the \sv{} of lowest grade in
$\minsubmod$ (excluding of course the obvious generating ones). In the braid
case when $\omega_0 = X x = X_\rnkO^+ x$ (with $\rnkO = \rank \omega_0$),
$X_\rnkO^- x$ may be a non-zero \sv{}.
Then, $X_\rnkO^- x \in \minsubmod$ has the same rank as $\omega_0$,
but its grade is strictly less than that of $\omega_0$.  
This case is the source of the most trouble in the following analysis.

\subsection{Characterising Admissible Data} \label{sec: allowed parameters}

In this section we give a tractable characterisation of the admissibility of pairs $(\omega_1, \omega_2) \in \minsubmod_{\ell-1} \oplus \minsubmod_{\ell-2}$.
As the case $\rank \omega_0 = 0$ (that is, $\ell=0$) has already been settled,
we will assume that $\rank \omega_0 \equiv \rnkO \geqslant 1$ for the rest of the section.

We will separate this characterisation into four cases, according to the
number of generating \svs{} of $\minsubmod$ and whether there is a
non-generating (non-zero) \sv{} in $\minsubmod$ whose grade is less than
$\ell$ (the troublesome cases).  Explicitly, the cases are
\begin{description}
\item[Case (1)] $\minsubmod$ is generated by a single \sv{} and this is
the only \sv{} in $\minsubmod$ of grade less than $\ell$.  This applies
in two situations:  When $\LMod$ is of chain (or link) type, and when
$\LMod$ is of braid type with either $\omega_0 = X_1^- x$, or
$\omega_0 = X_1^+ x$ and $X_1^- x = 0$.
\item[Case (1')] $\minsubmod$ is generated by a single \sv{} and there is
another \sv{} in $\minsubmod$ of grade less than $\ell$.  This only applies
when $\LMod$ is of braid type with $\omega_0 = X_1^+ x$ and $X_1^- x \neq 0$.
\item[Case (2)] $\minsubmod$ is generated by two distinct \svs{} and these
are the only \svs{} in $\minsubmod$ of grades less than $\ell$.  This only
applies when $\LMod$ is of braid type with either\footnote{We mention that
this also covers the possibility that $\omega_0$ is the \sv{} of
\emph{maximal} grade in a braid type Verma module with $t<0$
(\secref{secNotation}).} $\omega_0 = X_{\rnkO}^- x$, or
$\omega_0 = X_{\rnkO}^+ x$ and $X_{\rnkO}^- x = 0$.
\item[Case (2')] $\minsubmod$ is generated by two distinct \svs{} and
there is another \sv{} in $\minsubmod$ of grade less than $\ell$.  This
only applies when $\LMod$ is of braid type with $\omega_0 = X_{\rnkO}^+ x$
and $X_{\rnkO}^- x \neq 0$.
\end{description}
It is easy to verify that any possibility is covered by exactly one of these
cases. We illustrate them for convenience in \figref{fig: cases}.

{
\psfrag{caseone}[][]{Case (1)}
\psfrag{casepone}[][]{Case (1')}
\psfrag{casetwo}[][]{Case (2)}
\psfrag{caseptwo}[][]{Case (2')}
\psfrag{om0}[][]{$\omega_0$}
\begin{figure}
\begin{center}
\includegraphics[width=12cm]{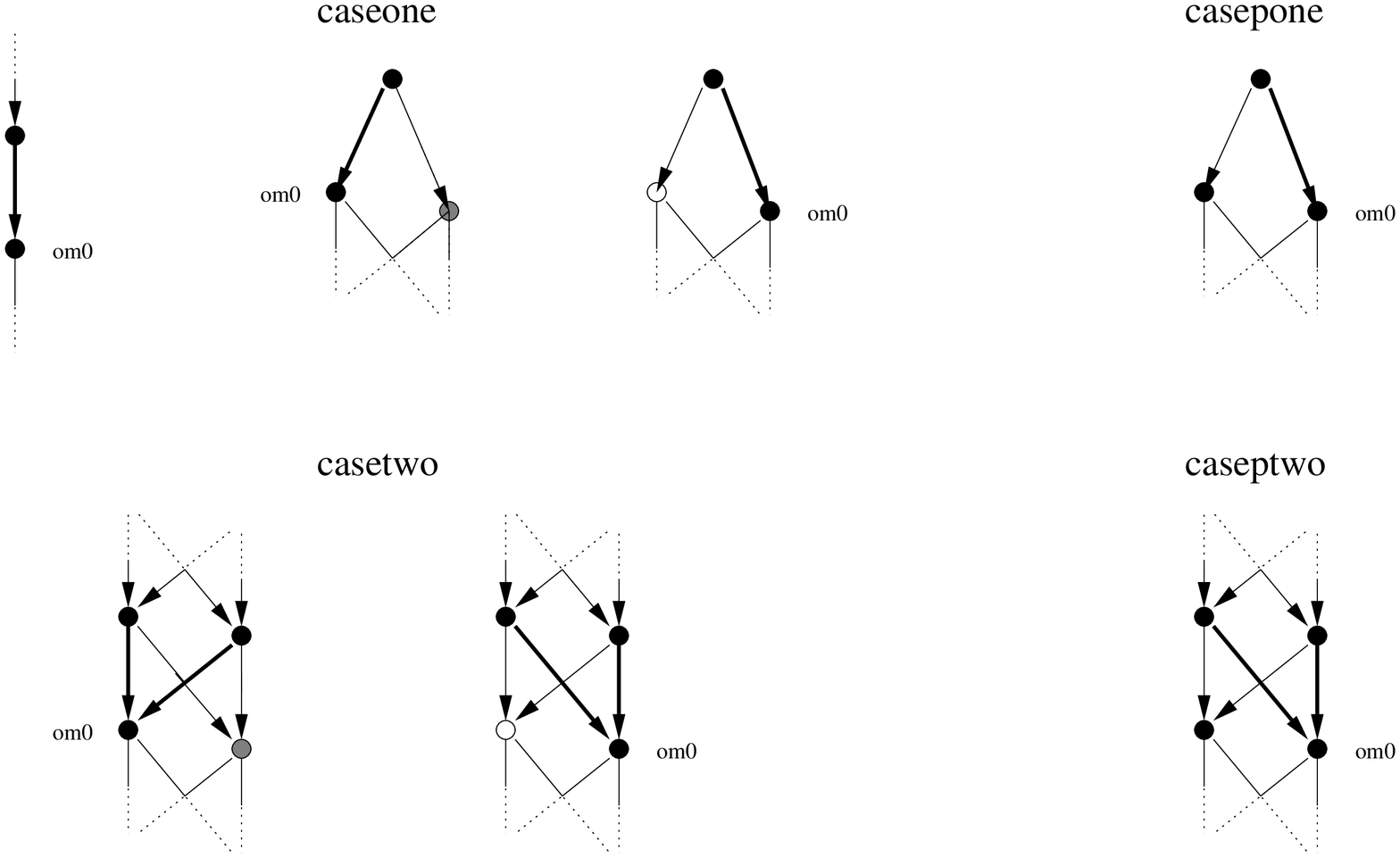}
\caption{An illustration of the possible structures of the left module $\LMod$ in cases (1), (1'), (2) and (2').  As with earlier figures, a black circle represents a \sv{} of $\LMod$, whereas a white circle indicates a \sv{} of the corresponding Verma module which has been set to zero.  We use a grey circle when it does not matter if the \sv{} has been set to zero or not.
Note that the picture corresponding to case (1) with $t \notin \bQ$ has been omitted --- it is understood as a subcase of the chain case pictured.  Similarly, the degenerate braid case ($t \in \bQ$, $t<0$) has not been explicitly portrayed --- it is regarded as a subcase of case (2).} \label{fig: cases}
\end{center}
\end{figure}
}

To analyse each of these cases further, it is useful to first sharpen the
conclusions of \lemref{lemUseful} somewhat.  Specifically, we show that taking
$u$ to be a \sv{} of ``minimal rank'' allows us to choose $U_1$ and $U_2$ in
$\uea^+$.

\begin{lemma} \label{lem: sufficient}
If $(\omega_1, \omega_2) \in \Hlft_{\ell-1} \oplus \Hlft_{\ell-2}$ is not
admissible (and $\Hrgt$ is Verma), then $\sN^\circ$ contains \svs{} of
$\Hlft$ of grade less than $\ell$.  For a \sv{} $x'$ whose rank is
\emph{minimal} among those in $\sN^{\circ}$, there then exist
$U_1 , U_2 \in \uea^+$ such that
\begin{equation} \label{eqnDesired}
U_1 \omega_1 + U_2 \omega_2 = x' \qquad \text{and} \qquad
U_1 L_1 + U_2 L_2 = 0.
\end{equation}
\end{lemma}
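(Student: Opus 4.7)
The plan is to derive the first assertion from the second.  Non-admissibility of $(\omega_1, \omega_2)$ yields $\sN^\circ \neq \set{0}$ by \thmref{thmConstruction}.  Since $\sN^\circ$ is a non-zero submodule of the highest weight module $\LMod$, it is generated by (and therefore contains) \svs{} of $\LMod$; for instance, any element of $\sN^\circ$ of minimum grade is annihilated by $\vir^+$.  The plan then fixes a \sv{} $x' \in \sN^\circ$ whose rank $\rho'$ is minimal among \svs{} of $\LMod$ lying in $\sN^\circ$.

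By \lemref{lemUseful}, there exist $U_1, U_2 \in \uea$ with $x' = U_1 \omega_1 + U_2 \omega_2$ and $U_1 L_1 + U_2 L_2 = 0$.  I would next apply the \PBW-alignment argument from the proof of \lemref{lem: positive to right in intersection} to decompose $U_i = \sum_n V_n W_{i,n}$, with $V_n \in \uea^{\leqslant 0}$ and $W_{i,n} \in \uea^+$ arranged so that $W_{1,n} L_1 + W_{2,n} L_2 = 0$ for each $n$.  Setting $v_n := W_{1,n} \omega_1 + W_{2,n} \omega_2$, a second appeal to \lemref{lemUseful} places $v_n$ in $\sN^\circ$, yielding the expansion $x' = \sum_n V_n v_n$.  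The task then reduces to isolating a single index $n$ for which $V_n$ acts on $v_n$ as a scalar and $v_n$ is a non-zero scalar multiple of $x'$.

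For the isolation, I would pass to the quotient $\LMod / K$, where $K \subseteq \LMod$ is the submodule generated by all \svs{} of rank strictly greater than $\rho'$.  In this quotient, $\bar{x'}$ is the \hws{} of an irreducible highest weight submodule of $\LMod / K$, and by minimality of $\rho'$ the image $\bar{\sN^\circ}$ is generated by images of the rank-$\rho'$ \svs{} of $\sN^\circ$.  A grade comparison---using that elements of $\uea^{\leqslant 0}$ only raise grade, whilst $\bar{v_n}$ has grade at least that of $\bar{x'}$ within $\uea \bar{x'}$---forces each non-vanishing $\bar{v_n}$ to be a scalar multiple of $\bar{x'}$ and each contributing $V_n$ to have grade zero, hence to act on $\bar{v_n}$ as a scalar.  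Balancing coefficients in $\bar{x'} = \sum_n V_n \bar{v_n}$ then produces an index $n$ with $v_n - c_n x' \in \sN^\circ \cap K$ and $c_n \neq 0$; the residual correction lives in $\sN^\circ \cap K$, whose generating \svs{} all have rank strictly greater than $\rho'$, and is removed by an induction on rank applied via the same construction.  The first assertion then follows at once: since $\uea$ has no zero divisors, both components of the resulting representation $x' = U_1 \omega_1 + U_2 \omega_2$ with $U_i \in \uea^+$ are necessarily non-zero, and every non-trivial element of $\uea^+$ strictly lowers grade, so the grade of $x'$ is strictly less than $\ell$.

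The main technical obstacle is precisely the isolation step in the braid case, where $\sN^\circ$ can contain two linearly independent rank-$\rho'$ \svs{} (namely $X^-_{\rho'} x$ and $X^+_{\rho'} x$):  then $\bar{\sN^\circ}$ need not be irreducible, and the choice $x' = X^+_{\rho'} x$ admits the other rank-$\rho'$ \sv{} as a potentially lower-grade element of $\sN^\circ$, contributing descendants at the grade of $x'$ that are not scalar multiples of $x'$.  Handling this would require either further refining $K$ to kill the additional rank-$\rho'$ generator of $\sN^\circ$, or invoking the Projection Lemma (\lemref{lemProjection}) beforehand to normalise the data $(\omega_1, \omega_2)$ and thereby reduce to the clean configuration analysed above.
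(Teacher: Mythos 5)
Your proposal follows the same route as the paper's proof: \lemref{lemUseful}, the \PBW{} alignment of \lemref{lem: positive to right in intersection} to write $U_i = \sum_n V_n W_{i,n}$ with $V_n \in \uea^{\leqslant 0}$, $W_{i,n} \in \uea^+$ and $W_{1,n} L_1 + W_{2,n} L_2 = 0$, a second appeal to \lemref{lemUseful} to place each $v_n = W_{1,n}\omega_1 + W_{2,n}\omega_2$ in $\sN^\circ$, and minimality of rank to dispose of the $\uea^{\leqslant 0}$ prefixes. However, your isolation machinery (the quotient by $K$, the extraction of a single index $n$, the induction on rank) is heavier than necessary. Whenever $\sN^\circ_{<m} = \set{0}$, where $m$ is the grade of $x'$ --- and this holds automatically except in the one braid configuration you flag, because $\sN^\circ$ is generated by its minimal-rank \svs{} and every higher-rank \sv{} sits at strictly higher grade --- each $v_n$ attached to a $V_n$ containing a genuine negative mode has grade $<m$ and therefore vanishes outright; the surviving $V_n$ are polynomials in $L_0$ acting as scalars $c_n$, and $U_j = \sum_n c_n W_{j,n} \in \uea^+$ already satisfies both equations of (\ref{eqnDesired}). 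No single index need be isolated and no quotient of $\LMod$ is required.

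The obstacle you raise in your final paragraph is genuine, and since you do not resolve it the proof is incomplete for the lemma as stated. When $\sN^\circ$ contains both rank-$\rho'$ \svs{} $X^{\pm}_{\rho'} x$ and one takes $x' = X^{+}_{\rho'} x$, the negative-mode terms no longer vanish: they land in $\bigl( \uea X^{-}_{\rho'} x \bigr)_{m}$, which does not contain $x'$, so the argument only delivers $U_1 \omega_1 + U_2 \omega_2 = x' \pmod{\uea X^{-}_{\rho'} x}$ with $U_1, U_2 \in \uea^+$. Neither of your two proposed repairs is carried out, and the second (normalising the data via the Projection Lemma) does not obviously help, because $\sN^\circ$ is unchanged under gauge transformations of the data, so the troublesome configuration survives any such normalisation. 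It is worth noting that the weaker, ``modulo $\uea X^{-}_{\rho'} x$'' conclusion is all that is ever needed in this configuration: the functional $\psi^+_{U^+}$ of \eqnref{eqnDefPsiPM} is defined only modulo $\uea X^{-}_{\rnkO-1} x$, which absorbs precisely this ambiguity.
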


\noindent We have stated only one direction, but the converse is already implied
by \lemref{lemUseful}.

\begin{proof}
Suppose that $(\omega_1, \omega_2)$ is not admissible, which in view of 
\thmref{thmConstruction}, means that $\sN^\circ$ is a non-zero submodule
of $\Hlft$.  Therefore $\sN^\circ$ contains non-zero \svs{}, and it is
generated by its minimal rank \svs{}.  Take $x'$ to be one such generator.

By \lemref{lemUseful}, we can find $U_1 , U_2 \in \uea$ such that both equations in (\ref{eqnDesired}) are satisfied.  But, if we \PBW{}-order $U_1$ and $U_2$, we see that terms with negative modes on the left cannot contribute to $U_1 \omega_1 + U_2 \omega_2$ by the assumption that $x'$ was of minimal rank.  We therefore drop them.  Furthermore, any $L_0$ on the left may be replaced by the appropriate eigenvalue, so we may assume that $U_1,U_2 \in \uea^+$ in the first equation.  Linear independence of \PBW{}-monomials then allows us to likewise drop the terms with negative modes in the second equation.  We may therefore write $U_1 L_1 + U_2 L_2 = \sum_n L_0^n U^{\brac{n}} = 0$ with $U^{\brac{n}} \in \uea^+$.  Independence and the lack of zero-divisors in $\uea$ then means that each $U^{\brac{n}}$ must vanish separately, so we can certainly replace each $L_0$ by its eigenvalue here too.  This means that the $U_1,U_2 \in \uea^+$ of the first equation also satisfy the second.  Finally, we conclude from $U_1 , U_2 \in \uea^+$ in the first equation that the grade of $x'$ must be less than $\ell$.
\end{proof}

Assuming that
$(\omega_1, \omega_2) \in \minsubmod_{\ell-1} \oplus \minsubmod_{\ell-2}$,
the submodule $\sN^\circ$ is contained in $\minsubmod$ by \lemref{lemUseful}.
The minimal rank referred to in \lemref{lem: sufficient} is then either
$\rnkO-1$ or $\rnkO$.
In concrete terms, we need to check whether the rank $\rnkO-1$
\svs{} are in $\sN^\circ$, and if this can be ruled out, we do the
same for the rank $\rnkO$ \sv{} of grade less than $\ell$ if necessary
(cases (1') and (2') only).
Below, we introduce functionals $\psi$, $\psi^\pm$
and $\psi^\cap$ with the aim of reducing these checks to a problem in linear algebra.  We first separate our considerations according to the number of rank $\rnkO-1$ \svs{} in $\Hlft$, and then analyse the further constraints stemming from the presence of a second rank $\rnkO$ singular vector.

\subsubsection{Cases \textup{(1)} and \textup{(1')}:}
\label{sec: parameters case one}

In these cases, $\minsubmod$ is generated by the normalised \sv{}
$X_{\rnkO -1} x$ of grade $\ell_{\rnkO -1}$.  Making use of the fact that
$\minsubmod_{\ell_{\rnkO -1}}$ is one-dimensional, we define for each
$U = U_1 L_1 = -U_2 L_2 \in
\brac{\uea^+ L_1 \cap \uea^+ L_2}_{\ell_{\rnkO -1} - \ell}$, a linear functional
\begin{equation} \label{eqnDefPsi}
\psi_U \colon \minsubmod_{\ell-1} \oplus \minsubmod_{\ell-2} \rightarrow
\CC \qquad \text{by} \qquad U_1 \omega_1 + U_2 \omega_2 =
\func{\psi_U}{\omega_1, \omega_2} X_{\rnkO -1} x.
\end{equation}
Taking
$(\omega_1, \omega_2) \in \minsubmod_{\ell-1} \oplus \minsubmod_{\ell-2}$, 
the submodule $\sN^\circ$ contains no \svs{} of rank less than $\rnkO-1$.
In view of \lemref{lem: sufficient}, $\sN^{\circ}$ contains the rank
$\rnkO-1$ singular vector if and only if
$\func{\psi_U}{\omega_1, \omega_2} \neq 0$ for some
$U \in (\uea^+ L_1 \cap \uea^+ L_2)_{\ell_{\rnkO -1} - \ell}$.
We formulate this result as follows:

\begin{proposition} \label{prop: right Verma 1}
In cases \textup{(1)} and \textup{(1')}, assuming $(\omega_1, \omega_2) \in \minsubmod_{\ell-1} \oplus \minsubmod_{\ell-2}$, we have $X_{\rnkO -1} x \notin \sN^\circ$ if and only if
\begin{equation}
(\omega_1, \omega_2) \in \bigcap_{U \in (\uea^+ L_1 \cap \uea^+ L_2)_{\ell_{\rnkO -1} - \ell}} \Kern \psi_U.
\end{equation}
\end{proposition}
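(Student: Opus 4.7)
The proof will be a short deduction from the two lemmas proved immediately above (\lemref{lemUseful} and \lemref{lem: sufficient}), once we check that the functionals $\psi_U$ are well-defined on the given domain. The plan is to first verify this well-definedness, then prove each direction of the biconditional separately.

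For well-definedness, note that in cases (1) and (1'), $\minsubmod$ is generated by the normalised singular vector $X_{\rnkO-1}x$, which has rank $\rnkO-1$. Because every homogeneous subspace of a highest weight module has a unique singular vector up to scalar multiples at any given grade, $\minsubmod_{\ell_{\rnkO-1}} = \CC X_{\rnkO-1}x$ is one-dimensional. Given $U = U_1 L_1 = -U_2 L_2 \in (\uea^+ L_1 \cap \uea^+ L_2)_{\ell_{\rnkO-1}-\ell}$, the decomposition $(U_1,U_2)$ is uniquely determined by $U$ since $\uea$ has no zero-divisors. Moreover, $U_1 \omega_1 + U_2 \omega_2$ lies in $\minsubmod_{\ell_{\rnkO-1}}$ by a grade count, so the scalar $\psi_U(\omega_1,\omega_2) \in \CC$ in \eqnref{eqnDefPsi} is unambiguously defined, and obviously depends linearly on $(\omega_1, \omega_2)$.

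For the ``if'' direction (contrapositive), suppose some $U \in (\uea^+ L_1 \cap \uea^+ L_2)_{\ell_{\rnkO-1}-\ell}$ satisfies $\psi_U(\omega_1,\omega_2) = c \neq 0$. Writing $U = U_1 L_1 = -U_2 L_2$ with $U_1, U_2 \in \uea^+$, we have
\begin{equation}
U_1 \omega_1 + U_2 \omega_2 = c\, X_{\rnkO-1} x, \qquad U_1 L_1 + U_2 L_2 = 0.
\end{equation}
Applying \lemref{lemUseful} immediately yields $c\, X_{\rnkO-1}x \in \sN^\circ$, hence $X_{\rnkO-1}x \in \sN^\circ$ since $\sN^\circ$ is a submodule of $\LMod$.

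For the ``only if'' direction, suppose $X_{\rnkO-1}x \in \sN^\circ$. Because $(\omega_1,\omega_2) \in \minsubmod_{\ell-1} \oplus \minsubmod_{\ell-2}$, \lemref{lemUseful} forces $\sN^\circ \subseteq \minsubmod$, so no singular vector of rank less than $\rnkO-1$ can belong to $\sN^\circ$. Therefore $X_{\rnkO-1}x$ has \emph{minimal} rank among the singular vectors of $\sN^\circ$, and \lemref{lem: sufficient} furnishes $U_1, U_2 \in \uea^+$ with
\begin{equation}
U_1 \omega_1 + U_2 \omega_2 = X_{\rnkO-1}x, \qquad U_1 L_1 + U_2 L_2 = 0.
\end{equation}
Setting $U = U_1 L_1 \in \uea^+ L_1 \cap \uea^+ L_2$, a grade count gives $U \in (\uea^+ L_1 \cap \uea^+ L_2)_{\ell_{\rnkO-1}-\ell}$, and by construction $\psi_U(\omega_1,\omega_2) = 1 \neq 0$, so $(\omega_1,\omega_2) \notin \Kern \psi_U$.

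The only subtle point is invoking \lemref{lem: sufficient} correctly --- we must identify $X_{\rnkO-1}x$ as having \emph{minimal} rank in $\sN^\circ$, which is where the hypothesis $(\omega_1,\omega_2) \in \minsubmod_{\ell-1} \oplus \minsubmod_{\ell-2}$ (rather than merely in $\LMod_{\ell-1} \oplus \LMod_{\ell-2}$) plays its essential role. Once that observation is in place, the argument is essentially a one-line translation of \lemDref{lemUseful}{lem: sufficient} into the language of the functionals $\psi_U$.
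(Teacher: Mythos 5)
Your proof is correct and follows essentially the same route as the paper: the observation that data in $\minsubmod_{\ell-1} \oplus \minsubmod_{\ell-2}$ forces $\sN^\circ \subseteq \minsubmod$ (so the minimal rank in \lemref{lem: sufficient} is $\rnkO-1$), combined with \lemref{lemUseful} for one direction and \lemref{lem: sufficient} for the other, is precisely the argument the paper gives in the paragraph preceding the proposition. The only cosmetic quibble is that $\dim \minsubmod_{\ell_{\rnkO-1}} = 1$ follows simply from $\minsubmod$ being cyclic on a vector at that grade, rather than from uniqueness of singular vectors.
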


\noindent We point out that in case (1), $X_{\rnkO-1} x$ is the only \sv{} in
$\minsubmod$ of grade less than $\ell$, so by \lemref{lem: sufficient}, 
the above condition completely characterises the admissible data
$(\omega_1, \omega_2) \in \data'$.  In case (1'), there is another such
\sv{}, and so we will have to work harder to get a complete characterisation
(\secref{sec: pain in the A}). This proposition is of course crucial for
case (1') as well, since it tells us how to rule out the rank
$\rnkO-1$ \svs{}.  After that, $\rnkO$ becomes the candidate for the minimal rank referred to in \lemref{lem: sufficient}.

\subsubsection{Cases \textup{(2)} and \textup{(2')}:}
\label{sec: parameters case two}

In this case there are two rank $\rnkO-1$ highest weight vectors in $\Hlft$,
namely $X_{\rnkO-1}^{\pm} x$, and the submodule
$\minsubmod = \uea X_{\rnkO-1}^{-} x + \uea X_{\rnkO-1}^{+} x$
is not a highest weight module. We have
\begin{equation}
\minsubmod_{\ell-j} = \uea^-_{\ell-\ell_{\rnkO-1}^{-}-j} X_{\rnkO-1}^{-} x +
\uea^-_{\ell-\ell_{\rnkO-1}^{+}-j} X_{\rnkO-1}^{+} x \qquad \text{for $j=1,2$,}
\end{equation}
where the sum is direct in case (2), but not in case (2').
In either case, given
$(\omega_1, \omega_2) \in \minsubmod_{\ell-1} \oplus \minsubmod_{\ell-2}$,
we can write $\omega_j = \omega_j^- + \omega_j^+$ with
$\omega_j^\pm \in \uea X_{\rnkO-1}^{\pm} x$. The two conditions we
will obtain below can be understood as one for each part, ``$-$'' and ``$+$''.

In analogy with the $\psi_U$ above, we define the functionals
$\psi^\pm_{U^\pm} : \minsubmod_{\ell-1} \oplus \minsubmod_{\ell-2}
\rightarrow \bC$ by the formulae
\begin{subequations} \label{eqnDefPsiPM}
\begin{align}
U^-_1 \omega_1 + U^-_2 \omega_2 &= \psi^-_{U^-}(\omega_1, \omega_2) X_{\rnkO-1}^{-} x \\
\text{and} \qquad U^+_1 \omega_1 + U^+_2 \omega_2 &= \psi^+_{U^+}(\omega_1, \omega_2) X_{\rnkO-1}^{+} x \pmod{\uea X_{\rnkO-1}^{-} x},
\end{align}
\end{subequations}
where $U^\pm = U_1^\pm L_1 = -U_2^\pm L_2 \in
(\uea^+ L_1 \cap \uea^+ L_2)_{\ell_{\rnkO-1}^{\pm}-\ell}$.  These definitions again rely on the fact that both $\minsubmod_{\ell_{\rnkO-1}^{-}}$ and
$(\minsubmod / \uea X_{\rnkO-1}^{-} x)_{\ell_{\rnkO-1}^{+}}$ are
one-dimensional.

Assuming $(\omega_1,\omega_2) \in \minsubmod_{\ell-1} \oplus \minsubmod_{\ell-2}$, so there can again
be no highest weight vectors of rank less than $\rnkO-1$ in $\sN^\circ$,
\lemref{lem: sufficient} tells us under which condition the singulars
$X_{\rnkO-1}^{\pm} x$ are in $\sN^\circ$. Precisely as above,
$X_{\rnkO-1}^- x \in \sN^\circ$ if and only if there is a $U^-$ such that
$\psi^-_{U^-} (\omega_1, \omega_2) \neq 0$. The case of $X_{\rnkO-1}^+ x$
works out similarly, despite the slightly
more involved definition of $\psi^+$. The easy direction is given by
\lemref{lem: sufficient}: If $X_{\rnkO-1}^+ x \in \sN^\circ$, then there exists
$U^+=U^+_1 L_1 = -U^+_2 L_2$ such that
$\psi^+_{U^+}(\omega_1, \omega_2) = 1$. To see the converse, assume that there
exists $U^+$ such that $\psi^+_{U^+} (\omega_1,\omega_2) \neq 0$
and without loss of generality choose $U^+$ so that this value is unity.
Explicitly, this means that
\begin{equation}
U_1^+ \omega_1 + U_2^+ \omega_2  = X_{\rnkO-1}^+ x + u \qquad \text{for some $u \in \uea X_{\rnkO-1}^- x$.}
\end{equation}
If $u=0$, we are done, so assume that $u = V^- X_{\rnkO-1}^- x \neq 0$
with $V^- \in \uea^-$. The maximal proper submodule of $\uea X_{\rnkO-1}^- x$ is trivial at the grade of $u$, so there must exist $V^+ \in \uea^+$ such that $V^+ u = X_{\rnkO-1}^- x$.  As $u = V^- V^+ u$, it now follows that
\begin{equation}
\brac{\wun - V^- V^+} \brac{U_1^+ \omega_1 + U_2^+ \omega_2} = \brac{\wun - V^- V^+} \brac{X_{\rnkO-1}^+ x + u} = X_{\rnkO-1}^+ x.
\end{equation}
Applying \lemref{lemUseful} to $\brac{\wun - V^- V^+} U_j^+$, we conclude that $X_{\rnkO-1}^+ x \in \sN^\circ$.

In conclusion, $X_{\rnkO-1}^\pm x \in \sN^\circ$ if and only if for some $U^\pm$
the value of $\psi^\pm_{U^\pm}(\omega_1,\omega_2)$ is non-zero.
This gives the analogue of \propref{prop: right Verma 1}:

\begin{proposition} \label{prop: right Verma 2}
In cases \textup{(2)} and \textup{(2')}, assuming $(\omega_1, \omega_2) \in \minsubmod_{\ell-1} \oplus \minsubmod_{\ell-2}$, we have $X_{\rnkO -1}^\pm x \notin \sN^\circ$ if and only if
\begin{equation}
(\omega_1, \omega_2) \in \bigcap_{U^\pm \in (\uea^+ L_1 \cap \uea^+ L_2)_{\ell_{\rnkO -1}^\pm - \ell}} \Kern \psi^\pm_{U^\pm}.
\end{equation}
\end{proposition}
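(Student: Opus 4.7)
The plan is to mirror the strategy of \propref{prop: right Verma 1}, treating the ``$-$'' and ``$+$'' components of (\ref{eqnDefPsiPM}) separately because the two defining equations are not symmetric: $\psi^-$ captures an exact equality in the one-dimensional space $\bC X_{\rnkO-1}^- x$, while $\psi^+$ is defined only modulo $\uea X_{\rnkO-1}^- x$. Both directions of the biconditional rest on \lemref{lemUseful} and \lemref{lem: sufficient}, together with the observation that the hypothesis $(\omega_1,\omega_2) \in \minsubmod_{\ell-1} \oplus \minsubmod_{\ell-2}$ forces $\sN^\circ \subseteq \minsubmod$ (via \lemref{lemUseful}), so no \sv{} of rank less than $\rnkO-1$ can live in $\sN^\circ$, and $X_{\rnkO-1}^\pm x$ are legitimate minimal-rank candidates for \lemref{lem: sufficient}.

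For the ``$-$'' component both implications are short. If $\psi^-_{U^-}(\omega_1,\omega_2) = c \neq 0$, then $U_1^- \omega_1 + U_2^- \omega_2 = c\, X_{\rnkO-1}^- x$ together with $U_1^- L_1 + U_2^- L_2 = 0$ shows via \lemref{lemUseful} that $X_{\rnkO-1}^- x \in \sN^\circ$. Conversely, if $X_{\rnkO-1}^- x \in \sN^\circ$ then \lemref{lem: sufficient} supplies $U_1, U_2 \in \uea^+$ with $U_1 \omega_1 + U_2 \omega_2 = X_{\rnkO-1}^- x$ and $U_1 L_1 + U_2 L_2 = 0$; setting $U^- = U_1 L_1$ we obtain $\psi^-_{U^-}(\omega_1,\omega_2) = 1$. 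The ``easy'' converse for the ``$+$'' component is identical, only requiring a reduction modulo $\uea X_{\rnkO-1}^- x$ at the final step.

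The substantive content is the forward implication for ``$+$''. Rescaling $U^+$ to normalise $\psi^+_{U^+}(\omega_1,\omega_2) = 1$, the definition yields $U_1^+ \omega_1 + U_2^+ \omega_2 = X_{\rnkO-1}^+ x + u$ for some $u \in \uea X_{\rnkO-1}^- x$ at grade $\ell_{\rnkO-1}^+$. If $u = 0$, \lemref{lemUseful} immediately places $X_{\rnkO-1}^+ x$ in $\sN^\circ$. Otherwise, writing $u = V^- X_{\rnkO-1}^- x$ for some $V^- \in \uea^-$, the idea is to produce a $V^+ \in \uea^+$ with $V^+ u = X_{\rnkO-1}^- x$ and then premultiply the original equation by $\wun - V^- V^+$; this annihilates the spurious $u$, leaves $(\wun - V^- V^+) U_1^+ L_1 + (\wun - V^- V^+) U_2^+ L_2 = 0$ intact, and reduces the situation to $(\wun - V^- V^+)(U_1^+ \omega_1 + U_2^+ \omega_2) = X_{\rnkO-1}^+ x$, whence $X_{\rnkO-1}^+ x \in \sN^\circ$ by \lemref{lemUseful}.

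The only real obstacle is the existence of $V^+$, which amounts to the assertion that $u$ is not contained in the maximal proper submodule of the \hwm{} $\uea X_{\rnkO-1}^- x$. By the Feigin--Fuchs classification, this maximal proper submodule is generated by the rank $\rnkO$ \svs{} $X_\rnkO^\pm x$, which sit at relative grades $\ell_\rnkO^\pm - \ell_{\rnkO-1}^-$ above $X_{\rnkO-1}^- x$. Since the braid ordering $\ell_0^+ < \ell_1^- < \ell_1^+ < \ell_2^- < \cdots$ (\figref{figVerma}) gives $\ell_{\rnkO-1}^+ < \ell_\rnkO^-$, the grade of $u$ falls strictly below the lowest grade occupied by the maximal proper submodule, which is therefore trivial there. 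Consequently, $u$ is not in that submodule and the required $V^+ \in \uea^+_{\ell_{\rnkO-1}^- - \ell_{\rnkO-1}^+}$ exists (after a harmless rescaling), completing the argument.
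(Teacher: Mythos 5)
Your argument is correct and follows essentially the same route as the paper's: both directions for the ``$-$'' component and the easy direction for ``$+$'' via \lemref{lemUseful} and \lemref{lem: sufficient}, and the forward ``$+$'' implication by normalising $\psi^+_{U^+}$ to unity and killing the spurious $u \in \uea X_{\rnkO-1}^- x$ with the operator $\wun - V^- V^+$. Your explicit grade comparison $\ell_{\rnkO-1}^+ < \ell_{\rnkO}^-$ merely spells out the paper's assertion that the maximal proper submodule of $\uea X_{\rnkO-1}^- x$ is trivial at the grade of $u$, so there is nothing substantively different to report.
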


\noindent In case (2), the above two conditions again completely characterise
when $(\omega_1, \omega_2) \in \data'$.  As with case (1') however, case (2')
involves an additional \sv{} which leads to a further condition to check.  However, we can now use \propref{prop: right Verma 2} to rule out the rank $\rnkO - 1$ \svs{}, so we may assume that the minimal rank of
\lemref{lem: sufficient} is $\rnkO$. We now turn to the derivation of conditions for the additional rank $\rnkO$ \sv{} in cases (1') and (2').

\subsubsection{Further conditions in cases \textup{(1')} and \textup{(2')}} \label{sec: pain in the A}

When $(\omega_1, \omega_2) \in \minsubmod_{\ell-1} \oplus \minsubmod_{\ell-2}$,
\propDref{prop: right Verma 1}{prop: right Verma 2} 
give complete characterisations of the absence of rank $\rnkO-1$ \svs{}
in $\sN^\circ$, which suffices to settle the existence question of staggered
modules in cases (1) and (2). In cases (1') and (2'), \lemref{lem: sufficient}
still leaves the possibility that $\sN^\circ$ is non-trivial.  We must therefore also characterise the absence of the \sv{} $X_{\rnkO}^- x$ (which has a lower grade than $\omega_0 = X_{\rnkO}^+ x$) in $\sN^\circ$.

The derivation of this characterisation is similar in flavour to
the considerations of \secDref{sec: parameters case one}{sec: parameters case two}, although there are also important differences.
The most immediate difference is that we must assume from the outset that the $\rnkO-1$ \svs{} have already been ruled out.  This is necessary for the application of \lemref{lem: sufficient}, and we will see that the definition of the functional $\psi^\cap_{U^\cap}$ below will only make sense when $(\omega_1,\omega_2)$ satisfies this condition.
We point out also another difference that will be relevant later.  In
\secref{sec: invariants}, we will construct invariants of staggered modules in a manner closely related to the considerations of the two previous sections.  However, there will be no invariant related to what we have to do next.  We will return to this point in \secref{sec: invariants}.

To decide whether $X_{\rnkO}^- x$ is in $\sN^\circ$, we will define yet
another set of functionals $\psi^\cap_{U^\cap}$.
We recall that cases (1') and (2') both require $\LMod$ to be of braid type,
the former corresponding to $\rnkO=1$ and the latter to $\rnkO>1$.
To uniformise notation, we understand in the following that if $\rnkO=1$ then
$\psi^+_{U^+}$ stands for $\psi_U$ (as given in \secref{sec: parameters case one}) and $\psi^-_{U^-}$ is ignored (that is, each $\psi^-_{U^-}$ is to be regarded as the zero map).
For $U^\cap = U^\cap_1 L_1 = -U^\cap_2 L_2 \in \brac{\uea^+ L_1 \cap \uea^+ L_2}_{\ell_{\rnkO}^{-} - \ell}$ and $(\omega_1, \omega_2) \in \big( \bigcap_{U^-} \Kern \psi^-_{U^-}
\big) \cap \big( \bigcap_{U^+} \Kern \psi^+_{U^+} \big)$,
the defining formula is
\begin{equation} \label{eqnDefPsiCap}
U^\cap_1 \omega_1 + U^\cap_2 \omega_2 = \psi^\cap_{U^\cap} (\omega_1, \omega_2) X_{\rnkO}^{-} x.
\end{equation}
The definition makes sense, but only because of the restriction that
$(\omega_1, \omega_2)$ is already annihilated by every $\psi^\pm_{U^\pm}$.
This follows from the fact that the maximal proper submodule of $\uea X_{\rnkO-1}^{\pm}$ is generated by the rank $\rnkO$ \svs{}. For if
$U^\cap_1 \omega_1 + U^\cap_2 \omega_2$ were not proportional to $X_{\rnkO}^{-} x$, so $U^\cap_1 \omega_1 + U^\cap_2 \omega_2$ would not be in the submodule $\uea X_{\rnkO}^{-} x \subset \minsubmod$, there would exist a $U \in \uea^+$ such that $\psi^\pm_{U U^\cap} (\omega_1, \omega_2)$ is equal to either $X_{\rnkO-1}^- x$ or $X_{\rnkO-1}^+ x$, a contradiction.

The reason for this definition is the same as always. Assuming that both
$X_{\rnkO -1}^- x$ and $X_{\rnkO -1}^+ x$ are not in $\sN^\circ$, so that
$\psi^\cap_{U^\cap} (\omega_1, \omega_2)$ can be defined, \lemref{lem: sufficient} tells us that $\sN^\circ$ is either zero
or generated by $X_{\rnkO}^- x$. The analogue of \propDref{prop: right Verma 1}{prop: right Verma 2} is then:

\begin{proposition} \label{prop: intermediate singulars}
In the cases \textup{(1')} and \textup{(2')}, assuming that
$(\omega_1, \omega_2) \in \minsubmod_{\ell-1} \oplus \minsubmod_{\ell-2}$
is such that $\sN^\circ$ contains no rank $\rnkO-1$ \svs{},
we have $\sN^\circ = \set{0}$ if and only if
\begin{equation}
(\omega_1, \omega_2) \in \bigcap_{U^\cap \in (\uea^+ L_1 \cap \uea^+ L_2)_{\ell_{\rnkO}^- - \ell}} \Kern \psi^\cap_{U^\cap}.
\end{equation}
\end{proposition}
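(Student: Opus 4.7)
The plan is to directly translate the vanishing of $\sN^\circ$ into the vanishing of the functionals $\psi^\cap_{U^\cap}$, via \lemref{lemUseful} and \lemref{lem: sufficient}. The defining formula \eqnref{eqnDefPsiCap} was tailored so that $\psi^\cap_{U^\cap}(\omega_1,\omega_2)$ records the coefficient of $X_\rnkO^- x$ in any element of $\sN^\circ$ produced via \lemref{lemUseful}; the argument amounts to showing that, once the rank-$(\rnkO-1)$ \svs{} have been excluded from $\sN^\circ$, no \sv{} besides $X_\rnkO^- x$ can serve as a minimal-rank generator of $\sN^\circ$.

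For the ``only if'' direction I would argue contrapositively. Suppose $\psi^\cap_{U^\cap}(\omega_1,\omega_2) = \alpha \neq 0$ for some $U^\cap = U_1^\cap L_1 = -U_2^\cap L_2$. Then by definition $U_1^\cap \omega_1 + U_2^\cap \omega_2 = \alpha X_\rnkO^- x$, and invoking \lemref{lemUseful} with $U_0 = 0$ places $\alpha X_\rnkO^- x$, and hence $X_\rnkO^- x$, in $\sN^\circ$, contradicting $\sN^\circ = \set{0}$.

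For the converse, suppose that $(\omega_1,\omega_2)$ lies in the kernels of all $\psi^\cap_{U^\cap}$ and that $\sN^\circ \neq \set{0}$, aiming to derive a contradiction. Since $(\omega_1,\omega_2) \in \minsubmod_{\ell-1} \oplus \minsubmod_{\ell-2}$, \lemref{lemUseful} immediately forces $\sN^\circ \subseteq \minsubmod$. By the running assumption, $\sN^\circ$ contains no \sv{} of rank at most $\rnkO - 1$, so \lemref{lem: sufficient} supplies a \sv{} $x' \in \sN^\circ$ of minimal rank $\rnkO' \geq \rnkO$ and grade strictly less than $\ell$, together with $U_1, U_2 \in \uea^+$ satisfying $U_1\omega_1 + U_2\omega_2 = x'$ and $U_1 L_1 + U_2 L_2 = 0$.

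The main step, and the one most requiring care, is to pin down $x'$. Using the Feigin--Fuchs structure of the braid-type module $\LMod$, the only \svs{} in $\minsubmod$ of rank exactly $\rnkO$ are $X_\rnkO^- x$ at grade $\ell_\rnkO^-$ and $\omega_0 = X_\rnkO^+ x$ at grade $\ell = \ell_\rnkO^+$, whilst all \svs{} of rank greater than $\rnkO$ sit at grade at least $\ell_{\rnkO+1}^- > \ell$. The strict grade bound from \lemref{lem: sufficient} therefore eliminates $\omega_0$ and every higher-rank candidate, so $\rnkO' = \rnkO$ and $x' = \gamma X_\rnkO^- x$ for some $\gamma \neq 0$. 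The corresponding element $U = U_1 L_1 = -U_2 L_2$ lies in $(\uea^+ L_1 \cap \uea^+ L_2)_{\ell_\rnkO^- - \ell}$, and \eqnref{eqnDefPsiCap} then yields $\psi^\cap_U(\omega_1,\omega_2) = \gamma \neq 0$, the required contradiction.
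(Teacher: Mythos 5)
Your proof is correct and follows essentially the same route as the paper: the paper's (rather compressed) justification likewise uses \lemref{lem: sufficient} to conclude that, once the rank $\rnkO-1$ \svs{} are excluded, $\sN^\circ$ is either trivial or generated by $X_{\rnkO}^- x$, and then reads off the equivalence from the definition of $\psi^\cap_{U^\cap}$ together with \lemref{lemUseful}. Your explicit grade argument ruling out $\omega_0$ and the higher-rank candidates is simply a fleshed-out version of what the paper leaves implicit.
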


\noindent This completes the characterisation of admissibility in these cases.

\subsubsection{Summary}

We have defined functionals $\psi_U$, $\psi^\pm_{U^\pm}$ and $\psi^\cap_{U^\cap}$ whose kernels characterise when data is admissible.  We note that each of these functionals is manifestly gauge-invariant, so these kernels respect the gauge freedom we have in choosing the data.  By combining \lemref{lem: data from minsubmod} with \propTref{prop: right Verma 1}{prop: right Verma 2}{prop: intermediate singulars}, we now arrive at the complete classification of the admissible data in terms of these functionals.

\begin{theorem}[Cases (1), (1'), (2), (2') of the classification]
\label{thm: right Verma characterization of data}
Given $\LMod$ and $\RMod \cong \VerMod_{\RDim}$ with $\ell > 0$, choose $(\omega_1', \omega_2') \in \minsubmod_{\ell-1} \oplus \minsubmod_{\ell-2}$.  Then, $(\omega_1', \omega_2') \in \data'$, so is the data of a staggered module $\Stagg$ (with exact sequence (\ref{eq: exact seq Verma})), if and only if the appropriate condition below is satisfied:
\begin{description}
\item[Case (1)] $\psi_U (\omega_1', \omega_2') = 0$ for all
    $U \in (\uea^+ L_1 \cap \uea^+ L_2)_{\ell_{\rnkO -1} - \ell}$.
\item[Case (1')] In addition to condition \textup{(1)},
    $\psi^\cap_{U^\cap} (\omega_1', \omega_2') = 0$ for all
    $U^\cap \in (\uea^+ L_1 \cap \uea^+ L_2)_{\ell_1^- - \ell}$.
\item[Case (2)] $\psi^\pm_{U^\pm} (\omega_1', \omega_2') = 0$ for all
    $U^\pm \in (\uea^+ L_1 \cap \uea^+ L_2)_{\ell_{\rnkO -1}^{\pm} - \ell}$.
\item[Case (2')] In addition to condition \textup{(2)},
    $\psi^\cap_{U^\cap} (\omega_1', \omega_2') = 0$ for all
    $U^\cap \in (\uea^+ L_1 \cap \uea^+ L_2)_{\ell_{\rnkO}^- - \ell}$.
\end{description}
Here, $\rnkO = \rank \omega_0$, and the relevant condition to use matches the case numbering given at the beginning of \secref{sec: allowed parameters}.  Moreover, $(\omega_1, \omega_2) \in \Hlft_{\ell-1} \oplus \Hlft_{\ell-2}$
is in $\data$, hence is the data of a staggered module $\Stagg$, if and only
if there exist equivalent data $(\omega_1', \omega_2') \in \data'$.
\end{theorem}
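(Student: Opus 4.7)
The plan is to assemble the results of this section into a single statement. By \thmref{thmConstruction} together with \propref{propAdmissible}, a pair $(\omega_1', \omega_2') \in \minsubmod_{\ell-1} \oplus \minsubmod_{\ell-2}$ lies in $\data$ (equivalently, in $\data'$, since $\data'$ is just $\data$ intersected with $\minsubmod_{\ell-1} \oplus \minsubmod_{\ell-2}$), and hence is the data of a staggered module, if and only if the associated submodule $\sN^\circ \subseteq \Hlft$ is trivial. I will therefore show, case by case, that the triviality of $\sN^\circ$ is equivalent to the listed kernel conditions.

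To that end, I would first invoke \lemref{lem: sufficient}, which reduces $\sN^\circ \neq \set{0}$ to the presence of \svs{} of $\Hlft$ of grade less than $\ell$ in $\sN^\circ$. Since $\sN^\circ \subseteq \minsubmod$ by \lemref{lemUseful}, any such \sv{} has rank at least $\rnkO - 1$. In cases (1) and (2), the only \svs{} of $\minsubmod$ of grade less than $\ell$ are precisely the rank $\rnkO - 1$ generators of $\minsubmod$, so $\sN^\circ = \set{0}$ is equivalent to the absence of these generators from $\sN^\circ$. \propDref{prop: right Verma 1}{prop: right Verma 2} translate this absence exactly into the simultaneous vanishing of $\psi_U$ and of $\psi^{\pm}_{U^{\pm}}$ respectively, settling cases (1) and (2).

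In cases (1') and (2'), one extra \sv{} of rank $\rnkO$ has grade less than $\ell$, namely $X_1^- x$ or $X_{\rnkO}^- x$. Once the rank $\rnkO - 1$ generators have been excluded from $\sN^\circ$ via the $\psi$- and $\psi^\pm$-conditions, the functionals $\psi^\cap_{U^\cap}$ of \eqnref{eqnDefPsiCap} are well-defined, and \propref{prop: intermediate singulars} says that their simultaneous vanishing is equivalent to this last potential generator also lying outside $\sN^\circ$. Combining the two steps gives the complete characterisations for cases (1') and (2'). I do not expect any real obstacle here: each case is a direct application of a previously established proposition, and the case split mirrors the structural classification already made.

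For the last assertion, concerning arbitrary $(\omega_1, \omega_2) \in \Hlft_{\ell-1} \oplus \Hlft_{\ell-2}$, my plan is to use \lemref{lem: data from minsubmod} (a direct consequence of the Projection Lemma applied with $m = \ell$) to find $g_u \in \gauge$ with $g_u(\omega_1, \omega_2) = (\omega_1', \omega_2') \in \minsubmod_{\ell-1} \oplus \minsubmod_{\ell-2}$, and then to observe that $\data$ is manifestly gauge-invariant: if $U = U_1 L_1 = -U_2 L_2$, then $U_1(w_1 + L_1 u) + U_2(w_2 + L_2 u) = U_1 w_1 + U_2 w_2 + (U_1 L_1 + U_2 L_2)u = U_1 w_1 + U_2 w_2$. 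Hence $(\omega_1, \omega_2) \in \data$ if and only if $(\omega_1', \omega_2') \in \data$, if and only if $(\omega_1', \omega_2') \in \data'$, which has just been characterised. The closest thing to a subtlety in the whole argument is ensuring that in cases (1') and (2') the $\psi^\cap_{U^\cap}$ are genuinely defined before one tries to impose their vanishing, but this has already been taken care of in the derivation of \propref{prop: intermediate singulars}.
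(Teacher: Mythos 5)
Your proposal is correct and follows essentially the same route as the paper: the theorem is obtained there precisely by combining \lemref{lem: data from minsubmod} (the Projection Lemma applied with $m=\ell$) with \propTref{prop: right Verma 1}{prop: right Verma 2}{prop: intermediate singulars}, using \thmref{thmConstruction}, \propref{propAdmissible} and \lemref{lem: sufficient} to reduce everything to the triviality of $\sN^\circ$. Your explicit check that $\data$ is gauge-invariant, and your remark that the $\psi^\cap_{U^\cap}$ must first be seen to be well-defined, are exactly the points the paper also flags.
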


\noindent We remark that the single case excluded from the above theorem ($\ell = 0$, case (0)) was already settled in \thmref{thmClassRVermaL=0}.

\subsection{Counting Dimensions} \label{sec: dimensions}

The results of \thmref{thm: right Verma characterization of data}
are very concrete descriptions of the possible data of staggered modules with
$\Hrgt = \Verma_{h^\rgt}$, even if they might seem somewhat technical.
Their value is that they involve linear maps with simple definitions, and
so allow reasonably straight-forward computations, in each case, of the 
dimensions of the vector space $\data' / \gauge'$ of inequivalent staggered 
modules.

To use \thmref{thm: right Verma characterization of data} to compute the
dimension of $\data' / \gauge'$, we will analyse the functionals $\psi_U$,
$\psi^-_{U^-}$, $\psi^+_{U^+}$ and $\psi^\cap_{U^\cap}$.  In fact, it proves
convenient to abstract one level further and consider also the induced
maps
\begin{subequations}
\begin{align}
\psi &\colon \brac{\uea^+ L_1 \cap \uea^+ L_2}_{\ell_{k-1} - \ell}
\longrightarrow \brac{\minsubmod_{\ell-1} \oplus \minsubmod_{\ell-2}}^*,
& U &\longmapsto \psi_U, \label{eqnDefMapPsi} \\
\psi^- &\colon \brac{\uea^+ L_1 \cap \uea^+ L_2}_{\ell_{k-1}^- - \ell}
\longrightarrow \brac{\minsubmod_{\ell-1} \oplus \minsubmod_{\ell-2}}^*,
& U^- &\longmapsto \psi^-_{U^-}, \label{eqnDefMapPsi-} \\
\psi^+ &\colon \brac{\uea^+ L_1 \cap \uea^+ L_2}_{\ell_{k-1}^+ - \ell}
\longrightarrow \brac{\minsubmod_{\ell-1} \oplus \minsubmod_{\ell-2}}^*,
& U^+ &\longmapsto \psi^+_{U^+}, \label{eqnDefMapPsi+} \\
\psi^\cap &\colon \brac{\uea^+ L_1 \cap \uea^+ L_2}_{\ell_k^- - \ell}
\longrightarrow \Bigl( \textstyle \brac{\bigcap_{U^-} \Kern \psi^-_{U^-}} \cap \brac{\bigcap_{U^+} \Kern \psi^+_{U^+}} \Bigr)^*,
& U^\cap &\longmapsto \psi^\cap_{U^\cap}. \label{eqnDefMapPsiCap}
\end{align}
\end{subequations}
All of these analyses are somewhat similar so we present instead two abstract
results along these lines from which the required dimension results will be
extracted on a case-by-case basis.
So consider a highest weight module $\ghwm$ with highest
weight $(\ghw , c)$ and cyclic highest weight vector $\ghwv$.
Fix a grade $m$.
Then for $(w_1, w_2) \in \ghwm_{m-1} \oplus \ghwm_{m-2}$
and $U = U_1 L_1 = -U_2 L_2 \in (\uea^+ L_1 \cap \uea^+ L_2)_{-m}$, we
define $\gpsi_U(w_1,w_2)$ by
\begin{equation}
U_1 w_1 + U_2 w_2 = \gpsi_U(w_1, w_2) \ghwv.
\end{equation}
This definition is clearly in the same spirit as those of $\psi$,
$\psi^\pm$ and $\psi^\cap$.  As above, $\gpsi_U$ is then
the corresponding functional on $\ghwm_{m-1} \oplus \ghwm_{m-2}$,
and $\gpsi$ alone stands for the map from 
$(\uea^+ L_1 \cap \uea^+ L_2)_{-m}$ to $(\ghwm_{m-1} \oplus \ghwm_{m-2})^*$
that associates to any $U$ the functional $\gpsi_U$.

We want to know when $\gpsi_U$ is non-trivial. This is the subject of
the following result.

\begin{lemma} \label{lem: kernel of psi}
The functional
$\gpsi_U \in (\ghwm_{m-1} \oplus \ghwm_{m-2})^*$ is zero if and only if
$U = U_1 L_1 = -U_2 L_2 \in (\uea^+ L_1 \cap \uea^+ L_2)_{-m}$
is such that the $U_j^\dagger v_{\ghw}$ ($j=1,2$) are in the maximal proper
submodule of the Verma module $\Verma_{\ghw}$.
\end{lemma}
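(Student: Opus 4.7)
The plan is to reinterpret $\gpsi_U(w_1,w_2)$ as a Shapovalov pairing, then exploit contravariance together with the identification of the maximal proper submodule of $\Verma_{\ghw}$ with the radical of the Shapovalov form (noted explicitly in \secref{secNotation}). The starting observation is that $U_1 w_1 + U_2 w_2$ is homogeneous of grade $0$, hence proportional to $\ghwv$, and since $\tinner{\ghwv}{\ghwv}_\ghwm = 1$ we may read off
\begin{equation*}
\gpsi_U(w_1,w_2) = \tinner{\ghwv}{U_1 w_1}_\ghwm + \tinner{\ghwv}{U_2 w_2}_\ghwm = \tinner{U_1^\dagger \ghwv}{w_1}_\ghwm + \tinner{U_2^\dagger \ghwv}{w_2}_\ghwm,
\end{equation*}
where contravariance of the Shapovalov form was used in the second equality. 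Note that $U_j^\dagger \ghwv \in \ghwm_{m-j}$ by the grading.

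For the ``if'' direction, suppose $U_j^\dagger v_\ghw$ lies in the maximal proper submodule of $\Verma_\ghw$ for $j=1,2$. This submodule is precisely the radical of $\tinner{\cdot}{\cdot}_{\Verma_\ghw}$, so $\tinner{U_j^\dagger v_\ghw}{\tilde{w}_j}_{\Verma_\ghw}=0$ for every lift $\tilde{w}_j \in \Verma_\ghw$ of $w_j$. Since the form on $\ghwm$ is defined precisely by this descent, both summands vanish identically and $\gpsi_U = 0$.

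For the ``only if'' direction, assume $\gpsi_U = 0$. Setting $w_2 = 0$ and letting $w_1$ range over $\ghwm_{m-1}$ yields $\tinner{U_1^\dagger \ghwv}{w_1}_\ghwm = 0$ for all $w_1 \in \ghwm_{m-1}$; by symmetry the same holds for $U_2^\dagger \ghwv$ against $\ghwm_{m-2}$. Since distinct homogeneous subspaces of $\ghwm$ are mutually orthogonal under the descended form, each $U_j^\dagger \ghwv$ is in fact orthogonal to the whole of $\ghwm$. Lifting back to $\Verma_\ghw$ and recalling that every element of $\Verma_\ghw$ represents some class in $\ghwm$, we conclude that $U_j^\dagger v_\ghw$ is orthogonal to all of $\Verma_\ghw$, hence belongs to the radical, which is the maximal proper submodule.

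The only mildly delicate point is decoupling the two summands in the ``only if'' direction; this is why it matters that $w_1$ and $w_2$ vary independently in $\ghwm_{m-1} \oplus \ghwm_{m-2}$. Everything else is a direct application of the standard properties of the Shapovalov form recalled in \secref{secNotation}, so no new machinery is required.
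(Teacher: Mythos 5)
Your proof is correct and follows essentially the same route as the paper's: both arguments reduce $\gpsi_U = 0$, via contravariance of the Shapovalov form, to the statement that $U_j^\dagger \ghwv$ pairs trivially with all of $\ghwm_{m-j}$ for $j=1,2$, and then invoke the identification of the maximal proper submodule of $\Verma_{\ghw}$ with the radical of the form. The only cosmetic difference is that the paper first observes that $\gpsi_U = 0$ is equivalent to $U_j w_j = 0$ for each $j$ separately (your "decoupling" step, handled the same way by setting $w_1$ or $w_2$ to zero) before pairing with $\ghwv$, whereas you pair immediately.
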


\noindent In particular, we will often use this result to establish the
injectivity of $\gpsi$ by noting that if there is no proper singular
vector in $\Verma_{\ghw}$ of grade less than $m$, then the only
$U$ for which $\gpsi_U$ vanishes is $U=0$ (at grades $m-j$ the maximal
proper submodule is trivial).

\begin{proof}
Write $U = U_1 L_1 = -U_2 L_2$.  By definition, $\gpsi_U = 0$ means
$U_1 w_1 + U_2 w_2 = 0$ for all $(w_1, w_2) \in \ghwm_{m-1} \oplus \ghwm_{m-2}$.
Taking $w_1=0$ and $w_2=0$ (separately), we see that this is equivalent to
$U_j w_j = 0$ for all $w_j \in \ghwm_{m-j}$ ($j=1,2$).  Writing
$w_j = V_j \ghwv$, we can further reformulate this as $U_j V_j \ghwv = 0$
for all $V_j \in \uea^-_{m-j}$ ($j=1,2$), from which we derive
\begin{equation}
0 = \tinner{U_j V_j \ghwv}{\ghwv}_\ghwm
= \tinner{V_j \ghwv}{U_j^\dagger \ghwv}_\ghwm
= \tinner{V_j v_\ghw}{U_j^\dagger v_\ghw}_{\VerMod_\ghw},
\end{equation}
where we have distinguished the Shapovalov forms by a subscript displaying
the relevant \hwm{}.  We therefore conclude that $\gpsi_U = 0$ if and only
if both $U_1^\dagger v_\ghw$ and $U_2^\dagger v_\ghw$ belong to the maximal
proper submodule of $\Verma_{\ghw}$.
\end{proof}

Let us now assume that
there is a non-zero \emph{prime} singular vector $\gX \ghwv$ in $\ghwm$
(playing the r\^{o}le of $\omega_0$), where
$\gX \in \uea^-_{\gell}$ is singular (and normalised, say).
We take $m = \gell$. If the corresponding Verma module $\Verma_{\ghw}$
has another (normalised, prime) \sv{} of grade less than $\gell$, we will
denote it by $\gXoth v_{\ghw}$ and its grade by $\gellm < \gell$.
In this new setup, the content of \lemref{lem: kernel of psi} is simply
described as follows. If $\gXoth$ is not defined, then $\gpsi_U=0$
only if $U=0$, as follows from the remark immediately following the statement.
On the other hand, if $\gXoth$ is defined, then we see that
$\gpsi_U=0$ if and only if
$U \in (\gXoth)^\dagger (\uea^+ L_1 \cap \uea^+ L_2)_{\gellm-\gell}$.
This follows from factorising each $U_j^\dagger$ as $(U'_j)^\dagger \gXoth$,which leads to $U = U_1 L_1 =  (\gXoth)^\dagger U_1' L_1$ and
$U = -U_2 L_2 = -(\gXoth)^\dagger U_2' L_2$.

The following result will allow us to compute the dimensions
of the space of inequivalent staggered modules.  We mention that the first
of the three cases appearing here was at the heart of Rohsiepe's analysis
\cite{RohRed96}, although he only stated it for modules of chain type.

\begin{lemma} \label{lem: dimension of the annihilator of psis}
The subspace of $\ghwm_{\gell-1} \oplus \ghwm_{\gell-2}$ that
is annihilated by every $\gpsi_U$ has dimension given by
\begin{equation}
\dim \ \bigcap_U \Kern \gpsi_U = 
\begin{cases}
\partnum{\gell} & \text{if $\gXoth$ is not defined,} \\
\partnum{\gell} - \partnum{\gell-\gellm} & \textrm{if $\gXoth \ghwv = 0$,} \\
\partnum{\gell} - \partnum{\gell-\gellm} + \partnum{\gell-\gellm - 1} + \partnum{\gell-\gellm - 2} & \textrm{if $\gXoth \ghwv \neq 0$.}
\end{cases}
\end{equation}
In the first two cases the result coincides with $\dim \ghwm_{\gell}$
and in the third case with $\dim \ghwm_{\gell} +
\dim (\uea^+ L_1 \cap \uea^+ L_2)_{\gellm - \gell}$.
\end{lemma}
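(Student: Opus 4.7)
The basic strategy is a dimension count via rank-nullity applied to the bilinear pairing underlying $\gpsi$. Writing $\Psi \colon (\uea^+ L_1 \cap \uea^+ L_2)_{-\gell} \times \brac{\ghwm_{\gell-1} \oplus \ghwm_{\gell-2}} \to \CC$, $(U,(w_1,w_2)) \mapsto \gpsi_U(w_1,w_2)$, one obtains
\begin{equation*}
\dim \bigcap_U \Kern \gpsi_U = \dim \ghwm_{\gell-1} + \dim \ghwm_{\gell-2} - \rank \Psi = \dim \ghwm_{\gell-1} + \dim \ghwm_{\gell-2} - \func{d}{\gell} + \dim \Kern \gpsi,
\end{equation*}
where $\func{d}{\gell} = \dim(\uea^+ L_1 \cap \uea^+ L_2)_{-\gell}$ is the quantity of \lemref{lem: L1 L2} and $\Kern \gpsi$ now refers to the kernel of the map (\ref{eqnDefMapPsi}). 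So two ingredients need to be computed: the dimensions of $\ghwm$ at grades $\gell-1$ and $\gell-2$, and the dimension of $\Kern \gpsi$.

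\textbf{The kernel of $\gpsi$.} \lemref{lem: kernel of psi} identifies $\Kern \gpsi$ with those $U = U_1 L_1 = -U_2 L_2$ for which $U_j^\dagger v_\ghw$ lies in the maximal proper submodule of $\Verma_\ghw$ for $j=1,2$. In grades strictly less than $\gell$, the maximal proper submodule is either trivial (when $\gXoth$ is not defined) or coincides with $\uea \gXoth v_\ghw$ (when $\gXoth$ is defined; here one uses that $\gXoth v_\ghw$ has the minimal positive grade among the \svs{} of $\Verma_\ghw$ below grade $\gell$, regardless of chain or braid type). In the first case this forces $U_j^\dagger v_\ghw = 0$, hence $U = 0$ by freeness of $\Verma_\ghw$ over $\uea^-$. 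In the second case, writing $U_j^\dagger = V_j \gXoth$ and using that $\uea$ has no zero-divisors, one cancels $\gXoth^\dagger$ to find that $U \mapsto \gXoth^\dagger V^\dagger$ identifies $\Kern \gpsi$ with $(\uea^+ L_1 \cap \uea^+ L_2)_{\gellm - \gell}$. Thus $\dim \Kern \gpsi = 0$ in the first case and $\dim \Kern \gpsi = \func{d}{\gell - \gellm}$ otherwise.

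\textbf{The dimensions of $\ghwm$.} Since $\gX \ghwv \neq 0$ by assumption, no quotient has been made at grade $\gell$ or below which would touch $\gX v_\ghw$. Combined with the fact that the only non-trivial submodule of $\Verma_\ghw$ possibly contributing in grades $\leqslant \gell - 1$ is $\uea \gXoth v_\ghw$, one concludes: if $\gXoth$ is not defined, or if $\gXoth$ is defined but $\gXoth \ghwv \neq 0$, then $\dim \ghwm_{\gell - j} = \partnum{\gell - j}$ for $j = 1,2$ (and $\dim \ghwm_{\gell} = \partnum{\gell}$); whereas if $\gXoth \ghwv = 0$, then one has quotiented by $\uea \gXoth v_\ghw$, giving $\dim \ghwm_{\gell - j} = \partnum{\gell - j} - \partnum{\gell - j - \gellm}$ (and $\dim \ghwm_\gell = \partnum{\gell} - \partnum{\gell - \gellm}$).

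\textbf{Assembly.} Substituting these values into the rank-nullity identity and using $\func{d}{n} = \partnum{n-1} + \partnum{n-2} - \partnum{n}$ to simplify, the three asserted formulae fall out with only elementary manipulation. Comparison with the dimension of $\ghwm_\gell$ (and with $\dim \ghwm_\gell + \func{d}{\gell - \gellm}$ in the third case) then gives the closing identifications. The only delicate point that I would expect to require care is the structural claim at the start of the kernel computation — namely that below grade $\gell$ the maximal proper submodule of $\Verma_\ghw$ is exhausted by $\uea \gXoth v_\ghw$; this needs a small appeal to the Feigin--Fuchs classification to rule out further \svs{} of $\Verma_\ghw$ at grades strictly between $0$ and $\gell$ in both the chain and braid cases, using that $\gX v_\ghw$ is prime.
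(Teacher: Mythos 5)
Your proposal is correct and follows essentially the same route as the paper's proof: the identity $\dim \bigcap_U \Kern \gpsi_U = \dim(\ghwm_{\gell-1} \oplus \ghwm_{\gell-2}) - \dim(\uea^+ L_1 \cap \uea^+ L_2)_{-\gell} + \dim \Kern \gpsi$, the identification of $\Kern \gpsi$ as $\set{0}$ or $(\gXoth)^\dagger (\uea^+ L_1 \cap \uea^+ L_2)_{\gellm - \gell}$ via \lemref{lem: kernel of psi}, and the case-by-case graded dimensions of $\ghwm$. The "delicate point" you flag (that the maximal proper submodule of $\Verma_\ghw$ below grade $\gell$ is exhausted by $\uea \gXoth v_\ghw$) is indeed what the paper relies on implicitly, being built into the definition of $\gXoth$ and the primality of $\gX$.
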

\begin{proof}
Taking $U_\mu$ such that $\set{\gpsi_{U_\mu}}$ is a basis for $\Imag \gpsi$,
the mapping
\begin{equation}
(w_1,w_2) \mapsto
(\gpsi_{U_1}(w_1,w_2) , \ldots , \gpsi_{U_{n}}(w_1,w_2)) \in \bC^{n}
\end{equation}
has kernel given by $\cap_U \Kern \gpsi_U$ and rank equal to $\dmn \ \Imag \gpsi$.  In other words, each linearly independent equation
$\gpsi_U (w_1, w_2) = 0$ reduces the dimension we want to compute by one:
\begin{align}
\dim \ \bigcap_U \Kern \gpsi_U &= \dim \ (\ghwm_{\gell-1} \oplus \ghwm_{\gell-2}) - \dim \ \Imag \gpsi \notag \\
&= \dim \ (\ghwm_{\gell-1} \oplus \ghwm_{\gell-2}) - \dim \ (\uea^+ L_1 \cap \uea^+ L_2)_{-\gell} + \dim \ \Kern \gpsi. \label{eqnsDimComputers}
\end{align}

Consider therefore the case in which $\Verma_{\ghw}$ has no \sv{} of grade
less than $\gell$ (except $v_{\ghw}$), so $\gXoth$ is not defined.
Then we have $\dim \ \ghwm_{\gell-j} = \tpartnum{\gell-j}$ for $j=1,2$. But,
\lemref{lem: kernel of psi} tells us that in this case
(with $m = l$), $U \mapsto \gpsi_U$ has a trivial kernel:
$\dim \ \Kern \gpsi = 0$.
Plugging these facts and the result of \lemref{lem: L1 L2} into
\eqnref{eqnsDimComputers}, the first formula follows.

Consider now the cases for which $\gXoth$ is defined. Regardless of whether
$\gXoth \ghwv$ vanishes or not, \lemref{lem: kernel of psi} gives (with
$m = l$)
$\Kern \gpsi = (\gXoth)^\dagger (\uea^+ L_1 \cap \uea^+ L_2)_{\gellm - \gell}$,
hence
\begin{equation}
\dmn \Kern \gpsi = \dmn (\uea^+ L_1 \cap \uea^+ L_2)_{\gellm - \gell}
= \partnum{\gell-\gellm - 1} + \partnum{\gell-\gellm - 2} -
    \partnum{\gell-\gellm},
\end{equation}
by \lemref{lem: L1 L2}.  When $\gXoth \ghwv = 0$, the graded dimensions of
$\ghwm$ are $\dmn \ghwm_{\gell-j} = p(\gell-j) - p(\gell-\gellm - j)$ for 
$j=1,2$.  Plugging everything in and observing cancellations gives the
second formula.
On the other hand, if $\gXoth \ghwv \neq 0$ the graded dimensions are
$\dmn \ghwm_{\gell-j} = p(\gell-j)$ and the third formula follows.
\end{proof}

With help of \lemref{lem: dimension of the annihilator of psis},
we are ready to state and prove one of our main results,
that giving the dimensions of the space of isomorphism classes of
staggered modules, $\data' / \gauge'$, when the right module is Verma.

\begin{theorem} \label{thm: moduli space right Verma}
The dimension of the vector space $\data' / \gauge'$ of isomorphism classes of
staggered modules $\Stagg$ with short
exact sequence (\ref{eq: exact seq Verma})
is the number of rank $\rnkO-1$ \svs{} in $\sH^\lft$.
Explicitly,
\begin{center}
\begin{tabular}{rcl}
\textup{\textbf{Case (0)}:} \quad & ($\ell=0$) & \quad $\dmn \data' / \gauge' = 0$, \\
\textup{\textbf{Cases (1)}} and \textup{\textbf{(1')}:} \quad & ($\LMod$ of chain type or $\rnkO=1$ braid type)
& \quad $\dmn \data' / \gauge' = 1$, \\
\textup{\textbf{Cases (2)}} and \textup{\textbf{(2')}:} \quad & ($\LMod$ of $\rnkO>1$ braid type)
& \quad $\dmn \data' / \gauge' = 2$.
\end{tabular}
\end{center}
\end{theorem}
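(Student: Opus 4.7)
The plan is to combine the characterisation of admissible data in \thmref{thm: right Verma characterization of data} with the dimension formula of \lemref{lem: dimension of the annihilator of psis}, using the gauge-space dimension $\dim \gauge' = \dim \minsubmod_\ell - 1$ from \eqnref{eqnDimGauge'}.  Since $\gauge'$ acts on $\data'$ by translation with kernel $\CC \omega_0 \subseteq \minsubmod_\ell$, one has
\[
\dim \data'/\gauge' \;=\; \dim \data' \;-\; \dim \minsubmod_\ell \;+\; 1.
\]
Case (0) is then immediate from \thmref{thmClassRVermaL=0}.

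For Case (1), the submodule $\minsubmod$ is itself a hwm, generated by the rank-$(\rnkO{-}1)$ \sv{} $X_{\rnkO-1} x$, with $\omega_0$ its first proper \sv{} at internal grade $\gell = \ell - \ell_{\rnkO-1}$.  Depending on whether there is another (necessarily killed) singular at lower internal grade in the covering Verma, one applies either the first or the second case of \lemref{lem: dimension of the annihilator of psis} to $\ghwm = \minsubmod$.  In both instances the formula produces $\dim \data' = \dim \minsubmod_\ell$: the $\gXoth \ghwv = 0$ correction in Lemma case two is exactly balanced by the corresponding dimension loss in $\minsubmod_\ell$ itself.  Hence $\dim \data'/\gauge' = 1$.

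For Case (2), $\minsubmod$ is generated by the two rank-$(\rnkO{-}1)$ \svs{} $X_{\rnkO-1}^\pm x$, and by hypothesis $\minsubmod_{\ell-j} = \uea X_{\rnkO-1}^- x \oplus \uea X_{\rnkO-1}^+ x$ is a direct sum for $j=1,2$.  Since $\psi^\pm$ depends only on the $\pm$-component of the data, $\dim \data'$ splits as a sum of kernel dimensions over the two summands, each handled by the appropriate case of \lemref{lem: dimension of the annihilator of psis}.  The crucial subtlety is that the corresponding decomposition at grade $\ell$ itself is \emph{not} direct: the vector $\omega_0$, being a rank-$\rnkO$ descendant of both $X_{\rnkO-1}^- x$ and $X_{\rnkO-1}^+ x$, belongs to both summands, and their intersection at grade $\ell$ equals $\CC \omega_0$.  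Accounting for this $-1$ correction in $\dim \minsubmod_\ell$ against the sum formula for $\dim \data'$, the arithmetic yields $\dim \data'/\gauge' = 2$ in each of the two possible subcases ($\omega_0 = X_\rnkO^- x$ or $\omega_0 = X_\rnkO^+ x$, with the other rank-$\rnkO$ singular possibly vanishing).

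For Cases (1') and (2'), the additional non-vanishing rank-$\rnkO$ \sv{} $X_\rnkO^- x$ at grade below $\ell$ in $\minsubmod$ forces the third case of \lemref{lem: dimension of the annihilator of psis} (where $\gXoth \ghwv \neq 0$).  This inflates $\dim(\bigcap_U \Kern \psi_U)$ (respectively $\dim(\bigcap \Kern \psi^- \cap \bigcap \Kern \psi^+)$) by exactly $\dim (\uea^+ L_1 \cap \uea^+ L_2)_{\ell_\rnkO^- - \ell}$ relative to Case (1) or (2).  The $\psi^\cap_{U^\cap} = 0$ constraints must then absorb this inflation precisely, which reduces to showing that $U^\cap \mapsto \psi^\cap_{U^\cap}$ is injective.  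To verify this, restrict attention to data $(\omega_1, \omega_2)$ with $\omega_j \in \uea X_\rnkO^- x$: such data automatically lies in $\bigcap \Kern \psi$ (and in $\bigcap \Kern \psi^\pm$) for grade reasons, and on this subspace $\psi^\cap$ coincides with the $\psi$-functional for the hwm $\uea X_\rnkO^- x$.  Since the underlying Verma $\Verma_{h^\lft + \ell_\rnkO^-}$ has no prime \sv{} below the relevant internal grade $\ell - \ell_\rnkO^-$, the remark immediately following \lemref{lem: kernel of psi} supplies the required injectivity.  The final dimension count then agrees with Case (1) or Case (2), giving $1$ or $2$.  \textbf{The main obstacle} is the bookkeeping: the non-direct intersection of the two generating hwms at grade $\ell$ in Cases (2) and (2'), and the verification that the $\psi^\cap$-constraints contribute neither too few nor too many independent linear conditions.
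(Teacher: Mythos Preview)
Your proposal is correct and follows essentially the same route as the paper's proof: reduce to computing $\dim\data'$ via \lemref{lem: dimension of the annihilator of psis} applied to $\ghwm=\minsubmod$ (or its two generating summands), use the overlap $\CC\omega_0$ at grade~$\ell$ for the braid-type bookkeeping, and kill the excess in the primed cases by the injectivity of $\psi^\cap$ established through \lemref{lem: kernel of psi} on $\uea X_\rnkO^- x$.  The only place requiring more detail than you have written is Case~(2'), where the non-directness of $\minsubmod_{\ell-j}=\brac{\uea X_{\rnkO-1}^- x}_{\ell-j}+\brac{\uea X_{\rnkO-1}^+ x}_{\ell-j}$ for $j=1,2$ forces an explicit inclusion--exclusion (the paper subtracts $\dim\brac{\uea X_\rnkO^- x}_{\ell-1}+\dim\brac{\uea X_\rnkO^- x}_{\ell-2}$ and then verifies the partition-function arithmetic collapses to $\dim\minsubmod_\ell+1+\dim\brac{\uea^+L_1\cap\uea^+L_2}_{\ell_\rnkO^- -\ell}$); you have correctly flagged this as the main obstacle.
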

\begin{proof}
Case (0) being already done (\thmref{thmClassRVermaL=0}),
we will have to work out the cases (1), (1'), (2) and (2') of
\thmref{thm: right Verma characterization of data} separately.
As we already know that $\dmn G' = \dmn \minsubmod_{\ell} - 1$ (\eqnref{eqnDimGauge'}), it remains to be shown that 
$\dmn \data' = \dmn \minsubmod_{\ell}$ in cases (1) and (1'), 
and that $\dmn \data' = \dmn \minsubmod_{\ell} + 1$ in cases (2) and (2').

\textbf{Case (1):} Let $\ghwm = \minsubmod = \uea X_{\rnkO-1} x$ and define
$\gX$ by $X \equiv X_\rnkO =\gX X_{\rnkO-1}$.  This $\gX$
is then normalised and prime, and $\gell$ is given by $\ell - \ell_{\rnkO-1}$.
Let $\gpsi$ be $\psi$, as defined in \eqnref{eqnDefMapPsi}.
When $\LMod$ is of chain type or of braid type with $\omega_0 = X_1^- x$,
\lemref{lem: dimension of the annihilator of psis} applies with $\gXoth$
undefined. Since $\data' = \bigcap_U \Kern \psi_U$, we read off the dimension
\begin{equation}
\dmn \data' = \partnum{\ell - \ell_{\rnkO-1}} = \dmn \minsubmod_{\ell}.
\end{equation}
The outstanding possibility, when $\LMod$ is of braid type with
$\omega_0 = X_{1}^{+} x$, is such that
\lemref{lem: dimension of the annihilator of psis} applies with
$\gXoth = X_1^-$, hence $\gellm = \ell_1^-$. But for case (1),
$X_1^- x = 0$, so the second formula in the lemma also gives the
dimension of $\data'$ as
\begin{equation}
\dmn \data' = \partnum{\ell} - \partnum{\ell_{1}^{-}} = \dmn \minsubmod_{\ell}.
\end{equation}

\textbf{Case (1'):}
This can only occur in the $\rnkO=1$ braid case with $\omega_0 = X_{1}^{+} x$
and $X_1^- x \neq 0$. We set $\ghwm = \minsubmod = \Hlft$ and
$\gX = X_{1}^{+} = X$, $\gXoth = X_1^-$, so $\gell = \ell$ and
$\gellm = \ell_1^-$.
From the third case of \lemref{lem: dimension of the annihilator of psis},
we read off
\begin{equation}
\dim \ \bigcap_U \Kern \psi_U = \dim \ \minsubmod_{\ell}
+ \dim \ (\uea^+ L_1 \cap \uea^+ L_2)_{\ell_1^- - \ell}.
\end{equation}
But in case (1'), $\data'$ is a only a subset of this intersection:
$\data' = \bigcap_{U^\cap} \psi^\cap_{U^\cap} \subseteq \bigcap_U \Kern \psi_U$
(and the inclusion is typically strict).  Accounting for the extra
conditions imposed by the $\psi^\cap_{U^\cap}$ means that the dimension of
the admissible data is reduced by $\dim \ \Imag \psi^\cap$, which is of
course given by
\begin{equation}
\dim \ \Imag \psi^\cap =
\dim \ (\uea^+ L_1 \cap \uea^+ L_2)_{\ell_1^- - \ell} - \dim \ \Kern \psi^\cap.
\end{equation}
Thus, $\dim \ \data' = \dim \ \minsubmod_{\ell} + \dim \ \Kern \psi^\cap$.

To show injectivity of $\psi^\cap$ and complete the computation, note first
that $(\uea X_1^- x)_{\ell-1} \oplus (\uea X_1^- x)_{\ell-2} \subseteq
\bigcap_U \Kern \psi_U$, so $\psi^\cap$ is defined on this subspace.
Now we apply \lemref{lem: kernel of psi} to $\ghwm = \uea X_1^- x$,
$m = \ell - \ell_1^-$ and $\gpsi = \psi^\cap$.  Since
$\VerMod_{\LDim + \ell_1^-}$ has no \svs{} of grade less than $\gell$
(except $v_{\LDim + \ell_1^-}$ itself), we conclude that
$\psi^\cap_{U^\cap} = 0$ implies $U^\cap = 0$.

\textbf{Case (2):}
In the braid case with $\rnkO>1$ we have
\begin{equation} \label{eqnMSMDecomp}
\minsubmod = \uea X_{\rnkO-1}^{-} x + \uea X_{\rnkO-1}^{+} x.
\end{equation}
In case (2), the sum is direct at grades smaller than $\ell$, so we may
uniquely decompose every $w_j \in \minsubmod_{\ell - j}$ as 
\begin{equation} \label{eqnWDecomp}
w_j = w_j^- + w_j^+, \qquad
\text{with $w_j^\pm \in \brac{\uea X_{\rnkO-1}^{\pm} x}_{\ell - j}$.}
\end{equation}
We proceed by considering the ``$-$'' and ``$+$'' pieces separately.

The space whose dimension we want to compute is 
\begin{equation}
\data' = \brac{\bigcap_{U^-} \Kern \psi^-_{U^-}} \ \cap \ 
\brac{\bigcap_{U^+} \Kern \psi^+_{U^+}}.
\end{equation}
We take $\ghwm = \uea X_{\rnkO-1}^\pm x$, $\gpsi = \psi^\pm$,
$X = \gX X_{\rnkO-1}^\pm$, $\gell = \ell - \ell_{\rnkO-1}^\pm$, and if
defined, $X_\rnkO^- = \gXoth X_{\rnkO-1}^\pm$ and
$\gellm = \ell_\rnkO^- - \ell_{\rnkO-1}^\pm$.  Then, the first or second formula
of \lemref{lem: dimension of the annihilator of psis} (as appropriate)
gives the dimension of $\bigcap_{U^\pm} \Kern \psi^\pm_{U^\pm}$, where the
$\psi^\pm_{U^\pm}$ are restricted to the direct sum of the subspaces
$\brac{\uea X_{\rnkO-1}^{\pm} x}_{\ell - 1} \oplus
\brac{\uea X_{\rnkO-1}^{\pm} x}_{\ell - 2}$
(spanned by the $(w_1^\pm,w_2^\pm)$ of \eqnref{eqnWDecomp}).  The result 
is that this dimension coincides with that of
$\ghwm_\gell = \brac{\uea X_{\rnkO-1}^\pm x}_\ell$.

But from the definition of the $\psi^\pm_{U^\pm}$, \eqnref{eqnDefPsiPM},
we quickly determine that the $\psi^\pm_{U^\pm}$ always annihilate the
subspace $\brac{\uea X_{\rnkO-1}^{\mp} x}_{\ell - 1} \oplus
\brac{\uea X_{\rnkO-1}^{\mp} x}_{\ell - 2}$.  The dimension we want is
therefore just the sum
\begin{equation}
\dim \ \data' = \dim \ \brac{\uea X_{\rnkO-1}^- x}_\ell + \dim \ \brac{\uea X_{\rnkO-1}^+ x}_\ell = \dim \ \minsubmod_\ell + 1,
\end{equation}
where the additional $1$ derives from the fact that the decomposition (\ref{eqnMSMDecomp}) is not direct at grade $\ell$ because of the one-dimensional intersection spanned by $\omega_0$.

\textbf{Case (2'):}
As in the previous case, we use
\lemref{lem: dimension of the annihilator of psis} to compute the dimension
of $\bigcap_{U^\pm} \Kern \psi^\pm_{U^\pm}$, where the $\psi^\pm_{U^\pm}$ are
restricted to act on pairs of descendants (of the appropriate grade) of
$X_{\rnkO-1}^\pm x$.  This time we must use the third formula, with the 
result that this dimension is
\begin{equation*}
\dim \ \brac{\uea X_{\rnkO-1}^\pm x}_\ell + \dim \ \brac{\uea^+ L_1 \cap \uea^+ L_2}_{\ell_\rnkO^- - \ell}.
\end{equation*}
The sum (\ref{eqnMSMDecomp}) is no longer direct at grades less than
$\ell$, but we still know that each $\psi^\pm_{U^\pm}$ annihilates pairs
$(w_1 , w_2)$ whose elements $w_j$ are in
$\brac{\uea X_{\rnkO-1}^\mp x}_{\ell - j}$. 
Consequently, any pair whose elements are in the intersection of
these subspaces, $\brac{\uea X_{\rnkO}^- x}_{\ell - j}$, is also
annihilated.  It follows then that
\begin{align}
\dim \ \Bigl( \bigl( {\textstyle \bigcap_{U^-} \Kern \psi^-_{U^-}} \bigr) \ &\cap \ \bigl( {\textstyle \bigcap_{U^+} \Kern \psi^+_{U^+}} \bigr) \Bigr) \notag \\
&= \dim \ \brac{\uea X_{\rnkO-1}^- x}_\ell + \dim \ \brac{\uea X_{\rnkO-1}^+ x}_\ell + 2 \dim \ \brac{\uea^+ L_1 \cap \uea^+ L_2}_{\ell_\rnkO^- - \ell} \notag \\
& \mspace{300mu} - \dim \brac{\uea X_{\rnkO}^-}_{\ell-1} - \dim \brac{\uea X_{\rnkO}^- x}_{\ell-2} \notag \\
&= \partnum{\ell-\ell_{\rnkO-1}^-} + \partnum{\ell-\ell_{\rnkO-1}^+} + \partnum{\ell-\ell_{\rnkO}^- - 1} + \partnum{\ell-\ell_{\rnkO}^- - 2} - 2 \partnum{\ell-\ell_{\rnkO}^-} \notag \\
&= \dim \ \minsubmod_\ell + 1 + \dim \ \brac{\uea^+ L_1 \cap \uea^+ L_2}_{\ell_{\rnkO}^{-} - \ell}.
\end{align}

Finally, we recall that
$\data' = \bigcap_{U^\cap} \Kern \psi^\cap_{U^\cap} \subseteq
\big( \bigcap_{U^-} \Kern \psi^-_{U^-} \big) \cap \big( \bigcap_{U^+} \Kern \psi^+_{U^+} \big)$.
As in case (1'), this implies that
\begin{equation}
\dim \ \data' = \dim \ \minsubmod_{\ell} + 1 + \dim \ \Kern \psi^\cap,
\end{equation}
and the injectivity of $\psi^\cap$ follows from the same argument as before.
This completes our computations.
\end{proof}

\subsection{Invariants as Coordinates} \label{sec: invariants}

We have seen in \thmref{thm: moduli space right Verma} that the number of
rank $\rnkO-1$ \svs{} of $\LMod$ coincides with the dimension
of the vector space $\data' / \gauge'$ (equivalently $\data / \gauge$) of
isomorphism classes of staggered modules with the short exact sequence
\eqref{eq: exact seq Verma}.
Next, we will construct coordinates on this vector space by defining invariants
$\beta$ or $\beta_\pm$ of the data defining the staggered module.

In cases (1) and (1'), recall that $\minsubmod$ is generated by the \sv{} $X_{\rnkO-1} x$.  We define $\gX \in \uea^-$ so that $X = \gX X_{\rnkO-1}$ ($\gX$ is then singular, normalised and prime). Since $\gX$ is not a scalar, we may write $\gX^\dagger = Y_1 L_1 + Y_2 L_2$, though $Y_1$ and $Y_2$ are not uniquely specified.  Nevertheless, every choice of $Y_1$ and $Y_2$ defines a functional $\tilde{\beta} \in (\minsubmod_{\ell-1} \oplus \minsubmod_{\ell-2})^*$ by
\begin{equation} \label{eq: invariant definition 1}
Y_1 \omega_1 + Y_2 \omega_2 = \tilde{\beta}(\omega_1,\omega_2) X_{\rnkO-1} x.
\end{equation}
Because $\gX$ is singular, this functional is invariant under the action of the gauge group $\gauge'$:
\begin{equation}
\tilde{\beta}(\omega_1 + L_1 u,\omega_2 + L_2 u) - \tilde{\beta}(\omega_1,\omega_2) = \brac{Y_1 L_1 + Y_2 L_2} u = \gX^\dagger u = 0 \qquad \text{($u \in \minsubmod_\ell$).}
\end{equation}
Moreover, it should be clear that if the data is admissible, $\brac{\omega_1 , \omega_2} \in \data'$, then $\tilde{\beta}$ does not depend upon the choice made for $Y_1$ and $Y_2$.  In this case, gauge invariance implies that we have a well-defined functional on $\data' / \gauge'$.  This is our coordinate, and we denote it by $\beta$.


Similarly, in cases (2) and (2'), $\minsubmod$ is generated by the \svs{} $X_{\rnkO-1}^{-} x$ and $X_{\rnkO-1}^{+} x$, and we define $\gX_\pm \in \uea^-$ so that $X = \gX_\pm X_{\rnkO-1}^\pm$ (making the $\gX_\pm$ singular, normalised and prime).  Again, the $\gX_\pm$ are not scalars, hence we may write $(\gX_\pm)^\dagger = Y^\pm_1 L_1 + Y^\pm_2 L_2$ (non-uniquely), and define functionals $\tilde{\beta}_\pm \in (\minsubmod_{\ell-1} \oplus \minsubmod_{\ell-2})^*$ by
\begin{subequations} \label{eq: invariant definition 2}
\begin{align}
Y^-_1 \omega_1 + Y^-_2 \omega_2 &= \tilde{\beta}_-(\omega_1,\omega_2) X_{\rnkO-1}^{-} x \\
\text{and} \qquad Y^+_1 \omega_1 + Y^+_2 \omega_2 &= \tilde{\beta}_+(\omega_1,\omega_2) X_{\rnkO-1}^{+} x \pmod{\uea X_{\rnkO-1}^{-} x}.
\end{align}
\end{subequations}
As above, the singularity of the $\gX_\pm$ implies that these functionals are invariant under $\gauge'$, and when $\brac{\omega_1 , \omega_2} \in \data'$, the definitions do not depend upon the choice of $Y^\pm_1$ and $Y^\pm_2$.  Thus, we obtain two coordinates on $\data' / \gauge'$ in this case, and we denote the corresponding functionals by $\beta_\pm$.

We remark that even in the cases (1') and (2'), we do not define an invariant
related to the singular vector $X_{\rnkO}^- x$.  We cannot even write down a
formula analogous to \eqnDref{eq: invariant definition 1}{eq: invariant
definition 2}, because $\omega_0 = X_{\rnkO}^+ x$ is \emph{not} a descendant
of $X_{\rnkO}^- x$.  Even if one could concoct such a formula, it is difficult
to imagine why the corresponding quantity should be gauge invariant.  In any
case, we will see in \thmref{thm: invariants} below that the invariants we
have already defined are sufficient to completely characterise a staggered
module with exact sequence (\ref{eq: exact seq Verma}).

Note that if $\brac{\omega_1 , \omega_2} \in \data'$, so we do indeed have a staggered module (with right module Verma), then $\omega_j = L_j y$ for $j=1,2$.  Hence we may write (abusing notation in an obvious manner)
\begin{equation}
\beta X_{\rnkO-1} x = \brac{Y_1 L_1 + Y_2 L_2} y
    = \gX^\dagger y \qquad \iff \qquad \beta
    = \inner{X_{\rnkO-1} x}{\gX^\dagger y}_{\uea X_{\rnkO-1} x}.
\end{equation}
Similar formulae may be written for $\beta_\pm$, although for $\beta_+$, one should include a projection from $\minsubmod$ onto $\uea X_{\rnkO-1}^+ x$.  It is in this form that we may compare these invariant coordinates with the beta-invariant defined in \eqnref{eqnDefBeta}.

It is the latter invariant which has been used in the literature to
distinguish staggered modules with the same exact sequence, though we
have already noted (\eqnref{eqnBeta=0}) that this beta-invariant vanishes
whenever $\rnkO = \rank \omega_0 > 1$.  This has not been found to
problematic thus far because, to the best of our knowledge, only modules
with $\rnkO \leqslant 1$ have been found to be relevant in applications. 
Nevertheless, this vanishing is a conceptual problem which is solved by
the invariant coordinates introduced above.  Namely, when $\rnkO = 1$
(cases (1) and (1')), the beta-invariant of \eqnref{eqnDefBeta} coincides
with the (value of the) coordinate $\beta$, because $\gX = X$ (this is why
we have risked some confusion by using the same notation
for the coordinates and invariants).  When
$\rnkO > 1$ and the beta-invariant vanishes identically, we have instead
the coordinates $\beta$ (cases (1) and (1')) or $\beta_\pm$ (cases (2) 
and (2')).  We therefore feel justified in concluding that the invariant
coordinates defined here should \emph{replace} the (in hindsight,
na\"{\i}ve) definition of the beta-invariant given in \secref{secStag}.

There is one point that remains to be addressed.  The beta-invariant of
\secref{secStag} vanishes identically when $\rnkO > 1$, hence is useless
in this case for distinguishing staggered modules with the same exact 
sequence.  We claim that the invariant coordinates defined above are 
superior in this respect, so we need to establish that the invariant
coordinates $\beta$ or $\beta_\pm$ are \emph{linearly independent}
functionals on the vector space $\data' / G'$, that is, that they are
actually coordinates.  We remark that this would complete our analysis 
of staggered modules when the right module is Verma. Indeed, the vector 
space of inequivalent staggered modules with a given short exact sequence
(\ref{eq: exact seq Verma}) was seen in \thmref{thm: moduli space right 
Verma} to have dimension $0$, $1$ or $2$. As the number of
coordinates we
have constructed precisely matches the dimension of $\data'/G'$ in each
case, they completely characterise the staggered module (again, given a 
short exact sequence).  Practically, this means that the formulae given
in \eqnDref{eq: invariant definition 1}{eq: invariant definition 2} reduce
the identification of a staggered module (\ref{eq: exact seq Verma}) to 
the computation of one or two numbers.

\begin{theorem} \label{thm: invariants}
In cases (1) and (1'), the functional $\beta$ is not identically zero on the one-dimensional vector space $\data' / \gauge'$, and so parametrises it.  In cases (2) and (2'), the functionals $\beta_-$ and $\beta_+$ are non-zero and linearly independent on the two-dimensional vector space $\data' / \gauge'$, and so parametrise it.
\end{theorem}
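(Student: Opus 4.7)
The plan is to use the dimension count from \thmref{thm: moduli space right Verma}, so that it suffices to establish injectivity on $\data'/\gauge'$: if the relevant invariant(s) vanish on $(\omega_1, \omega_2) \in \data'$, then $(\omega_1, \omega_2)$ is gauge-equivalent to $(0, 0)$. Since the number of invariants matches $\dim \data'/\gauge'$, this injectivity together with gauge-invariance (already verified before the theorem) makes $\beta$ (resp.\ the pair $(\beta_-, \beta_+)$) a basis of functionals on the quotient.

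The main technical device is the orthonormal set $\{Z_\mu\}$ from the proof of the Projection Lemma, together with the auxiliary functionals
\begin{equation}
\zeta_\mu \colon \data' \to \CC, \qquad Z_{\mu;1}^\dagger \omega_1 + Z_{\mu;2}^\dagger \omega_2 = \zeta_\mu X_{\rnkO-1} x,
\end{equation}
and their $\pm$-analogues in the braid cases. Although the Projection Lemma cannot be iterated to $m=\ell$ (the $V_\lambda$-step breaks down, as noted after its proof), the gauge transformation $g_z$ with $z = -\sum_\nu \zeta_\nu Z_\nu X_{\rnkO-1} x \in \minsubmod_\ell$ still tunes all $\zeta_\mu$ to zero by the orthonormality relation \eqnref{eqnOrthonormality}, without altering $\beta$ or $\beta_\pm$.

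Passing then to the staggered module provided by \thmref{thmConstruction}, the vanishing of all $\zeta_\mu$ and of $\beta$ translates into $Z_\mu^\dagger y = 0$ and $\gX^\dagger y = 0$ (and analogously $\gX_\pm^\dagger y = 0$ in cases (2), (2')). In case (1) the set $\{Z_\mu^\dagger\} \cup \{\gX^\dagger\}$ already forms a basis of $\uea^+_{\ell_{\rnkO-1}-\ell}$, so $Uy = 0$ for every $U \in \uea^+_{\ell_{\rnkO-1}-\ell}$; specialising to $U = U'L_j$ yields $U' \omega_j = 0$ for all $U' \in \uea^+_{\ell_{\rnkO-1}-\ell+j}$. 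Regarding $\minsubmod$ as a highest weight module with cyclic vector $X_{\rnkO-1} x$, this says that $\omega_j$ lies in the kernel of its Shapovalov form at grade $\ell-j$; but this kernel (the maximal proper submodule, generated by $\omega_0$) is zero below grade $\ell$, forcing $\omega_j = 0$. The same argument runs in parallel on $\uea X_{\rnkO-1}^{\pm} x$ in cases (2), (2'), with $\gX_\pm^\dagger$ in place of $\gX^\dagger$; the extra basis vectors descending from $\gXoth$ that appear in cases (1'), (2') are neutralised by the admissibility condition $\psi^\cap = 0$, which is automatic for data of the form $(L_1 y, L_2 y)$ since $U^\cap_1 L_1 + U^\cap_2 L_2 = 0$.

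The hardest step will be the bookkeeping in case (2'), where $\uea X_{\rnkO-1}^- x$ and $\uea X_{\rnkO-1}^+ x$ overlap below grade $\ell$ and the quotienting in the definition of $\beta_+$ modulo $\uea X_{\rnkO-1}^- x$ must be reconciled with the Shapovalov argument on each piece. For the linear independence of $\beta_-$ and $\beta_+$ in cases (2), (2'), data $(\omega_1, \omega_2)$ supported in $\uea X_{\rnkO-1}^- x$ satisfies $\beta_+ = 0$ automatically by the defining formula modulo $\uea X_{\rnkO-1}^- x$, while the injectivity argument above produces such data with $\beta_- \neq 0$; the complementary construction gives a class with $\beta_- = 0$ and $\beta_+ \neq 0$, so $(\beta_-, \beta_+)$ takes linearly independent values on the two-dimensional $\data'/\gauge'$.
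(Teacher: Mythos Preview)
Your strategy --- show directly that if the invariants vanish then the data can be gauged to $(0,0)$ --- is different from the paper's, which instead proves that the functional $\tilde{\beta}$ (respectively $\tilde{\beta}_\pm$) is linearly independent of the $\psi_U$ and, in the primed cases, then shifts by elements of $(\uea \gXoth x)_{\ell-1} \oplus (\uea \gXoth x)_{\ell-2}$ to land inside $\bigcap_{U^\cap}\Kern\psi^\cap_{U^\cap}$ without changing the value of $\tilde{\beta}$.  Your route is more constructive and works as written in cases (1) and (2), although even there the claim that $\{Z_\mu^\dagger\}\cup\{\gX^\dagger\}$ is a \emph{basis} of $\uea^+_{\ell_{\rnkO-1}-\ell}$ already fails in the braid subcase of case (1) with $\omega_0 = X_1^+ x$ and $X_1^- x = 0$ (the missing vectors $(V_\lambda X_1^-)^\dagger$ are harmless there only because $X_1^- x = 0$).

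There is, however, a real gap in cases (1') and (2').  You assert that the Shapovalov kernel of $\minsubmod$ below grade $\ell$ is zero because ``the maximal proper submodule [is] generated by $\omega_0$''.  In case (1') this is false: $\minsubmod = \LMod$ also contains the non-zero \sv{} $X_1^- x$ at grade $\ell_1^- < \ell$, so your argument only yields $\omega_j \in (\uea X_1^- x)_{\ell-j}$, not $\omega_j = 0$.  (Similarly in case (2') you only get $\omega_j \in \uea X_\rnkO^- x$.)  Your explanation that the extra basis vectors are ``neutralised by the admissibility condition $\psi^\cap = 0$'' does not address this: the vectors $(V_\lambda \gXoth)^\dagger$ do not lie in $\uea^+ L_1 \cap \uea^+ L_2$, so $\psi^\cap$ says nothing about them.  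What actually makes $(V_\lambda \gXoth)^\dagger y$ vanish is simply that $\gXoth x$ is a proper \sv{}, hence $(\gXoth)^\dagger$ annihilates $\LMod_{\ell_1^-}$; but this still only gives $Uy = 0$, not $\omega_j = 0$.

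The gap can be closed by running your argument a second time, now inside the \hwm{} $\uea X_\rnkO^- x$ (writing $X_\rnkO^- = X_1^-$ in case (1')).  Since the first proper \svs{} of $\VerMod_{\LDim + \ell_\rnkO^-}$ have grade $\ell_{\rnkO+1}^\pm - \ell_\rnkO^- > \ell - \ell_\rnkO^-$, an orthonormal set $\{Z'_\mu\}$ really is a basis of $\uea^-_{\ell - \ell_\rnkO^-}$; gauging by $z' = -\sum_\nu \zeta'_\nu Z'_\nu X_\rnkO^- x \in \minsubmod_\ell$ forces $(Z'_\mu)^\dagger y = 0$ for all $\mu$, hence $U'\omega_j = 0$ for every $U' \in \uea^+_{\ell_\rnkO^- - \ell + j}$, and now the Shapovalov kernel of $\uea X_\rnkO^- x$ at these grades \emph{is} trivial.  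One must also check that $z'$, lying in the maximal proper submodule, is annihilated by the earlier $Z_\mu^\dagger$ and $\gX^\dagger$, so that the first-round conclusions survive.
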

\begin{proof}
We first note that to show that a functional $\tilde{\beta}$ on a
finite-dimensional vector space $V$ is non-vanishing on the intersection
of the kernels of a collection of functionals $\set{\psi_U}$, it is enough
to prove that $\tilde{\beta}$ is linearly independent of this collection.  This 
follows quite readily by taking a basis for the span of
$\set{\psi_U}$, extending
it to a basis of $V^*$, and then considering the action of $\tilde{\beta}$ 
on the dual basis (identifying $V^{**}$ and $V$ in the standard way).  Our
strategy below is therefore to prove that $\tilde{\beta}$ and its variants
are linearly independent of the $\psi_U$ (and its variants), so $\beta$
is non-zero.

\textbf{Case (1):}
Assume that $\tilde{\beta}$ is a linear combination of the $\psi_U$:  $\tilde{\beta} = \sum_U b_U \psi_U = \psi_{B} \in
(\minsubmod_{\ell-1} \oplus \minsubmod_{\ell-2})^*$ for some
$B = \sum_U b_U U = B_1 L_1 = - B_2 L_2$.  
Then, from the definitions (\ref{eqnDefPsi}) and (\ref{eq: invariant definition 1}), we get
\begin{equation}
Y_1 w_1 + Y_2 w_2 = B_1 w_1 + B_2 w_2 \qquad \text{for all $w_1 \in \minsubmod_{\ell-1}$ and $w_2 \in \minsubmod_{\ell-2}$,}
\end{equation}
where $Y_1 L_1 + Y_2 L_2 = \gX^\dagger$ is such that $X = \gX X_{\rnkO-1}$ (so $\gX$ is non-zero and singular).  Setting $w_2 = 0$, we find that $Y_1 - B_1$ must annihilate $\minsubmod_{\ell-1}$.  However, this implies that
\begin{equation} \label{eqnNullState}
\inner{\brac{Y_1-B_1}^\dagger X_{\rnkO-1} x}{\minsubmod_{\ell-1}}_\minsubmod = \inner{\vphantom{\brac{Y_1-B_1}^\dagger} X_{\rnkO-1} x}{\brac{Y_1-B_1} \minsubmod_{\ell-1}}_\minsubmod = 0,
\end{equation}
hence that $\brac{Y_1-B_1}^\dagger X_{\rnkO-1} x$ is a grade $\ell-1$ descendant of a (non-cyclic) \sv{} of $\minsubmod$.  But, in case (1), $\minsubmod$ has no non-trivial \svs{} of grade less than $\ell$ (except of course for $X_{\rnkO-1} x$ itself).  Thus, $\brac{Y_1-B_1}^\dagger X_{\rnkO-1} x = 0$.

When the Verma module corresponding to $\minsubmod$ has no \svs{} of grade less than $\ell$, we may conclude that $Y_1 = B_1$, and repeating this argument for $w_1 = 0$, that $Y_2 = B_2$.  Then, we obtain a contradiction:
\begin{equation}
\gX^\dagger = Y_1 L_1 + Y_2 L_2 = B_1 L_1 + B_2 L_2 = 0.
\end{equation}
However, case (1) also includes the possibility that $\minsubmod = \LMod$ is of braid type with $\rnkO=1$, $\gX = X = X_1^+$ and $X_1^- x = 0$.  Then, we can only conclude that $\brac{Y_1-B_1}^\dagger = V_1 \gXoth$ for some $V_1 \in \uea^-$, where $\gXoth = X_1^-$ is singular.  Similarly, taking $w_2 = 0$ now leads to $\brac{Y_2-B_2}^\dagger = V_2 \gXoth$ for some $V_2 \in \uea^-$, and we arrive at
\begin{equation} \label{eqnContradiction2}
\gX^\dagger = Y_1 L_1 + Y_2 L_2 = B_1 L_1 + B_2 L_2 + \brac{\gXoth}^\dagger \brac{V_1^\dagger L_1 + V_2^\dagger L_2} = \brac{\gXoth}^\dagger \brac{V_1^\dagger L_1 + V_2^\dagger L_2}.
\end{equation}
This is again a contradiction, because it implies that $\gX x = X_1^+ x$ is a descendant of $\gXoth x = X_1^- x$.  It therefore follows that in case (1), $\tilde{\beta}$ is linearly independent of the $(\psi_U)$, so $\beta \in (\data'/\gauge')^*$ is non-vanishing.

\textbf{Case (1'):}
In this case, $\data' \subseteq \bigcap_U \Kern \psi_U$, so
we again need
$\tilde{\beta}$ to be linearly independent of the $\psi_U$.
If this were not the case, we would use the argument which settles case
(1) to derive the contradiction of \eqnref{eqnContradiction2} (the sole
difference arises because $\gXoth x \neq 0$ ($\gXoth = X_1^-$), so
\eqnref{eqnNullState} would give $\brac{Y_j-B_j}^\dagger x = V_j \gXoth x$
for some $V_j \in \uea^-$, recovering $\brac{Y_j-B_j}^\dagger = V_j \gXoth$).
Therefore, $\tilde{\beta}$ does not vanish identically on
$\bigcap_U \Kern \psi_U$.  However, we still have to rule out the
possibility that $\tilde{\beta}$ might vanish on the (typically proper)
subset $\data' = \bigcap_{U^\cap} \Kern \psi^\cap_{U^\cap}$.  To do this,
note that there exists a pair
$(w_1, w_2) \in \bigcap_U \Kern \psi_U$ for which 
$\tilde{\beta}(w_1, w_2) \neq 0$. We will use this pair to construct a
pair $(w_1', w_2') \in \bigcap_{U^\cap} \Kern \psi^\cap_{U^\cap}$ which has
the same (non-zero) value as $(w_1, w_2)$ under $\tilde{\beta}$, thereby
establishing that $\tilde{\beta} \neq 0$ on $\data'$.

The key observation is that any
$(w_1^\cap, w_2^\cap) \in (\uea \gXoth x)_{\ell-1} \oplus 
(\uea \gXoth x)_{\ell-2}$
is annihilated by $\tilde{\beta}$ and every $\psi_U$, but not in general
by the $\psi^\cap_{U^\cap}$.  We may therefore ``shift'' our pair
$(w_1, w_2)$ by any such $(w_1^\cap, w_2^\cap)$ without affecting
membership in $\bigcap_U \Kern \psi_U$ or changing its value under 
$\tilde{\beta}$.  Take then a basis
$\tset{\psi^\cap_{U^\cap_\mu}}$ of $\Imag \psi^\cap$, and notice that as
the restriction to $(\uea \gXoth x)_{\ell-1} \oplus (\uea \gXoth x)_{\ell-2}$
of $\psi^\cap_{U^\cap}$ is zero only for $U^\cap=0$
(\lemref{lem: kernel of psi}), this remains a basis for the restrictions.
Extend arbitrarily to a basis of 
$\bigl( (\uea \gXoth x)_{\ell-1} \oplus (\uea \gXoth x)_{\ell-2} \bigr)^*$.
Let the corresponding dual basis of 
$(\uea \gXoth x)_{\ell-1} \oplus (\uea \gXoth x)_{\ell-2}$ be denoted by 
$\tset{\bigl( w_1^{(\mu)} , w_2^{(\mu)} \bigr)}$, so in particular, 
$\tfunc{\psi^\cap_{U^\cap_\mu}}{w_1^{(\nu)} , w_2^{(\nu)}} = \delta_{\mu,\nu}$.  
Choosing now
\begin{equation}
w_j^\cap = \sum_\mu \psi^{\cap}_{U^\cap_\mu}(w_1,w_2) \ w_j^{(\mu)},
\end{equation}
we quickly compute that $(w_1',w_2') = (w_1-w_1^\cap, w_2-w_2^\cap)$ is annihilated by every $\psi^\cap_{U^\cap}$. Since $(w_1',w_2') \in \bigcap_U \Kern \psi_U$ and $\tilde{\beta}(w_1',w_2') = \tilde{\beta}(w_1,w_2) \neq 0$, this proves that $\tilde{\beta} \neq 0$ on $\data'$.

\textbf{Cases (2) and (2'):}
In these cases, we once again use the decomposition
\begin{equation} \label{eqnSplit}
\minsubmod_{\ell-j} =
\brac{\uea X_{\rnkO-1}^{-} x}_{\ell-j} + 
\brac{\uea X_{\rnkO-1}^{+} x}_{\ell-j},
\end{equation}
where the sum is direct in case (2) but not in case (2').  We therefore write $w_j = w_j^- + w_j^+$ with
$w_j^\pm \in \uea X_{\rnkO-1}^\pm x$ ($j=1,2$).
The non-uniqueness of this decomposition in case (2') leads to no difficulties in what follows.

We start by observing that the restrictions of our functionals to the ``wrong'' subspaces are trivial:
\begin{equation} \label{eqnWrongSubspace}
\psi^\pm_{U^\pm} = \tilde{\beta}_\pm = 0 \qquad \text{on} \qquad \brac{\uea X_{\rnkO-1}^{\mp} x}_{\ell-1} \oplus \brac{\uea X_{\rnkO-1}^{\mp} x}_{\ell-2}.
\end{equation}
In particular, in case (2'), all these functionals vanish on the intersection $\brac{\uea X_{\rnkO}^{-} x}_{\ell-1} \oplus \brac{\uea X_{\rnkO}^{-} x}_{\ell-2}$ (which is why non-uniqueness leads to no difficulties).  It follows from this that if the $\tilde{\beta}_\pm$ are non-zero on $\data'$, their linear independence, and hence that of the $\beta_\pm$, follows for free.

However, proving that the functionals $\beta_\pm$ are non-zero reduces
to demonstrating (separately for ``$-$'' and ``$+$'') that the corresponding $\tilde{\beta}_\pm$ are linearly independent of the $\psi^\pm_{U^\pm}$ and furthermore (in case (2') only), to checking that the $\tilde{\beta}_\pm$ do not vanish identically on
$\bigcap_{U^\cap} \Kern \psi^\cap_{U^\cap}$.  After splitting the $(w_1 , w_2$) according to \eqnref{eqnSplit}, the arguments establishing these results are identical to those presented in cases (1) and (1'), so we do not repeat them here.
\end{proof}

We close this section with a couple of examples illustrating the formalism constructed above.  The first illustrates a simple case in which there are two invariant coordinates $\beta_\pm$.

\begin{example} \label{ex:0,V0,V5}
By \thmref{thm: moduli space right Verma}, there is a two-dimensional space
of isomorphism classes of staggered modules $\StagMod$ with $c=0$
($t=\tfrac{3}{2}$)  and short exact sequence
\begin{equation} \label{eqnSES4}
\dses{\VerMod_0}{}{\StagMod}{}{\VerMod_5},
\end{equation}
because $\LMod = \VerMod_0$ is of braid type and its grade $\ell = 5$
\sv{} $\omega_0$ has rank $2$ (this is a case \textup{(2)} example). 
The dimensionality of $\data' / \gauge'$ can also be demonstrated directly 
as follows.

The normalised rank $1$ \svs{} generating the submodule $\minsubmod$ of
$\LMod$ are
\begin{equation}
L_{-1} x \qquad \text{and} \qquad \brac{L_{-1}^2 - \tfrac{2}{3} L_{-2}} x.
\end{equation}
This example is rather special because the only states of $\LMod$ not
in $\minsubmod$ are $x$ and its (non-zero) multiples
(the irreducible \hwm{} $\IrrMod_0$ is one-dimensional).
It follows that $\data' = \data \cap \minsubmod = \data$.  
Since $\ell = 5$, we should check the constraint on the possible data 
$\brac{\omega_1 , \omega_2} \in \minsubmod_4 \oplus \minsubmod_3$ coming
from the non-trivial element of $\brac{\uea^+ L_1 \cap \uea^+ L_2}_{-5}$
given in \eqnref{eqnGen5}:
\begin{equation}
\brac{L_1^2 L_2 + 6 L_2^2 - L_1 L_3 + 2 L_4} \omega_1 =
    \brac{L_1^3 + 6 L_1 L_2 + 12 L_3} \omega_2.
\end{equation}
However, both sides must be proportional to $x \notin \minsubmod$, hence
must vanish for all $\omega_1$ and $\omega_2$.  There is therefore no
constraint upon the data.

Since $\dim \minsubmod_4 = 5$ and $\dim \minsubmod_3 = 3$, the space of
admissible data has dimension $8$.  As the space of gauge transformations
$\gauge' = \gauge$  has dimension $\dim \minsubmod_5 - 1 = 6$, we conclude
that the space of inequivalent staggered modules with exact sequence
(\ref{eqnSES4}) is two-dimensional, as expected.  Finally, as $\omega_0$ 
may be represented in the forms
\begin{subequations}
\begin{align}
\omega_0 &= \brac{L_{-1}^4 - \tfrac{20}{3} L_{-2} L_{-1}^2 + 4 L_{-2}^2 + 4 L_{-3} L_{-1} - 4 L_{-4}} L_{-1} x \\
&= \brac{L_{-1}^3 - 6 L_{-2} L_{-1} + 6 L_{-3}} \brac{L_{-1}^2 - \tfrac{2}{3} L_{-2}} x,
\end{align}
\end{subequations}
it follows from \eqnref{eq: invariant definition 2} and \thmref{thm: invariants} that this space is parametrised by two invariants:
\begin{subequations}
\begin{align}
\beta_- L_{-1} x &= \brac{L_1^4 - \tfrac{20}{3} L_1^2 L_2 + 4 L_2^2 + 4 L_1 L_3 - 4 L_4} y \\
\text{and} \qquad \beta_+ \brac{L_{-1}^2 - \tfrac{2}{3} L_{-2}} x &= \brac{L_1^3 - 6 L_1 L_2 + 6 L_3} y \pmod{\bC L_{-1}^2 x}.
\end{align}
\end{subequations}
Any choice of values for these beta-invariants corresponds to a distinct
staggered module.
\end{example}

This example is admittedly special, because $\minsubmod$ coincides with $\LMod$ at all positive grades.  One consequence is that both $\beta_-$ and $\beta_+$ are defined for all $(\omega_1,\omega_2) \in \data$ and are invariant under the full group of gauge transformations $\gauge$.  In general however, this is not true.  Practically, the beta-invariants may be viewed as numbers to be computed in order to identify representations.  It is therefore somewhat inconvenient that they are in general only defined for data $(\omega_1,\omega_2) \in \data'$, hence for only certain choices of $y$, and are consequently only invariant under the restricted set of gauge transformations $\gauge' \subseteq \gauge$ preserving $\data'$.

While the Projection Lemma, \lemref{lemProjection}, guarantees that we can always choose (equivalent) data in $\data'$, it is sometimes desirable to define the invariants so that one can easily compute them for general data $(\omega_1,\omega_2) \in \data$, and hence for general choices of $y$.  In the following example, we illustrate how to combine the content of the Projection Lemma with the above definitions of the beta-invariants to deduce a generally valid formula.

\begin{example} \label{ex:-2,V0,V3}
We consider the one-dimensional space of $c=-2$ ($t=2$) staggered modules with short exact sequence
\begin{equation} \label{eqnSES5}
\dses{\VerMod_0}{}{\StagMod}{}{\VerMod_3},
\end{equation}
$\LMod$ is of chain type with \svs{} $L_{-1} x$ and $\omega_0 = \brac{L_{-1}^2 - 2 L_{-2}} L_{-1} x$ at grades $1$ and $3$ respectively.  $\omega_0$ is therefore composite, of rank $2$, and $\minsubmod$ is generated by $L_{-1} x$.
We note first of all, supposing that $y$ is chosen such that
$(\omega_1,\omega_2) \in \data'$, that the invariant $\beta$ of \eqnref{eq: invariant definition 1} may be defined by
\begin{equation} \label{eqnCrapDef}
\beta L_{-1} x = \brac{L_1^2 - 2 L_2} y.
\end{equation}
Our aim is to derive a similar formula that can be used with any choice
of $y$ (assuming only that it is correctly normalised).

To do this, we recall that in the proof of the Projection Lemma, we constructed projections onto appropriate submodules of $\LMod$ which take data to equivalent data.  This was achieved by considering an orthonormal basis of the complement of the submodule (as a vector space) at the right grade.  In the case at hand, we only need one projection to get from $\data$ to $\data'$, the submodule we want to project onto is $\minsubmod$, and the grade of our basis is $\ell = 3$.  Since $\dim \minsubmod_3 = 2$ and $\dim \LMod_3 = 3$, we may take $Z = \tfrac{\ii}{2} L_{-3}$ to define our orthonormal basis $\tset{Z x}$ (recall that the Shapovalov form is assumed bilinear, not sesquilinear). On the other hand, the vectors $L_{-1}^3 x$ and $L_{-2}L_{-1} x$ span $\minsubmod$ at
grade $3$ (they are the $V_\lambda L_{-1} x$ in the notation of \secref{sec: projections}).

Given data in $\data$, the key step in the proof of the Projection Lemma was to find equivalent data in $\data'$ using a carefully chosen gauge transformation $g_z$.  In the case at hand, one can check that the choice amounts to $z = -Z Z^\dagger y$.  In terms of gauge-transforming $y$, this corresponds to applying the operator $\wun - Z Z^\dagger$ to obtain the new $y$ (for which the corresponding data is in $\data'$).  This immediately yields an improved version of the definition (\ref{eqnCrapDef}) of $\beta$,
\begin{equation}
\beta L_{-1} x = \bigl( L_1^2 - 2 L_2 \bigr) \bigl( 1 - Z Z^\dagger \bigr) y = \bigl( L_1^2 - 2 L_2 \bigr) \bigl( 1 + \tfrac{1}{4} L_{-3} L_3 \bigr) y,
\end{equation}
which may be used for any (admissible) choice of $y$.
\end{example}

It should be clear that the same strategy will recover formulae for the
beta-invariants of general staggered modules (with exact sequence
(\ref{eq: exact seq Verma})) which are valid for every $y$ corresponding
to admissible data.  All that will change is that the orthonormal basis may
consist of several elements $Z_\mu$, and that one might need several
consecutive projections.
Indeed, in the chain case we let $Z^{(k)}_\mu$ denote the basis elements
chosen at the $k$-th step of the projections
of \secref{sec: projections} and define
\begin{equation} \label{eqnDefPChain}
\wun - \mathbb{P} =
    \Bigl( \wun - \sum_{\mu} Z^{(\rnkO-1)}_{\mu}
        \bigl( Z^{(\rnkO-1)}_{\mu} \bigr)^\dagger \Bigr) \cdots
        \Bigl( \wun - \sum_{\mu} Z^{(2)}_{\mu}
        \bigl( Z^{(2)}_{\mu} \bigr)^\dagger \Bigr)
        \Bigl( \wun - \sum_{\mu} Z^{(1)}_{\mu}
        \bigl( Z^{(1)}_{\mu} \bigr)^\dagger \Bigr).
\end{equation}
The formula defining the invariant now becomes
\begin{equation} \label{eqnDefBetaChain}
\beta X_{\rnkO-1} x = \chi^\dagger \brac{\wun - \mathbb{P}} y.
\end{equation}
In the braid case, projecting from rank $k-1$ to $k$ required two steps
and we will denote the corresponding orthonormal bases by $Z^{(k-1;+)}_\mu$
and $Z^{(k;-)}_\mu$. Now,
\begin{multline} \label{eqnDefPBraid}
\wun - \mathbb{P} =
    \Bigl( \wun - \sum_{\mu} Z^{(\rnkO-1;-)}_{\mu}
      \bigl( Z^{(\rnkO-1;-)}_{\mu} \bigr)^\dagger \Bigr)
      \Bigl( \wun - \sum_{\mu} Z^{(\rnkO-2;+)}_{\mu}
      \bigl( Z^{(\rnkO-2;+)}_{\mu} \bigr)^\dagger \Bigr)
      \Bigl( \wun - \sum_{\mu} Z^{(\rnkO-2;-)}_{\mu}
      \bigl( Z^{(\rnkO-2;-)}_{\mu} \bigr)^\dagger \Bigr) \\
\cdots \Bigl( \wun - \sum_{\mu} Z^{(2;-)}_{\mu}
      \bigl( Z^{(2;-)}_{\mu} \bigr)^\dagger \Bigr)
      \Bigl( \wun - \sum_{\mu} Z^{(1;+)}_{\mu}
      \bigl( Z^{(1;+)}_{\mu} \bigr)^\dagger \Bigr)
      \Bigl( \wun - \sum_{\mu} Z^{(1;-)}_{\mu}
      \bigl( Z^{(1;-)}_{\mu} \bigr)^\dagger \Bigr),
\end{multline}
and the invariants are defined by
\begin{equation} \label{eqnDefBetaBraid}
\beta_- X_{\rnkO-1}^- x = (\chi_-)^\dagger \brac{\wun - \mathbb{P}} y 
\qquad \text{and} \qquad 
\beta_+ X_{\rnkO-1}^+ x = (\chi_+)^\dagger \brac{\wun - \mathbb{P}} y \pmod{\uea X_{\rnkO-1}^- x}.
\end{equation}

As a final simplification in such formulae, we can even remove the
inconvenient quotient in the definition of $\beta_+$.
Specifically, we can choose one more basis $\set{W_\mu}$, this time for
$\uea^-_{\ell_{\rnkO-1}^+ - \ell_{\rnkO-1}^-}$,
such that the corresponding basis
$\bigl\{ W_{\mu} v_{h^\lft + \ell_{\rnkO-1}^-} \bigr\}$ of
$\Verma_{h^\lft + \ell_{\rnkO-1}^-}$ is orthonormal.
Then for any $u \in \minsubmod_{\ell_{\rnkO-1}^+}$, the vector
$\sum_\mu W_\mu W_{\mu}^\dagger u$ is in $\uea X_{\rnkO-1}^- x$, and
$u - \sum_\mu W_\mu W_{\mu}^\dagger u$ is proportional to $X_{\rnkO-1}^+ x$
by virtue of orthonormality.
A completely explicit formula for $\beta_+$ is thus
\begin{equation}
\beta_+ X_{\rnkO-1}^+ x =
(\wun - \sum_\mu W_\mu W_\mu^\dagger)(\chi_+)^\dagger \brac{\wun - \mathbb{P}} y.
\end{equation}

We make one final remark about this way of defining invariants.
The explicit forms of the projections $\wun - \mathbb{P} \in \uea$ will
in general depend upon the choices of orthonormal bases used. 
However, the values taken by the beta-invariants of course do not.

\section{General Right Modules} \label{sec: general case}

In view of the general construction of \thmref{thmConstruction}, the existence question in the the case of arbitrary right modules $\Hrgt$ seems at first to be more involved than the $\Hrgt$ Verma case elucidated in \secref{sec: right Verma}.  The vectors $(\varpi, -\overline{X})$ that are added to the list of generators of $\sN$ (\eqnref{eqnNGens}) contain both $\varpi$, which is only determined by the data in a rather indirect fashion, and $\overline{X}$, for which no simple, explicit general formula is known.  However, \propref{prop: monotonicity right} suggests an alternate strategy.  Indeed, given left and right modules $\Hlft$ and $\Hrgt$, we can first use \thmref{thm: moduli space right Verma} to determine the space of isomorphism classes of staggered modules
$\check{\Stagg}$ with exact sequence
\begin{equation} \label{eqnSES6}
\dses{\Hlft}{\check{\iota}}{\check{\Stagg}}{\check{\pi}}{\Verma_{h^\rgt}},
\end{equation}
and then for each isomorphism class, decide whether the right module can be replaced by $\Hrgt$, obtaining
\begin{equation} \label{eqnSES7}
\dses{\Hlft}{\iota}{\Stagg}{\pi}{\Hrgt}.
\end{equation}
In practise, the isomorphism classes are determined by the beta-invariants of $\check{\Stagg}$, so our task in this section is to analyse, in terms of these coordinates, when such a replacement is permitted.  Throughout this section, we will assume that the right module $\Hrgt$ of $\Stagg$ is not a Verma module.

Note that the definitions of $\data$, $\gauge$, $\data'$, $\gauge'$ and $\minsubmod$ depend only upon the left module $\Hlft$ which is unchanged in the replacement proposed above.  We will therefore continue to use these notations in this section without comment.  Similarly, the important definitions
(\eqnDref{eq: invariant definition 1}{eq: invariant definition 2}) of $\beta$ and $\beta_\pm$ make perfect sense for $\Stagg$.  Indeed, since the data of $\Stagg$ and $\check{\Stagg}$ coincide by \propref{prop: monotonicity right}, it follows that their beta-invariants coincide too.  We therefore obtain, as an immediate consequence of \thmDref{thm: moduli space right Verma}{thm: invariants}, a uniqueness result covering every case except that which was already treated in \corref{cor: uniqueness l 0}.

\begin{corollary}
There exists at most one staggered module $\Stagg$ (up to isomorphism)
for any given choice of left and right modules and beta-invariants $\beta$
or $\beta_\pm$ of \secref{sec: invariants} (as appropriate).
\end{corollary}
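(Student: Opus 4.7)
The plan is to bootstrap from the $\Hrgt$ Verma classification already established in \thmDref{thm: moduli space right Verma}{thm: invariants} using the monotonicity statement \propref{prop: monotonicity right}. First, observe that the case $\ell=0$ is trivial: no beta-invariants exist, and \corref{cor: uniqueness l 0} already supplies the conclusion. So I assume $\ell > 0$ throughout.

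Suppose $\Stagg_1$ and $\Stagg_2$ are staggered modules sharing the short exact sequence $\dses{\Hlft}{}{\Stagg_i}{}{\Hrgt}$ and having matching beta-invariants. By \propref{prop: monotonicity right} applied with $\check{\Hrgt} = \Verma_{h^\rgt}$, each $\Stagg_i$ lifts to a staggered module $\check{\Stagg}_i$ with the exact sequence $\dses{\Hlft}{}{\check{\Stagg}_i}{}{\Verma_{h^\rgt}}$ and the \emph{same data} $(\omega_1,\omega_2)$ as $\Stagg_i$. Since the beta-invariants of \secref{sec: invariants} are defined directly in terms of the data $(\omega_1,\omega_2) \in \minsubmod_{\ell-1} \oplus \minsubmod_{\ell-2}$ and the left module $\Hlft$ (both of which are preserved under the lift), the beta-invariants of $\check{\Stagg}_1$ and $\check{\Stagg}_2$ also coincide.

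Next, I would invoke \thmref{thm: invariants}, which asserts that $\beta$ (respectively $\beta_\pm$) provide linearly independent functionals of full rank on $\data'/\gauge'$, hence parametrise this space. Thus $\check{\Stagg}_1$ and $\check{\Stagg}_2$ represent the same point of $\data'/\gauge'$, which by \propref{propRVerIsoClasses'} means $\check{\Stagg}_1 \cong \check{\Stagg}_2$ as staggered modules.

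It then remains to descend this isomorphism to $\Stagg_1 \cong \Stagg_2$. By the construction in the proof of \propref{prop: monotonicity right}, each $\Stagg_i$ is realised as the quotient of $\check{\Stagg}_i$ by the submodule generated inside $\check{\Stagg}_i$ by the images of $\overline{X} v_{h^\rgt}$ (or the pair $\overline{X}^\pm v_{h^\rgt}$) under the projection $\uea \to \check{\Stagg}_i$ defining $y_i$. These generators are $\varpi_i = \overline{X} y_i \in \Hlft$ (or $\varpi_i^\pm$), and \propref{propDetBy} shows that $\varpi_i$ is completely determined by $\Hlft$, $\Hrgt$ and the data $(\omega_1,\omega_2)$. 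Consequently the isomorphism $\check{\Stagg}_1 \cong \check{\Stagg}_2$ provided by \thmref{thm: invariants}, after adjusting $y_2$ by a gauge transformation so that it matches $y_1$, carries $\varpi_1$ to $\varpi_2$ and so descends to an isomorphism of the quotients $\Stagg_1 \cong \Stagg_2$. The main bookkeeping point --- and essentially the only non-routine step --- is this last one, namely confirming that the gauge freedom remaining in $\check{\Stagg}_i$ is precisely what is needed to align the submodules being quotiented, so that the isomorphism of lifts induces an isomorphism of the original modules.
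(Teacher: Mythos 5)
Your argument is correct in substance and rests on the same pillars as the paper's (which treats the corollary as an immediate consequence of \thmDref{thm: moduli space right Verma}{thm: invariants} once one notes that the data, hence the beta-invariants, of $\Stagg$ and $\check{\Stagg}$ coincide). However, your final descent step --- which you single out as ``essentially the only non-routine step'' --- is unnecessary. \propref{prop: equivalence} applies to \emph{arbitrary} staggered modules with common left and right modules, not only to those with Verma right module: two such modules are isomorphic if and only if their data are equivalent. So once you know (via \thmref{thm: invariants} and the identification $\data'/\gauge' \cong \data/\gauge$ furnished by the Projection Lemma) that equal beta-invariants force the data of $\Stagg_1$ and $\Stagg_2$ to be gauge-equivalent, you may conclude $\Stagg_1 \cong \Stagg_2$ directly, without ever constructing an isomorphism $\check{\Stagg}_1 \cong \check{\Stagg}_2$ and pushing it down to the quotients. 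The detour also introduces a small error: you describe the submodule being quotiented as generated by ``$\varpi_i = \overline{X} y_i \in \Hlft$'', but in $\check{\Stagg}_i$ the vector $\overline{X} y_i$ does \emph{not} lie in $\Hlft$ (it projects onto $\overline{X} v_{h^\rgt} \neq 0$ in $\Verma_{h^\rgt}$); by \propref{prop: singular vectors in staggered modules} the correct generator is the singular vector $\overline{y}_i = \overline{X} y_i - \varpi_i$. Your descent can be repaired --- $\varpi_i$ is determined by the data (\propref{propDetBy}) and the grade-$(\ell+\overline{\ell})$ singular vector of $\check{\Stagg}_i$ is unique, so a gauge-aligned isomorphism does carry $\uea\overline{y}_1$ onto $\uea\overline{y}_2$ --- but the cleaner route is to skip it altogether.
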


\subsection{Singular Vectors of Staggered Modules} \label{secStagModSVs}

It was observed in \corref{corItsASubmodule} that $\Stagg$
may be realised as a quotient of $\check{\Stagg}$.  We give in this section a sharpening of this result.  First however, we recall from \propref{propVanSVs} that when $\overline{X} \in \uea^-_{\overline{\ell}}$ is defined, it is necessary for the existence of $\Stagg$ that $\overline{X} \omega_0 = 0$ in $\LMod$.  A similar statement holds if both $\overline{X}^- \in \uea^-_{\overline{\ell}^-}$ and $\overline{X}^+ \in \uea^-_{\overline{\ell}^+}$ are defined.  We therefore assume in what follows that $\LMod$ satisfies this requirement.

\begin{proposition} \label{prop: singular vectors in staggered modules}
When $\overline{X}$ is defined, a staggered module $\Stagg$ exists if and only if $\check{\Stagg}$ has a \sv{} $\overline{y}$ at grade $\ell + \overline{\ell}$.  Then, $\Stagg = \left. \check{\Stagg} \middle/ \uea \overline{y} \right.$.  When $\overline{X}^-$ and $\overline{X}^+$ are defined, $\Stagg$ exists if and only if $\check{\Stagg}$ has \svs{} $\overline{y}^-$ and $\overline{y}^+$ at grades $\ell + \overline{\ell}^-$ and $\ell + \overline{\ell}^+$, respectively.  Then, $\Stagg = \left. \check{\Stagg} \middle/ (\uea \overline{y}^- + \uea \overline{y}^+) \right.$.
\end{proposition}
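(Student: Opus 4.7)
The plan is to leverage \propref{prop: monotonicity right}, which guarantees that whenever $\Stagg$ exists, it is realised as a quotient $\check{\Stagg}/\sM$ of the corresponding staggered module with right module $\Verma_{h^\rgt}$. The proposition then reduces to a precise description of the kernel $\sM$: it ought to be generated by singular vectors of $\check{\Stagg}$ that lift the singular generators of the kernel of the projection $\Verma_{h^\rgt} \twoheadrightarrow \Hrgt$.

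For the forward direction, I would first observe that $\sM$ must satisfy $\check{\pi}(\sM) = \uea \overline{X} v_{h^\rgt}$ (respectively $\uea \overline{X}^- v_{h^\rgt} + \uea \overline{X}^+ v_{h^\rgt}$), since this is precisely the kernel of $\Verma_{h^\rgt} \twoheadrightarrow \Hrgt$. Moreover, the exactness of (\ref{eqnSES7}) forces $\sM \cap \iota(\Hlft) = 0$; otherwise the left module of $\check{\Stagg}/\sM$ would be a proper quotient of $\Hlft$. Hence $\check{\pi}$ restricts to an isomorphism from $\sM$ onto this kernel. In the one-generator case, the kernel is isomorphic to $\Verma_{h^\rgt + \overline{\ell}}$ (since a singular vector in a Verma module generates a sub-Verma module), and pulling back its cyclic generator yields a singular vector $\overline{y}$ of $\check{\Stagg}$ at grade $\ell + \overline{\ell}$, with $\sM = \uea \overline{y}$. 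In the two-generator (braid) case, the same approach produces lifts $\overline{y}^\pm$ of $\overline{X}^\pm v_{h^\rgt}$; each $\overline{y}^\pm$ is itself singular in $\check{\Stagg}$ because, for $n > 0$, $L_n \overline{y}^\pm$ projects under $\check{\pi}$ to $L_n \overline{X}^\pm v_{h^\rgt} = 0$ and thus lies in $\sM \cap \iota(\Hlft) = 0$.

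For the converse, given a singular vector $\overline{y}$ at grade $\ell + \overline{\ell}$, I would invoke the remark following \propref{propVanSVs} (applicable since $\overline{X} \omega_0 = 0$ is the standing hypothesis) to conclude that $\Hlft_{\ell + \overline{\ell}}$ contains no non-zero singular vector, whence $\check{\pi}(\overline{y})$ is a non-zero scalar multiple of $\overline{X} v_{h^\rgt}$. After rescaling one may assume $\check{\pi}(\overline{y}) = \overline{X} v_{h^\rgt}$, and setting $\Stagg := \check{\Stagg}/\uea \overline{y}$ one verifies: the induced right module is $\Verma_{h^\rgt}/\uea \overline{X} v_{h^\rgt} = \Hrgt$; the left module remains $\Hlft$, because $\uea \overline{y} = \uea^- \overline{y}$ maps under $\check{\pi}$ isomorphically onto $\Verma_{h^\rgt + \overline{\ell}}$ (on which $\uea^-$ acts freely), so $\uea \overline{y} \cap \iota(\Hlft) = 0$; and $L_0$ stays non-diagonalisable because $\omega_0 \in \iota(\Hlft)$ survives the quotient. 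The two-generator analogue is identical, taking $\sM := \uea \overline{y}^- + \uea \overline{y}^+$.

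The main technical obstacle lies in the two-generator (braid) case, where $\uea \overline{y}^- + \uea \overline{y}^+$ is not a direct sum: the two sub-Vermas $\uea \overline{y}^\pm$ must share rank-two descendants in $\check{\Stagg}$, exactly mirroring the overlap $\uea \overline{X}^- v_{h^\rgt} \cap \uea \overline{X}^+ v_{h^\rgt}$ inside $\Verma_{h^\rgt}$. Verifying $(\uea \overline{y}^- + \uea \overline{y}^+) \cap \iota(\Hlft) = 0$ in this setting requires tracking this overlap carefully, leveraging the Feigin--Fuchs description of the relevant submodule of $\Verma_{h^\rgt}$ together with the $\check{\pi}$-isomorphism with $\sM$ established in the forward direction.
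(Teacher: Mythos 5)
Your forward direction and the one-generator half of the converse are correct, and the forward argument is in fact a nice streamlining of the paper's: rather than computing inside $\Hlft \oplus \uea$ with the generators of $\check{\sN}$ and $\sN$ as the paper does, you observe that the kernel $\sM$ of $\check{\Stagg} \twoheadrightarrow \Stagg$ meets $\iota(\Hlft)$ trivially and is therefore carried isomorphically by $\check{\pi}$ onto $\sJ$, so the singular generators of $\sJ$ pull back to \svs{} of $\check{\Stagg}$. The genuine gap is in the two-generator converse, which is precisely where the paper's proof does its real work. You must show that $\bigl( \uea \overline{y}^- + \uea \overline{y}^+ \bigr) \cap \iota(\Hlft) = \set{0}$ given only the existence of the two \svs{}, and your sketch proposes to do this using ``the $\check{\pi}$-isomorphism with $\sM$ established in the forward direction''. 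That is circular: in the converse direction $\Stagg$ does not yet exist, so there is no $\sM$ known to intersect $\iota(\Hlft)$ trivially — that triviality is exactly what has to be proved.

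The missing content is the following. If $U^- \overline{y}^- + U^+ \overline{y}^+ \in \iota(\Hlft)$ with $U^\pm \in \uea^-$, applying $\check{\pi}$ gives $U^- \overline{X}^- v_{\RDim} = -U^+ \overline{X}^+ v_{\RDim}$, an element of $\uea \overline{X}^- v_{\RDim} \cap \uea \overline{X}^+ v_{\RDim}$; by the Feigin--Fuchs structure this intersection is generated by the next-rank \svs{} $\chi^\pm_- \overline{X}^- v_{\RDim} = \chi^\pm_+ \overline{X}^+ v_{\RDim}$, so $U^+$ factors through $\chi^-_+$ and $\chi^+_+$ (the paper organises this by quotienting first by $\uea \overline{y}^-$ and using that $\VerMod_{\RDim} / \sJ^-$ is not Verma). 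One must then establish the identities $\chi^\pm_+ \overline{y}^+ = \chi^\pm_- \overline{y}^-$ in $\check{\Stagg}$: each difference is singular (the $\overline{y}^\pm$ are \hwss{} and the $\chi^\pm_\pm$ are singular elements for the corresponding weights), lies in $\Kern \check{\pi} = \iota(\Hlft)$, and $\Hlft$ has no non-zero \svs{} at those grades because $\overline{X}^\pm \omega_0 = 0$ (\propref{propVanSVs} and the remark following it); hence each difference vanishes. Only then does $U^- \overline{y}^- + U^+ \overline{y}^+$ collapse into $\uea \overline{y}^-$, where the freeness argument finishes the proof. Without this step your quotient could a priori annihilate part of $\Hlft$, so the ``identical analogue'' claim does not stand on its own.
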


We remark immediately that by \propref{propVanSVs}, the left module $\Hlft$
does not have a (non-zero) singular vector at grade $\ell+\overline{\ell}$, so the \svs{} $\overline{y}$ or $\overline{y}^\pm$ in $\check{\Stagg}$ are not annihilated by the projection onto $\Verma_{h^\rgt}$.  Indeed, we may assume the normalisations
\begin{equation} \label{eqnDefxbar}
\check{\pi}(\overline{y}) = \overline{X} v_{h^\rgt} \qquad \text{or} \qquad \check{\pi}(\overline{y}^\pm) = \overline{X}^\pm v_{h^\rgt}.
\end{equation}
The \svs{} therefore have the form $\overline{X} y - \varpi$ or $\overline{X}^\pm y - \varpi^\pm$, where $\varpi, \varpi^\pm \in \LMod$.  The uniqueness of such \svs{} follows again from \propref{propVanSVs}.  We mention that it is in considering situations such as these that the terminology employed by Rohsiepe in \cite{RohRed96} becomes inconvenient.  In particular, we see once again that for $\check{\Stagg}$, Rohsiepe's lower module, which he defines as the subspace of $L_0$ eigenvectors, is not a \hwm{} (it contains $\overline{y}$).

\begin{proof}
We first assume that $\Stagg$ exists. Denote by $\check{\sN}$ and $\sN$
the submodules of $\Hlft \oplus \uea$
in the constructions (\thmref{thmConstruction}) of $\check{\Stagg}$ and $\Stagg$
respectively. As we have seen in the proof of \propref{prop: monotonicity right}, $\check{\sN} \subseteq \sN$. We will show that
$L_n \left(\varpi, -\overline{X}\right) \in \check{\sN}$ for all $n>0$ and
$\left(L_0 - h^\rgt - \overline{\ell}\right) \left(\varpi, -\overline{X}\right) \in \check{\sN}$,
thereby establishing that $\left(\varpi, -\overline{X}\right)$ becomes singular in the
quotient $\left. \left(\Hlft \oplus \uea\right) \middle/ \check{\sN} \right. = \check{\Stagg}$ (we will only detail this direction in the $\overline{X}$ case, that of $\overline{X}^\pm$ being identical).

We first write $L_n \overline{X} = U_0 \left(L_0 - h^\rgt\right) + U_1 L_1 + U_2 L_2$, as usual because $\overline{X}$ is singular. Then in $\LMod \oplus \uea$, the definition of $\check{\sN}$ gives
\begin{align}
L_n \left(\varpi, -\overline{X}\right) &= L_n \left(\varpi,0\right) - U_0 \left(0, L_0-h^\rgt\right) - U_1 \left(0, L_1\right) - U_2 \left(0, L_2\right) \notag \\
&= L_n \left(\varpi,0\right) - U_0 \left(\omega_0,0\right) - U_1 \left(\omega_1,0\right) - U_2 \left(\omega_2,0\right) \pmod{\check{\sN}},
\end{align}
and each of the four terms on the right hand side are obviously in $\iota^\lft \left(\Hlft\right) = \Hlft \oplus \set{0}$.  Now, $\left(\varpi, -\overline{X}\right)$ is one of the generators of $\sN$, so the sum of the four terms is in $\sN$ (since $\check{\sN} \subseteq \sN$).  But the existence of $\Stagg$ implies that $\sN \cap \iota^\lft\left(\Hlft\right) = \sN^\circ = \set{0}$ by \thmref{thmConstruction}, hence that the sum of these four terms is zero.  We conclude that 
$L_n \left(\varpi, -\overline{X}\right) \in \check{\sN}$ as required.

The argument for $\left(L_0 - h^\rgt - \overline{\ell}\right)$ is similar.  In fact,
we have $\left(L_0-h^\rgt-\overline{\ell}\right) \overline{X} = \overline{X} \left(L_0 - h^\rgt\right)$, so
\begin{equation}
\left(L_0 - h^\rgt - \overline{\ell}\right) \left(\varpi, -\overline{X}\right) = -\overline{X} \left(0, L_0-h^\rgt\right) = -\overline{X} \left(\omega_0,0\right) \pmod{\check{\sN}}.
\end{equation}
But we are assuming that $\overline{X} \omega_0 = 0$ (\propref{propVanSVs}), hence we find that $\left(L_0 - h^\rgt - \overline{\ell}\right) \left(\varpi, -\overline{X}\right) \in \check{\sN}$, as required.  This completes the proof that the class of $\left(\varpi, -\overline{X}\right)$ modulo $\check{\sN}$ is a singular vector of $\check{\Stagg}$.

The other direction requires us to show that the existence of the singular vector $\overline{y} \in \check{\Stagg}$ implies that the quotient
by the submodule generated by $\overline{y}$ is the desired staggered
module $\Stagg$.  The strategy here is rather similar to that used to prove \thmref{thmConstruction}.  First observe from \eqnref{eqnDefxbar} that $\check{\pi} \left(\uea \overline{y}\right) = \sJ$, where $\Hrgt = \Verma_{h^\rgt} / \sJ$. We denote the projections from $\check{\Stagg}$ to $\left. \check{\Stagg} \middle/ \uea \overline{y} \right.$ and from $\Verma_{h^\rgt}$ to $\Hrgt$ by $\overline{\pi}$ and $\pi_{\sJ}$, respectively.  We then define module homomorphisms $\iota$ and $\pi$ so as to make the following diagram commutative:
\begin{equation} \label{eqnDiagram2}
\begin{CD}
0 @>>> \LMod @>\check{\iota}>> \check{\StagMod} @>\check{\pi}>> \VerMod_{\RDim} @>>> 0 \\
@. @| @VV\overline{\pi}V @VV\pi_{\mathcal{J}}V @. \\
0 @>>> \LMod @>\iota>> {\displaystyle \frac{\check{\StagMod}}{\uea \overline{y}}} @>\pi>> \RMod @>>> 0
\end{CD}
\mspace{10mu}.
\end{equation}

Our task is now to show that the bottom row is exact.  For injectivity of $\iota = \overline{\pi} \circ \check{\iota}$, we must show that $\Kern \overline{\pi} \cap \Imag \check{\iota} = \uea \overline{y} \: \cap \: \check{\iota}(\Hlft) = \set{0}$.  But, if $U \overline{y} \in \check{\iota}(\Hlft)$ for some $U \in \uea$, then we can assume that $U \in \uea^-$ by the singularity of $\overline{y}$.  Exactness of the top row now gives $0 = \check{\pi} (U \overline{y}) = U \overline{X} v_{\RDim} \in \VerMod_{\RDim}$, hence $U = 0$ as Verma modules are free as $\uea^-$-modules.  This proves that $\iota$ is injective.  The projection $\pi$ is well-defined by $\pi \circ \overline{\pi} = \pi_\sJ \circ \check{\pi}$ because $\Kern \overline{\pi}$ is annihilated by the right hand side by construction.  Its surjectivity follows from that of $\pi_\sJ$ and $\check{\pi}$.

Exactness then follows from that of the top row, whence $\pi \circ \iota = \pi \circ \overline{\pi} \circ \check{\iota} = \pi_\sJ \circ \check{\pi} \circ \check{\iota} = 0$, and the following argument:  If $\pi \circ \overline{\pi} (u) = 0$ for some $u \in \check{\StagMod}$, then $\pi_\sJ \circ \check{\pi} (u) = 0$, hence $\check{\pi} (u) = U \overline{X} v_{\RDim}$ for some $U \in \uea$.  We therefore conclude that
\begin{equation}
u = U \overline{y} \pmod{\check{\iota} (\LMod)}, \qquad \text{so} \qquad \overline{\pi} (u) = 0 \pmod{\overline{\pi} \check{\iota} (\LMod) = \iota (\LMod)}.
\end{equation}
As $\overline{\pi}$ is surjective, we are done.

We have therefore constructed an exact sequence with the left and right modules of $\StagMod$.  The data of $\left. \check{\Stagg} \middle/ \uea \overline{y} \right.$ is obtained by acting on $y = \overline{\pi} (\check{y})$ (which is indeed mapped to $\Rhws \in \RMod$ under $\pi$), and coincides with that of $\StagMod$ (and $\check{\StagMod}$).  By \propref{prop: equivalence}, $\StagMod \cong \left. \check{\Stagg} \middle/ \uea \overline{y} \right.$ as required.

The argument for the $\overline{X}^\pm$ case is similar, though slightly
more complicated.  First we construct an exact sequence for
$\left. \check{\Stagg} \middle/ \uea \overline{y}^- \right.$ as above
(with injection $\iota'$ and surjection $\pi'$), obtaining a commutative
diagram very similar to (\ref{eqnDiagram2}).  The arguments for this step
are exactly the same as those above.  Then, we define $\iota$ and $\pi$
so as to make the following augmented diagram commute:
\begin{equation} \label{eqnDiagram3}
\begin{CD}
0 @>>> \LMod @>\check{\iota}>> \check{\StagMod} @>\check{\pi}>> \VerMod_{\RDim} @>>> 0 \\
@. @| @VV\overline{\pi}^-V @VV\pi_{\mathcal{J}^-}V @. \\
0 @>>> \LMod @>\iota'>> {\displaystyle \frac{\check{\StagMod}}{\uea \overline{y}^-}} @>\pi'>> {\displaystyle \frac{\VerMod_{\RDim}}{\sJ^-}} @>>> 0 \\
@. @| @VV\overline{\pi}^+V @VV\pi_{\mathcal{J}^+}V @. \\
0 @>>> \LMod @>\iota>> {\displaystyle \frac{\check{\StagMod}}{\uea \overline{y}^- + \uea \overline{y}^+}} @>\pi>> \RMod @>>> 0
\end{CD}
\mspace{10mu}.
\end{equation}
Here, $\overline{\pi}^\pm$ corresponds to quotienting by the submodule
generated by $\overline{y}^\pm$ and $\pi_{\mathcal{J}^\pm}$ corresponds to
quotienting by the submodule $\mathcal{J}^\pm$ generated by
$\overline{X}^- v_{\RDim}$ or $\overline{X}^+ x'$ as appropriate, where
$x'$ is the \hws{} of $\left. \VerMod_{\RDim} \middle/ \sJ^- \right.$. 
The arguments demonstrating exactness and the isomorphism of $\StagMod$ and
$\left. \check{\Stagg} \middle/
\left( \uea \overline{y}^- + \uea \overline{y}^+ \right) \right.$
are also identical to those above, except as regards the proof that
$\iota$ is injective.

Note then that $\iota$ will be injective if the only
$U \in \uea^-$ for which $U \overline{y}^+ \in \iota' (\LMod)$ are such that
$U \overline{y}^+ \in \uea \overline{y}^-$.
But applying $\pi'$ and using the exactness
of the middle row of (\ref{eqnDiagram3}) gives
$U \overline{X}^+ x' = 0$.  The module generated by the \hws{} $x'$ is not
Verma, so we can only conclude that $U = V^- \chi^-_+ + V^+ \chi^+_+$ for
some $V^\pm \in \uea^-$, where the $\chi^\pm_+ \overline{X}^+ v_{\RDim}$ 
denote the (normalised) \svs{} in $\VerMod_{\RDim}$ whose rank is one higher
than that of $\overline{X}^+ v_{\RDim}$ (in particular, the $\chi^\pm_+$ are
singular).  Thus,
\begin{equation} \label{eqnFinally}
U \overline{y}^+ = V^- \chi^-_+ \overline{y}^+ + V^+ \chi^+_+ \overline{y}^+.
\end{equation}

We use the fact that $\chi^\pm_+ \overline{X}^+ = \chi^\pm_- \overline{X}^-$
for some singular $\chi^\pm_- \in \uea^-$ (which follows from the Feigin-Fuchs
classification of singular vectors in Verma modules).
It is easily verified that
$\chi^\pm_+ \overline{y}^+ - \chi^\pm_- \overline{y}^- \in \check{\Stagg}$
are singular vectors and furthermore that they are in
$\Kern \check{\pi} = \Imag \check{\iota}$. By \propref{propVanSVs},
we then have $\chi^\pm_+ \overline{y}^+ = \chi^\pm_- \overline{y}^-$.
Substituting into \eqnref{eqnFinally}, we therefore see that
the vector $U \overline{y}^+ \in \check{\Stagg}$ must be in
the submodule generated by $\overline{y}^-$, establishing the injectivity of
$\iota$.
\end{proof}

This result validates the practical technique proposed in \cite{RidLog07} to find constraints on the beta-invariant of a staggered module $\StagMod$ by searching for \svs{} in the corresponding $\check{\StagMod}$.  The power of \propref{prop: singular vectors in staggered modules}, when combined with the classification of \thmref{thm: moduli space right Verma}, is evidenced by the following examples.

\begin{example} \label{ex: comparison of similar cases}
We are finally ready to demonstrate the claims made in
\exDref{ex:-2,13,15}{ex:0,12,14} concerning the allowed
values of $\beta$.  In the former case, the staggered
module $\StagMod$ had $c=-2$ ($t=2$),
$\Hlft=\Verma_0/\Verma_3$ and $\Hrgt=\Verma_1/\Verma_6$.
 By \thmref{thm: moduli space right Verma}, there is a
one-dimensional space of staggered modules $\check{\StagMod}$
with the same left module but $\RMod = \VerMod_1$, parametrised
by $\beta$ (\thmref{thm: invariants}).  We search in
$\check{\StagMod}$ for a \sv{} at grade $6$, finding
one for \emph{every} $\beta \in \CC$\textup{:}
\begin{multline} \label{eqnUglySV}
\overline{y} = \brac{L_{-1}^3 - 8 L_{-2} L_{-1} + 12 L_{-3}} \brac{L_{-1}^2 - 2 L_{-2}} y - \Bigl( -\tfrac{16}{3} \brac{\beta + 1} L_{-2}^2 L_{-1}^2 + \tfrac{4}{3} \brac{14 \beta + 5} L_{-3} L_{-2} L_{-1} \Bigr. \\
\Bigl. - 6 \beta L_{-3}^2 - 6 \brac{\beta - 2} L_{-4} L_{-1}^2 + 8 \beta L_{-4} L_{-2} - \tfrac{2}{3} \brac{5 \beta + 2} L_{-5} L_{-1} + 4 \beta L_{-6} \Bigr) x.
\end{multline}
Here, we have used
$\brac{L_{-1}^2 - 2 L_{-2}} L_{-1} x = 0$
to eliminate terms of the form $\uea^- L_{-1}^3 x$.\footnote{This is nothing but the requirement $\overline{X} \omega_0 = 0$ of \propref{propVanSVs}.
Combined with the observation that the gauge freedom
here is trivial, $\LMod_1 = \bC \omega_0$, this also explains why
\eqnref{eqnUglySV} is valid independent of the choice of $y$.}
It now follows from \propref{prop: singular vectors in staggered
modules} that there also exists a one-dimensional space
of staggered modules $\StagMod$ (likewise parametrised
by $\beta$) with the desired left and right modules.

The case of \exref{ex:0,12,14} is different. The staggered module $\StagMod$ had $c=0$ ($t=\frac{3}{2}$), $\Hlft=\Verma_0/\Verma_2$ and $\Hrgt=\Verma_1/\Verma_5$.  Searching for a grade $5$ \sv{} in the $\check{\StagMod}$ (with unknown $\beta$), we find that a \sv{} exists \emph{if and only if} $\beta = -\tfrac{1}{2}$, in which case it has the form\textup{:}
\begin{equation}
\overline{y} = \brac{L_{-1}^4 - \tfrac{20}{3} L_{-2} L_{-1}^2 + 4 L_{-2}^2 + 4 L_{-3} L_{-1} - 4 L_{-4}} y - \brac{-\tfrac{32}{9} L_{-3} L_{-2} + \tfrac{16}{3} L_{-4} L_{-1} + 2 L_{-5}} x.
\end{equation}
Here, we have used $\brac{L_{-1}^2 - \tfrac{2}{3} L_{-2}} x = 0$ to eliminate terms of the form $\uea^- L_{-1}^2 x$.  \propref{prop: singular vectors in staggered modules} now states that there is a unique staggered module $\StagMod$ with the desired left and right modules, and that it has beta-invariant $\beta = -\tfrac{1}{2}$.
\end{example}

Whilst searching for \svs{} gives a useful general technique to determine how many staggered modules correspond to a given exact sequence, it is clear that this method is computationally intensive.  For instance, even the relatively simple module discussed in \exref{ex:0,14,18} requires searching for \svs{} at grade $14$, hence determining the form of $\varpi$ (when it exists) within a space of dimension $\dim \LMod_{14} = \tpartnum{14} - \tpartnum{10} = 93$.  Clearly, it would be very helpful to have stronger existence results, and it is these that we turn to now.

\subsection{Submodules and the Projection Lemma} \label{secRightModProj}

The previous section reduces the existence question for $\Stagg$ to
a question about \svs{} $\overline{y}$ (or $\overline{y}^\pm$) in
$\check{\Stagg}$.  We will first develop the idea of this section
in the case in which there is only one $\overline{X}$, briefly
noting afterwards the slight changes needed in the
$\overline{X}^\pm$ case.  Recall that these \svs{}
$\overline{y}$ necessarily take the form $\overline{X} y - \varpi$,
where $\varpi \in \LMod_{\ell + \overline{\ell}}$.  In searching for
these \svs{}, we are naturally led to consider the set of elements
obtained from $\overline{X} y$ through translating by an element
of $\LMod_{\ell + \overline{\ell}}$.  This translation is strongly
reminiscent of gauge transforming data, and it is this similarity
that we shall exploit in this section.

To make matters more transparent, let us consider instead of
$\check{\Stagg}$, a staggered module $\OthStagMod$ that differs 
only in that its left module is also Verma.  This does not
change the dimension of the space of isomorphism classes,
by \thmref{thm: moduli space right Verma}, and we have the
usual definitions of $x$, $y$, $\omega_0$, $\omega_1$,
$\omega_2$ and $\beta$ (or $\beta_\pm$).  However, this
slight change of viewpoint necessitates a reinterpretation
of the results of the previous section, because a Verma left
module obviously conflicts with the conclusion of
\propref{propVanSVs} when the right module is not
Verma (upon setting a \sv{} of $\Hrgt$ to zero).  Instead
of searching for \svs{} of the form $\overline{X} y - \varpi$,
we will therefore instead consider the submodules of
$\OthStagMod$ generated by the $\overline{X} y - u$,
where $u$ ranges over
$\brac{\VerMod_{\LDim}}_{\ell + \overline{\ell}}
\subset \OthStagMod$.

More precisely, let us consider the submodules
$\func{\overline{\OthStagMod}}{u} \subset \OthStagMod$ which are generated
by $x$ and $\overline{X} y - u$.  Because we have insisted that the left
module is Verma (and this is why we are insisting upon this in
the first place), these are all staggered modules with exact sequence
\begin{equation}
\dses{\VerMod_{\LDim}}{}{\func{\overline{\OthStagMod}}{u}}{}{\VerMod_{\RDim + \overline{\ell}}}.
\end{equation}
Indeed, putting $\overline{y} = \overline{X} y - u$, we define in the usual
way $\overline{\omega}_0 = \brac{L_0 - \RDim - \overline{\ell}} \overline{y}
= \overline{X} \omega_0$, $\overline{\omega}_1 = L_1 \overline{y}$,
$\overline{\omega}_2 = L_2 \overline{y}$, and thence $\overline{\beta}$
(or $\overline{\beta}_\pm$) by \eqnref{eq: invariant definition 1}
(or (\ref{eq: invariant definition 2})).  Varying 
$u \in \brac{\VerMod_{\LDim}}_{\ell + \overline{\ell}}$ then really
does amount to performing gauge transformations on any given representative,
$\func{\overline{\OthStagMod}}{0}$ say. 
In particular, all the $\func{\overline{\OthStagMod}}{u}$ are isomorphic.

Apply now the Projection Lemma, \lemref{lemProjection}, to the staggered
module $\func{\overline{\OthStagMod}}{0}$.  This tells us that we can always make a gauge transformation so that the transformed data
$\brac{\overline{\omega}_1' , \overline{\omega}_2'}$ belongs to the
submodule $\overline{\minsubmod}$ of $\VerMod_{\LDim}$ generated by the
\svs{} of rank $\rnkO + \overline{\rnkO} - 1$, where $\rnkO$ is the rank of
$\omega_0 = X x$ in $\VerMod_{\LDim}$ and $\overline{\rnkO}$ is the
rank of $\overline{X} v_{\RDim}$ in $\VerMod_{\RDim}$
(so $\rnkO+\overline{\rnkO}$
is the rank of $\overline{\omega}_0 \in \VerMod_{\LDim}$).  In other
words, there exists $\varpi \in \brac{\VerMod_{\LDim}}_{\ell + \overline{\ell}}$
such that $\vir^+ \brac{\overline{X} y - \varpi}
\subseteq \overline{\minsubmod}$.  The submodule
$\uea \brac{\overline{X} y - \varpi} \subset \OthStagMod$
is then a staggered module with left module $\overline{\minsubmod}$
(or even some submodule thereof), right module
$\VerMod_{\RDim + \overline{\ell}}$, and beta-invariant 
$\overline{\beta}$ (or $\overline{\beta}_\pm$).

Consider now the quotient of $\OthStagMod$ by
$\overline{\minsubmod} \subseteq \VerMod_{\LDim}$. If we assume that
$\omega_0 \notin \overline{\minsubmod}$ (which is equivalent to assuming
that $\overline{\rnkO} > 1$), then \propref{prop: monotonicity left}
tells us that this is a staggered module $\check{\StagMod}$ with exact
sequence
\begin{equation}
\dses{\left. \VerMod_{\LDim} \middle/ \overline{\minsubmod} \right.}{}{\check{\StagMod}}{}{\VerMod_{\RDim}}.
\end{equation}
Moreover, its beta-invariant is obviously the same as that of
$\OthStagMod$, namely $\beta$.  It should now be evident that
$\overline{y} = \overline{X} y - \varpi$ is a \sv{} of $\check{\StagMod}$,
so by \propref{prop: singular vectors in staggered modules}, we may 
construct a module $\StagMod =
\left. \check{\StagMod} \middle/ \uea \overline{y} \right.$ for each
beta-invariant $\beta$ whose exact sequence is
\begin{equation}
\dses{\left. \VerMod_{\LDim} \middle/ \overline{\minsubmod} \right.}{}{\StagMod}{}{\left. \VerMod_{\RDim} \middle/ \uea \overline{X} v_{\RDim} \right.}.
\end{equation}
We can even reduce the left module of $\StagMod$ further by quotienting
by any submodule not containing $\omega_0$.

It remains only to remark upon the differences in the $\overline{X}^\pm$
case.  We may apply the above formalism to consider separately the 
submodules $\func{\overline{\OthStagMod}^\pm}{u} \subset \OthStagMod$ 
which are generated by $x$ and $\overline{X}^\pm y - u$.  Applying the
Projection Lemma to each, we conclude that there exist $\varpi^\pm$ such
that $\vir^+ \bigl( \overline{X}^\pm y - \varpi^\pm \bigr)
\subseteq \overline{\minsubmod}$ for the submodule
$\overline{\minsubmod} \subseteq \VerMod_{\LDim}$ generated by the
rank $\rnkO + \overline{\rnkO}-1$ \svs{} (we emphasise that this is the
same submodule for both ``$-$'' and ``$+$''). 
The vectors $\overline{y}^\pm = \overline{X}^\pm y - \varpi^\pm$
are therefore both singular in the
quotient $\check{\StagMod} = \OthStagMod/\overline{\minsubmod}$, so an appeal
to \propref{prop: singular vectors in staggered modules} then settles
this case.  Putting this all together, we have proven the following result.

\begin{proposition} \label{prop: general non-critical}
Let $\rnkO$ and $\overline{\rnkO}$ denote the ranks of the \svs{}
$\omega_0 = X x \in \VerMod_{\LDim}$ and $\overline{X} v_{\RDim} \in
\VerMod_{\RDim}$.  If there are no (non-zero) \svs{} in $\Hlft$ of
rank $\rnkO+\overline{\rnkO}-1$, then the dimension of the space of
staggered modules $\Stagg$ with exact sequence (\ref{eqnSES7}) matches
the dimension of the space of staggered modules $\check{\Stagg}$ with
exact sequence (\ref{eqnSES6}).
\end{proposition}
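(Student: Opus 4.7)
The plan is to reduce the proposition to an existence question for singular vectors in $\check{\Stagg}$, and to produce these via the Projection Lemma. By the corollary immediately preceding \secref{secStagModSVs}, a staggered module with exact sequence (\ref{eqnSES7}) is uniquely determined up to isomorphism by its beta-invariants, and these coincide with those of $\check{\Stagg}$ since the data is preserved under the replacement of the right module by a Verma module (\propref{prop: monotonicity right}). Combined with \thmref{thm: moduli space right Verma}, which fixes the dimension on the $\check{\Stagg}$ side, the dimension match will follow as soon as every $\check{\Stagg}$ with exact sequence (\ref{eqnSES6}) is shown to produce a corresponding $\Stagg$ with exact sequence (\ref{eqnSES7}); by \propref{prop: singular vectors in staggered modules}, it suffices to exhibit a singular vector $\overline{y}$ in $\check{\Stagg}$ at grade $\ell + \overline{\ell}$ (together with a second $\overline{y}^+$ at grade $\ell + \overline{\ell}^+$ when two generators are present).

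Next I would lift $\check{\Stagg}$ to an auxiliary staggered module $\OthStagMod$ sharing its beta-invariants but with left module $\VerMod_{\LDim}$, so that $\check{\Stagg}$ is realised as a quotient of $\OthStagMod$ by a submodule of $\VerMod_{\LDim}$ not containing $\omega_0$ (\propref{prop: monotonicity left}); such an $\OthStagMod$ exists by \thmref{thm: moduli space right Verma} applied in the Verma--Verma setting. Inside $\OthStagMod$, the vector $\overline{X}\omega_0 \in \VerMod_{\LDim}$ is singular of rank $\rnkO + \overline{\rnkO}$, so \lemref{lemProjection}, applied at $m = \ell + \overline{\ell}$, produces $\varpi \in (\VerMod_{\LDim})_{\ell + \overline{\ell}}$ for which $L_1(\overline{X}y - \varpi)$ and $L_2(\overline{X}y - \varpi)$ both lie in the submodule $\msmbar \subseteq \VerMod_{\LDim}$ generated by the singular vectors of rank $\rnkO + \overline{\rnkO} - 1$. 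The analogous construction at $m = \ell + \overline{\ell}^\pm$ furnishes $\varpi^\pm$ in the two-generator case, with the same $\msmbar$ controlling both outputs.

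The rank hypothesis now enters decisively: because $\Hlft$ contains no singular vectors of rank $\rnkO + \overline{\rnkO} - 1$, every generator of $\msmbar$ is killed by the canonical projection $\VerMod_{\LDim} \twoheadrightarrow \Hlft$, so $\msmbar$ maps to zero in $\Hlft$. Since $\overline{\rnkO} \geqslant 1$ we have $\omega_0 \notin \msmbar$, so \propref{prop: monotonicity left} identifies $\check{\Stagg}$ with an iterated quotient of $\OthStagMod$ in which the image of $\msmbar$ is trivial. In this quotient the image of $\overline{X}y - \varpi$ (respectively $\overline{X}^\pm y - \varpi^\pm$) is annihilated by $L_1$ and $L_2$, hence by all of $\vir^+$, and is an $L_0$-eigenvector by homogeneity; it is therefore a genuine singular vector of $\check{\Stagg}$ at the required grade. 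Invoking \propref{prop: singular vectors in staggered modules} then produces $\Stagg$ with exact sequence (\ref{eqnSES7}) for every allowed choice of beta-invariants, yielding the claimed dimension match.

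The main subtlety, and the sole content of the rank hypothesis, is the vanishing of $\msmbar$ in $\Hlft$. Without this hypothesis the image of $\msmbar$ inside $\Hlft$ is a non-zero submodule, the Projection-Lemma output no longer yields a singular vector automatically, and additional linear conditions on the beta-invariants must be imposed for $\overline{y}$ to exist in $\check{\Stagg}$. Identifying those conditions is precisely the delicate ``critical rank'' problem deferred to \secref{secCritRank}, and it is this additional layer of constraint that would have to be analysed to remove or weaken the hypothesis.
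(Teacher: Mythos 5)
Your proposal is correct and follows essentially the same route as the paper: lift to an auxiliary module $\OthStagMod$ with Verma left module, apply the Projection Lemma to push the data of $\overline{X}y - \varpi$ into the submodule $\overline{\minsubmod}$ generated by the rank $\rnkO+\overline{\rnkO}-1$ singular vectors, observe that the rank hypothesis kills $\overline{\minsubmod}$ (and hence also $\overline{X}\omega_0$) in the quotient with left module $\Hlft$, and conclude via \propref{prop: singular vectors in staggered modules}. The only quibble is your claim that $\overline{\rnkO}\geqslant 1$ already gives $\omega_0\notin\overline{\minsubmod}$ — this actually requires $\overline{\rnkO}>1$, but that is forced anyway since $\omega_0\neq 0$ would otherwise be a non-zero rank $\rnkO+\overline{\rnkO}-1$ singular vector of $\Hlft$, contradicting the hypothesis.
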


\begin{example} \label{ex:-2,13,15 again}
This result allows us to understand why the exact sequence (\ref{eqnSES1})
of \exref{ex:-2,13,15} admits a one-parameter family of staggered modules. 
In \exref{ex: comparison of similar cases}, we proved that this was indeed
the case, but now we see it as a direct consequence of
\propref{prop: general non-critical}, and hence as a corollary of the
Projection Lemma.  To whit, the left module is $\VerMod_0 / \VerMod_3$
and the right module is $\VerMod_1 / \VerMod_6$ (see \figref{figEx12}
in \secref{secStag}).  The ranks of $\omega_0 = L_{-1} x$ and
$\overline{X} v_{\RDim}$ are $1$ and $2$ respectively, so that of
$\overline{\omega}_0$ is $\rnkO + \overline{\rnkO} = 3$.  But, there is
no (non-vanishing) rank $2$ \sv{} of $\LMod$ (it would have
dimension $3$), hence the proposition applies.

We note that the proposition does not apply to the exact sequence
(\ref{eqnSES2}) considered in \exref{ex:0,12,14}.  In this case,
the left module is $\VerMod_0 / \VerMod_2$ and the right module
is $\VerMod_1 / \VerMod_5$, so we find that $\rnkO + \overline{\rnkO} = 2$.
But there is a non-vanishing rank $1$ \sv{} in $\LMod$, namely $\omega_0$.
This failure to meet the hypotheses should be expected as we have already
shown (\exref{ex: comparison of similar cases}) that the dimension of the
space of $\StagMod$ differs from that of the corresponding
$\check{\StagMod}$.  We will therefore have to work harder to get
an intuitive understanding of why this is so (beyond a brute force
computation of \svs{}).
\end{example}

\begin{example} \label{ex:LM(1,p)}
In the study of the so-called $LM \brac{1,q}$ \lcfts{} \cite{GabInd96}, one encounters staggered modules $\StagMod_s$ with $c = 13 - 6 \bigl( q + q^{-1} \bigr)$ ($t = q$) and exact sequence
\begin{equation} \label{eqnSES8}
\dses{\QuotMod_{1,s}}{}{\StagMod_s}{}{\QuotMod_{1,s + 2 \brac{q - \sigma}}},
\end{equation}
where $\QuotMod_{1,s} = \VerMod_{h_{1,s}} / \VerMod_{h_{1,s} + s}$.  Here, $s$ is a positive integer not divisible by $q$, and $0 < \sigma < q$ is the remainder obtained upon dividing $s$ by $q$.  The left and right modules are of chain type, the former being irreducible if $s < q$ and reducible with \svs{} of ranks $0$ and $1$ if $s > q$.  The right module is always reducible with \svs{} of ranks $0$ and $1$.

We then have $\rnkO = 0$ when $s < q$ and $\rnkO = 1$ otherwise, $\ell = \brac{q - \sigma} \brac{s - \sigma} / q$, $\overline{\rnkO} = 2$ and $\overline{\ell} = s + 2 \brac{q - \sigma}$.  Since the left module has no \svs{} of rank $\rnkO + 1$, it follows from \propref{prop: general non-critical} and \thmref{thm: moduli space right Verma} that the exact sequence (\ref{eqnSES8}) describes a one-parameter family of staggered modules.  Identifying the staggered modules appearing in the $LM \brac{1,q}$ models therefore requires computing the corresponding beta-invariants.  Unfortunately, this has only been done for certain small $s$.
\end{example}

\propref{prop: general non-critical} states that if $\LMod$ has no
(non-zero) \svs{} of rank greater than or equal to $\rnkO+\overline{\rnkO}-1$,
then the existence question for staggered modules $\StagMod$ is
equivalent to the same question for the corresponding $\check{\StagMod}$. 
Moreover, \propref{propVanSVs} tells us that the left module $\LMod$ of
$\StagMod$ cannot have a \sv{} of rank $\rnkO+\overline{\rnkO}$ or
$\rnkO+\overline{\rnkO}+1$, according as to whether $\LMod$ is of chain or
braid type, respectively.  We have therefore solved the existence
question for staggered modules in all but a finite number of
outstanding \emph{critical rank} cases.  It is these cases that we now turn to.

\subsection{Existence at the Critical Ranks} \label{secCritRank}

If $\Hlft$ has non-zero singular vectors at the critical rank
$\rnkO+\overline{\rnkO}-1$, we can still follow the strategy of
\secref{secRightModProj} to try to construct $\StagMod$, but we cannot in general quotient away the full submodule $\overline{\minsubmod}$ without ending up with a left module smaller than $\LMod$.  We will therefore have to perform a more detailed analysis to determine when we can quotient by a smaller submodule.

For convenience, we will separate the outstanding cases according to the configurations of the \svs{} of $\LMod$ and $\RMod$ at the critical ranks.  We let $\nbrGamma \in \set{0,1,2}$ denote the number of rank $\rnkO+\overline{\rnkO}-1$ \svs{} of $\LMod$ and $\Rnull \in \set{0,1,2}$ denote the (minimal) number of \svs{} needed to generate $\sJ$, where $\RMod = \VerMod_{\RDim} / \sJ$.  The critical rank cases correspond to neither $\nbrGamma$ nor $\Rnull$ vanishing, so we have four \sv{} configurations which we illustrate in \figref{figCritRank}.  There, $\nbrGamma$ represents the number of black circles in the top row for $\LMod$ and $\Rnull$ represents the number of white circles in the bottom row for $\RMod$.  We label the critical rank cases by this pair of integers $(\nbrGamma,\Rnull)$.

{
\psfrag{1a}[][]{Case $(1,1)$}
\psfrag{1b}[][]{Case $(1,2)$}
\psfrag{2a}[][]{Case $(2,1)$}
\psfrag{2b}[][]{Case $(2,2)$}
\psfrag{B1}[][]{$\overline{\beta}$}
\psfrag{B2}[][]{$\overline{\beta}^\pm$}
\psfrag{B3}[][]{$\overline{\beta}_\pm$}
\psfrag{B4}[][]{$\overline{\beta}^\pm_\pm$}
\begin{figure}
\begin{center}
\includegraphics[width=10cm]{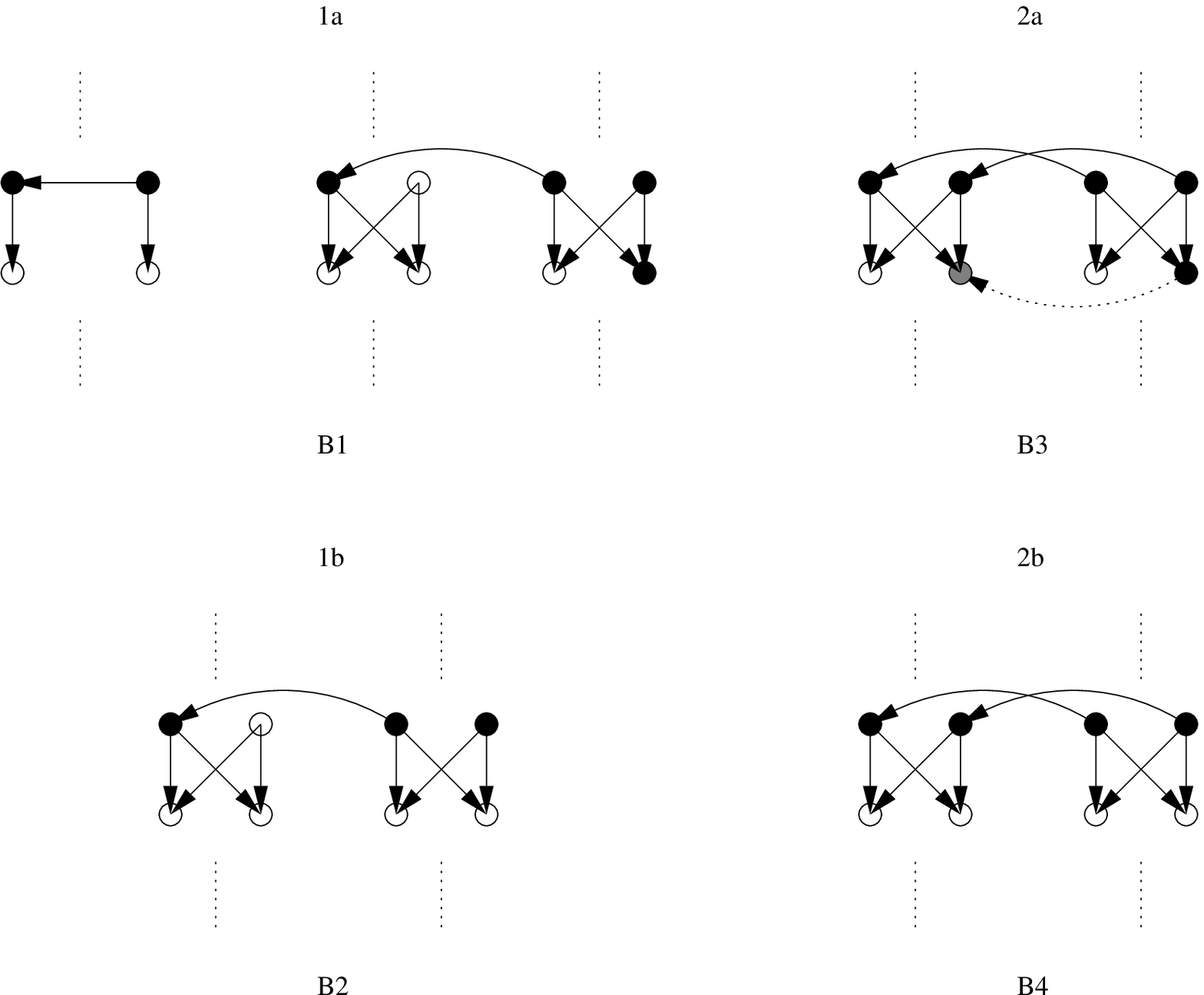}
\caption{The critical rank configurations for which \propDref{prop: general non-critical}{propVanSVs} are not sufficient to settle the existence question.
Pictured are the \svs{} of $\LMod$ of ranks
$\rnkO+\overline{\rnkO}-1$ and $\rnkO+\overline{\rnkO}$,
and their counterparts of ranks $\overline{\rnkO}-1$ and
$\overline{\rnkO}$ in $\RMod$.  Black indicates that the
\sv{} is present, white that it has been set to zero, and
grey that either possibility is admissible.  The (curved)
horizontal arrows indicate the non-diagonal action of $L_0$. 
It is understood that in certain circumstances, some of the
\svs{} pictured may not actually be present in the braid cases
(for example when $\overline{\rnkO} = 1$, $\RMod$ has only one
\sv{} of rank $\overline{\rnkO}-1 = 0$).  We also indicate for
each configuration the beta-invariants of \eqnref{eqnDefBetaBars}
whose vanishing is equivalent to the existence of the associated
staggered module.} \label{figCritRank}
\end{center}
\end{figure}
}

Let us first consider the case $(1,1)$ with modules of chain type for simplicity.  Recall that the
data of the module $\func{\overline{\OthStagMod}}{\varpi}$ was
denoted by $(\overline{\omega}_1, \overline{\omega}_2)$, where
$\varpi$ was chosen so that
$\overline{\omega}_j = L_j \overline{y} =
L_j \brac{\overline{X} y - \varpi} \in \overline{\minsubmod}$. 
Instead of quotienting $\OthStagMod$ by $\overline{\minsubmod}$,
we would now like to
quotient by the smaller submodule
$\uea \overline{\omega}_0 =
\uea \overline{X} X x \subset \overline{\minsubmod}$.
It is clear that $\overline{y}$ will become singular in the
quotient
if and only if $\overline{\omega}_j = 0$ for $j=1,2$. Of course,
we have the freedom of gauge transformations in choosing $\varpi$, so the
question should be whether $(\overline{\omega}_1, \overline{\omega}_2)$
is equivalent to $(0,0)$.  From this, we conclude that
$\overline{y} = \overline{X} y - \varpi$ will be singular in
$\OthStagMod / \uea \overline{\omega}_0$ (for some choice of $\varpi$)
if and only if the beta-invariant $\overline{\beta}$ of
$\func{\overline{\OthStagMod}}{\varpi}$ vanishes.  We remark that
this is equivalent to the vanishing of $\overline{\beta}$ for any
$\func{\overline{\OthStagMod}}{u}$,
$u \in \bigl( \VerMod_{\LDim} \bigr)_{\ell + \overline{\ell}}$,
by gauge invariance.

In general, $\nbrGamma$ and $\Rnull$ may be greater
than $1$ and there are a few possibilities among the submodules of $\msmbar$
that we might want to quotient out. 
We will analyse whether the submodules
$\uea \bigl( \overline{X}^{\eps} y - \varpi^{\eps} \bigr) \subset \OthStagMod$
contain the
\svs{} $X_{\rnkO+\overline{\rnkO}-1}^{\eps'} x \in \VerMod_{\LDim}$,
where $\eps' \in \set{-,+}$ parametrises the non-vanishing \svs{}
$X_{\rnkO+\overline{\rnkO}-1}^{\eps'} x \neq 0$ of $\LMod$.
As in the argument above, we
find that to each generating \sv{} of $\sJ$ and each rank
$\rnkO+\overline{\rnkO}-1$ \sv{} of $\LMod$, there is a
corresponding beta-invariant which must vanish. 
Specifically, given vectors
$\overline{y}^{\eps} = \overline{X}^{\eps} y - \varpi^{\eps}$
such that $\overline{\omega}^{\eps}_j = L_j \overline{y}^{\eps}
\in \overline{\minsubmod}$ and elements $\overline{\chi}^{\eps}_{\eps'}
\in \uea^-$ (singular and prime) such that
$X_{\rnkO+\overline{\rnkO}}^{\eps} =
\overline{\chi}^{\eps}_{\eps'} X_{\rnkO+\overline{\rnkO}-1}^{\eps'}$
and $X_{\rnkO+\overline{\rnkO}-1}^{\eps'} x \neq 0$ in $\LMod$, we
define $\nbrGamma \Rnull \in \set{0,1,2,4}$ beta-invariants by
\begin{equation} \label{eqnDefBetaBars}
\bigl( \overline{\chi}^{\eps}_{-} \bigr)^\dagger \overline{y}^{\eps} = \overline{\beta}^{\eps}_{-} X_{\rnkO+\overline{\rnkO}-1}^{-} x \qquad \text{and} \qquad \bigl( \overline{\chi}^{\eps}_{+} \bigr)^\dagger \overline{y}^{\eps} = \overline{\beta}^{\eps}_{+} X_{\rnkO+\overline{\rnkO}-1}^{+} x \pmod{\uea X^-_{\rnkO+\overline{\rnkO}-1} x}.
\end{equation}
These are the beta-invariants of the
$\func{\overline{\OthStagMod}^{\eps}}{\varpi^{\eps}}$,
and we may quotient $\OthStagMod$ to get a staggered
module $\check{\StagMod}$ with left module $\LMod$ and
singular vectors $\overline{y}^{\eps}$ if and only if
all of the $\overline{\beta}^{\eps}_{\eps'}$ vanish
(the easy proof of this is sketched below).
We have indicated which beta-invariants are relevant
to each critical rank case in \figref{figCritRank} for
convenience.  We further remark that we will suppress
the indices $\eps$ and $\eps'$ in cases where they take
a single value (as in case $(1,1)$ above).

\begin{theorem} \label{thm: general existence in the critical case}
Given $\Hlft$, $\Hrgt = \VerMod_{\RDim} / \sJ$ and
$(\omega_1,\omega_2) \in \data$ such that $\Hlft$ contains
non-zero \svs{} of rank $\rnkO+\overline{\rnkO}-1$, a
staggered module $\Stagg$ with these left and right modules
and data exists if and only if
$\overline{\beta}^{\eps}_{\eps'} = 0$ for all $\eps \in \set{-,+}$
such that $\overline{X}^{\eps} v_{h^\rgt} \in \sJ$ and all
$\eps' \in \set{-,+}$ such that
$X^{\eps'}_{\rnkO+\overline{\rnkO}-1} x \neq 0$.
\end{theorem}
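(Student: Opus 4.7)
The plan is to first reduce the existence question for $\Stagg$ to the existence of certain \svs{} in an auxiliary staggered module with Verma right module, and then to lift the analysis to the ambient module with both modules Verma so that the Projection Lemma and the invariant theory of \secDref{sec: dimensions}{sec: invariants} may be applied. Throughout, I will treat the case where $\sJ$ requires two generators $\overline{X}^{\pm} v_{\RDim}$; the single-generator case is the same argument with one less index.

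First, \propref{propRVerIsoClasses} produces a unique (up to isomorphism) staggered module $\check{\Stagg}$ with left module $\Hlft$, right module $\VerMod_{\RDim}$ and the given data $(\omega_1,\omega_2)$. By \propref{prop: singular vectors in staggered modules}, $\Stagg$ then exists if and only if $\check{\Stagg}$ contains, for each $\eps \in \set{-,+}$ with $\overline{X}^{\eps} v_{\RDim} \in \sJ$, a \sv{} of the form $\overline{y}^{\eps} = \overline{X}^{\eps} y - \varpi^{\eps}$ at grade $\ell + \overline{\ell}^{\eps}$ with $\varpi^{\eps} \in \Hlft_{\ell+\overline{\ell}^{\eps}}$. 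The remaining task is thus to characterise, in terms of the invariants of \eqnref{eqnDefBetaBars}, when such $\varpi^{\eps}$ can be chosen.

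Next, I will work inside the ambient module $\OthStagMod$ having both left and right modules Verma and the same beta-invariants as $\check{\Stagg}$; by \propref{prop: monotonicity left}, $\check{\Stagg} = \OthStagMod / \sJ^{\lft}$, where $\sJ^{\lft} \subseteq \VerMod_{\LDim}$ is the kernel of the canonical projection $\VerMod_{\LDim} \twoheadrightarrow \Hlft$. For any lift of $\varpi^{\eps}$ into $(\VerMod_{\LDim})_{\ell+\overline{\ell}^{\eps}}$, the submodule $\uea \bigl( \overline{X}^{\eps} y - \varpi^{\eps} \bigr) \subset \OthStagMod$ is itself staggered, with $\overline{y}^{\eps}$ playing the r\^{o}le of Jordan partner and data $(\overline{\omega}_1^{\eps}, \overline{\omega}_2^{\eps}) \in (\VerMod_{\LDim})_{m-1} \oplus (\VerMod_{\LDim})_{m-2}$, where $m = \ell + \overline{\ell}^{\eps}$. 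Since $\overline{X}^{\eps} \omega_0$ is a rank $\rnkO+\overline{\rnkO}$ \sv{} of $\VerMod_{\LDim}$ at grade $m$, the Projection Lemma (\lemref{lemProjection}) applies and lets me gauge-shift $\varpi^{\eps}$ so that $(\overline{\omega}_1^{\eps}, \overline{\omega}_2^{\eps}) \in \msmbar_{m-1} \oplus \msmbar_{m-2}$, with $\msmbar \subseteq \VerMod_{\LDim}$ the submodule generated by the rank $\rnkO+\overline{\rnkO}-1$ \svs{} $X^{\eps'}_{\rnkO+\overline{\rnkO}-1} v_{\LDim}$.

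The crux of the argument is then the following equivalence: the class $[\overline{y}^{\eps}] \in \check{\Stagg}$ is singular if and only if $L_j \overline{y}^{\eps} \in \sJ^{\lft}$ for $j=1,2$, if and only if the image of $(\overline{\omega}_1^{\eps}, \overline{\omega}_2^{\eps})$ under $\msmbar \twoheadrightarrow \msmbar/(\msmbar \cap \sJ^{\lft})$ vanishes. Taking into account the residual gauge freedom of shifting $\varpi^{\eps}$ by elements of $\sJ^{\lft}_{m}$, this last condition is equivalent to the projected data being gauge-equivalent to $(0,0)$ inside the image of $\msmbar$ in $\Hlft$, a submodule generated precisely by those $X^{\eps'}_{\rnkO+\overline{\rnkO}-1} x$ that are non-zero in $\Hlft$. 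By \thmref{thm: invariants}, applied to this projected staggered datum, gauge-equivalence to $(0,0)$ is detected exactly by the vanishing of all the associated beta-invariants, which a direct comparison of definitions identifies with the $\overline{\beta}^{\eps}_{\eps'}$ of \eqnref{eqnDefBetaBars}, restricted to those $\eps'$ for which $X^{\eps'}_{\rnkO+\overline{\rnkO}-1} x \neq 0$ in $\Hlft$; for the remaining $\eps'$ both sides of \eqnref{eqnDefBetaBars} are killed by the projection and no condition is imposed. Summing the contributions over $\eps$ with $\overline{X}^{\eps} v_{\RDim} \in \sJ$ gives the stated necessary and sufficient condition.

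The main technical obstacle I expect lies in the precise matching between the $\overline{\beta}^{\eps}_{\eps'}$, which are defined by pairings inside the Verma module $\VerMod_{\LDim}$, and the bona fide invariants of \thmref{thm: invariants} attached to the projected staggered datum in $\Hlft$. This matching requires tracking that the $\overline{\chi}^{\eps}_{\eps'}$ remain singular and prime after the projection, that the ``mod $\uea X^{-}_{\rnkO+\overline{\rnkO}-1} x$'' prescription in the definition of $\overline{\beta}^{\eps}_{+}$ descends coherently to $\Hlft$ (so that the indeterminacy in $\overline{\beta}^{\eps}_{+}$ disappears harmlessly after the projection), and careful bookkeeping in the braid-type subcases where both $\eps$ and $\eps'$ range over $\set{-,+}$ and the various prime factorisations of the rank $\rnkO+\overline{\rnkO}$ \svs{} must be handled simultaneously.
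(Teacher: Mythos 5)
Your proposal is correct and follows essentially the same route as the paper: reduce via \propref{prop: singular vectors in staggered modules} to finding the \svs{} $\overline{y}^{\eps}$ in $\check{\Stagg}$, lift to the module $\OthStagMod$ with Verma left module, apply the Projection Lemma to push the data $(\overline{\omega}_1^{\eps},\overline{\omega}_2^{\eps})$ into $\overline{\minsubmod}$, and then detect gauge-triviality of the data modulo the kernel of $\VerMod_{\LDim} \twoheadrightarrow \Hlft$ by the vanishing of the invariants $\overline{\beta}^{\eps}_{\eps'}$ via \thmref{thm: invariants}. The paper phrases the last step as a splitting $\overline{\omega}_j = \overline{\omega}_j^+ + \overline{\omega}_j^-$ with each component controlled by one invariant, whereas you pass to the quotient $\overline{\minsubmod}/(\overline{\minsubmod}\cap\sJ^{\lft})$; these are the same argument in different clothing.
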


\begin{proof}
In view of \propref{prop: singular vectors in staggered modules} and
the above, all that needs to be proven is that the vanishing of the
appropriate invariants $\overline{\beta}$ occurs precisely when the
$\overline{y}$ become non-vanishing \svs{} in the quotient
$\check{\StagMod} = \OthStagMod / \sK$ (recall that $\sK$ is then a
submodule of $\overline{\minsubmod}$). To lighten the notation, we
will omit superscript indices $\eps$.  It is understood that what
follows must be repeated separately for the $\Rnull$ values that
$\eps$ takes.

It is clear that $\overline{y}$ will be singular if and only if
both $\overline{\omega}_1$ and $\overline{\omega}_2$ belong to
$\sK$.  When $\sK$ is generated by \svs{} of grades $\ell + \ellbar$
or greater, for example when $\VerMod_{\LDim}$
is of chain type, this requires that the data vanishes
(this direction is always easy in fact).  Now, the data
can be \emph{chosen} to vanish using a gauge transformation if and
only if all beta-invariants $\overline{\beta}$ or $\overline{\beta}_\pm$
vanish, because vanishing data is admissible (\propref{propAdmissible}),
gauge transformations connect any two equivalent pieces of data
(\propref{prop: equivalence}) and beta-invariants completely determine
the isomorphism class (\thmref{thm: invariants}).  The proof is then
complete for such $\sK$.

However, $\sK \subset \overline{\minsubmod}$ may be
generated by \svs{} of lower grade than $\ell + \overline{\ell}$. 
To deal with this possibility, note that
\begin{equation}
\overline{\omega}_j \in \uea X^\pm_{\rnkO+\overline{\rnkO}-1} x \qquad \Rightarrow \qquad \overline{\beta}_\mp = 0.
\end{equation}
Indeed, this is just the analogue of (a part of) \eqnref{eqnWrongSubspace}
in the present situation, and it immediately implies
that if $\overline{y}$ becomes singular in the quotient $\OthStagMod/\sK$,
then the invariants $\betabar_{\eps'}$ vanish. Roughly speaking, the converse
is also true:  Split the data as $\omegabar_j = \omegabar^+_j + \omegabar_j^-$,
where $\overline{\omega}^\pm_j \in \uea X^\pm_{\rnkO+\overline{\rnkO}-1} x$.
From the arguments in \secref{sec: dimensions} and \secref{sec: invariants},
we can infer that the admissible $(\omegabar_1^\pm, \omegabar_2^\pm)$
modulo the gauge transformations $g_u$, 
$u \in (\uea X_{\rnkO+\rnkR-1}^\pm)_{\ell + \overline{\ell}}$,
form a one-dimensional vector space parametrised by $\betabar_\pm$.
Then, $\overline{\beta}_\pm = 0$ implies that we can \emph{choose}
$\omegabar_j^\pm = 0$ by a gauge transformation. It follows that the
``full data'' $\omegabar_j$ can be chosen to belong to
$\uea X^\mp_{\rnkO+\overline{\rnkO}-1} x$. When $\sK$ is generated by
$X_{\rnkO+\rnkR-1}^\mp x$, the vanishing of $\betabar_\pm$ therefore implies
that $\overline{y}$ becomes singular in the quotient $\OthStagMod/\sK$.  When $\sK$ is generated by \svs{} of grade $\rnkO+\overline{\rnkO}$, the vanishing of both the $\betabar_\pm$ implies the same.  This completes the proof.
\end{proof}

Determining when the beta-invariants $\overline{\beta}^{\eps}_{\eps'}$ of the staggered modules $\func{\overline{\OthStagMod}^{\eps}}{u}$ vanish is an explicit condition which can be checked in particular examples (see \exref{ex:-2,13,21} below).  To get more insight into this, we revisit the definitions of these beta-invariants using the forms given in \eqnDref{eqnDefBetaChain}{eqnDefBetaBraid}.  This allows us to set $u=0$ and write
\begin{equation}
\overline{\beta}^{\eps}_{\eps'} X_{\rnkO+\overline{\rnkO}-1}^{\eps'} x = \bigl( \overline{\chi}^{\eps}_{\eps'} \bigr)^\dagger \bigl( \wun - \overline{\mathbb{P}} \bigr) \overline{y}^{\eps} = \bigl( \overline{\chi}^{\eps}_{\eps'} \bigr)^\dagger \bigl( \wun - \overline{\mathbb{P}} \bigr) \overline{X}^{\eps} y,
\end{equation}
modulo $\uea X_{\rnkO+\overline{\rnkO}-1}^- x$ if $\eps' = +$, where $\wun - \overline{\mathbb{P}}$ denotes the net effect of the Projection Lemma (as in \secref{sec: invariants}).  Now, $\overline{X}^{\eps}$ is singular, and both $\bigl( \overline{\chi}^{\eps}_{\eps'} \bigr)^\dagger$ and $\overline{\mathbb{P}}$ have positive modes on the right of each of their terms (see \eqnDref{eqnDefPChain}{eqnDefPBraid}).  We may therefore write
\begin{equation}
\bigl( \overline{\chi}^{\eps}_{\eps'} \bigr)^\dagger \bigl( \wun - \overline{\mathbb{P}} \bigr) \overline{X}^{\eps} = U_0^{(\eps,\eps')} \brac{L_0 - \RDim} + U_1^{(\eps,\eps')} L_1 + U_2^{(\eps,\eps')} L_2
\end{equation}
for some $U_0^{(\eps,\eps')}, U_1^{(\eps,\eps')}, U_2^{(\eps,\eps')} \in \uea$, hence
\begin{equation} \label{eqnBetaBarAffLin}
\overline{\beta}^{\eps}_{\eps'} X_{\rnkO+\overline{\rnkO}-1}^{\eps'} x = U_0^{(\eps,\eps')} \omega_0 + U_1^{(\eps,\eps')} \omega_1 + U_2^{(\eps,\eps')} \omega_2.
\end{equation}
This expresses the $\overline{\beta}^{\eps}_{\eps'}$ as
\emph{affine-linear} functionals of the data $(\omega_1 , \omega_2)$
of $\OthStagMod$ (and thus also of $\check{\StagMod}$).  Finally,
applying a gauge transformation $g_u$ to $(\omega_1 , \omega_2)$
results in the left hand side of \eqnref{eqnBetaBarAffLin} changing
by
\begin{equation}
\bigl( \overline{\chi}^{\eps}_{\eps'} \bigr)^\dagger \bigl(
    \wun - \overline{\mathbb{P}} \bigr) \overline{X}^{\eps} u
    - U_0^{(\eps,\eps')} \brac{L_0 - \RDim} u = 0,
\end{equation}
since $u$ has conformal dimension $\RDim$ and
$\bigl( \wun - \overline{\mathbb{P}} \bigr) \overline{X}^{\eps} u
\in \overline{\minsubmod}_{\ell + \overline{\ell}}$.  This gauge
invariance then lets us conclude that the
$\overline{\beta}^{\eps}_{\eps'}$ are affine functions on
the space $\data/\gauge$ of isomorphism classes of staggered modules
$\check{\StagMod}$ with exact sequence (\ref{eqnSES6}).  Assuming that
$\ell > 0$, so that the beta-invariants $\beta$ or $\beta_\pm$ of
$\check{\StagMod}$ are defined, we can therefore consider the
$\overline{\beta}^{\eps}_{\eps'}$ as affine functions of $\beta$ or $\beta_\pm$.

\begin{example} \label{ex:-2,13,21}
We consider the existence of a $c=-2$ ($t=2$) staggered module $\StagMod$ with exact sequence
\begin{equation}
\dses{\VerMod_0 / \VerMod_3}{}{\StagMod}{}{\VerMod_1 / \VerMod_3}.
\end{equation}
We therefore have $X = L_{-1}$, $\rnkO=1$, $\overline{X} = \overline{\chi} = L_{-1}^2 - 2 L_{-2}$ and $\overline{\rnkO}=1$.  Since $\omega_0 = L_{-1} x$ has rank $\rnkO+\overline{\rnkO}-1 = 1$, this is a critical rank example.

By \thmref{thm: moduli space right Verma}, there is a one-dimensional space of staggered modules $\OthStagMod$ with left module $\VerMod_0$ and right module $\VerMod_1$, parametrised by $\beta$.  We must determine the beta-invariant $\overline{\beta}$ of the submodule $\func{\overline{\OthStagMod}}{0}$ generated by $x$ and $\overline{X} y$.  Referring to the calculation of \exref{ex:-2,V0,V3}, we have
\begin{align}
\overline{\beta} \omega_0 &= \overline{X}^\dagger \bigl( \wun - \overline{\mathbb{P}} \bigr) \overline{X} y = \bigl( L_1^2 - 2 L_2 \bigr) \bigl( \wun + \tfrac{1}{4} L_{-3} L_3 \bigr) \bigl( L_{-1}^2 - 2 L_{-2} \bigr) y \notag \\
&= \bigl( 8 L_{-1} L_0 L_1 - 15 L_{-1} L_1 + 4 \brac{2 L_0 + 1} \brac{L_0 - 1} \bigr) y = \brac{-15 \beta + 12} \omega_0.
\end{align}
The conclusion is then that $\StagMod$ exists by
\thmref{thm: general existence in the critical case} if and only if the
affine relation $\overline{\beta} = 12 - 15 \beta = 0$ holds, which requires
$\beta = \tfrac{4}{5}$.  This value is of course reproduced by searching for
an explicit singular vector of the form
$\overline{y} = \overline{X} y - \varpi$ with
$\varpi \in \VerMod_0 / \uea \overline{X} \omega_0 = \LMod$
(as in \secref{secStagModSVs}).
\end{example}

Consider a case $(1,1)$ staggered module $\StagMod$ of chain type (or $\rnkO=1$ braid type).
If $\ell > 0$ (so $\rnkO > 0$), then there is a single
invariant $\beta$ to consider.  By \thmref{thm: general existence in
the critical case}, $\StagMod$ exists if and only if a single invariant
$\overline{\beta}$ vanishes.  We have shown that the latter invariant is
an affine function of the former, so there are three possibilities:
\begin{itemize}
\item $\overline{\beta}$ is constant and zero, so $\StagMod$ exists for all $\beta$.
\item $\overline{\beta}$ is constant and non-zero, so $\StagMod$ does not exist for any $\beta$.
\item $\overline{\beta}$ is not constant, so $\StagMod$ exists for a unique $\beta$.
\end{itemize}
In the absence of any information to the contrary, we should expect
that the last possibility is overwhelmingly more likely to occur. 
And indeed, this is what we observe.  For instance, the staggered
modules of \exTref{ex:0,12,14}{ex:0,14,18}{ex:-2,13,21} all admit
only a single value of $\beta$.  We can now finally understand this
as the generic consequence of imposing one (linear, inhomogeneous)
relation, $\overline{\beta} = 0$, on one unknown, $\beta$.

More generally, we can use \eqnref{eqnBetaBarAffLin} to decompose the beta-invariants of the $\func{\overline{\OthStagMod}^{\eps}}{0}$ as
\begin{equation}
\func{\overline{\beta}^{\eps}_{\eps'}}{\omega_1 , \omega_2} = \func{\gamma^{\eps}_{\eps'}}{\omega_1 , \omega_2} + \func{\overline{\beta}^{\eps}_{\eps'}}{0,0}.
\end{equation}
This defines \emph{linear} functionals $\gamma^{\eps}_{\eps'}$ on the
space of data of the $\OthStagMod$ (and $\check{\StagMod}$).  Let
$\nbrBeta \in \set{0,1,2}$ denote the number of beta-invariants
needed to describe the $\OthStagMod$.  Assuming that the
$\gamma^{\eps}_{\eps'}$ are all linearly
independent,\footnote{Unfortunately, demonstrating this
linear independence (in particular, the non-vanishing)
seems to require a significantly more delicate analysis
than that presented in the proof of \thmref{thm: invariants}.
We hope to return to this issue in a future publication.} we
therefore obtain $\nbrGamma \Rnull$ linear relations in
$\nbrBeta$ unknowns.  Analysing these numbers in each
case then leads to simple expectations for the dimension
of the space of staggered modules $\StagMod$.

More specifically, when the left and right modules are of chain type,
$\nbrBeta$ is $0$ or $1$, depending on whether $\rnkO = 0$ or not. 
In the braid case, $\nbrBeta$ is $0$, $1$ or $2$, depending on whether
$\rnkO = 0$, $\rnkO = 1$ or $\rnkO > 1$ (this is a direct restatement
of \thmref{thm: moduli space right Verma}).  We should therefore
\emph{expect} that the staggered modules $\StagMod$ corresponding
to the critical rank configurations of \figref{figCritRank} will
exist in case $(1,1)$ provided that $\rnkO > 0$ and cases $(1,2)$
and $(2,1)$ provided that $\rnkO > 1$.  We should not expect the
$\StagMod$ to exist otherwise.  Moreover, we expect that when
$\StagMod$ exists, it is unique, except in case $(1,1)$ with
braid type and $\rnkO > 1$, in which case we expect a one-parameter family of
staggered modules.

\begin{example} \label{ex:CritRank}
It is easy to investigate examples of critical rank staggered modules
using the \sv{} result of \propref{prop: singular vectors in staggered
modules}.  For example, we know from \exref{ex: comparison of similar
cases} that a $c=0$ ($t=\tfrac{3}{2}$) staggered module with
$\LMod = \VerMod_0 / \VerMod_2$ and $\RMod = \VerMod_1 / \VerMod_5$ is
unique, admitting only $\beta = -\tfrac{1}{2}$.  Similarly, replacing
the right module by $\VerMod_1 / \VerMod_7$ leads to a unique staggered
module with $\beta = \tfrac{1}{3}$.  These beta-invariants
are the unique solutions of $\betabar = -\frac{1120}{3} (2 \beta + 1) = 0$
and $\betabar = -\frac{17248000}{243} (3 \beta - 1) = 0$ respectively.
Moreover, both of these examples fall into case $(1,1)$, but we may
deduce from their uniqueness (which was anticipated above)
that the case $(1,2)$ staggered module $\StagMod$ corresponding to replacing
the right module by $\VerMod_1 / \brac{\VerMod_5 + \VerMod_7}$
\emph{does not exist}:  The associated $\check{\StagMod}$ would have to
have \svs{} at grades $5$ and $7$, requiring both
$\beta = -\tfrac{1}{2}$ \emph{and} $\beta = \tfrac{1}{3}$.

For case $(2,1)$ examples, we take $\LMod = \VerMod_0 / \brac{\VerMod_5 + \VerMod_7}$ and $\RMod = \VerMod_h / \VerMod_{h'}$ for $h = 1,2$ and $h' = 5,7$ (and $c=0$).  In all these cases, $\rnkO = 1$, so we do not expect that such staggered modules exist.  And one can explicitly check in each case that the appropriate \sv{} does not exist, confirming our expectations.  It is more interesting to consider the $\rnkO = 2$ examples with $\LMod = \VerMod_0 / \brac{\VerMod_{12} + \VerMod_{15}}$ and $\RMod = \VerMod_5 / \VerMod_h$ for $h = 12$ and $15$.  The \svs{} turn out to exist if and only if
\begin{equation}
\beta_- = -\tfrac{11200}{51}, \ \beta_+ = \tfrac{1680}{17} \qquad \text{and} \qquad \beta_- = -\tfrac{5600}{57}, \ \beta_+ = \tfrac{3360}{19},
\end{equation}
respectively, in line with expectations.  Finally, if we replace the right module by $\VerMod_5 / \brac{\VerMod_{12} + \VerMod_{15}}$ to get a case $(2,2)$ example, we see from the different $\beta_\pm$ above that this staggered module cannot exist, again as anticipated.

Our last example illustrates case $(1,1)$ with $\rnkO > 1$.  We search for a $c=0$ staggered module $\StagMod$ with $\LMod = \VerMod_0 / \VerMod_7$ and $\RMod = \VerMod_5 / \VerMod_{12}$, hence $\rnkO = 2$.  The corresponding $\check{\StagMod}$ turns out to have a \sv{} at grade $12$ provided that
\begin{equation}
189 \beta_- + 80 \beta_+ = -3360.
\end{equation}
It follows that there exists a one-parameter family of such staggered modules $\StagMod$, just as we expect.
\end{example}

The above examples completely support our na\"{\i}ve expectations concerning the dimensions of the spaces of critical rank staggered modules.  However, things are never quite as simple as one might like.

\begin{example} \label{ex:Drat}
Let $c=1$ ($t=1$) and $\LMod = \RMod = \VerMod_{1/4} / \VerMod_{9/4}$.  These are chain type modules with $\ell = 0$, so the corresponding staggered module $\StagMod$ would be a case $(1,1)$ critical rank example with $\rnkO = 0$.  With no $\beta$ but one $\overline{\beta}$, we should not expect that such an $\StagMod$ exists.  Nevertheless, it is easy to check that the vector
\begin{equation}
\brac{L_{-1}^2 - L_{-2}} y -\tfrac{4}{3} L_{-2} x \in \check{\StagMod}
\end{equation}
is singular.  By \propref{prop: singular vectors in staggered modules}, a staggered module with this left and right module \emph{does} therefore exist, contrary to our expectations.
\end{example}

\begin{example} \label{ex:GenDrat}
We can readily generalise the realisation of \exref{ex:Drat} for other
$\ell = 0$ examples.  Let $t$ be arbitrary but let $h = h_{r,s}$,
$r,s \in \ZZ_+$, vary with $t$ as in \eqnref{eqnKacDims}.  Then,
$\overline{X} \in \uea^-_{rs}$ also varies with $t$, though it need
not remain prime (that is, $\overline{\chi} = \overline{X}$ for
generic $t$ only).  We may therefore methodically investigate the
existence of staggered modules with
$\LMod = \RMod = \VerMod_{h} / \VerMod_{h+rs}$ by computing
\begin{equation}
\overline{\beta}' x = \overline{X}^\dagger \overline{X} y \in \OthStagMod
\end{equation}
for small $r$ and $s$ (because $\ell = 0$, there is no
$\overline{\mathbb{P}}$).  Clearly $\overline{\beta}'$ need not
coincide with the true invariant $\overline{\beta}$ if $\overline{X}$
is composite.  Some results are
(note that swapping $r$ and $s$ amounts to inverting $t$):

\begin{table}[!ht]
\begin{center}
\setlength{\extrarowheight}{3pt}
\begin{tabular}{C|C|C}
(r,s) & \overline{\beta}' & \overline{\beta} = 0 \\
\hline \hline
(1,1) & 2 & \text{--} \\
\hline
(2,1) & 4 \brac{t^2-1} & t=\pm1 \\
\hline
(3,1) & 24 \brac{t^2-1} \brac{4t^2-1} & t=\pm1 \\
\hline
(4,1) & 288 \brac{t^2-1} \brac{4t^2-1} \brac{9t^2-1} & t=\pm1,\ \pm\tfrac{1}{2} \\
\hline
(5,1) & 5760 \brac{t^2-1} \brac{4t^2-1} \brac{9t^2-1} \brac{16t^2-1} & t=\pm1 \\
\hline
(6,1) & 172800 \brac{t^2-1} \brac{4t^2-1} \brac{9t^2-1} \brac{16t^2-1} \brac{25t^2-1} & t=\pm1,\ \pm\tfrac{1}{2},\ \pm\tfrac{1}{3} \\
\hline \hline
(2,2) & -8t^{-4} \brac{t^2-1}^2 \brac{t^2-4} \brac{4t^2-1} & t=\pm\tfrac{1}{2},\ \pm2 \\
\hline
(3,2) & -192t^{-6} \brac{t^2-1}^3 \brac{t^2-4} \brac{4t^2-1}^2 \brac{9t^2-1} & t=\pm\tfrac{1}{3},\ \pm2 \\
\end{tabular}
\end{center}
\end{table}

\noindent Here, we list those $t$ for which $\overline{\beta}'$ vanishes
\emph{and} for which this vanishing implies the vanishing of
$\overline{\beta}$ (which requires $\overline{X}$ to be prime), hence the
existence of a staggered module with
$\LMod = \RMod = \VerMod_h / \VerMod_{h+rs}$.  This sequence of examples
makes it clear that given $r$ and $s$, staggered modules of this kind can
certainly exist.
\end{example}

In the $\ell=0$ case discussed above, the invariants $\overline{\beta}$ are evidently constants. As we have seen, their vanishing is nevertheless a subtle question. However, continuing the analysis of \exref{ex:GenDrat} leads to a clear pattern for the existence question in this case, and in fact, this question was already solved explicitly (for chain type modules) by Rohsiepe in \cite{RohRed96}. His argument extends to any staggered module for which $\beta=0$ or $(\beta^-,\beta^+) = (0,0)$ and $\overline{X}$ is prime ($\overline{\rnkO}=1$), and we outline it below.  Note that this is always a critical rank case.

\begin{proposition} \label{propRohsiepe}
Suppose that $\check{\Stagg}$ is a staggered module with left module $\Hlft$,
right module $\Verma_{h^\rgt}$ and all beta-invariants vanishing
(if any are defined).
Suppose further that the prime \sv{} $\overline{X} v_{h^\rgt}$
of \emph{smallest}
grade $\overline{\ell}$ is such that $\overline{X} \omega_0 = 0$.
Then there exists a singular vector in $\check{\Stagg}$ at grade
$\ell+\overline{\ell}$ if and only if $h^\rgt = h_{r,s}$ with
$t = \tfrac{q}{p} \in \QQ$ (where $\gcd \set{p,q} = 1$), $p \mid r$, $q \mid s$ and $\abs{p} s \neq \abs{q} r$.
\end{proposition}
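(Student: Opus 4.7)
The plan is to apply \thmref{thm: general existence in the critical case}, which in this setting reduces the existence of a singular vector $\overline{y} = \overline{X} y - \varpi$ of grade $\ell + \overline{\ell}$ in $\check{\Stagg}$ to the vanishing of certain second-generation invariants $\overline{\beta}^{\eps}_{\eps'}$ associated to the submodules $\overline{\OthStagMod}^{\eps}$ of the Verma-left staggered module $\OthStagMod$ (the one with left module $\Verma_{h^\lft}$ rather than $\Hlft$). Since $\overline{X}$ is prime of smallest grade, $\overline{\rnkO} = 1$ and there is at most one index $\eps$ to consider; the indices $\eps'$ parametrise the non-vanishing rank-$\rnkO$ \svs{} of $\Hlft$, namely $\omega_0$ and, in the braid case, possibly a second companion \sv{}.

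Next, I would exploit the affine-linear representation \eqnref{eqnBetaBarAffLin}: each $\overline{\beta}^{\eps}_{\eps'}$ is an affine function of the data $(\omega_1, \omega_2)$, and gauge invariance forces it to descend to an affine function of the beta-invariants $\beta$ (or $\beta_\pm$) of $\check{\Stagg}$. Under the hypothesis that all the latter vanish, $\overline{\beta}^{\eps}_{\eps'}$ reduces to its constant term. Using \eqnDref{eqnDefBetaChain}{eqnDefBetaBraid}, this constant admits a closed-form expression as a Shapovalov-type pairing evaluated entirely inside $\Verma_{h^\lft}$: since the vanishing data kills all contributions from $L_n y$ with $n > 0$, what remains is essentially the coefficient of the normalised rank-$(\rnkO+1)$ \sv{} in the expansion of $\overline{X}\, X x$ (or its $X^{\pm}$-variants) relative to the Feigin-Fuchs basis of $\Verma_{h^\lft}$. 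By the hypothesis $\overline{X} \omega_0 = 0$, this element $\overline{X}\, X x$ is itself singular at grade $\ell + \overline{\ell}$.

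The heart of the proof is then the identification of when this constant vanishes. The idea is that non-vanishing corresponds to $\overline{X}\, X v_{h^\lft}$ generating as large a submodule as possible relative to the deeper \sv{} structure; vanishing forces the presence of a genuinely new singular vector of $\Verma_{h^\lft}$ at grade $\ell + \overline{\ell}$ that is not a descendant of $\omega_0$. By the explicit dimension formula \eqnref{eqnKacDims} and the Feigin-Fuchs classification from \secref{secNotation}, such an additional \sv{} appears precisely when $h^\rgt = h_{r,s}$ with $t = q/p \in \QQ$, $p \mid r$ and $q \mid s$ (so that $\Verma_{h^\rgt}$ admits \emph{both} reflections $r \leftrightarrow r - 2p$ and $s \leftrightarrow s - 2q$ preserving the Kac weight), while the inequality $|p|s \neq |q|r$ excludes the degenerate coincidence where the new \sv{} would collapse onto $\overline{X}\, X v_{h^\lft}$ itself. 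The main obstacle I anticipate is the case-by-case bookkeeping: the four critical-rank configurations of \figref{figCritRank}, combined with the chain/braid dichotomy for $\Verma_{h^\lft}$, produce several subcases whose Kac-table combinatorics must each be verified to yield the same arithmetic criterion; and in the braid cases one must check that potential second $\eps'$ indices do not impose extra conditions on $h^\rgt$ beyond those stated, a check that ultimately leans on the assumption $\overline{X} \omega_0 = 0$ and the primeness of $\overline{X}$.
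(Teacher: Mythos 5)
Your reduction steps are sound and agree with the paper: by \propref{prop: singular vectors in staggered modules} (or \thmref{thm: general existence in the critical case}) the question is whether the relevant $\overline{\beta}$ vanishes, and since all beta-invariants of $\check{\Stagg}$ vanish one may gauge-fix $\omega_1 = \omega_2 = 0$, so that $\overline{\beta}$ reduces to its constant term and no projection $\overline{\mathbb{P}}$ is needed. The gap is precisely at what you call the heart of the proof. Your proposed criterion --- that vanishing ``forces the presence of a genuinely new singular vector of $\Verma_{h^\lft}$ at grade $\ell+\overline{\ell}$ that is not a descendant of $\omega_0$'' --- cannot be the right mechanism: a Virasoro Verma module has at most one singular vector (up to scalars) at any given grade, and $\overline{X} X v_{h^\lft}$ already occupies grade $\ell+\overline{\ell}$ and \emph{is} a descendant of $\omega_0$, so the object you require never exists. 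Likewise ``the coefficient of the rank-$(\rnkO+1)$ \sv{} in the expansion of $\overline{X} X x$'' is vacuous, since $\overline{X}\omega_0 = 0$ by hypothesis. As written, the arithmetic conditions $p \mid r$, $q \mid s$, $\abs{p}s \neq \abs{q}r$ are asserted rather than derived, and the locus of the condition is misplaced: it concerns the Kac determinant of $\Verma_{h}$ at grade $\overline{\ell}$ near $h = h^\rgt$, not extra singular vectors of $\Verma_{h^\lft}$ at grade $\ell+\overline{\ell}$.

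The missing computation is the following (it is the actual content of the paper's proof, after Rohsiepe \cite{RohRed96}). Write $\overline{X}^\dagger \overline{X} = U_0 \brac{L_0 - h^\rgt} + U_1 L_1 + U_2 L_2$ and note that, after \PBW-ordering, one may take $U_0 = \func{f}{L_0}$ for a polynomial $f$ because $\overline{X}^\dagger \overline{X} \in \uea_0$. With $\omega_1 = \omega_2 = 0$ this gives $\overline{\beta}\,\omega_0 = \func{f}{h^\rgt}\,\omega_0$, while applying the same identity to the \hws{} of a generic $\Verma_h$ gives $\overline{X}^\dagger \overline{X} v_h = \func{f}{h}\brac{h - h^\rgt} v_h$. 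Comparing with the Kac determinant (\ref{eqnKac}) at grade $\overline{\ell} = rs$ (the generalisation of \cite[Lemma~6.2]{RohRed96}), one finds $\overline{\beta} = 0$ if and only if that determinant has a \emph{double} zero at $h = h^\rgt$, i.e.\ if and only if there is a second pair $(r',s') \neq (r,s)$ with $r's' = rs$ and $h_{r',s'} = h^\rgt$; such a pair must be $(r',s') = (\abs{t}^{-1}s, \abs{t}r)$, and integrality plus distinctness yield exactly $p \mid r$, $q \mid s$ and $\abs{p}s \neq \abs{q}r$. Finally, these conditions force the chain case, so there is only one $\overline{\beta}$ and its vanishing is also sufficient --- the case-by-case bookkeeping over the configurations of \figref{figCritRank} that you anticipated largely evaporates.
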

\begin{proof}
We will prove the existence of the singular vector by
demonstrating the vanishing of $\betabar$ or $\betabar_\pm$.  
We immediately remark that the
assumption of $\ellbar$ being the smallest grade of a prime singular
means that $\ellbar = rs$ for a pair $(r,s) \in \bZ_+ \times \bZ_+$
that satisfies $h_{r,s}=h^\rgt$ with minimal product $rs$.

Since any invariants of $\check{\Stagg}$ vanish, we may choose
$y \in \check{\Stagg}$ such that $\omega_1 = \omega_2 = 0$,
by \propDref{prop: equivalence}{propAdmissible} and \thmref{thm: invariants}.
Writing
$L_j \overline{X} = V_0 (L_0-h^\rgt) + V_1 L_1 + V_2 L_2$, we notice that
with this choice,
$L_j \overline{X} y = V_0 \omega_0 \in \overline{\minsubmod}$,
so we need no projections to define the
$\overline{\beta}_{\pm}$.
Now, one of these invariants is given (perhaps
modulo $\uea X_\rnkO^- x$) by
\begin{equation}
\overline{\beta} \omega_0 = \overline{X}^\dagger \overline{X} y
= U_0 \brac{L_0 - h^R} y,
\end{equation}
where we have written
$\overline{X}^\dagger \overline{X} = U_0 (L_0-h^\rgt) + U_1 L_1 + U_2 L_2$
as usual.  But by \PBW-ordering appropriately, we may choose
$U_0 = \func{f}{L_0}$ for some polynomial $f$, since
$\overline{X}^\dagger \overline{X} \in \uea_0$.  We therefore obtain
\begin{equation}
\overline{\beta} \omega_0 = \func{f}{L_0} \brac{L_0 - h^R} y = \func{f}{\RDim} \omega_0.
\end{equation}
The vanishing of $\overline{\beta}$ is therefore equivalent to $\RDim$
being a zero of $f$, hence a double zero of $\func{f}{h} \brac{h - h^R}$.

Consider now the \hws{} $v_h \in \VerMod_h$.  We have
\begin{equation}
\overline{X}^\dagger \overline{X} v_h = U_0 (L_0-h^\rgt) v_h
    = \func{f}{h} \brac{h - h^R} v_h.
\end{equation}
By extending $\set{\overline{X} v_h}$ to a basis of
$\brac{\VerMod_h}_{\overline{\ell}}$, it is possible to show
that $\overline{\beta} = 0$ if and only if the Kac determinant
(\eqnref{eqnKac}) of $\Verma_{h}$ at grade $\overline{\ell} = rs$
possesses a double zero at $h=h^\rgt$
(this is an innocent generalisation of the statement of
\cite[Lemma~ 6.2]{RohRed96} --- its proof needs no changes).
No $(r', s')$ with $h_{r',s'}=h^\rgt$ has
$r's'<\ellbar=rs$, so the double zero can only occur if there is another
such pair $(r',s') \neq (r,s)$ with $r's'=rs$. Such a second distinct pair
is easily verified to have the form $(r',s') = (\abs{t}^{-1} s , \abs{t} r)$, and integrality and distinctness yield the conditions given in the statement of the proposition.

These conditions are equivalent to the
vanishing of \emph{this} $\overline{\beta}$.  But, they also imply
that $\RMod$ and $\LMod$ are of chain type.  Hence this
is the only $\betabar$ and its vanishing is actually sufficient for the existence of the \sv{}.  This completes the proof.
\end{proof}

The restriction that $\overline{X} v_{h^\rgt}$ have minimal (positive) grade
is not serious, but Rohsiepe's argument requires some refining in this case.
Essentially, if $\VerMod_{\RDim}$ is of braid type with
$\overline{X} = X_1^+$, we generalise \cite[Lemma~ 6.2]{RohRed96} to
conclude that $\overline{\beta} = 0$ is equivalent to the Kac determinant
of $\Verma_{h}$ at grade $\overline{\ell} = \overline{\ell}_1^+$ having a
zero at $h = \RMod$ of order
$\tpartnum{\overline{\ell}_1^+ - \overline{\ell}_1^-} + 2$ (or greater). 
However, coupling the explicit form of the Kac determinant formula with
the conclusion of \propref{propRohsiepe} for $\overline{X} = X_1^-$, we
can deduce that the order of this zero is precisely
$\tpartnum{\overline{\ell}_1^+ - \overline{\ell}_1^-} + 1$.  Thus,
$\overline{\beta}$ cannot vanish.

This solves the existence question for staggered modules $\StagMod$
with no non-vanishing beta-invariants, $\overline{X} \omega_0 = 0$ and
$\RMod = \VerMod_h / \uea \overline{X} v_h$, where $\overline{X}$ is
prime:  They exist if and only if $h = h_{\lambda \abs{p} , \mu \abs{q}}$
for some $\lambda , \mu \in \ZZ_+$, where $t = \tfrac{q}{p} \in \QQ$
and $\lambda \neq \mu$.  In particular, the left and right modules must
be of chain type.  One can also deduce existence for general $\overline{X}$,
assuming existence when $\overline{X}$ is prime, by inductively applying
\propref{propRohsiepe} to certain submodules of (quotient modules of) the
corresponding $\OthStagMod$.  However, deducing general non-existence
from non-existence when $\overline{X}$ is prime requires far more
intricate extensions of Rohsiepe's argument.  Such arguments could complete the analysis in some further special cases, but the details
are not in the spirit of what we have achieved here, so we
will not elaborate any further upon them.

As mentioned before, the existence of these $\ell = 0$ critical rank
staggered modules is certainly not in line with our na\"{\i}ve
expectations based on counting constraints and unknowns.  However,
viewed in the light of \exref{ex:GenDrat}, we can conclude that these
counterexamples to our expectations are in fact quite rare --- given
$\overline{\ell}$, then in all the continuum of values of $t$ there
are only finitely many for which such staggered modules exist.

Of course, we should contrast this with the critical rank cases not
covered by \propref{propRohsiepe}.  In these cases, whilst we have
not been able to rule out counterexamples to our expectations, we
know of none!  We would like to offer a speculative argument
suggesting why this is so.  Recall that the analysis of the cases
covered by \propref{propRohsiepe} is simplified by not requiring
the $\overline{\mathbb{P}}$ when defining the $\overline{\beta}$. 
Structurally, we only need consider \emph{two} \svs{}, $\omega_0$
(which may as well be $x$ in the analysis) and $\overline{X} v_{\RDim}$
(which is prime), in our calculations.  The key observation which we
exploited in \exref{ex:GenDrat} was that such a configuration of two
\svs{} can be continuously deformed for all $t$.  The result was
(modulo issues of $\overline{X}$ remaining prime) an expression for
$\overline{\beta}$ as a \emph{polynomial} in $t$ and $t^{-1}$.  Given
this, it is no surprise that this polynomial will vanish for some
values of $t$.  In other words, because each $\overline{\beta}$
corresponds to a configuration of only two \svs{}, we should expect
that our na\"{\i}ve counting arguments will fail from time to time.

In contrast, the more general critical rank cases require the
consideration of at least three \svs{}.  Such configurations cannot
be deformed smoothly --- varying $t$ without at least one of \svs{} disappearing is impossible.  There is therefore very little to be gained from trying
to express the $\overline{\beta}$ as polynomials in $t$ and $t^{-1}$
because the result will not correspond to a meaningful beta-invariant
for almost all $t$.  For this reason, we suspect that counterexamples
to our na\"{\i}ve expectations of this more general type must be
\emph{significantly rarer} than those guaranteed
by \propref{propRohsiepe}.  Indeed, one might even be tempted to
conjecture that there are in fact no counterexamples beyond those
which we have described above.  Evidently, more work is necessary
to further understand this important situation.

\section{Summary of Results} \label{sec: conclusions}

In the preceding sections, we have answered our main question --- that of the characterisation and classification of staggered modules --- in an expository yet detailed manner.  We fully expect that the formalism developed throughout the course of this study will be invaluable when faced with further questions concerning these kinds of indecomposable modules and their generalisations.  Moreover, we have tried throughout to illustrate with examples how such questions arise in concrete practical studies \emph{and can be answered}.

The details should nevertheless not prevent us from presenting the reasonably simple answer that we have obtained to the original question.  The results may be presented in purely structural terms, as one would hope, and we are finally in a position to summarise what we have shown.

\begin{theorem} \label{thmTheAnswer}
Given two \hwms{} $\Hlft$ and $\Hrgt$ of central charge $c$ and highest weights $h^\lft$ and $h^\rgt$ respectively, the space $\mathbb{S}$ of isomorphism classes of staggered modules $\StagMod$ with exact sequence
\begin{equation*}
\dses{\LMod}{\iota}{\StagMod}{\pi}{\RMod}
\end{equation*}
is described as follows.  Let:
\begin{itemize}
\item $\ell = h^\rgt - h^\lft$ be the grade of a \sv{} $\omega_0 \in \Hlft$.
\item $\rnkO$ be the rank of $\omega_0$, if $\omega_0 \neq 0$.
\item $\Rnull \in \set{0,1,2}$ be the number of generating \svs{} $\overline{X}^\eps v_{h^\rgt}$ of $\sJ$, where $\Hrgt = \Verma_{h^\rgt}/\sJ$.
\item $\overline{\rnkO}$ be the rank of the $\overline{X}^\eps v_{h^\rgt}$, if $\Rnull > 0$.
\item $\nbrBeta \in \set{0,1,2}$ be the number of (non-zero) rank $\rnkO-1$ \svs{} in $\LMod$.
\item $\nbrGamma \in \set{0,1,2}$ be the number of (non-zero) rank
$\rnkO+\overline{\rnkO}-1$ singular vectors in $\Hlft$, if $\Rnull > 0$.
\end{itemize}
Then:
\begin{itemize}
\item There exists no such $\Stagg$ unless $\omega_0 \neq 0$ (requiring $\ell$ to be a non-negative integer).
\item There exists no such $\Stagg$ unless each $\overline{X}^\eps \omega_0 = 0$.
\end{itemize}
Assuming these necessary conditions are met, we have:
\begin{itemize}
\item When $\Rnull = 0$ or $\Rnull > 0$ but $\nbrGamma = 0$,
$\mathbb{S}$ is a vector space $\data / \gauge$ of dimension $\nbrBeta$.
When non-trivial, this vector space is parametrised by the beta-invariants
of \secref{sec: invariants}.
\item In general, $\mathbb{S}$ is an affine subspace of $\data / \gauge$
characterised by the vanishing of the $\nbrGamma \Rnull$ auxiliary
beta-invariants of \secref{secCritRank}.
\end{itemize}
\end{theorem}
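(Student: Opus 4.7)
The plan is to assemble this final statement from the main results of \secref{secStag} through \secref{sec: general case}, since each bullet has already been extracted as a theorem or proposition earlier in the paper. The proof will therefore be essentially expository, matching notation and stitching pieces together.

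First I would dispatch the two necessary conditions. \propref{propSVw0} together with the corollary immediately following it establishes that any staggered module forces $\omega_0 \in \LMod$ to be a non-zero \sv{}, so in particular $\ell = h^\rgt - h^\lft$ is a non-negative integer. \propref{propVanSVs} then supplies the compatibility condition $\overline{X}^\eps \omega_0 = 0$ whenever $\sJ \neq 0$. From this point on we may assume these conditions hold.

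Next I would handle the case $\Rnull = 0$, i.e.\ $\Hrgt = \Verma_{h^\rgt}$. \propref{propRVerIsoClasses} identifies $\mathbb{S}$ with the vector space $\data / \gauge$, \thmref{thm: moduli space right Verma} computes its dimension as the number $\nbrBeta$ of rank $\rnkO - 1$ \svs{} in $\LMod$, and \thmref{thm: invariants} furnishes explicit coordinates via the beta-invariants of \secref{sec: invariants}. For $\Rnull > 0$ but $\nbrGamma = 0$, I would appeal to \propref{prop: general non-critical}, which states precisely that the absence of rank $\rnkO + \overline{\rnkO} - 1$ \svs{} in $\Hlft$ forces $\mathbb{S}$ to have the same dimension as the space of the corresponding $\check{\Stagg}$ with right module replaced by $\Verma_{h^\rgt}$. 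Since the data, and therefore the beta-invariants, of $\Stagg$ and $\check{\Stagg}$ coincide (as recorded at the start of \secref{sec: general case}), the parametrisation by $\nbrBeta$ beta-invariants transfers, and the corollary following that observation confirms that distinct beta-invariants yield non-isomorphic staggered modules.

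Finally, the general critical-rank case ($\nbrGamma, \Rnull > 0$) is handled by \thmref{thm: general existence in the critical case}, which asserts that for admissible data $(\omega_1, \omega_2) \in \data$ the staggered module $\Stagg$ exists if and only if all $\nbrGamma \Rnull$ auxiliary invariants $\overline{\beta}^\eps_{\eps'}$ vanish. The reformulation in \eqnref{eqnBetaBarAffLin} exhibits each $\overline{\beta}^\eps_{\eps'}$ as an affine-linear functional of the data modulo gauge, so their joint vanishing locus is an affine subspace of $\data / \gauge$, as claimed. The main obstacle in the write-up will not be any new calculation, but rather the bookkeeping needed to treat cases (1), (1'), (2), (2') of \secref{sec: allowed parameters} uniformly and to correctly tally $\nbrBeta$ and $\nbrGamma$ across the chain/braid dichotomy; the deeper question of when this affine subspace is actually non-empty is explicitly left open, as discussed in \secref{secCritRank}, so the word ``(possibly empty)'' is the right hedge in the statement.
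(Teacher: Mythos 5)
Your proposal is correct and follows essentially the same route as the paper, which offers no separate proof of this theorem but presents it precisely as the collation of \propref{propSVw0}, \propref{propVanSVs}, \propref{propRVerIsoClasses}, \thmref{thm: moduli space right Verma}, \thmref{thm: invariants}, \propref{prop: general non-critical} and \thmref{thm: general existence in the critical case} together with the affine-linearity observation of \eqnref{eqnBetaBarAffLin}. The only quibble is your closing remark about the hedge ``(possibly empty)'', which appears in the abstract rather than in the theorem statement itself.
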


\thmref{thmTheAnswer} gives a complete description of the space
$\mathbb{S}$, hence a complete classification of staggered modules, when
$\Rnull = 0$ or $\Rnull > 0$ and $\nbrGamma = 0$.  In the few remaining cases
in which $\Rnull > 0$ and $\nbrGamma > 0$ (the critical rank cases of
\secref{secCritRank}), our classification is not complete.  For these
cases, pictured in \figref{figCritRank}, we can however say that if
$\nbrBeta = 0$ (or all the beta-invariants of \secref{sec: invariants}
vanish), then the nature of $\mathbb{S}$ is determined by
\propref{propRohsiepe} and its simple consequences.  Otherwise, we expect
(based on some speculative reasoning and an extensive study of examples)
that the dimension of $\mathbb{S}$ is given by
\begin{equation}
\dim \mathbb{S} = \nbrBeta - \nbrGamma \Rnull,
\end{equation}
where negative dimensions indicate that $\mathbb{S}$ is empty.  We hope to report on the validity of this expectation in the future.


\begin{thebibliography}{10}

\bibitem{KacCon79}
V~Kac.
\newblock {Contravariant Form for Infinite Dimensional Algebras and
  Superalgebras}.
\newblock {\em Lect. Notes in Phys.}, 94:441--445, 1979.

\bibitem{FeiSke82}
B~Feigin and D~Fuchs.
\newblock {Skew-Symmetric Differential Operators on the Line and Verma Modules
  over the Virasoro Algebra}.
\newblock {\em Func. Anal. and Appl.}, 16:114--126, 1982.

\bibitem{BelInf84}
A~Belavin, A~Polyakov, and A~Zamolodchikov.
\newblock {Infinite Conformal Symmetry in Two-Dimensional Quantum Field
  Theory}.
\newblock {\em Nucl. Phys.}, B241:333--380, 1984.

\bibitem{RozQua92}
L~Rozansky and H~Saleur.
\newblock {Quantum Field Theory for the Multivariable Alexander-Conway
  Polynomial}.
\newblock {\em Nucl. Phys.}, B376:461--509, 1992.

\bibitem{GurLog93}
V~Gurarie.
\newblock {Logarithmic Operators in Conformal Field Theory}.
\newblock {\em Nucl. Phys.}, B410:535--549, 1993.
\newblock arXiv:hep-th/9303160.

\bibitem{FloBit03}
M~Flohr.
\newblock {Bits and Pieces in Logarithmic Conformal Field Theory}.
\newblock {\em Int. J. Mod. Phys.}, A18:4497--4592, 2003.
\newblock arXiv:hep-th/0111228.

\bibitem{FjeLog02}
J~Fjelstad, J~Fuchs, S~Hwang, A~Semikhatov, and I~Yu Tipunin.
\newblock {Logarithmic Conformal Field Theories via Logarithmic Deformations}.
\newblock {\em Nucl. Phys.}, B633:379--413, 2002.
\newblock arXiv:hep-th/0201091.

\bibitem{FeiLog06}
B~Feigin, A~Gainutdinov, A~Semikhatov, and I~Yu Tipunin.
\newblock {Logarithmic Extensions of Minimal Models: Characters and Modular
  Transformations}.
\newblock {\em Nucl. Phys.}, B757:303--343, 2006.
\newblock arXiv:hep-th/0606196.

\bibitem{PeaLog06}
P~Pearce, J~Rasmussen, and J-B Zuber.
\newblock {Logarithmic Minimal Models}.
\newblock {\em J. Stat. Mech.}, 0611:017, 2006.
\newblock arXiv:hep-th/0607232.

\bibitem{ReaAss07}
N~Read and H~Saleur.
\newblock {Associative-Algebraic Approach to Logarithmic Conformal Field
  Theories}.
\newblock {\em Nucl. Phys.}, B777:316--351, 2007.
\newblock arXiv:hep-th/0701117.

\bibitem{EbeVir06}
H~Eberle and M~Flohr.
\newblock {Virasoro Representations and Fusion for General Augmented Minimal
  Models}.
\newblock {\em J. Phys.}, A39:15245--15286, 2006.
\newblock arXiv:hep-th/0604097.

\bibitem{RidPer07}
P~Mathieu and D~Ridout.
\newblock {From Percolation to Logarithmic Conformal Field Theory}.
\newblock {\em Phys. Lett.}, B657:120--129, 2007.
\newblock arXiv:0708.0802 [hep-th].

\bibitem{SchSca00}
O~Schramm.
\newblock {Scaling Limits of Loop-Erased Random Walks and Uniform Spanning
  Trees}.
\newblock {\em Isr. J. Math.}, 118:221--288, 2000.
\newblock arXiv:math.PR/9904022.

\bibitem{LawCon05}
Gregory~F. Lawler.
\newblock {\em Conformally Invariant Processes in the Plane}, volume 114 of
  {\em Mathematical Surveys and Monographs}.
\newblock American Mathematical Society, Providence, RI, 2005.

\bibitem{BauGro06}
M~Bauer and D~Bernard.
\newblock 2{D} growth processes: {SLE} and {L}oewner chains.
\newblock {\em Phys. Rep.}, 432:115--222, 2006.
\newblock arXiv:math-ph/0602049.

\bibitem{CarSLE05}
J~Cardy.
\newblock S{LE} for theoretical physicists.
\newblock {\em Ann. Phys.}, 318:81--118, 2005.
\newblock arXiv:cond-mat/0503313.

\bibitem{BauSLE03}
M~Bauer and D~Bernard.
\newblock {SLE Martingales and the Virasoro Algebra}.
\newblock {\em Phys. Lett.}, B557:309--316, 2003.
\newblock arXiv:hep-th/0301064.

\bibitem{BauCon04}
M~Bauer and D~Bernard.
\newblock Conformal transformations and the {SLE} partition function
  martingale.
\newblock {\em Ann. Henri Poincar\'e}, 5:289--326, 2004.
\newblock arXiv:math-ph/0305061.

\bibitem{KytVir07}
K~Kyt\"{o}l\"{a}.
\newblock {Virasoro Module Structure of Local Martingales of SLE Variants}.
\newblock {\em Rev. Math. Phys.}, 19:455--509, 2007.
\newblock arXiv:math-ph/0604047.

\bibitem{KytFro08}
K~Kyt\"{o}l\"{a}.
\newblock {SLE Local Martingales in Logarithmic Representations}.
\newblock {\em J. Stat. Mech.}, P08005, 2009.
\newblock arXiv:0804.2612 [math-ph].

\bibitem{RidLog07}
P~Mathieu and D~Ridout.
\newblock {Logarithmic $M \left( 2,p \right)$ Minimal Models, their Logarithmic
  Couplings, and Duality}.
\newblock {\em Nucl. Phys.}, B801:268--295, 2008.
\newblock arXiv:0711.3541 [hep-th].

\bibitem{SaiGeo09}
Y~Saint-Aubin, P~Pearce, and J~Rasmussen.
\newblock {Geometric Exponents, SLE and Logarithmic Minimal Models}.
\newblock {\em J. Stat. Mech.}, 0902:02028, 2009.
\newblock arXiv:0809.4806 [cond-mat.stat-mech].

\bibitem{SimTwi08}
J~Simmons and J~Cardy.
\newblock {Twist Operator Correlation Functions in $O(n)$ Loop Models}.
\newblock {\em J. Phys. A: Math. Theor.}, 42 235001, 2009.
\newblock arXiv:0811.4767 [math-ph].

\bibitem{GabInd96}
M~Gaberdiel and H~Kausch.
\newblock {Indecomposable Fusion Products}.
\newblock {\em Nucl. Phys.}, B477:293--318, 1996.
\newblock arXiv:hep-th/9604026.

\bibitem{RidPer08}
D~Ridout.
\newblock {On the Percolation BCFT and the Crossing Probability of Watts}.
\newblock {\em Nucl. Phys.}, B810:503--526, 2009.
\newblock arXiv:0808.3530 [hep-th].

\bibitem{RohRed96}
F~Rohsiepe.
\newblock {On Reducible but Indecomposable Representations of the Virasoro
  Algebra}.
\newblock arXiv:hep-th/9611160.

\bibitem{RohNic96}
F~Rohsiepe.
\newblock {Nichtunit\"{a}re Darstellungen der Virasoro-Algebra mit
  nichttrivialen Jordanbl\"{o}cken}.
\newblock Diploma Dissertation, Bonn University, 1996.

\bibitem{MooLie95}
R~Moody and A~Pianzola.
\newblock {\em {Lie Algebras with Triangular Decompositions}}.
\newblock Canadian Mathematical Society Series of Monographs and Advanced
  Texts. Wiley, New York, 1995.

\bibitem{AstStr97}
A~Astashkevich.
\newblock {On the Structure of Verma Module over Virasoro and Neveu-Schwarz
  Algebras}.
\newblock {\em Comm. Math. Phys.}, 186:531--562, 1997.
\newblock arXiv:hep-th/9511032.

\bibitem{KacBom88}
V~Kac and A~Raina.
\newblock {\em {Bombay Lectures on Highest Weight Representations of Infinite
  Dimensional Lie Algebras}}, volume~2 of {\em Advanced Series in Mathematical
  Physics}.
\newblock World Scientific, Singapore, 1988.

\bibitem{ItzSta89}
C~Itzykson and J-M Drouffe.
\newblock {\em {Statistical Field Theory, Volume 2: Strong Coupling, Monte
  Carlo Methods, Conformal Field Theory, and Random Systems}}.
\newblock Cambridge Monographs on Mathematical Physics. Cambridge University
  Press, Cambridge, 1989.

\bibitem{FeiVer84}
B~Feigin and D~Fuchs.
\newblock {Verma Modules over the Virasoro Algebra}.
\newblock In {\em Topology}, volume 1060 of {\em Lecture Notes in Mathematics},
  pages 230--245. Springer, Berlin, 1984.

\bibitem{CarHom56}
H~Cartan and S~Eilenberg.
\newblock {\em {Homological Algebra}}.
\newblock Princeton University Press, Princeton, 1956.

\bibitem{GabAlg03}
M~Gaberdiel.
\newblock {An Algebraic Approach to Logarithmic Conformal Field Theory}.
\newblock {\em Int. J. Mod. Phys.}, A18:4593--4638, 2003.
\newblock arXiv:hep-th/0111260.

\bibitem{BogInd74}
J~Bogn\'{a}r.
\newblock {\em {Indefinite Inner Product Spaces}}, volume~78 of {\em Ergebnisse
  der Mathematik und ihrer Grenzgebiete}.
\newblock Springer-Verlag, Berlin, 1974.

\bibitem{GurCon04}
V~Gurarie and A~Ludwig.
\newblock {Conformal Field Theory at Central Charge $c=0$ and Two-Dimensional
  Critical Systems with Quenched Disorder}.
\newblock In M~Shifman, editor, {\em From Fields to Strings: Circumnavigating
  Theoretical Physics. Ian Kogan Memorial Collection}, volume~2, pages
  1384--1440. World Scientific, Singapore, 2005.
\newblock arXiv:hep-th/0409105.

\end{thebibliography}

\end{document}